\documentclass[runningheads]{llncs}

\usepackage{graphicx,amssymb,amsmath}
\usepackage{graphicx} 
\usepackage{enumitem}
\setlist[itemize]{noitemsep}
\usepackage{algorithm}
\usepackage{algpseudocode}
\usepackage{float}
\usepackage{subcaption}
\makeatletter
\let\llncs@subparagraph\subparagraph
\renewcommand\subparagraph{\@startsection{subparagraph}{5}{\parindent}%
  {3.25ex \@plus1ex \@minus.2ex}{-1em}{\normalfont\normalsize\bfseries}}
\makeatother

\usepackage[compact]{titlesec} % reduced space between headings and body text

\makeatletter
\let\subparagraph\llncs@subparagraph % put llncs's "don't use this" guard back
\makeatother
\usepackage{microtype}

\usepackage[title]{appendix}

\usepackage{booktabs}
\usepackage{multirow}
\usepackage[table]{xcolor} % For professional row shading
\usepackage{longtable} % Required for multi-page tables
\definecolor{lightgray}{gray}{0.95}

\begin{document}
\title{Efficient $K$-Visibility Query in Polygons}

%\author{Yeganeh Bahoo, Roni Sherman}
% \date{December 2025}

\author{Yeganeh Bahoo\inst{1}\orcidID{0000-0001-5349-4946} \and
Roni Sherman\inst{1}\orcidID{0009-0004-0542-3480}}
\index{Bahoo, Yeganeh}
\index{Sherman, Roni}
\authorrunning{Y. Bahoo et al.}
\institute{Toronto Metropolitan University, Toronto, ON M5B~2K3, Canada\\
\email{\{bahoo,roni.sherman\}@torontomu.ca} 
}

\maketitle

{\def\thefootnote{} \footnotetext{We acknowledge the support of the Natural Sciences and Engineering Research Council of Canada (NSERC).}}
\setcounter{footnote}{0}

\begin{abstract}
This paper investigates $k$-visibility, where a line of sight can penetrate up to $k$ obstacles. While computing the $k$-visibility polygon from a single query point is well-studied, existing spatial preprocessing approaches rely on full $O(n^2)$ line arrangements through all vertex pairs without characterizing the minimal set of topological boundaries. We present a refined cell decomposition framework that isolates the exact geometric events governing $k$-visibility: primary vertex horizon lines and secondary mutually critical hinge lines. We prove that this minimal set of partition lines yields a spatial decomposition of $\Theta(n^4)$ cells within which the combinatorial structure of the $k$-visibility polygon remains strictly invariant. By leveraging a combinatorial $\delta$-compression scheme across cell boundaries, we achieve an overall storage complexity of $\mathcal{O}(n^4)$ while supporting optimal $\mathcal{O}(\log n + m)$ query time to reconstruct explicit $k$-visibility polygons of size $m$. Our framework naturally extends to polygons containing holes.

\end{abstract}
% There are a number of algorithms for calculating the $k$-visibility polygon of a point. One algorithm was introduced by Martins et al.~\cite{Martins2009} with $O(n^2)$ complexity. Bahoo et al.~\cite{bahoo2020computing} presented a faster algorithm with $O(nlog(n))$ time complexity and an even faster one~\cite{yeganeh2019computational} with $O(kn)$ time complexity. Although these algorithms are efficient, no research has been done yet into finding query algorithms for the $k$-visibility polygon of a point, which would use more memory but will have faster query time.  

% Bose~\cite{bose2002efficient} introduced a $O(log(n) + h)$  algorithm for querying the $0$-visibility polygon of a point. This paper generalizes the query algorithm for $k$-visibility.  

\section{Introduction}

Given a simple polygon $P$ with $n$ vertices and an integer $k \ge 0$, two points $p, q \in P$ are said to be $k$-visible to each other if the open line segment $\overline{pq}$ intersects the boundary $\partial P$ at most $k$ times. The integer $k$ represents the \emph{visibility depth} (or the number of obstacle walls penetrated). The \emph{$k$-visibility polygon} of a query point $q$, denoted $\text{Vis}_k(q, P)$, is the set of all points in $P$ that are $k$-visible from $q$. A central goal in computational geometry is to preprocess $P$ into a spatial partition called a \emph{cell decomposition}—a division of $P$ into open, connected regions (cells) such that as a query point $q$ moves within a single cell, the combinatorial structure of its visibility polygon $\text{Vis}_k(q, P)$ remains invariant. Such decompositions allow efficient point-location query data structures to report $\text{Vis}_k(q, P)$ for any query point $q$.

Computing the visibility region of a point within a geometric domain is a classic problem in computational geometry, with foundational linear-time algorithms established for standard $0$-visibility in simple polygons with $n$ vertices~\cite{joe1987corrections,guibas1987linear}. Several algorithms have been proposed to calculate the $k$-visibility polygon of a point---the region visible when allowing up to $k$ obstacle intersections---from scratch, including an initial $O(n^2)$ time framework by Martins et al.~\cite{Martins2009}, an $O(n \log n)$ approach, and an asymptotically faster parameterized algorithm by Bahoo et al~\cite{bahoo2020computing}. Complementary research investigated time-space trade-offs under severe memory constraints~\cite{bahooTradeoff}. Efficient query data structures for $0$-visibility have also seen significant development. Early methods achieved $O(\log n + m)$ query time---where $m$ is the number of vertices in the reported visibility region---with $O(n^3)$ space~\cite{bose2002efficient}, while recent advancements utilize $O(n^{2+\epsilon})$ space to optimize visibility queries~\cite{bhore2026visibility}. In dynamic settings, extensive research has targeted kinetic visibility algorithms~\cite{aronov2002visible}. 
Concurrently, generalized $k$-cell decompositions have been studied to track shadow transitions and prevent the split or merge of unseen regions in pursuit-evasion games~\cite{generalizedKCell,bahoo2026exactgeneralizedkcelldecomposition}. While these structures share the goal of partitioning the domain into invariant visibility regions, key technical differences exist: the framework in~\cite{generalizedKCell} introduces redundant partition lines, whereas the exact decomposition in~\cite{bahoo2026exactgeneralizedkcelldecomposition} focuses specifically on shadow-level events for shadows defined strictly within $P$. In contrast, our new decomposition explicitly tracks boundaries where the generating tuples of the visibility windows change and accounts for shadows defined  outside of $P$ as well, yielding a tight, non-redundant cell arrangement tailored for exact point-location queries. This decomposition also takes into account windows of the $k$ and $k-2$ visibility polygons of vertices.

Beyond theoretical exploration, $k$-visibility structures have gained significant traction in modern applications, most notably in solving the wireless $k$-modem or illumination problem, where guards act as transmitters capable of penetrating up to $k$ walls~\cite{aicholzerModem}. Furthermore, recent developments have leveraged inverse $k$-visibility for real-time RSSI-based indoor geometric mapping and autonomous path-planning~\cite{kim2025inverse,sfw}. $k$-visibility has also been recently studied in relation to $M$-guarding~\cite{bahooMGuarding}.

While $k$-visibility query structures exist, constructing a tight, minimal cell decomposition without redundant partition boundaries remains unresolved. In their work exploring static structures, Bahoo et al.~\cite{bahoo2020computing} proposed a data structure requiring $O(n^5)$ space to support $k$-visibility queries. However, their model relies on a global, unrefined arrangement formed by extending lines through all $\binom{n}{2}$ pairs of vertices. While this guarantees topological completeness, it creates an oversized $O(n^4)$-cell arrangement populated by redundant, spurious boundaries where no actual visibility changes occur. Post-merging these cells is computationally expensive and fails to reveal the underlying structural geometry. Fully resolving this challenge requires a tight classification of the \emph{minimal} spatial arrangement that triggers discrete topological mutations---abrupt structural changes in the combinatorial state of the $k$-visibility polygon that a point sees. 

In this work, we generalize the cell-decomposition query framework of Bose et al.~\cite{bose2002efficient} and Guibas et al.~\cite{guibas1997visibility} to $k$-visibility environments. We establish the minimal set of primary and secondary horizon lines required for structural invariance, directly constructing a tight arrangement of $\Theta(n^4)$ cells without redundant lines. Furthermore, by introducing a delta-compressed dual arrangement tree that stores only $O(1)$ atomic mutations along tree edges, our scheme reduces total space complexity from $O(n^5)$ to $O(n^4)$ while supporting optimal $O(\log n + m)$ query reconstruction time.
\section{Problem Statement}
Let $P$ be a simple polygon with $n$ vertices, and let $k \ge 0$ be a fixed, non-negative integer parameter. The $k$-visibility query problem is defined as follows:

\begin{description}
    \item[Preprocessing Phase:] Given a polygon $P$ and the parameter $k$, construct a data structure $\mathcal{D}(P, k)$ that partitions $P$ into a cell-decomposition arrangement such that the combinatorial representation of the $k$-visible region of any point in a cell of the decomposition remains invariant.
    \item[Query Phase:] Given a query point $q \in P$, utilize the preprocessed data structure $\mathcal{D}(P, k)$ to efficiently locate the cell containing $q$ and retrieve the $k$-visibility polygon of $q$, denoted as $\text{Vis}_k(q)$.
\end{description}

We assume general position and that no three vertices can be colinear.

\section{The Geometric Structure of the $k$-Visibility Polygon}

For a query point $q \in P$, the $k$-visibility polygon $\text{Vis}_k(q)$ is a subset of $P$ whose boundary comprises two distinct types of geometric features: $k$-visible boundary chains and visibility windows. The $k$-visible boundary chains consist of continuous sequences of vertices and edges belonging to $\partial P$---the polygon boundary---or the bounding box (the box outside $P$ that bounds it) that possess a visibility depth of $k$ from $q$. These disconnected visible chains are topologically connected by visibility windows—line segments acting as artificial horizons that bound the occluded regions of the polygon. An example of a $2$-visibility polygon is illustrated in Figure~\ref{fig:2-vis-poly}, where $\text{Vis}_2(q)$ is represented by the shaded blue region.

 \begin{figure}[H]
\centering
\includegraphics[width=0.8\linewidth]{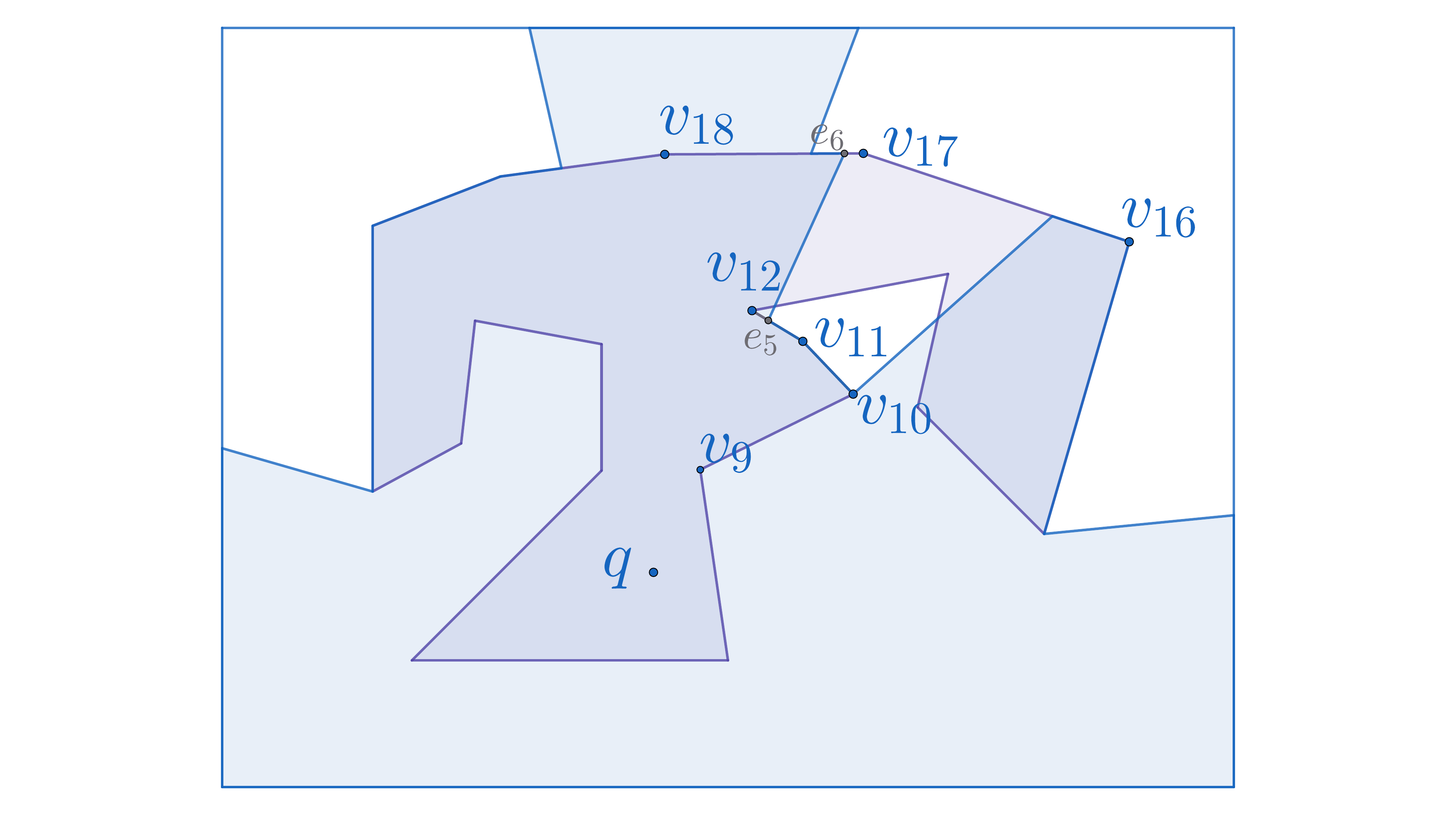}
\caption{$2$-visibility polygon (blue region) of point $q$}
\label{fig:2-vis-poly}
\end{figure}

\begin{definition}[Critical Vertex] Let $\ell_{g}$ be the line passing through a vertex $a$ and a point $b$. The vertex $a$ is critical to $b$ if both edges incident to $a$ lie strictly on the same side of $\ell_{g}$.  For example, $v_{9}$ in Figure~\ref{fig:2-vis-poly} is critical to $q$. 
Geometrically, a critical vertex $a$ acts as an obstacle reflex corner relative to $b$, causing the ray emanating from $b$ through $a$ to extend into the interior of $P$ beyond $a$. This extension forms a local topological boundary (a visibility window) that separates an \textbf{occluded shadow region}---a subregion of $P$ immediately behind vertex $a$ where points lose direct visibility from $b$---from the region that is visible from $b$. 
\end{definition}

% \begin{definition}
%     (Partition Line) The combinatorial structure of the visibility polygon of a query point $a$ changes along certain event lines, called partition lines. 
% \end{definition}

\begin{definition}[$k$-Visibility]
Let $P$ be a simple polygon and $\partial P$ denote its boundary. Two points $p, q \in P$ are $k$-visible to each other if the open line segment $\overline{pq}$ intersects $\partial P$ in at most $k$ distinct boundary crossings.
\end{definition}
\begin{definition}[$k$-Visibility Polygon]Given a point $q \in P$ and an integer $k \ge 0$, the \textbf{$k$-visibility polygon} of $q$ in $P$, denoted $Vis_k(q, P)$, is the set of all points $p \in P$ that are $k$-visible from $q$:$$Vis_k(q, P) = \{ p \in P \mid \vert{}\overline{pq} \cap \partial P\vert{} \le k \}$$\end{definition}

A non-polygonal edge of the $k$-visibility polygon of $q$ is called a \textit{window}. Combinatorially, every visibility window $W$ is uniquely generated by a single critical vertex $c \in P$, which acts as a geometric hinge that terminates the direct line of sight from the point $q$. A window $W$ is completely characterized by a three-tuple consisting of the following geometric elements:
\begin{enumerate}
    \item The \textbf{critical vertex} $c$ responsible for inducing the shadow or occlusion boundary (e.g., $v_{9}$ for window $e_5e_6$).
    \item The \textbf{proximal edge} $e_{\text{prox}}$ of the window, which corresponds to the boundary edge closest to the critical vertex $c$ where the obstruction originates (e.g., the edge $v_{11}v_{12}$ for the window $e_5e_6$ in Figure~\ref{fig:2-vis-poly}).
    \item The \textbf{distal edge} $e_{\text{dist}}$ of the window, which represents the background boundary edge of $P$ where the extended ray from $q$ through $c$ ends (e.g., the edge $v_{16}v_{17}$ for the window $e_5e_6$ in Figure~\ref{fig:2-vis-poly}).
\end{enumerate}

\begin{definition}
A visibility window $W$ within the $k$-visibility polygon $\text{Vis}_k(q)$ is a  line segment collinear to the query point and a critical vertex $c \in P$, uniquely characterized by a $c$, a proximal (start) edge, and a distal (end) edge. The window $W$ acts as a local topological boundary separating a $k$-visible subregion from an occluded shadow region. Formally, a neighborhood intersected by $W$ contains an interior subset belonging strictly to $\text{Vis}_k(q)$ on one side, and an exterior subset belonging to $P - \text{Vis}_k(q)$ on the opposite side.
\end{definition}

% \begin{definition}
% A window of a visibility polygon of a query point is defined by a vertex that is critical to the query point, a start edge and an end edge. To one side of the window is the visibility polygon, while to the other is a shadow. 
% \end{definition}

\section{Combinatorial Cell Data and Storage Representation}

To facilitate rapid query execution, the polygon $P$ is decomposed into a finite spatial arrangement of cells. Within each cell, the topological structure (the combinatorial visibility sequence) of the $k$-visibility polygon remains strictly invariant. However, while the ordered sequence of visible elements is fixed throughout a cell, the precise geometric coordinates of the visibility window endpoints depend on the exact position of the query point $q$. Thus, to output the exact spatial boundary of $\text{Vis}_k(q)$ at query time, a dynamic geometric reconstruction is required: the cell's stored combinatorial sequence is traversed, and the actual geometric positions of window endpoints are computed dynamically by shooting rays from $q$ through each critical vertex $c$ onto its corresponding proximal edge $e_{\text{prox}}$ and distal edge $e_{\text{dist}}$.

To support this dynamic query reconstruction, each cell explicitly stores its invariant combinatorial sequence, ordered counter-clockwise along $\partial P$. This sequence comprises an alternating chain of two distinct entry types:

% To facilitate rapid query execution, the polygon $P$ is decomposed into a finite spatial arrangement of cells. Within each cell, the topological structure of the $k$-visibility polygon remains strictly invariant. To reconstruct $\text{Vis}_k(q)$ dynamically for any query point $q$ situated within a given cell, the cell must explicitly store its underlying combinatorial sequence. This sequence is ordered counter-clockwise along $\partial P$ and comprises an alternating chain of two distinct entry types:

\begin{itemize}
    \item \textbf{A $k$-visible vertex:} A standalone vertex $v \in P$ whose visibility depth from any point within the cell is $k$.
    \item \textbf{A window generator tuple:} A structural descriptor representing a visibility window,  characterized as  $C = \langle c, e_{\text{prox}}, e_{\text{dist}} \rangle$ or $C = \langle c, e_{\text{dist}}, e_{\text{prox}} \rangle$.
\end{itemize}

The bounding box environment may also contribute to the sequence: a vertex $v$ of the bounding box is included if it is visible, and an edge of the bounding box may serve as the distal edge $e_{\text{dist}}$ within a window generator tuple if the occlusion ray terminates on it.  We assume that any visible parts of the bounding box are $k$-visible.

During the offline preprocessing phase, this structural sequence is computed for each cell by executing the topological labeling algorithm introduced by Martins et al.~\cite{Martins2009}. By selecting and traversing only the edge and vertex labels whose computed visibility depths satisfy the threshold equal to $k$ and connecting them via windows, the precise sequence of visible vertices and window parameters can be extracted and mapped directly to the cell.

\section{Spatial Partition Lines: Classification and Correctness}

The construction of the cell decomposition relies on identifying the exact geometric boundaries where the combinatorial structure of the $k$-visibility polygon undergoes a topological mutation. To guarantee that the stored structural sequence $\mathcal{S}(q)$ remains invariant within each cell, we overlay a set of critical curves classified into two distinct categories of visibility event lines:
\begin{enumerate}
    \item \textbf{Primary Vertex Horizon Lines (Type 1a):} Window edges within $P$ belonging to the individual $k$-visibility polygon of each vertex, governing the entry and exit of vertices from the shadow into the visibility sequence.
    \item \textbf{Primary Vertex Horizon Lines (Type 1b):} Window edges within $P$ belonging to the individual $k-2$-visibility polygon of each vertex, governing the entry and exit of vertices from the visibility polygon into the visibility sequence.

    \item \textbf{Secondary Hinge Transition Lines (Type 2):} Collinear extensions passing through pairs of mutually critical vertices, governing a combinatorial change in window generators.
\end{enumerate}

Below, we formally prove that these two categories of boundaries are both necessary and sufficient to isolate distinct combinatorial states.

\begin{theorem}
The cell decomposition strictly requires primary vertex horizon lines (Type 1a) derived from the window extensions of $\text{Vis}_k(v)$ for every vertex $v \in P$.
\end{theorem}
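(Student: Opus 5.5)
The plan is a necessity argument by symmetry of $k$-visibility, followed by a local crossing analysis. The key observation is that $k$-visibility is symmetric. A point $q$ is $k$-visible to a vertex $v$ if and only if $q \in \text{Vis}_k(v)$, since both conditions count the crossings of the same open segment $\overline{qv}$ with $\partial P$.

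Consequently, the set of query points for which $v$ is a $k$-visible vertex in the stored sequence $\mathcal{S}(q)$ is, near the windows in question, exactly $\text{Vis}_k(v)$. Its boundary inside $P$ consists of polygon edges together with the windows of $\text{Vis}_k(v)$. So I would show two things:
\begin{enumerate}
\item crossing any such window changes $\mathcal{S}(q)$;
\item no other proposed line can substitute for this window.
\end{enumerate}

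First I fix a vertex $v$ and a window $W$ of $\text{Vis}_k(v)$. By the definition of a window, $W$ lies on a line through $v$ and a critical vertex $c$, and it is generated by a tuple $\langle c, e_{\text{prox}}, e_{\text{dist}}\rangle$ relative to $v$.

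Next I take a point $x$ in the relative interior of $W$ and a small disk $D$ around $x$ that avoids all other vertices. The window definition says that $D$ splits into a part inside $\text{Vis}_k(v)$ and a part in $P-\text{Vis}_k(v)$. By general position (no three vertices collinear), as $q$ moves across $W$ inside $D$, the segment $\overline{qv}$ sweeps past $c$. Since $c$ is critical with respect to $v$, hence with respect to every point on the line through $v$ and $c$, both edges at $c$ lie on one side. The segment therefore gains or loses exactly two boundary crossings. Thus the depth of $v$ from $q$ jumps from $k$ to $k+2$, or from $k$ down to $k-2$.

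In either case the membership of $v$ as a $k$-visible vertex in $\mathcal{S}(q)$ flips. The two query points on opposite sides of $W$ therefore carry different combinatorial sequences, so they cannot lie in a common cell. This means $W$, or at least its portion inside $D$, must appear among the partition boundaries.

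Finally I argue that no Type 1b or Type 2 line can substitute for $W$ near $x$. The only lines through $x$ that are defined by vertex data are lines through $v$ and some other vertex, and by general position the only such line is the one through $c$. A Type 1b window of $\text{Vis}_{k-2}(v)$, or a Type 2 hinge line, could share this supporting line. However, it would cover a different segment of the line, namely where the depth of $v$ changes between $k-2$ and $k$, or the portion between the pair of mutually critical vertices. Generically the Type 1a segment $W$ is not contained in that segment.

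I would exhibit a small configuration to confirm this: a polygon with two pockets, where the window of $\text{Vis}_k(v)$ extends into a region not reached by the other segment. This witnesses that the Type 1a lines are strictly required.

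The main obstacle is that final step. Rigorously excluding coincidence with other line types at the level of segments, rather than whole lines, and the degenerate cases where $W$ ends on the bounding box, is where I expect the most care to be needed. If it proves too delicate, I would fall back on the explicit counterexample construction.
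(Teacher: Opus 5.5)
Your first part is the paper's proof. The paper fixes a window $W$ of $\text{Vis}_k(v)$ anchored at $c$ and asserts that on the two sides of $H_{l1}=W\cap P$ the segment to $v$ meets $\partial P$ $k+2$ and $k$ times. So $v$ leaves or enters $\mathcal{S}(q)$, and $W\cap P$ must be a cell boundary. Your symmetry remark and the sweep-past-$c$ count (by general position, exactly the two edges at $c$ are gained or lost) justify a jump that the paper takes ``by definition.'' That is a real tightening of the same argument.

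Two corrections are needed.

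\textbf{The $k-2$ alternative.} The option ``from $k$ down to $k-2$'' cannot occur across a window of $\text{Vis}_k(v)$, since both sides would then lie in $\text{Vis}_k(v)$ and there would be no window there. The two depths must straddle the threshold $k$. A $k\leftrightarrow k-2$ jump is the Type~1b event, not Type~1a.

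\textbf{The non-substitution step.} Drop it entirely. The statement only requires that $W\cap P$ be a partition boundary, which your first part already establishes. Moreover, the claim that no Type~2 segment can cover $W$ is false in general. When $v$ and $c$ are mutually critical, a hinge segment $x_1x_2$ on $\ell(v,c)$ beyond $c$ can be exactly where $v$ enters or leaves the shadow. An example is CCS with $Z=k-1$, where $v_2$ becomes a window generator only on one side of $\ell_g$. The paper explicitly notes such coincidences between primary and secondary lines. Necessity is unaffected, because the segment must appear in the arrangement whichever label it carries.
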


\begin{proof}
Let $v \in P$ be an arbitrary vertex, and let $W$ be a visibility window of $\text{Vis}_k(v)$ anchored by a critical vertex $c$. We define the corresponding horizon line, $H_{l1}$, as the active segment of $W$ restricted to $P$ ($H_{l1} = W \cap P$). Let $q_1, q_2 \in P$ be two query points in arbitrarily small open neighborhoods situated on opposite sides of $H_{l1}$, such that the segment $s = \overline{q_1 q_2}$ intersects $H_v$ transversally at a point $p_0$. By definition of $H_{l1}$, crossing this horizon boundary alters the obstacle penetration depth of the line segment connecting the query point to $v$:$$\vert{}\overline{q_1 v} \cap \partial P\vert{} = k + 2 \quad \text{and} \quad \vert{}\overline{q_2 v} \cap \partial P\vert{} = k$$(or vice versa). Consequently, 
% the indicator predicate for the membership of $v$ in the cell's combinatorial visibility sequence $\mathcal{S}(q)$ yields:
$$v \notin \mathcal{S}(q_1) \quad \text{and} \quad v \in \mathcal{S}(q_2)$$Since $\mathcal{S}(q_1) \neq \mathcal{S}(q_2)$, the sequence undergoes a discrete topological mutation at $p_0$. Therefore, the interior segment $H_{l1} = W \cap P$ acts as a mandatory cell partition boundary.
\end{proof}

  \begin{figure}[H]
\centering
\makebox[\textwidth][c]{%
  \scalebox{1.18}{%
    \begin{minipage}{\linewidth}
    \centering
    \begin{subfigure}[b]{.49\linewidth}
    \includegraphics[width=\linewidth]{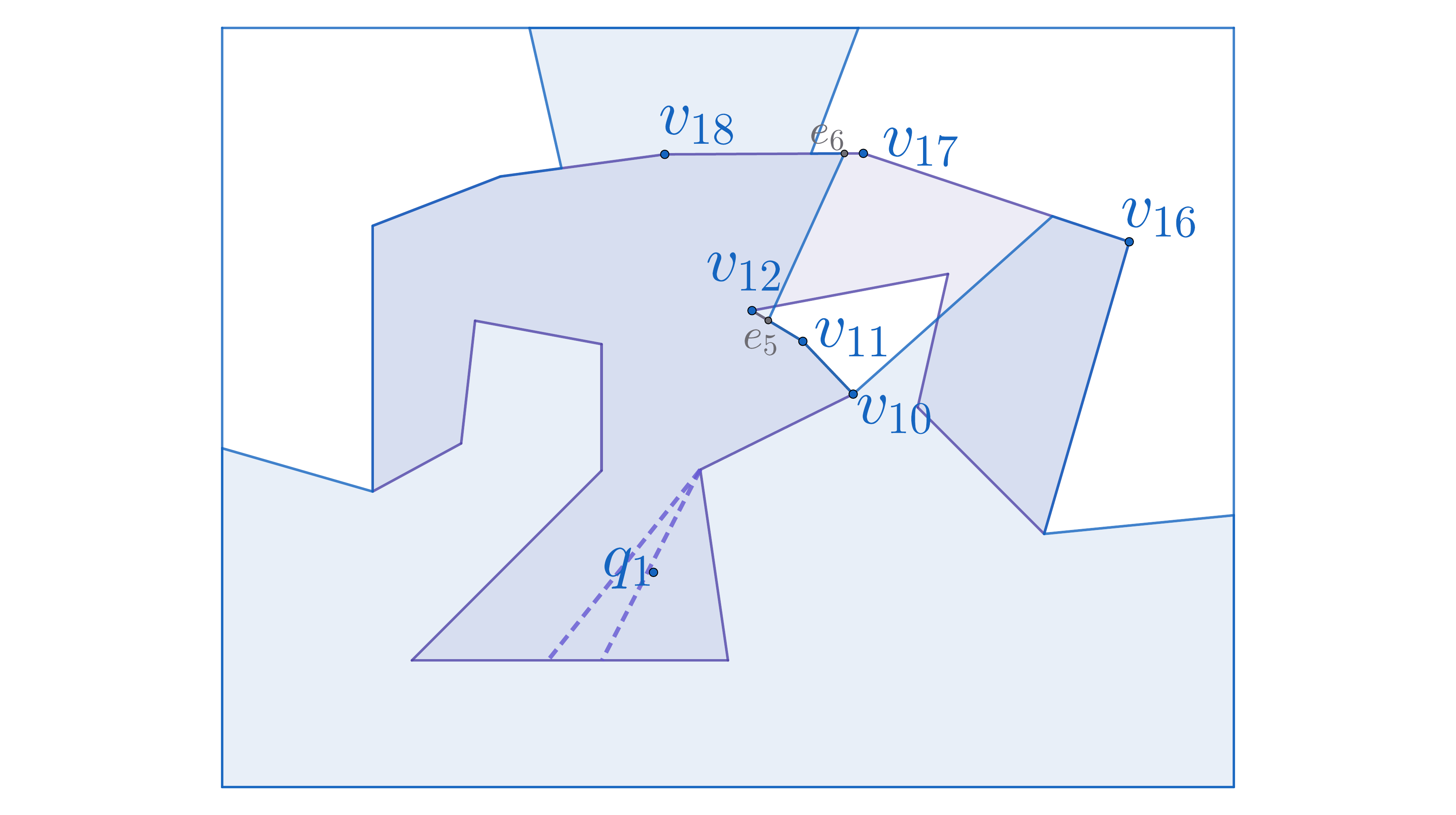}
    \caption{Right of $H_{l1}$}\label{fig:vertex-horizon-line1}
    \end{subfigure}
    \begin{subfigure}[b]{.49\linewidth}
    \includegraphics[width=\linewidth]{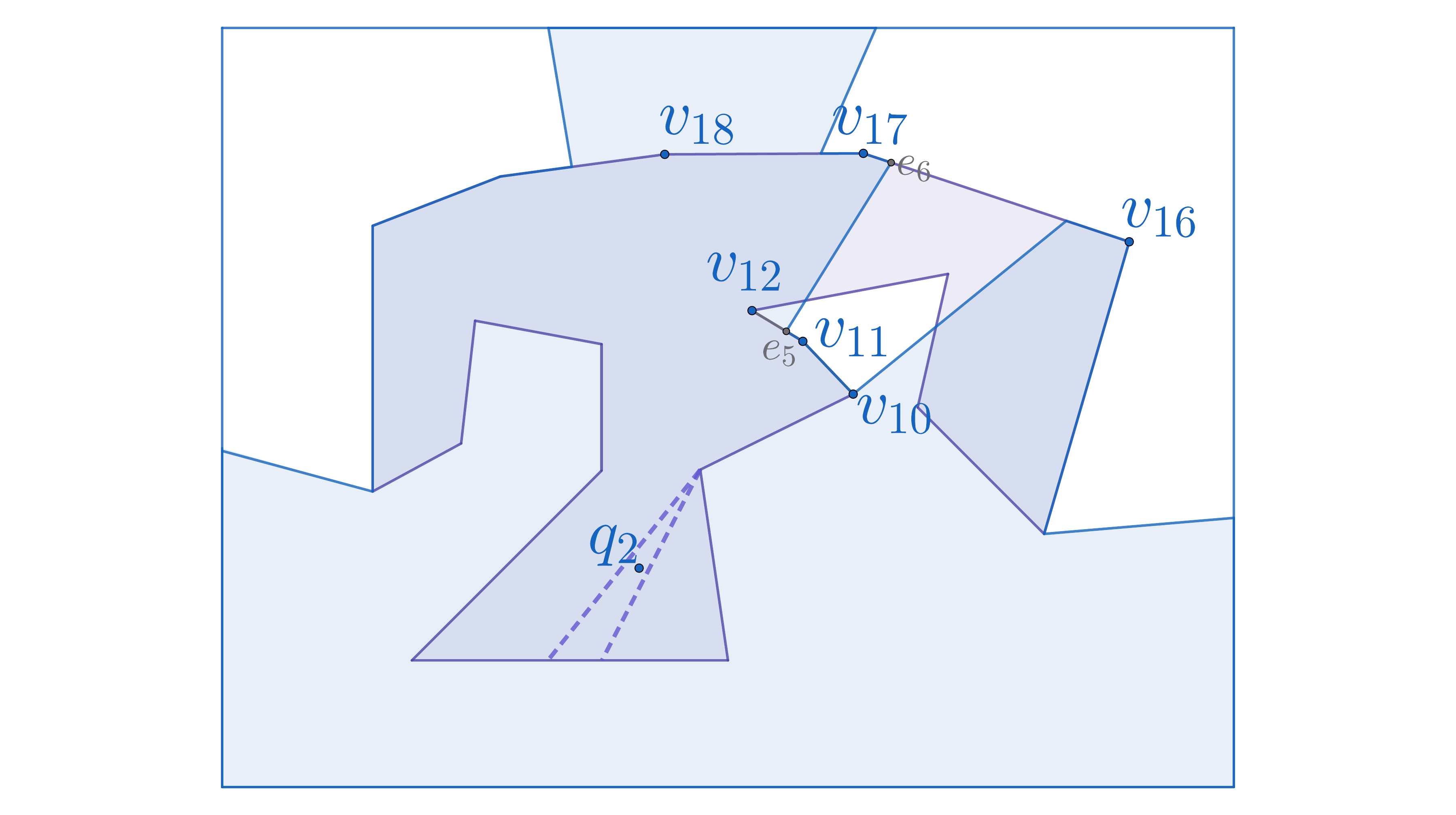}
    \caption{Left of $H_{l1}$}\label{fig:vertex-horizon-line2}
    \end{subfigure}

    \begin{subfigure}[b]{.49\linewidth}
    \includegraphics[width=\linewidth]{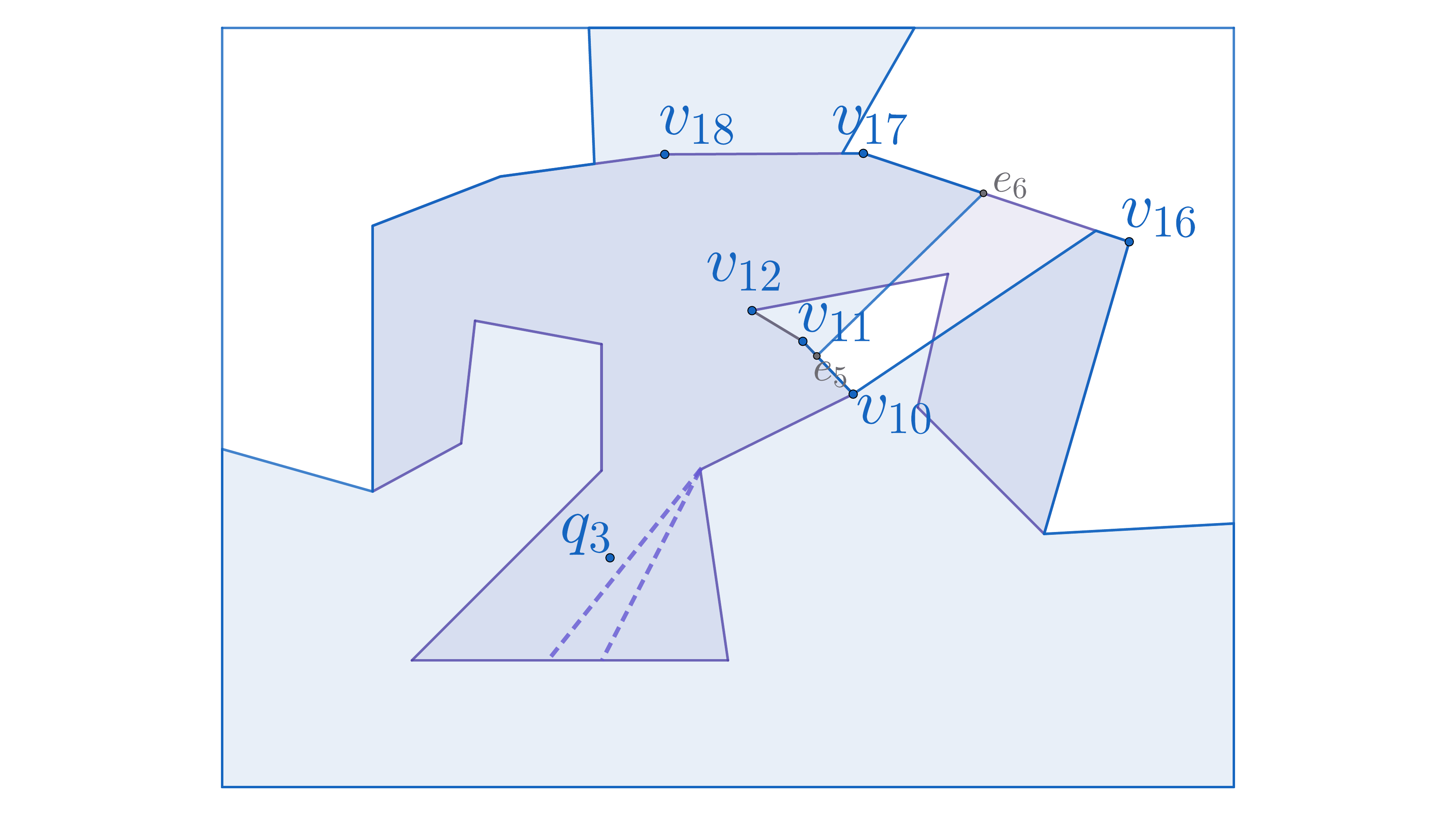}
    \caption{Left of $H_{l2}$}\label{fig:vertex-horizon-line3}
    \end{subfigure}
    
    \end{minipage}%
  }%
}
\caption{Vertex horizon lines of $\text{Vis}_{k}(v_{17})$ and $\text{Vis}_{k-2}(v_{11})$}
\label{fig:Vertex_horizon_line}
\end{figure}

 Consider the vertex $v_{17}$ and its Type 1a partition line, $H_{l1}$ (the dashed purple line closest to $q_{1}$). Crossing $H_{l1}$ from $q_1$ to $q_2$ (Figures~\ref{fig:vertex-horizon-line1}–\ref{fig:vertex-horizon-line2}) brings $v_{17}$ out of shadow into $2$-visibility, adding $v_{17}$ to $\mathcal{S}(q_2)$ and updating the distal edge of window $e_5 e_6$ from $v_{17}v_{18}$ to $v_{16}v_{17}$. Because crossing $H_{l1}$ alters the combinatorial sequence, Type 1a lines are strictly required as cell partition boundaries.

% Consider Figure~\ref{fig:vertex-horizon-line1} and Figure~\ref{fig:vertex-horizon-line2}. At $q_{1}$, the line of sight to $v_{17}$ intersects more than $2$ blocking edges, meaning $v_{17}$ is $2$-invisible ($v_{17} \notin \mathcal{S}(q_{1})$), and it is  within the shadow of $q_{1}$. Upon crossing $H_{l1}$ (dashed purple partition line closest to $q_1$) to position $q_{2}$, $v_{17}$ leaves the shadow and transitions into view, forcing it to enter the cell's stored structural sequence ($v_{17} \in \mathcal{S}(q_{2})$). This shift also causes the projected window ray ($e_{5}e_{6}$) to change its distal intersection from edge $v_{17}v_{18}$ to edge $v_{16}v_{17}$ (a window generator tuple changes). Because crossing $H_{l1}$ induces a discrete structural change in the visibility sequence, Type 1a lines must explicitly act as cell partition boundaries to preserve intra-cell sequence invariance.
\qed

\begin{theorem}
The cell decomposition strictly requires primary vertex horizon lines (Type 1b) derived from the window extensions of $\text{Vis}_{k-2}(v)$ for every vertex $v \in P$.
\end{theorem}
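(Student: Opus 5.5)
I would mirror the argument given for the Type 1a theorem, replacing the threshold $k$ by $k-2$ and tracking what the extra two crossings mean for the stored sequence. Fix an arbitrary vertex $v \in P$ and a window $W'$ of $\text{Vis}_{k-2}(v)$ generated by a critical vertex $c$. Define the horizon segment $H_{l2} = W' \cap P$. I would take two query points $q_1, q_2$ in small open neighborhoods on opposite sides of $H_{l2}$, so that $\overline{q_1 q_2}$ crosses $H_{l2}$ transversally at a point $p_0$. By general position, $p_0$ lies on no other partition line. Also, no vertex other than $c$ lies on the line through $v$ and $c$.

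The first step is the depth computation. Since $c$ is critical to the segment's line, and both edges at $c$ lie on one side, moving the viewpoint across $H_{l2}$ makes the segment $\overline{qv}$ switch between passing on one side of $c$ and passing on the other. On the occluded side the segment crosses both incident edges of $c$; on the other side it crosses neither. All other crossings of $\overline{qv}$ with $\partial P$ persist for $q$ near $p_0$. So I expect
\[
\vert\overline{q_1 v}\cap\partial P\vert = k-2+2 = k, \qquad \vert\overline{q_2 v}\cap\partial P\vert = k-2
\]
(or vice versa). This is the $k-2$ analogue of the jump used in the Type 1a proof. The key point is that on one side $v$ has depth exactly $k$, and on the other it has depth $k-2 < k$.

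The second step interprets this in terms of $\mathcal{S}(q)$. The sequence stores the vertices of $\partial P$ on the boundary of $\text{Vis}_k(q)$ whose visibility depth is exactly $k$; these are the $k$-visible boundary chains. At $q_1$, $v$ has depth $k$, so $v$ appears in $\mathcal{S}(q_1)$ as a $k$-visible vertex. At $q_2$, $v$ has depth $k-2$, so $v$ lies strictly inside the $k$-visibility region. It is no longer a depth-$k$ boundary element, and possibly some window tuple's proximal or distal edge changes along with it. Hence $v \in \mathcal{S}(q_1)$ and $v \notin \mathcal{S}(q_2)$, so $\mathcal{S}(q_1)\neq\mathcal{S}(q_2)$, and $H_{l2}$ must be a partition boundary. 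I would illustrate this with Figure~\ref{fig:vertex-horizon-line3}, using $v_{11}$ and $\text{Vis}_{k-2}(v_{11})$, as in the Type 1a example.

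The main obstacle is justifying the second step rigorously. The paper's definition of a "$k$-visible vertex" in $\mathcal{S}$ says only "visibility depth $k$", so I must argue that a vertex of depth $k-2$ genuinely drops out of the stored sequence rather than staying as a visible vertex. I would first try to settle this by appeal to the labeling of Martins et al.~\cite{Martins2009}, which selects only labels with depth equal to $k$. I would then check that a depth-$(k-2)$ vertex cannot simultaneously serve as a window endpoint in the sequence. Two degenerate cases also need care: $k<2$, where $\text{Vis}_{k-2}$ is empty and the statement is vacuous, and the case where $H_{l2}$ meets $\partial P$ exactly at $v$.
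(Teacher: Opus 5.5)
Your proposal is correct and follows essentially the same route as the paper. The depth of $v$ jumps between $k$ and $k-2$ across $H_{l2}$, so $v$ moves from the boundary of $\text{Vis}_k(q)$ into its interior and $\mathcal{S}(q)$ changes. The paper also records a second concrete witness: the proximal edge of the window tuple generated by $c$ shifts from $v_iv_{i+1}$ to $v_{i-1}v_i$. You could state this outright rather than as a ``possible'' side effect, since it gives $\mathcal{S}(q_1)\neq\mathcal{S}(q_2)$ independently of your worry about whether $v$ truly drops out of the sequence.
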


\begin{proof}
By definition of a visibility window of $\text{Vis}_{k-2}(v)$, crossing the horizon segment $H_{l2}$ (part of the window that is within $P$) alters the obstacle penetration depth of $v$ relative to the query point:
\begin{enumerate}
    \item At $q_2$,  $v$ lies on the boundary $\partial \text{Vis}_k(q_2)$ (with $v \notin \text{int}(\text{Vis}_k(q_2))$). Here, a vertex $c$  acts as a window generator anchoring a visibility window in $\text{Vis}_k(q_2)$ with proximal edge $e_{\text{prox}} = v_i v_{i+1}$. Thus, the structural sequence contains a generator tuple $C = \langle c, v_i v_{i+1}, e_{\text{dist}} \rangle \in \mathcal{S}(q_2)$.
    \item Upon crossing $H_{l2}$ to $q_3$, the obstacle depth drops to $k-2$ points, causing $v$ to transition directly into the interior of the visible region ($v \in \text{int}(\text{Vis}_k(q_3))$)). The adjacent proximal edge descriptor shifts from $v_i v_{i+1}$ to $v_{i-1}v_i$.
\end{enumerate}

Because $v \notin \text{int}(\text{Vis}_k(q_2))$ while $v \in \text{int}(\text{Vis}_k(q_3))$, and the recorded proximal edge descriptor updates from $v_i v_{i+1}$ to $v_{i-1}v_i$, the combinatorial visibility sequences satisfy $\mathcal{S}(q_2) \neq \mathcal{S}(q_3)$. Thus, $H_{l2}$ forms a necessary cell partition boundary.

    Consider the vertex $v_{11}$ and its Type 1b partition line, $H_{l2}$ (the dashed purple line closest to $q_{3}$). 
     Crossing $H_{l2}$ to $q_3$ (Figures~\ref{fig:vertex-horizon-line2}–\ref{fig:vertex-horizon-line3}) moves $v_{11}$ into the interior of $\text{Vis}_k(q_3)$, removing it from the boundary sequence $\mathcal{S}(q)$ and shifting the proximal edge descriptor from $v_{11}v_{12}$ to $v_{10}v_{11}$. This combinatorial shift necessitates Type 1b lines as partition boundaries.

\end{proof}
\begin{theorem}
    The cell decomposition does not require horizon lines from the windows of $\text{Vis}_{x}(v)$ where $x \leq k-4$ or $x \geq k + 2$ (when $v$ is not critical to the vertex responsible for the window).
\end{theorem}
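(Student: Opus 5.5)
Crossing a window line $H$ of $\text{Vis}_x(v)$ changes the depth $d(q,v)=\vert\overline{qv}\cap\partial P\vert$ of the segment from the query point to $v$ by exactly two, between $x$ and $x+2$. The two crossings added or removed are the two edges incident to the critical vertex $c$ generating that window. I will show that whenever $x\le k-4$ or $x\ge k+2$, neither value equals $k$ or $k-2$. By the preceding two theorems, those are the only depths at which $v$ takes part in the stored sequence $\mathcal{S}(q)$. Hence crossing $H$ produces no combinatorial mutation, and $H$ need not be inserted into the arrangement.

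\textbf{Step 1 (depth parity across a window).} Take a query point $q$ moving transversally across $H$ at a point $p_0$ that does not lie on any Type 1a, 1b or Type 2 line. Along the ray from $v$ through $c$, the segment $\overline{qv}$ sweeps past $c$. By general position the only event is that $\overline{qv}$ gains or loses the two edges at $c$, since $c$ is critical to $v$. So the depth jumps between $x$ and $x+2$, and all other intersections of $\overline{qv}$ with $\partial P$ persist.

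\textbf{Step 2 (the role of $v$ in $\mathcal{S}(q)$ depends only on $d(q,v)$ near the crossing).} I will classify the possibilities.
\begin{itemize}
\item If $d(q,v)=k$, then $v$ is a $k$-visible vertex, or $v$ is an endpoint on the boundary of $\text{Vis}_k(q)$.
\item If $d(q,v)=k-2$, then $v$ lies in the interior of $\text{Vis}_k(q)$. If $d$ were to pass from $k-2$ to $k$, this is exactly the proximal-edge event of Type 1b.
\item If $d(q,v)\ge k+2$ or $d(q,v)\le k-4$, then $v$ is, respectively, strictly in shadow or strictly interior with margin. In either case $v$ is neither a sequence entry nor the endpoint of a proximal or distal edge of any window of $\text{Vis}_k(q)$.
\end{itemize}
For $x\ge k+2$, both $x$ and $x+2$ are at least $k+2$. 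For $x\le k-4$, both are at most $k-2$; the value $k-2$ occurs only when $x=k-4$. In that case $v$ stays interior on both sides, since interior status holds for every depth at most $k-2$.

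\textbf{Step 3 (window generator tuples are unaffected).} The line $H$ passes through $c$ and $v$. A tuple $\langle c', e_{\text{prox}}, e_{\text{dist}}\rangle$ of $\text{Vis}_k(q)$ can change only when its ray from $q$ through $c'$ sweeps across a vertex, namely an endpoint of $e_{\text{prox}}$ or $e_{\text{dist}}$. If that vertex is $v$, the ray from $q$ through $c'$ hits $v$, so $q$ lies on line $c'v$. On $H$ this forces $c'=c$. Then $q$, $c$ and $v$ are collinear with $c$ acting as a hinge seen from $q$. If moreover $v$ is critical to $c$, this is precisely a Type 2 mutually-critical hinge line, which is already present. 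This is the exception stated in the theorem. Otherwise $v$ is not a turning point for the ray beyond $c$, so the ray does not switch edges at $v$ and $e_{\text{dist}}$ is unchanged.

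\textbf{Main obstacle.} The delicate step is Step 3: showing that a window of $\text{Vis}_k(q)$ cannot have its distal or proximal edge switch at $v$ while $d(q,v)\notin\{k-2,k\}$. My plan is to argue by contradiction. Suppose the ray from $q$ through $c$ reaches $v$ as an endpoint of a recorded edge. A window of $\text{Vis}_k(q)$ terminates at the crossing where the depth reaches $k$ or $k+1$. Its endpoints on $e_{\text{prox}}$ and $e_{\text{dist}}$ therefore sit at depth $k-1$ or $k$ counted along the ray. So $v$ being an endpoint forces $d(q,v)\in\{k-2,k\}$, up to the $\pm1$ contribution of the incident edges at $v$, which contradicts Step 2.
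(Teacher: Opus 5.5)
Your Steps 1--2 are essentially the paper's argument. The paper splits by whether $c$ and $v$ are mutually critical and sends that case to the Type 2 lines. Otherwise it only observes that $v$ is in shadow on both sides of $H$ when $x\ge k+2$, and inside $\text{Vis}_k(q)$ on both sides when $x\le k-4$, so $v$ never enters or leaves $\mathcal{S}(q)$. The paper never examines window tuples.

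Your Step 3 goes further and targets a real issue. For $q$ near $H$ the ray from $q$ through $c$ sweeps over $v$, which is exactly the mechanism behind the proximal-edge flip in the paper's Type 1b example. As written, however, Step 3 does not go through.
\begin{itemize}
\item ``The ray does not switch edges at $v$'' is false. Since $v$ is not critical to $c$, its two edges lie on opposite sides of line $cv$. So the ray crosses one of them on one side of $H$ and the other on the other side. Only the crossing count is invariant, so $c$'s tuple changes exactly when the crossing at $v$ is its recorded proximal or distal crossing.
\item Your count (``forces $d(q,v)\in\{k-2,k\}$, up to $\pm1$'') does not contradict Step 2. For $x=k-4$, Step 2 itself gives $d(q,v)=k-2$ on one side. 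Being interior at depth $k-2$ is no obstruction: in the paper's Type 1b example $v$ is interior at depth $k-2$, yet the proximal edge still flips between the two edges at $v$.
\item The third bullet of Step 2 is also false as stated. An edge at $v$ can be $e_{\text{prox}}$ of a window whose ray meets it far from $v$. What you actually need is that no recorded crossing passes over $v$.
\end{itemize}

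The repair is a precise count. Let $D$ be the number of crossings of $\partial P$ on line $cv$ strictly between $q$ and $v$, excluding the two edges at $c$. On the two sides of $H$, $d(q,v)$ equals $D$ and $D+2$, with $D\le x<D+2$. So $D\in\{x-1,x\}$; Step 1's ``exactly $x$ and $x+2$'' ignores parity, though harmlessly.

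The ray from $q$ through $c$ reaches $v$ after exactly $D$ crossings on its unobstructed side. Along that side, the window of $c$ in $\text{Vis}_k(q)$ occupies the stretch where this count is $k-1$ or $k$ (possibly cut at $\partial P$). Hence every recorded crossing is preceded by $k-2$, $k-1$ or $k$ crossings. The crossing at $v$ can therefore be recorded only if $D\in\{k-2,k-1,k\}$, whereas your hypothesis gives $D\le k-4$ or $D\ge k+1$.

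One more case remains: the reverse sweep, where the ray from $q$ through $v$ passes over $c$. This matters only if $v$ generates windows, i.e.\ is critical to $q$, which near $H$ is equivalent to being critical to $c$ — the excluded case. With these additions, your local-event argument becomes a more complete proof than the paper's, which only tracks the membership of $v$ in $\mathcal{S}(q)$.
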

The proof is in Appendix~\ref{appendix:unneededwindows}.

\begin{theorem}
The cell decomposition strictly requires the secondary hinge transition lines (Type 2) generated by mutually critical vertex configurations.
\end{theorem}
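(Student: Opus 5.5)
The proof will follow the template of the two Type~1 theorems: take a generic point on a Type~2 line, take query points $q_1,q_2$ in arbitrarily small open neighborhoods on opposite sides of it, and exhibit an entry of the structural sequence that differs between $\mathcal{S}(q_1)$ and $\mathcal{S}(q_2)$. Concretely, let $a,b$ be two vertices that are mutually critical, meaning $a$ is critical to $b$ and $b$ is critical to $a$. Let $H_{ab}$ be the portion inside $P$ of the extension of $\overline{ab}$ beyond $a$ (the part beyond $b$ is symmetric). Choose $p_0 \in H_{ab}$ in general position, so that it lies on no Type~1a or Type~1b line. Then every vertex keeps the same membership and interior/boundary status in $\text{Vis}_k(q)$ as $q$ crosses $H_{ab}$ near $p_0$. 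This isolates the only possible change to window generator tuples, which is exactly the event Type~2 lines are claimed to govern.

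The key step is to analyze the ray $\rho(q)$ from $q$ through $a$ while $q$ moves along $s=\overline{q_1q_2}$. While $q \in H_{ab}$, the ray $\rho(q)$ passes through $b$. Because $b$ is critical to $a$, and hence to every point on the line $ab$, both edges of $b$ lie on one side of this line. As $q$ crosses $H_{ab}$, the ray therefore sweeps across $b$, switching from hitting one edge incident to $b$ (or passing $b$ without touching its edges) to hitting the other. The window anchored at $a$ has its proximal or distal edge determined by which boundary edge $\rho(q)$ meets at the relevant crossing count ($k$ or $k+1$). I will argue in two cases:
\begin{itemize}
\item \emph{Case (i):} $b$ lies at the crossing position that defines $e_{\text{dist}}$ (or $e_{\text{prox}}$) of the window of $a$. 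Then the tuple changes from $\langle a, e, e_{\text{dist}}\rangle$ to $\langle a, e, e'_{\text{dist}}\rangle$, where $e_{\text{dist}}$ and $e'_{\text{dist}}$ are the two edges at $b$ or the edges beyond them.
\item \emph{Case (ii):} $b$ lies earlier along the ray. Then crossing $b$ changes the parity count by two, because the ray enters and leaves $b$'s wedge. This shifts which edge is the $k$-th crossing, so the terminating edge index changes by two.
\end{itemize}
In both cases a generator tuple differs, so $\mathcal{S}(q_1)\neq\mathcal{S}(q_2)$.

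To finish, I will note that the change happens at a point $p_0$ that is not on any Type~1 line. Hence no Type~1a or Type~1b boundary separates $q_1$ and $q_2$, and $H_{ab}$ must itself be a partition boundary. I will illustrate this with an instance from Figure~\ref{fig:2-vis-poly}, for example the window $e_5e_6$ anchored at $v_9$ with a mutually critical partner.

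The main obstacle is case (ii) together with the requirement that the window anchored at $a$ actually exists in $\text{Vis}_k(q)$ near $p_0$. Mutual criticality alone does not guarantee that $a$ generates a window of $\text{Vis}_k(q)$: the crossing depth of $a$ from $q$ must be in the right range. For that reason I will state the necessity existentially, as the paper's Type~1 theorems do: for configurations where $a$ anchors a window whose defining crossing involves $b$, the line is required. I will not attempt to show that every mutually critical pair yields a change everywhere along $H_{ab}$.
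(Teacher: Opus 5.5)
The step that fails is your reduction: ``choose $p_0\in H_{ab}$ on no Type 1a or Type 1b line, so every vertex keeps its status.'' Because $a$ is critical to $b$, $H_{ab}$ lies on the ray from $b$ through $a$, which is exactly where windows of $\text{Vis}_x(b)$ anchored at $a$ live.

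Let $t$ be the number of crossings of $\overline{qb}$ with $\partial P$, not counting the two edges at $a$. Then $t$ is constant along a segment $x_1x_2$ of $H_{ab}$. As $q$ crosses that segment, $\overline{qb}$ sweeps over $a$ and picks up both edges of $a$ or neither, so the depth of $b$ jumps between $t$ and $t+2$. Consequently the whole segment is a window of $\text{Vis}_k(b)$ when $t\in\{k-1,k\}$, and a window of $\text{Vis}_{k-2}(b)$ when $t\in\{k-3,k-2\}$. No generic choice of $p_0$ avoids it.

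Your Case (i) always lands in this situation. A point $p$ of $\rho(q)$ beyond $a$ lies on a window of $\text{Vis}_k(q)$ iff the crossing count of $\overline{qp}$, excluding $a$'s edges, is $k-1$ or $k$. Just before $b$ this count is $t$; past $b$ it is $t$ or $t+2$, depending on the side. So an edge at $b$ can be $e_{\text{prox}}$ or $e_{\text{dist}}$ of $a$'s window only if $t\in\{k-3,\dots,k\}$. That is exactly when $H_{ab}$ is already a Type 1a or Type 1b line. In that case your closing claim that no Type 1 boundary separates $q_1$ and $q_2$ is false, and Case (i) proves nothing beyond the Type 1a/1b theorems. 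It is the coincidence the paper flags under ``Geometric Coincidence'', e.g.\ the CCS situation where the window along $\ell_b$ ends on an edge at $v_2$.

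The sweep is also misdescribed. Both edges of $b$ lie on one side of line $ab$, so $\rho(q)$ crosses both of them on one side and neither on the other. It never trades one edge of $b$ for the other, and the count changes by two with parity preserved.

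What survives is Case (ii) restricted to $t\le k-4$:
\begin{itemize}
\item $\rho(q)$ carries no window of $a$ before $b$.
\item By general position, the only Type 1 candidate on line $ab$ is a window of $\text{Vis}_x(b)$ with $x\in\{t,t+1\}$, so $H_{ab}$ is genuinely new.
\item Beyond $b$, the count along $\rho(q)$ differs by exactly two between the sides. So $a$'s window moves by two crossings, or exists on one side only, and a tuple changes.
\end{itemize}
You still need one explicit configuration in which $\rho(q)$ reaches $k-1$ crossings beyond $b$. The paper also supplies this existential input only through figures ($W=k-2$, $W=k-4$).

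The paper describes this event differently. The rays through $v_1$ and $v_2$ (your $a,b$) exchange angular order, so the windows bounding the thin wedge beyond $v_2$ keep their edges but swap anchors. The paper never tries to separate Type 2 from Type 1 lines. Your shift-by-two view of Case (ii) is the same event seen along $a$'s ray, and it has one advantage: it also covers opposite-side configurations, where the wedge becomes a spike that appears or disappears rather than a pure anchor swap. Your non-redundancy goal is stronger than what the paper proves, but only this restricted Case (ii) delivers it.
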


\begin{proof}
Let $v_1, v_2 \in P$ be two mutually critical vertices, and let $\ell_g = \text{line}(v_1, v_2)$ be the line passing through them. Let $H_{l3} = \overline{x_1 x_2} \subset \ell_g \cap P$ denote the active hinge transition line segment defined between boundary points $x_1$ and $x_2$. Let $a, b \in \text{int}(P)$ be two interior query points situated in small open neighborhoods on opposite sides of $H_{l3}$, such that the open line segment $\overline{ab}$ intersects $H_{l3}$ transversally at a single point.

Depending on the specific obstacle depth configuration ($Z$ and $W$ relative to $k$ as defined in section~\ref{section:additionalLines}), crossing $H_{l3}$ induces a discrete combinatorial mutation in the window generator tuples $C_{i} \in \mathcal{S}(q)$ associated with the critical pair. Without loss of generality, consider the configuration where a visibility window is projected beyond $v_1$ and $v_2$ (corresponding to the representative subcase in Figure~\ref{fig:CCS-generic-4} for W, i.e where $x_3x_4$ is):

\begin{enumerate}
    \item When $q = a$ (situated above $x_1 x_2$), the line-of-sight orientation establishes $v_2$ as the anchor for the first window generator tuple (corresponding to the dashed red line) $C_{1,a} = \langle v_2, e_{\text{prox}1}, e_{\text{dist}1} \rangle$, while $v_1$ anchors the second window generator tuple $C_{2,a} = \langle v_1, e_{\text{prox}2}, e_{\text{dist}2} \rangle$ (here $e_{\text{prox}2} = e_{\text{dist}1})$.
    \item Upon crossing $H_{l3}$ to $q = b$ (situated below $x_1 x_2$), the relative line-of-sight orientation flips. This structural inversion swaps the primary anchors across both windows: $C_{1,b} = \langle v_1, e_{\text{prox}1}, e_{\text{dist}1} \rangle$ and $C_{2,b} = \langle v_2, e_{\text{prox}2}, e_{\text{dist}2} \rangle$ (as illustrated in the representative subcase of Figure~\ref{fig:CCS-generic-4}).
\end{enumerate}

An analogous dual-tuple mutation occurs across $H_{l3}$ for all other valid subcases of mutually critical pairs (detailed in Appendix~\ref{appendix:mc-cases}). Because dynamic window reconstruction at query time evaluates the ray passing from $q$ through the specific anchor vertex recorded in $C_1$ and $C_2$, the stored generator tuples satisfy $\{C_{1,a}, C_{2,a}\} \neq \{C_{1,b}, C_{2,b}\}$, which implies $\mathcal{S}(a) \neq \mathcal{S}(b)$. Therefore, $a$ and $b$ cannot share the same topological cell, establishing that $H_{l3}$ is a mandatory cell partition boundary.

% Consider Figure~\ref{fig:CCS-generic-4}. Let $x_1x_2$ ($H_{l3}$) be a hinge partition line defined by the collinear alignment of two mutually critical vertices $v_1$ and $v_2$. As the query point crosses $H_{l3}$ from position $a$ to position $b$, the critical vertex generating the window changes, and so the window generator tuple changes from $(v_{2}, e_{prox}, e_{dist})$ to $(v_{1}, e_{prox}, e_{dist})$. Because $O(1)$ dynamic window reconstruction at query time relies on a critical vertex, the identity of the anchor ($v_1$ or $v_2$) must be invariant within any single cell. Thus, the collinear extensions where window generator tuples change are structurally mandatory. (Consider another type of change in Figure~\ref{fig:CCO-generic-4} in Appendix~\ref{appendix:mc-cases})

\qed
\end{proof}

\section{Hinge Transition Lines}
\label{section:additionalLines}

To determine where hinge transition lines are needed, we look at what happens to the visibility polygon in relation to two vertices that are mutually critical, similarly to~\cite{bahoo2026exactgeneralizedkcelldecomposition}, with the difference being that in that work, the shadow is defined only within the polygon, whereas here the shadow is also defined outside. To do this, we define the following: let $\ell{g}$ be the line through two critical vertices $v_1$ and $v_2$ (from left to right). Let $x_1$ be a point to the left of $v_1$ on $\ell_{g}$ and $x_2$ be the point on $\ell_{g}$ immediately subsequent to $x_1$ also on the polygon. Let $x_4$ be a point after $v_2$ on $\ell_{g}$ that touches the polygon and $x_3$ the point immediately preceding it which also the touches the polygon. Let the query point be denoted $a$ when it is above $x_1x_2$ and $b$ when it is below. Let $\ell_{b}$ be the line through the query point through $v_1$. Let $\ell_{r}$ be the line from the query point through $v_2$. To see where the combinatorial structure of the $k$-visibility polygon changes due to the two critical vertices, we vary values of $k$ and obstruction count. For this, we use two variables, the first one is $Z$, which is the number of edges between $x_1$ and $v_2$ in terms of $k$ not including the edges incident to $v_1$ and $v_2$, and $W$, which is the number of edges between $x_1$ and $x_4$ in terms of $k$ not including the edges incident to $v_1$ and $v_2$. There are eight basic cases of how two critical vertices may be arranged:

\begin{itemize}
\renewcommand{\labelitemi}{$\bullet$}
\setlength{\itemsep}{0pt}
    \setlength{\parskip}{0pt}
    \item CCS (Convex - Convex - Same)
    \item CCO (Convex - Convex - Opposite)
    \item CRS (Convex - Reflex - Same)
    \item CRO (Convex - Reflex - Opposite)
    \item RCS (Reflect - Convex - Same)
    \item RCO (Reflex - Convex - Opposite)
    \item RRS (Reflex - Reflex - Same)
    \item RRO (Reflex - Reflex - Opposite)
\end{itemize}

There are also four more special cases  (CC, CR,
RC, RR). These occur when $v_1$ and $v_2$ are adjacent to each other. These cases are necessary because the visibility polygon behaves differently compared to the first eight basic cases.

\paragraph{Completeness of Case Analysis.}
To establish that the 8 generic subcases (along with the 4 adjacent-vertex special cases) exhaustively capture all secondary topological events, we observe the following structural properties:
\begin{enumerate}
    \item \textbf{General Position Assumption:} Assuming no three vertices of $P$ are collinear guarantees that any partition line lies on a line $\ell_g$ that passes through at most two vertices ($v_1, v_2$).
    \item \textbf{Decoupling Primary vs. Secondary Events:} Single-vertex critical events—such as changes to a window's proximal ($e_{\text{prox}}$) or distal ($e_{\text{dist}}$) edge —are already governed by primary horizon lines generated by the $k$-visibility and $(k-2)$-visibility polygons of individual vertices. If a line passes through a vertex and another vertex that is not critical to it, no  mutation occurs upon crossing $\ell_g$.
    \item \textbf{Exhaustive Parameter Bounding for Mutually Critical Pairs:} For mutually critical pairs, the obstruction variables $Z$ and $W$ represent boundary crossings based on a constant $k$ that change in increments of $2$. The parameter space is bounded at the lower extreme when $v_2$ and $x_4$ are fully visible and at the upper extreme when they are permanently occluded from both query positions $a$ and $b$. Because all intermediate state transitions are either analyzed or ruled out by parity constraints, no topological mutations are omitted.
\item \textbf{Geometric Coincidence and Line Sufficiency:} Certain secondary transition lines coincide geometrically with primary horizon lines (e.g., when a vertex comes in and out of a shadow region, and both window generator tuple also change) or with other secondary cases (e.g., when one case shows a mutation in $e_{\text{prox}}$ while a complementary case shows a mutation in $e_{\text{dist}}$ along the same window). While analyzed separately for combinatorial completeness, identifying either condition is sufficient to instantiate the partition line in the spatial arrangement, ensuring no line is omitted or spuriously duplicated.
\end{enumerate}

Here we analyze only the case CCS, and the rest of the cases can be found in the appendix, summarized in   Table~\ref{tab:cases-comprehensive}. The subcases for for each case are different from those in~\cite{bahoo2026exactgeneralizedkcelldecomposition}.

\subsection{CCS}
\label{section:CCS} 

\begin{lemma}
\label{lemma:ccs}
    When $Z = k - 1$, $W = k$, $Z = k - 3$, $W = k - 2$ or $W = k - 4$, the combinatorial structure of the visibility polygon changes as the query point moves from $a$ to $b$, so there has to be a partition line at $x_1x_2$. 
\end{lemma}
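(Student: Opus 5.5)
The plan is a direct case analysis on the listed values of $Z$ and $W$. For each value I will compare the combinatorial sequences $\mathcal{S}(a)$ and $\mathcal{S}(b)$ for a query point just above and just below $x_1x_2$, and exhibit one entry, a $k$-visible vertex or a window generator tuple $\langle c, e_{\text{prox}}, e_{\text{dist}}\rangle$, that differs between them.

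The setup uses the CCS configuration. Both $v_1$ and $v_2$ are convex and critical to the query point, and their incident edges lie on the same side of $\ell_g$. As $q$ crosses $\ell_g$, the rays $\ell_b$ (through $v_1$) and $\ell_r$ (through $v_2$) swap their angular order. From $a$, the ray through $v_1$ grazes $v_1$, continues, and must pass on the non-edge side of $v_2$, crossing the edges between $x_1$ and $v_2$. From $b$ the roles are exchanged. Counting crossings along a ray slightly displaced from $\ell_g$ to either side, I will record three quantities:
\begin{itemize}
\item the depth of $v_2$ as seen past $v_1$, which is $Z$ or $Z\pm 1$ depending on the side, because the two edges incident to $v_1$ (resp.\ $v_2$) are crossed on one side and not the other;
\item the depth at which the ray reaches the segment $x_3x_4$, which is $W$ or $W+2$ analogously;
\item for the distal endpoints, the count of crossings before $x_4$.
\end{itemize}
These parity bookkeeping facts reduce each subcase to checking whether the depth $k$ threshold falls strictly between the counts seen from $a$ and from $b$.

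Next I will treat the subcases in order, describing each as in the proof of the Type 2 theorem.
\begin{itemize}
\item \textbf{$Z=k-1$:} $v_2$ has depth $k$ (hence lies on $\partial\text{Vis}_k$) from one side and depth $k+2$ from the other. The window anchored at $v_1$ then has distal edge incident to $v_2$ from one side, while from the other side the anchor of the terminating window changes.
\item \textbf{$W=k$ and $W=k-2$:} the window through $v_1$ (or $v_2$) ends on the segment near $x_3x_4$ from one side and is cut off earlier, or extends beyond $x_4$, from the other. This changes $e_{\text{dist}}$ or the anchor exactly as in Figure~\ref{fig:CCS-generic-4}, where the two tuples swap anchors $v_1\leftrightarrow v_2$.
\item \textbf{$Z=k-3$:} the analogous phenomenon occurs one level deeper, with the proximal edge of the second window equal to the distal edge of the first.
\item \textbf{$W=k-4$:} $x_3x_4$ region visibility differs by the extra two crossings at $v_1,v_2$.
\end{itemize}
In each subcase the differing tuple gives $\mathcal{S}(a)\neq\mathcal{S}(b)$, and, as in the Type 2 theorem, the region $x_1x_2$ must be a partition line.

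I expect the main obstacle to be the precise crossing-count bookkeeping: how many times a ray infinitesimally above versus below $\ell_g$ meets $\partial P$ at the two convex vertices. This count determines whether the shifts are $\pm1$ or $\pm2$ and thus which of the listed values of $Z,W$ are the active ones. I will settle it first by a local picture at a single convex critical vertex and then compose the two vertices. I will also need to confirm that the changes found are not already explained by Type 1a/1b lines, although for necessity it suffices that some entry changes.
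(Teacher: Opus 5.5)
Your overall route matches the paper's: for each listed value you compare $\mathcal{S}(a)$ with $\mathcal{S}(b)$ and exhibit a differing entry, which the paper does figure by figure. The gap is in the machinery you propose for deciding the subcases. First, the bookkeeping is inconsistent. A ray passing a convex critical vertex on the side of its edges crosses \emph{both} incident edges, so the depth of $v_2$ seen from the two sides of $\ell_g$ differs by exactly $2$, not $1$. For instance, if $Z$ counts the crossings on the way to $v_2$ other than the edges at $v_1,v_2$, then $v_2$ has depth $Z$ from one side and $Z+2$ from the other. Your ``$Z$ or $Z\pm1$'' and your later ``$k$ versus $k+2$'' for $Z=k-1$ cannot both hold.

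Second, and more seriously, you reduce each subcase to ``whether $k$ falls strictly between the counts seen from $a$ and from $b$''. That is a visibility-flip test, and most of the listed subcases are not visibility flips. Beyond $v_2$, the rays $\ell_b,\ell_r$ cut a neighbourhood of $\ell_g$ into three thin sectors carrying $0$, $2$, $4$ crossings of edges incident to $v_1,v_2$, and this holds from \emph{either} query point. From one side the middle sector passes $v_1$ away from its edges and $v_2$ on its edge side; from the other side it is the reverse. Consequently:
\begin{itemize}
\item The depth labelling near $x_3x_4$ is the same from $a$ and $b$, and the windows there occupy the same positions with the same $e_{\text{prox}},e_{\text{dist}}$.
\item Your claim for $W=k-4$, that the visibility of the $x_3x_4$ region differs, is false.
\item The only change is which of $\ell_b,\ell_r$ bounds which sector, i.e.\ the generator switches $v_1\leftrightarrow v_2$. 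This is exactly the mutation the paper records for $W=k$, $k-2$, $k-4$.
\end{itemize}
The same problem arises for $Z=k-3$. There $v_2$ is $k$-visible from both sides (depths $k-3$ and $k-1$, or $k-2$ and $k$ with your offsets), so your test reports no change. Yet the window just past $v_2$ is generated by $v_2$ on one side and by $v_1$, entering through the edges of $v_2$, on the other. Writing ``exactly as in Figure~\ref{fig:CCS-generic-4}'' imports the paper's conclusion instead of deriving it.

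The repair is to track each generator's ray separately and use the actual window criterion. A piece of the ray through $c$ is a window iff the depth $d$ on the side away from $c$'s edges satisfies $d\le k<d+2$, i.e.\ $d\in\{k-1,k\}$. Beyond $v_2$, the depth pairs along $\ell_b$ and $\ell_r$ shift by $2$ in opposite directions across $\ell_g$, and this produces the anchor swap in each listed subcase.
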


\begin{proof}
Consider the following subcases, which covers all possible values of $Z$ and $W$. Since any continuous ray traversing an obstacle edge must both enter and exit the obstacle interior to return to the polygon interior, $Z$ and $W$ change in increments of 2. Consequently, intermediate values such as $Z = k + 2$ are structurally impossible.
\begin{itemize}
    \item $Z \geq k + 3$, $W \geq k + 4$: In this case, the vertex $v_{2}$ and its local neighborhood are completely occluded from the query region, possessing a visibility depth strictly greater than $k$. The same occlusion applies to the segment $x_{3}x_{4}$. Consequently, crossing the line $\ell{g}$ between $x_1$ and $x_2$ induces no topological change to the boundaries of the $k$-visibility polygon within these regions, rendering a partition line unnecessary.
    
    \item $Z = k + 1$, $W = k + 2$: See Figure~\ref{fig:CCS-generic-0}. Here, both $v_{2}$ and $x_{3}x_{4}$ reside entirely within an occluded shadow region relative to the threshold $k$, both when the query point is at $a$ and $b$, so no structural mutation occurs within the $k$-visibility polygon.
\end{itemize}

     \begin{figure}[H]
\centering
\makebox[\textwidth][c]{%
  \scalebox{0.8}{%
    \begin{minipage}{\linewidth}
    \centering
    \begin{subfigure}[b]{.49\linewidth}
    \includegraphics[width=\linewidth]{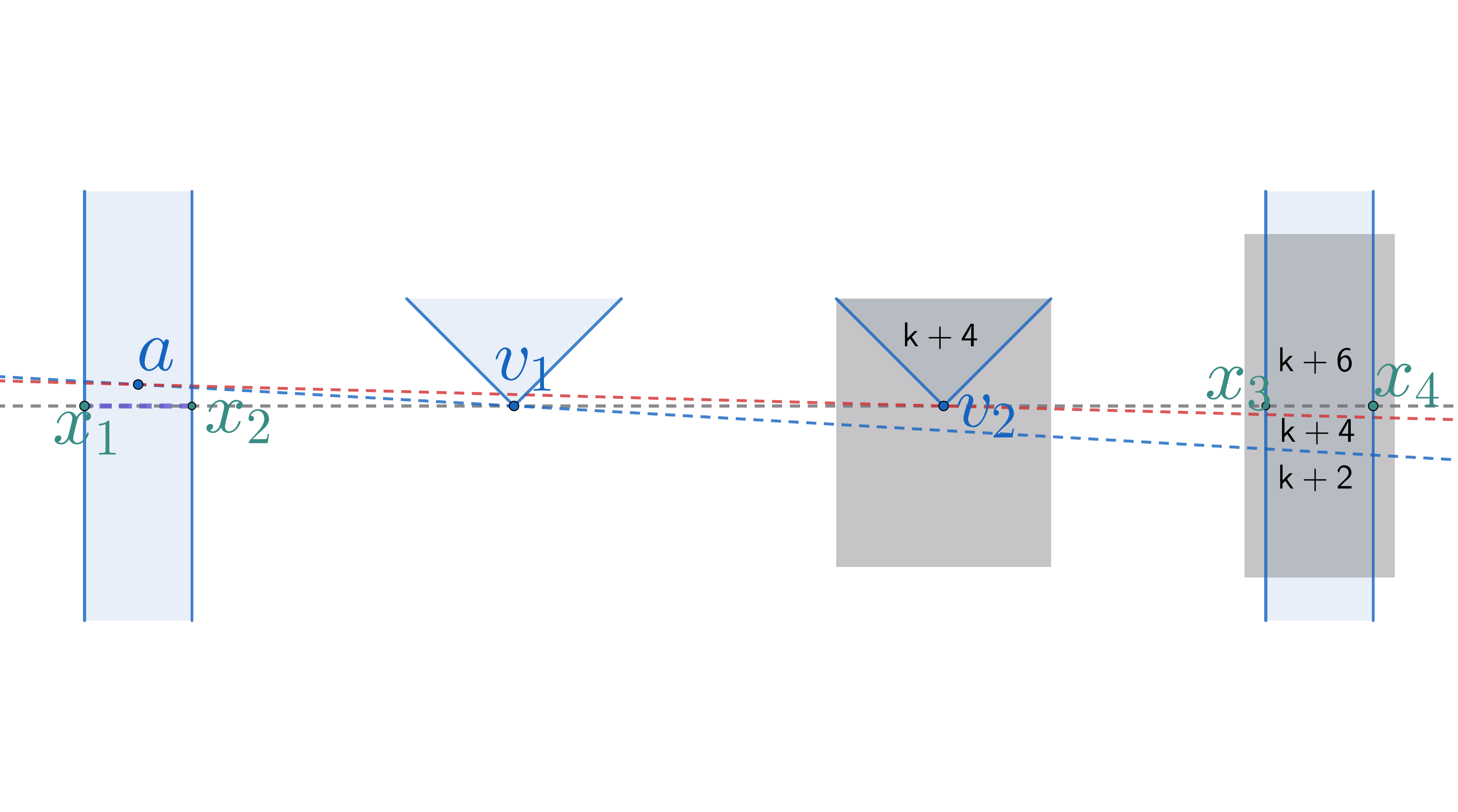}
    \caption{Above $l_{g}$}\label{fig:CCS-genericA0}
    \end{subfigure}
    \begin{subfigure}[b]{.49\linewidth}
    \includegraphics[width=\linewidth]{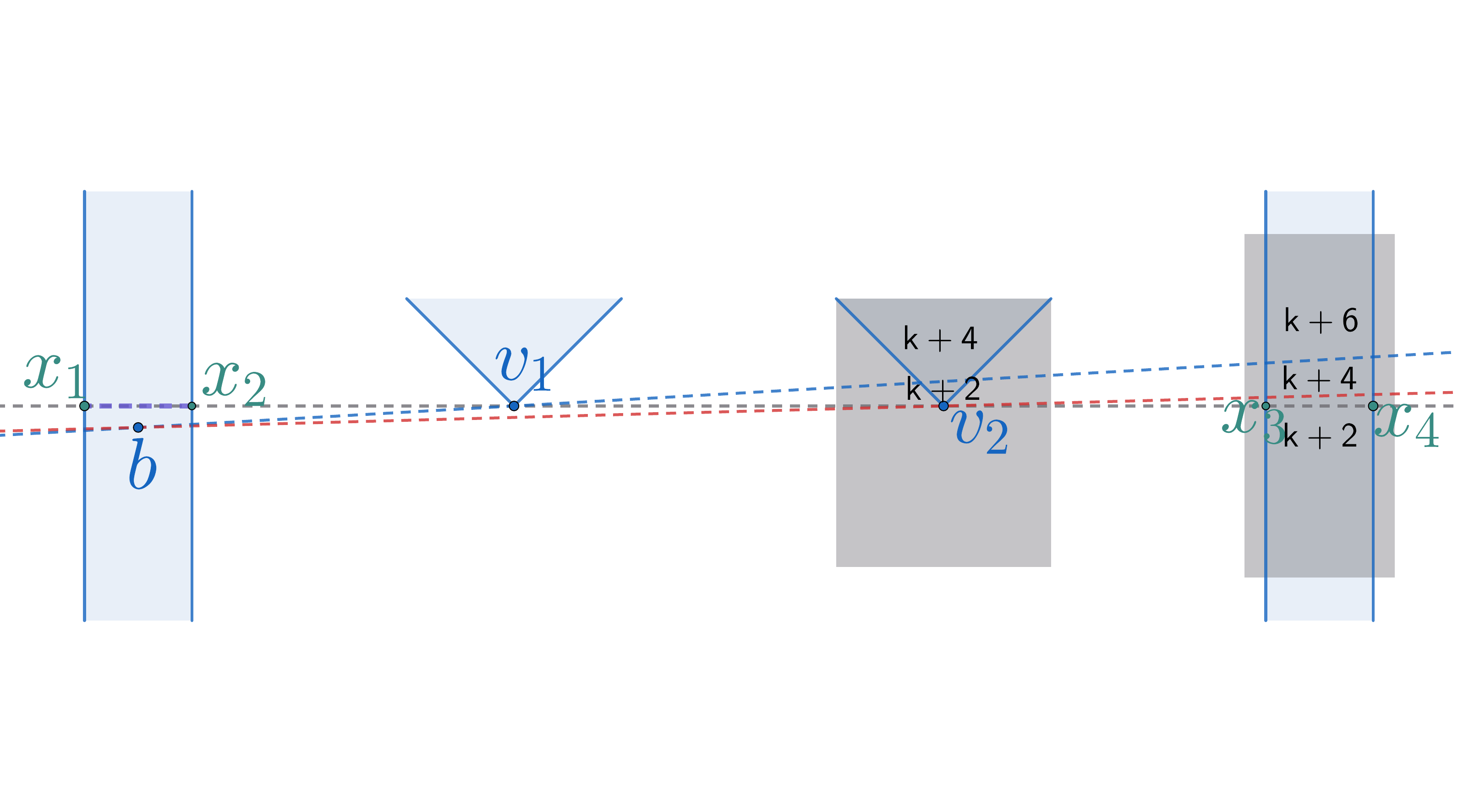}
    \caption{Below $l_{g}$}\label{fig:CCS-genericB0}
    \end{subfigure}
    \end{minipage}%
  }%
}
\caption{CCS; $Z = k + 1$, $W = k + 2$}
\label{fig:CCS-generic-0}
\end{figure}
\begin{itemize}
   
    \item $Z = k - 1$: See Figure~\ref{fig:CCS-generic-2}. When the query point is positioned at $a$ (above $\ell_g$), a visibility window is projected along the ray $l_{b}$. When the query point is at $b$ (below $\ell{g}$), the window along $l_b$ terminates on the edge incident to  $v_{2}$, while a new visibility window originates from $v_{2}$ along the ray $l_{r}$. This represents a structural mutation of the $k$-visibility polygon, dictating that the segment $x_{1}x_{2}$ must be explicitly integrated as a partition line.
    
    \item $W = k$: See Figure~\ref{fig:CCS-generic-2}. At point $a$, the local visibility boundary contains a window generated by the critical vertex $v_{1}$ that terminates on the distal edge near $x_{4}$. Upon crossing to point $b$, the generator of this window transitions to the critical vertex $v_{2}$ while still terminating near $x_{4}$. Because the stored window generator tuple changes, $x_{1}x_{2}$ acts as a necessary hinge transition boundary.

    \item $Z = k - 3$. See Figure~\ref{fig:CCS-generic-4}. When at $a$, the window near $v_{2}$ is defined by $v_{2}$. At $b$, the window near $v_{2}$ is defined by $v_{1}$. Therefore, there needs to be a partition line on $x_{1}x_{2}$.
\item $W = k - 2$. See Figure~\ref{fig:CCS-generic-4}. At $a$, the window near $x_{4}$ is defined by $v_{2}$. At $b$, the window is defined by $v_{1}$. Therefore, there needs to be a partition line on $x_{1}x_{2}$.

 \item $Z = k - 5$. See Figure~\ref{fig:CCS-generic-6}. The neighbourhood of $v_2$ is entirely visible at both $a$ and $b$, so no partition line is needed.
\item $W = k - 4$. See Figure~\ref{fig:CCS-generic-6}. At $a$, the window near $x_{4}$ is defined by $v_{2}$. At $b$, the window is defined by $v_{1}$. Therefore, there needs to be a partition line on $x_{1}x_{2}$.

\item $Z \leq k-7$. The entire region near $v_{2}$ is visible, so nothing happens to the visibility polygon there. 
\item $W \leq k - 6$. The entire region near $x_{3}x_{4}$ is visible, so nothing happens to the visibility polygon there. 
\end{itemize}

 \begin{figure}[H]
\centering
\begin{subfigure}[b]{.49\linewidth}
\includegraphics[width=0.9\linewidth]{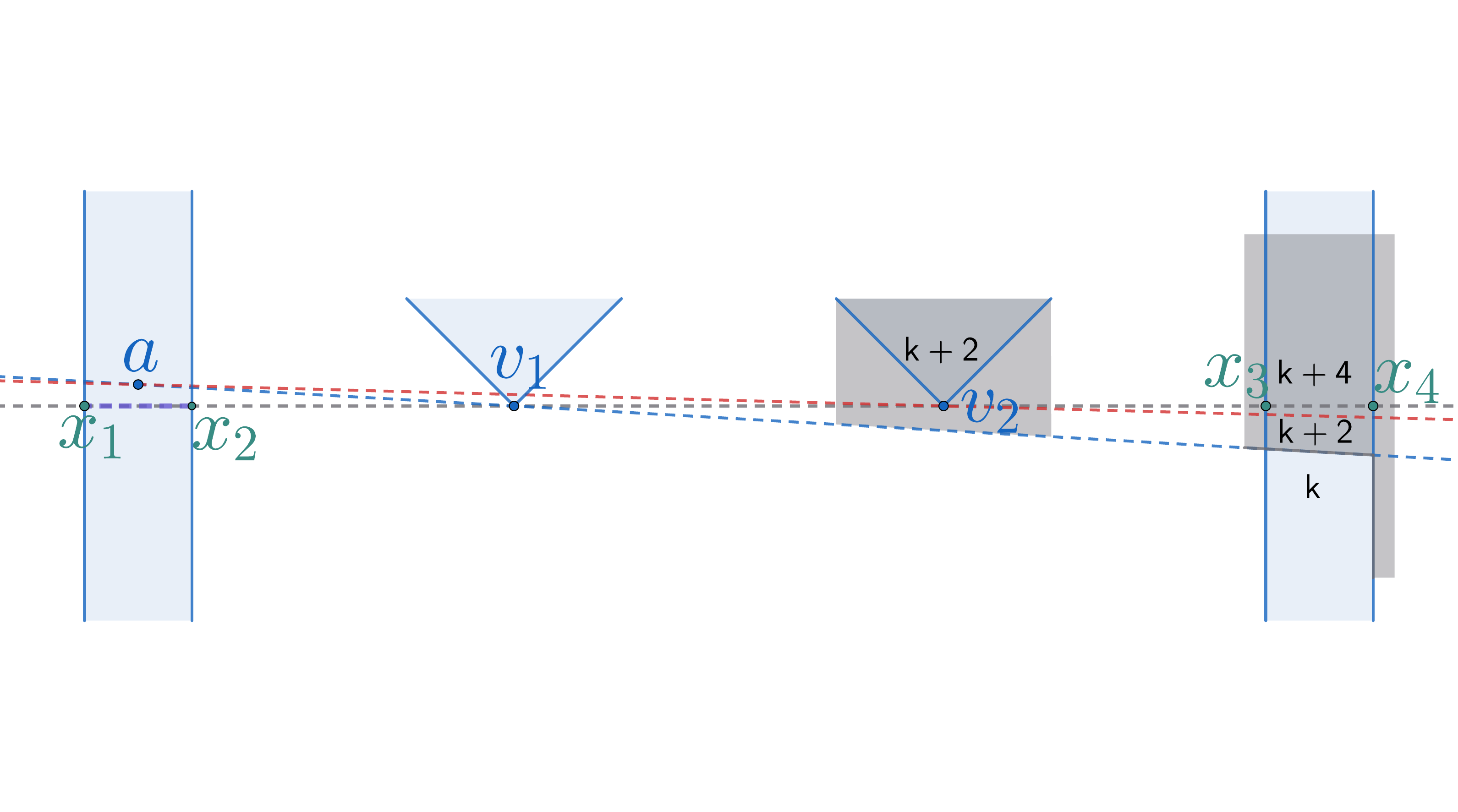}
\caption{Above $l_{g}$}\label{fig:CCS-genericA2}
\end{subfigure}
\begin{subfigure}[b]{.49\linewidth}

\includegraphics[width=0.85\linewidth]{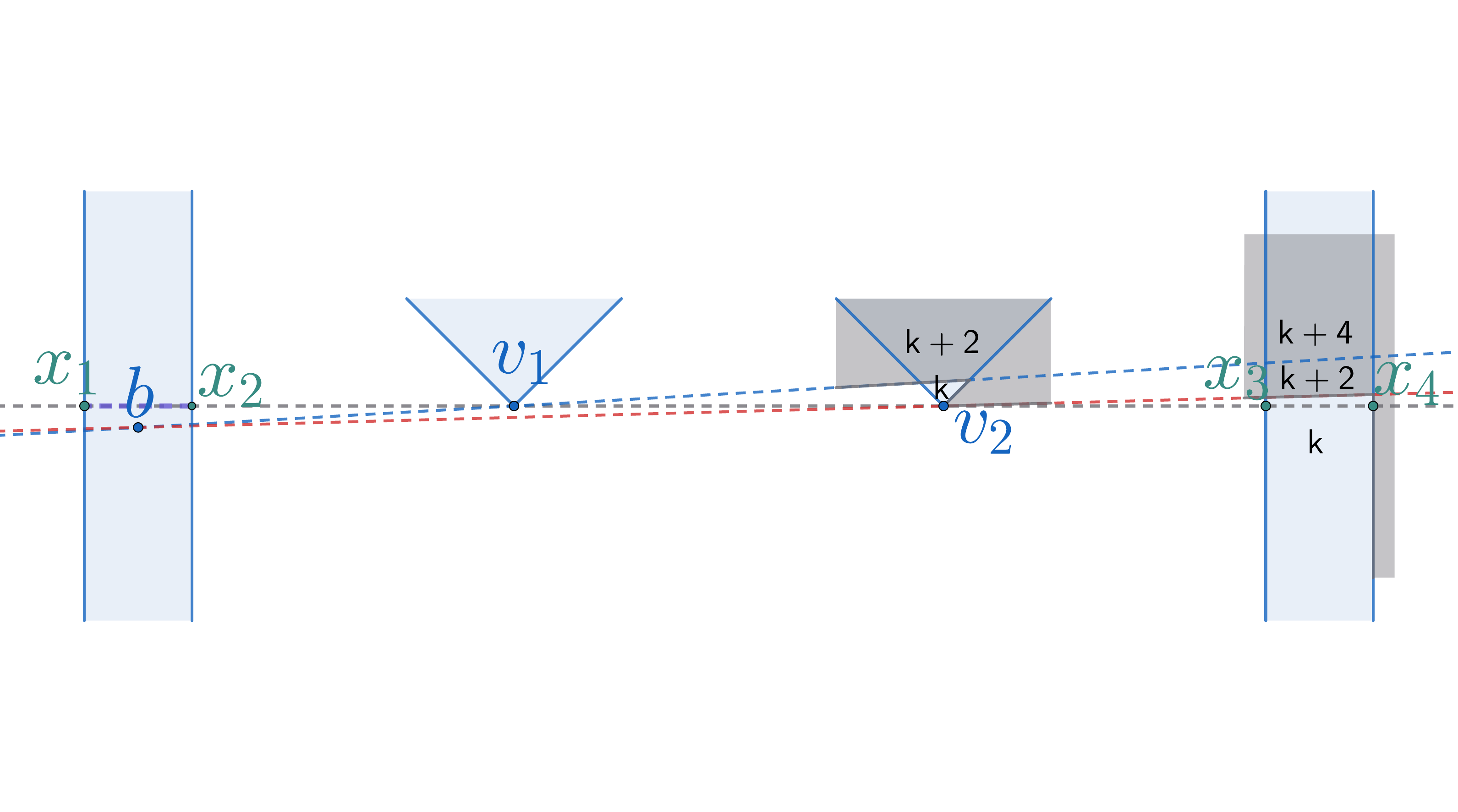}
\caption{Below $l_{g}$}\label{fig:CCS-genericB2}
\end{subfigure}

\caption{CCS; $Z = k - 1$, $W = k$}
\label{fig:CCS-generic-2}
\end{figure}

 \begin{figure}[H]
\centering
\begin{subfigure}[b]{.49\linewidth}
\includegraphics[width=0.85\linewidth]{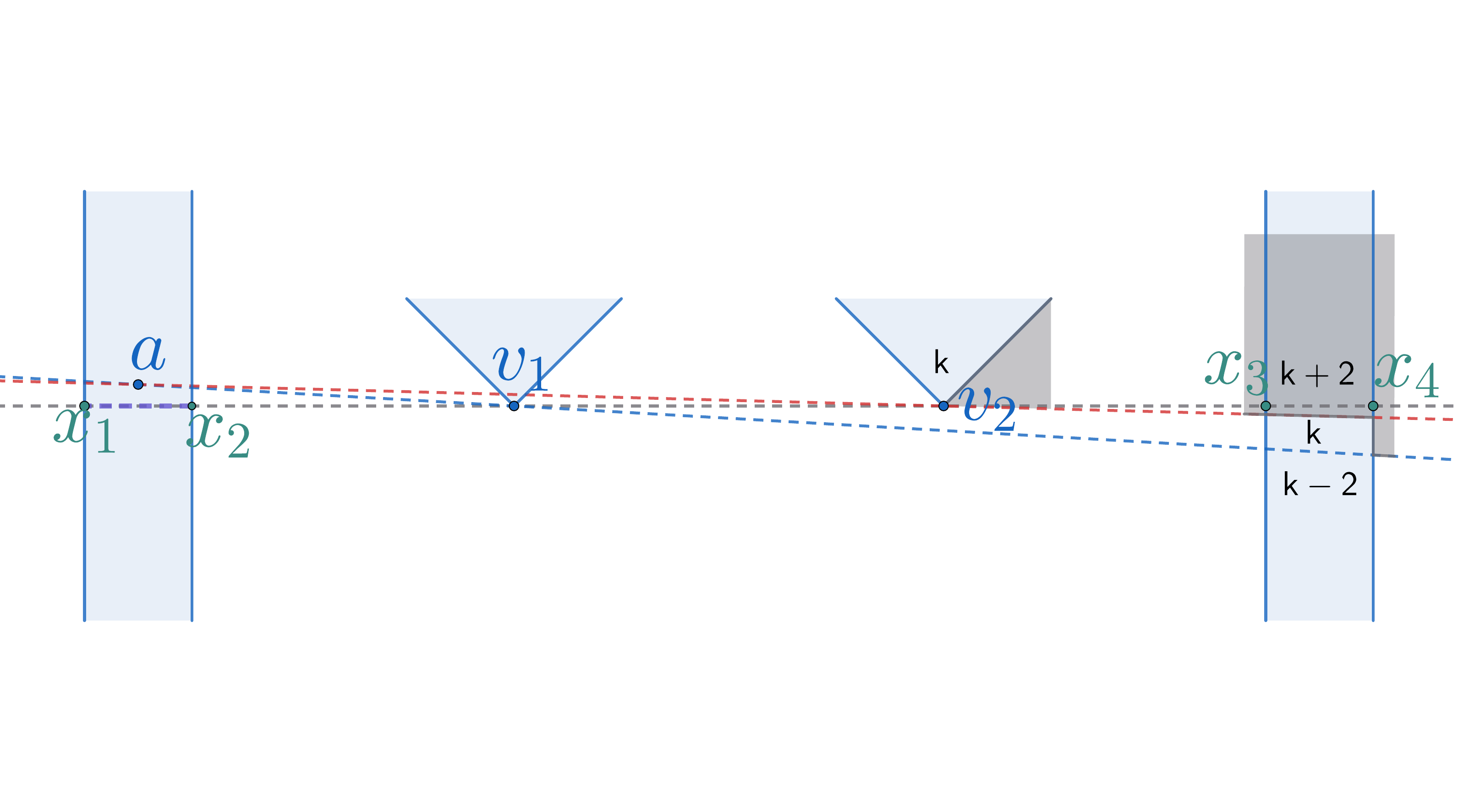}
\caption{Above $l_{g}$}\label{fig:CCS-genericA4}
\end{subfigure}
\begin{subfigure}[b]{.49\linewidth}

\includegraphics[width=0.85\linewidth]{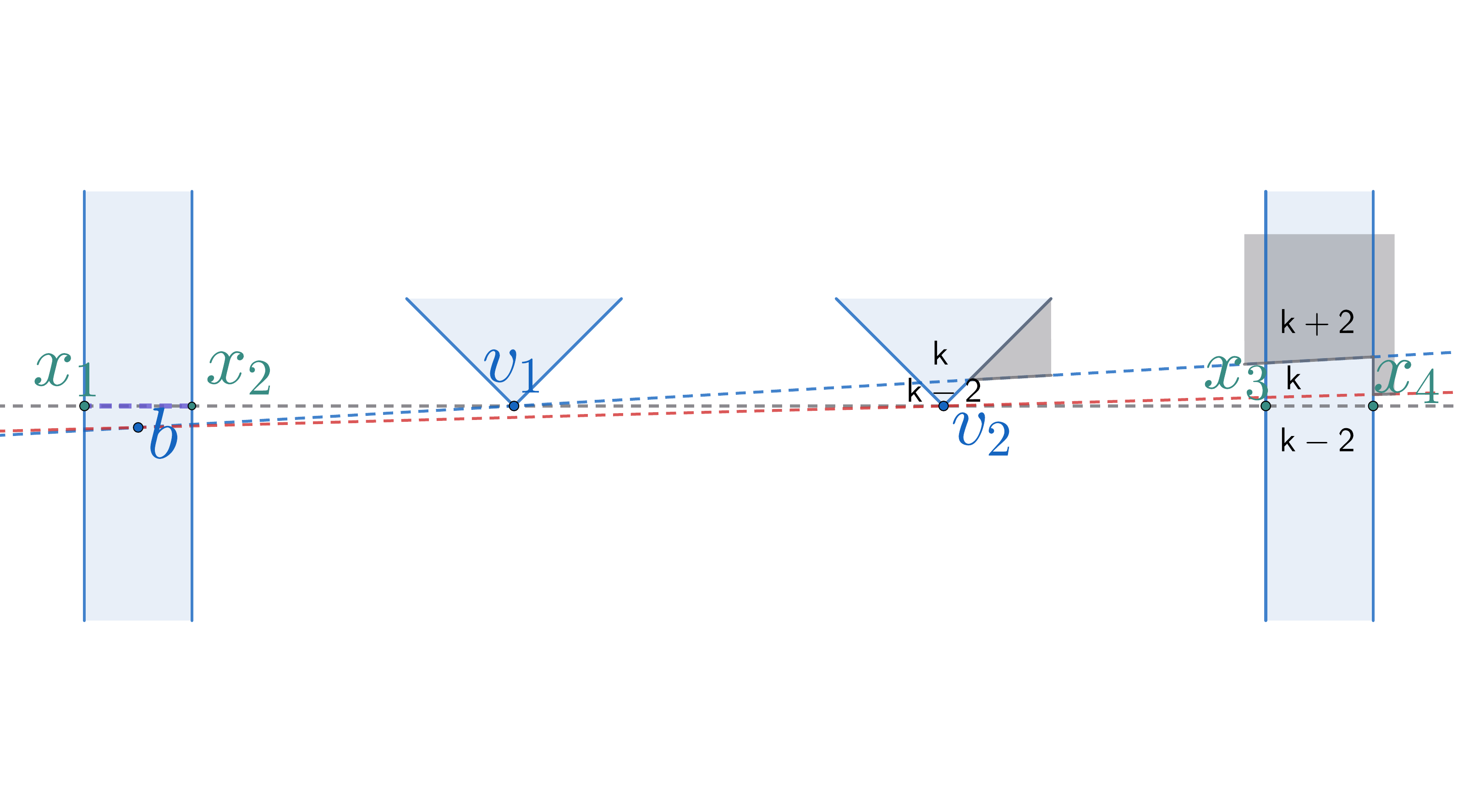}
\caption{Below $l_{g}$}\label{fig:CCS-genericB4}
\end{subfigure}

\caption{CCS; $Z = k - 3$, $W = k-2$}
\label{fig:CCS-generic-4}
\end{figure}

 \begin{figure}[H]
\centering
\begin{subfigure}[b]{.49\linewidth}
\includegraphics[width=0.85\linewidth]{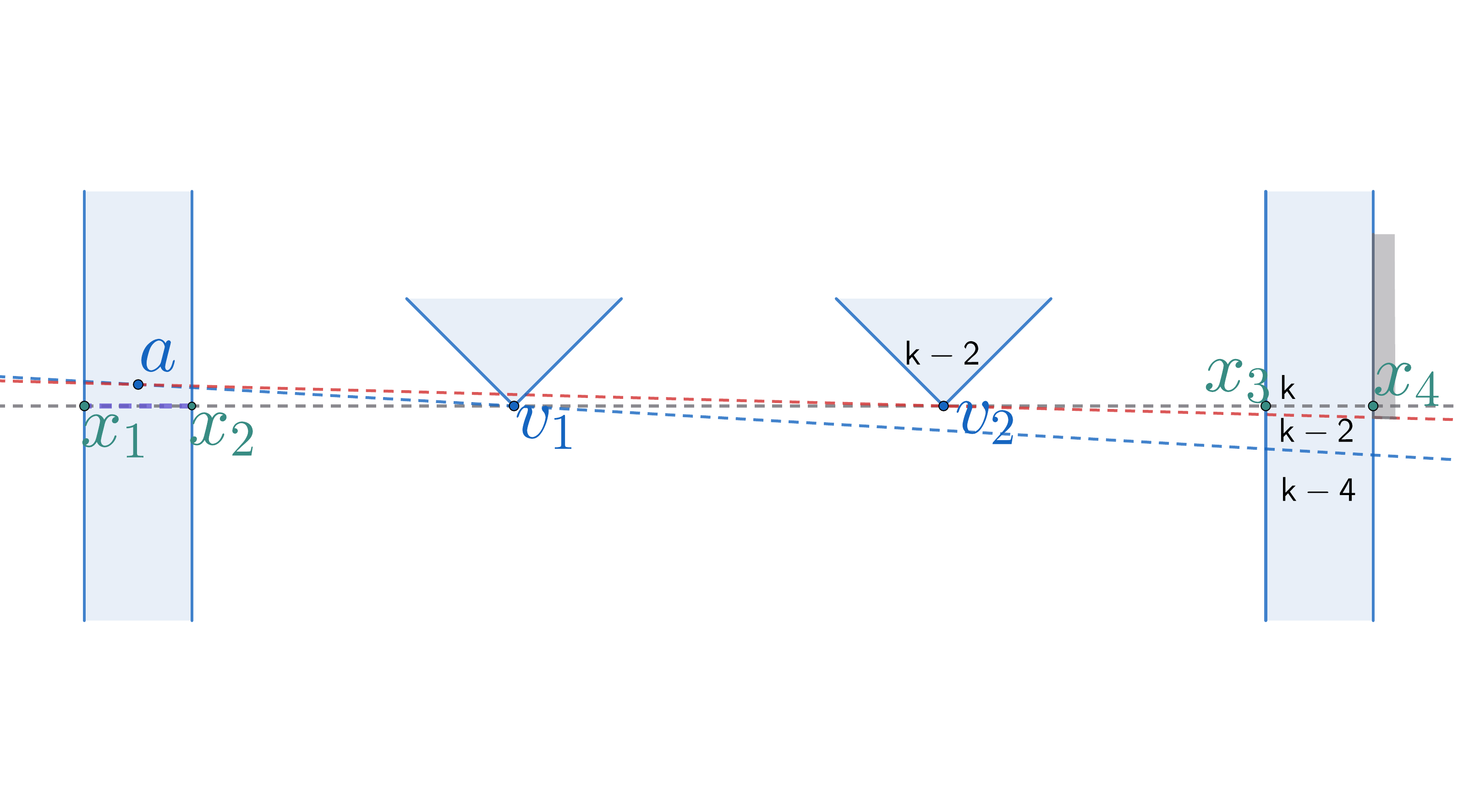}
\caption{Above $l_{g}$}\label{fig:CCS-genericA4}
\end{subfigure}
\begin{subfigure}[b]{.49\linewidth}

\includegraphics[width=0.85\linewidth]{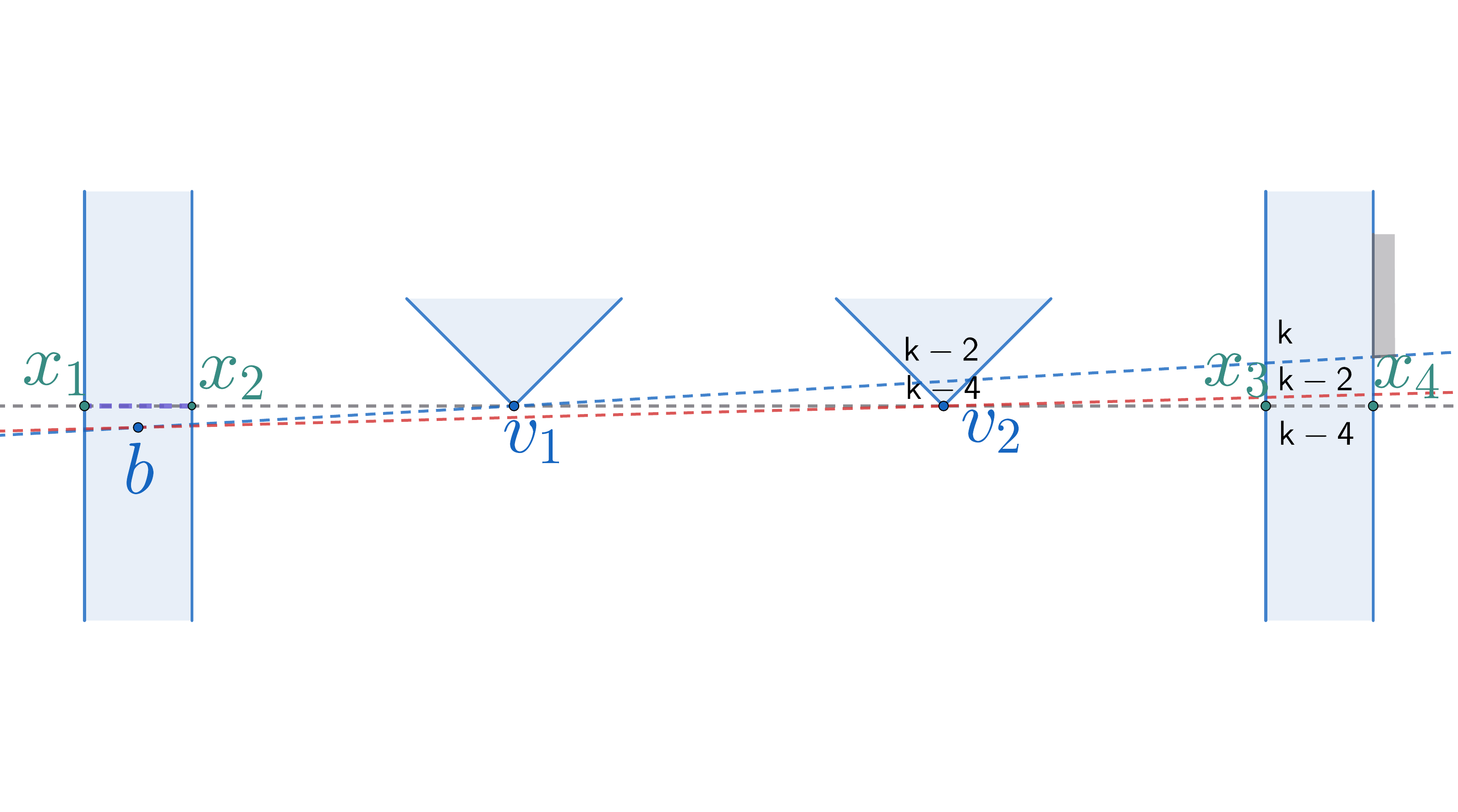}
\caption{Below $l_{g}$}\label{fig:CCS-genericB4}
\end{subfigure}

\caption{CCS; $Z = k - 5$, $W = k-4$}
\label{fig:CCS-generic-6}
\end{figure}

\end{proof}

The rest of the cases are in Appendix~\ref{appendix:mc-cases}.

\section{Algorithm: Constructing the Cell Decomposition}
The preprocessing phase constructs the cell decomposition arrangement and populates each cell with its invariant visibility sequence, as detailed in Algorithm~\ref{alg:cell_decomposition}.

\begin{algorithm}[H]
\caption{Cell Decomposition Preprocessing}
\label{alg:cell_decomposition}
\begin{algorithmic}
\Require{Simple polygon $P$ with $n$ vertices; threshold parameter $k \ge 0$.}
\Ensure{Planar arrangement $\mathcal{A}$ where each cell $C_i$ stores a counter-clockwise sequence $\mathcal{S}_i$.}

\Statex \textbf{Phase 1: Arrangement Geometry Construction}
\State Collect primary horizon lines $\mathcal{L}_1 \gets \bigcup_{v \in P} \{\text{window segments of } \text{Vis}_k(v)\  \text{and Vis}_{k-2}(v) \text{ in } P\}$.
\State Collect hinge transition lines $\mathcal{L}_2 \gets \{\text{lines from mutually critical vertex pairs (Sec. \ref{section:additionalLines})}\}$.
Remove duplicate lines
\State Construct the planar arrangement $\mathcal{A} \gets \text{Arrangement}(\mathcal{L}_1 \cup \mathcal{L}_2)$.
\State Split each non-convex cell into a convex one by extending the edges of reflex vertices.

\Statex \textbf{Phase 2: Combinatorial Sequence Propagation}
\For{each face (cell) $C_i \in \mathcal{A}$}
    \State Sample an arbitrary representative point $q \in C_i$.
    \State Label the boundary $\partial P$ and identify visibility windows as in~\cite{Martins2009}.
\State Initialize cell sequence $\mathcal{S}_i \gets \emptyset$.
\newline
\For{each individual vertex, edge, or connecting window within the $k$-labeled chains of $\partial P$ encountered during the counter-clockwise traversal defined in~\cite{Martins2009}}
    \If{a $k$-visible vertex $v$ is encountered}
        \State $\mathcal{S}_i \gets \mathcal{S}_i \cup \{v\}$.
    \ElsIf{a visibility window ray $r$ is encountered}
        \State Identify its critical generator $c$, proximal edge $e_\text{prox}$, and distal edge $e_\text{dist}$ from $\partial P$.
        \State Set $(e_1, e_2) \gets \text{encountered } e_\text{prox} \text{ before } e_\text{dist} \ ? \ (e_\text{prox}, e_\text{dist}) : (e_\text{dist}, e_\text{prox})$.
        \State $\mathcal{S}_i \gets \mathcal{S}_i \cup \{\langle c, e_1, e_2 \rangle\}$.
    \EndIf
\EndFor
    \State Assign permanent descriptor $\mathcal{S}_i$ to cell $C_i$.
\EndFor
\end{algorithmic}
\end{algorithm}

\section{Extension to Polygons with Holes}
Efficiently querying the $k$-visibility polygon of a polygon with holes requires first computing it from a single point. While the algorithm in~\cite{Martins2009} targets simple polygons, it extends to those with holes: Shoot rays from the query point through critical vertices. Label the resulting polygonal chains with their $k$-visibility values. Traverse the $k$-visible chains and connect them with windows.

\section{Query Processing and Visibility Reconstruction}

To dynamically reconstruct the $k$-visibility polygon $\text{Vis}_k(q)$ for a query point $q \in P$, the retrieval procedure executes in two sequential phases: an $O(\log n)$ point-location phase followed by a linear-time boundary construction phase.

First, an optimal planar point-location query is performed on the arrangement $\mathcal{A}$ to identify the unique cell $C_i$ containing $q$. Once the target cell is located, the algorithm retrieves its precalculated invariant visibility sequence $\mathcal{S}_i$. The geometric boundary of $\text{Vis}_k(q)$ is then reconstructed by performing a single linear-time traversal over $\mathcal{S}_i$ and mapping its entries to physical coordinates:

\begin{itemize}
    \item \textbf{Vertex Entries:} If an entry is a standalone $k$-visible vertex $v \in \mathcal{S}_i$, it is connected directly to the evolving boundary chain of the visibility polygon.
   \item \textbf{Window Generator Tuple:} If an entry is a tuple of the form $C = \langle c, e_1, e_2 \rangle$, it encodes both boundary edges intersected by the window's ray in the order they are encountered during the traversal. The physical endpoints of the visibility window are determined by computing ray-boundary intersections. Specifically, the algorithm calculates the intersection points between the directed line $l(q, c)$—passing through the query point $q$ and the critical vertex $c$—and the specified boundary edges $e_1$ and $e_2$. Because this reduces to a standard line-segment intersection operation, both endpoints of the window can be evaluated in $O(1)$ constant time.
\end{itemize}

By connecting these calculated vertices and boundary segments in the order prescribed by $\mathcal{S}_i$, the explicit geometry of $\text{Vis}_k(q)$ is successfully outputted in $O(|\mathcal{S}_i|) = O(n)$ time.

\section{Complexity of the Cell Decomposition}

The geometric boundaries of the cell decomposition are formed by the intersections of $O(n^2)$ partition lines generated by primary vertex visibility windows and secondary hinge transition lines. Based on the arrangement properties of $k$-star-shaped polygons established by Evans and Sember~\cite{k-starshaped}, the total number of distinct spatial cells in this planar arrangement is tightly bounded at $\Theta(n^4)$. There is no dependence on $k$.

While the spatial partition itself contains $\Theta(n^4)$ cells, each individual cell $C_i$ must explicitly store its invariant visibility sequence $\mathcal{S}_i$. In the worst case, a query point inside a cell can see a linear fraction of the polygon, meaning each visibility sequence can contain up to $\Theta(n)$ vertices and window generator tuples. Consequently, naively populating every cell with its corresponding sequence yields a total storage footprint of $\Theta(n^4) \times O(n) = O(n^5)$ space for the preprocessed data structure.

\section{Complexity of the Query Phase}

The runtime complexity of answering a dynamic $k$-visibility query is divided into two distinct steps: structural identification and geometric reconstruction.

\begin{enumerate}
    \item \textbf{Structural Sequence Retrieval:} Given a dynamic query point $a \in P$, locating the unique cell $C_i \in \mathcal{A}$ containing $a$ requires an optimal planar point-location query. Using standard hierarchical or trapezoidal decomposition methods, this location step is completed in $O(\log n)$ time. Because the combinatorial sequence $\mathcal{S}_i$ is directly linked to the cell, identifying the underlying visibility structure adds no extra overhead once the cell is found.
    \item \textbf{Geometric Polygon Reconstruction:} To reconstruct and output the explicit physical geometry of $\text{Vis}_k(q)$, the algorithm must iterate through the retrieved sequence $\mathcal{S}_i$. In the worst case, the $k$-visibility polygon can consist of a linear fraction of the entire polygon, meaning the sequence contains $|\mathcal{S}_i| = \Theta(n)$ vertices and window generator tuples. Since each entry is processed and mapped to physical coordinates via constant-time geometric operations ($O(1)$ per entry), traversing and constructing the final visibility polygon takes $O(n)$ time.
\end{enumerate}

Consequently, the total query time to fully reconstruct the $k$-visibility polygon is $O(\log n + m)$, where $m = |\mathcal{S}_i|$ is the size of the output sequence, bounded in the worst case by $O(n)$.

\section{Delta-Compressed Dual Arrangement Tree}
To eliminate redundant sequence storage across adjacent arrangement cells, we construct a spanning tree $\mathcal{T}$ of the dual cell graph via Breadth-First Search (BFS) starting from a central root cell $C_{\text{root}}$:
\begin{enumerate}
    \item \textbf{Root Storage:} The root cell $C_{\text{root}}$ explicitly stores its full, uncompressed visibility sequence $\mathcal{S}(C_{\text{root}})$.
    \item \textbf{Edge Deltas:} For each directed tree edge $(C_{\text{parent}} \to C_{\text{child}})$, we store only the combinatorial mutation $\Delta(C_{\text{parent}}, C_{\text{child}})$ required to transform $\mathcal{S}(C_{\text{parent}})$ into $\mathcal{S}(C_{\text{child}})$.
    \item \textbf{Compound Delta Encoding:} Across any single cell boundary, a structural transition may induce multiple simultaneous local changes (e.g., an vertex insertion coupled with window generator replacements). We encode each edge delta as an ordered sequence of $m$ atomic edit operations:
    \[
    \Delta(C_{\text{parent}}, C_{\text{child}}) = \left\langle \text{op}_1, \text{op}_2, \dots, \text{op}_m \right\rangle
    \]
    where $m = O(1)$ in general position, and each atomic operation takes one of three forms:
    \begin{itemize}
        \item $\text{REPLACE}(\text{idx}, \text{descriptor})$: Overwrites an existing window generator tuple or vertex descriptor at index $\text{idx}$ (e.g., swapping anchor identities or updating proximal/distal edges).
        \item $\text{INSERT}(\text{idx}, \text{descriptor})$: Inserts a new $k$-visible vertex or newly formed window generator into the sequence at index $\text{idx}$.
        \item $\text{DELETE}(\text{idx})$: Removes a vertex or window generator.
    \end{itemize}
\end{enumerate}

At query time, point-location identifies the target cell $C_{\text{target}}$. The uncompressed visibility sequence $\mathcal{S}(C_{\text{target}})$ is dynamically reconstructed in-memory by walking down the unique tree path from $C_{\text{root}}$ to $C_{\text{target}}$ in $\mathcal{T}$ and sequentially applying the sequence of $O(d)$ compound deltas, where $d$ is the depth of $C_{\text{target}}$ in $\mathcal{T}$.

While the naive representation requires $O(n^5)$ space ($O(n^4)$ cells, each storing a sequence of size $O(n)$), the delta-compressed tree reduces total space to $O(n^4)$, as the dual tree contains $O(n^4)$ edges each storing a compound delta of size $O(1)$. Reconstructing the visibility sequence at query time requires $O(d)$ edit applications, where $d$ is the depth of $C_{\text{target}}$ in $\mathcal{T}$.

\section{Conclusion and Future Work}In this paper, we introduced the first complete framework for answering $k$-visibility queries within polygons, successfully generalizing the classic 0-visibility cell decomposition model. By identifying, analyzing, and exhaustively classifying the exact geometric conditions that govern topological mutations—specifically primary vertex horizons and secondary hinge transition lines arising from pairs of mutually critical vertices—we established a tight spatial partition of $\Theta(n^4)$ cells. Within each cell, the combinatorial sequence of the $k$-visibility polygon remains strictly invariant. Furthermore, by exploiting the local combinatorial changes across adjacent cell boundaries, we introduced a $\delta$-compression scheme that reduces the overall storage requirement to $\mathcal{O}(n^4)$, while maintaining optimal $\mathcal{O}(\log n + m)$ query time to reconstruct explicit $k$-visibility polygons. Crucially, our approach naturally accommodates polygons with holes, expanding its applicability to complex environments.
Several intriguing avenues remain open for future investigation. Extending this cell decomposition paradigm to 3D polyhedral environments or investigating dynamic settings where polygon vertices are allowed to move present compelling challenges for future research in higher-order visibility.

\bibliographystyle{abbrv}

\bibliography{bibliography}

% - need exact definition of window (each window is between e and e' with critical vertex c)

\newpage
\appendix
\section*{Appendix} % Creates a big, unnumbered "Appendix" title

% \section{Proof for Type 1b lines}
% \label{appendix:type1b}

% \begin{figure}[H]
     
% \centering
% \begin{subfigure}[b]{.49\linewidth}
% \includegraphics[width=\linewidth]{Figures/query-2-vis-simple-v2_partitionline.pdf}
% \caption{$v_{11}v_{12}$ as $e_{prox}$}\label{fig:eprox1}
% \end{subfigure}
% \begin{subfigure}[b]{.49\linewidth}

% \includegraphics[width=\linewidth]{Figures/query-2-vis-simple-v3-partitionline.pdf}
% \caption{$v_{10}v_{11}$ as $e_{prox}$}\label{fig:eprox2}
% \end{subfigure}

% \caption{Addition of a vertex ($v_{11}$) to the visibility sequence and change of proximal edge based on a partition line made from the window of the $k-2$-visibility polygon of a vertex ($v_{11}$)}
% \label{fig:eprox}
% \end{figure}

% \begin{proof}
%     The windows of the $k-2$-visibility polygon of each vertex are needed because to one side of the window (dashed purple line in Figure~\ref{fig:eprox} for $v_{11}$), the vertex is within the $k$-visibility polygon of the query point (Figure~\ref{fig:eprox2}), while to the other side, it becomes $k$-visible (Figure~\ref{fig:eprox1}), so it becomes part of the combinatorial sequence defining the visibility polygon. Furthermore, it may also correspond to change in the $e_{prox}$ of a window generator tuple (e.g., $v_{11}v_{12}$ in Figure~\ref{fig:eprox1} and $v_{10}v_{11}$ in Figure~\ref{fig:eprox2}).
% \end{proof}\qed

\section{Redundancy of Windows of $\text{Vis}_{x}{v}$ for $x \le k-4$ and $x \ge k+2$ (when $v$ is not critical to the vertex responsible for the window)}
\label{appendix:unneededwindows}

\begin{proof}
    Since we assume general position and that no three vertices are on the same line, then there can be three scenarios: 1) two collinear mutually critical vertices, 2) one vertex that is critical to another but not vice versa, and 3) two vertices that are not critical to each other. 

\begin{itemize}
    \item 
    Case 1 is handled by the hinge transition lines in Section~\ref{section:additionalLines}.
    \item 
    Case 3 Draw a line $\ell_{g}$ through the two vertices $v_{1}$ and $v_{2}$. Crossing $\ell_{g}$ does not cause a change in whether $v_{2}$ is visible/invisible.
    \item Case 2. Call the non-critical vertex $v_2$ and the vertex critical to it $v_1$.  Assume there is a horizon line made by the window of $\text{Vis}_x{v}$ where $x \geq k + 2$. When the query point is to one side of the line, $v$ is in the shadow of the query point, and when the query point is to the other side of the line. $v$ is still in the shadow of the query point. Therefore, $v$ does not leave or enter the visibility sequence that defines the visibility polygon as the query point crosses this horizon line, so this horizon line is redundant. To see an example of this, consider Figure~\ref{fig:visk+2} for the case of $2$-visibility. If there is a horizon line $H$ (dashed purple line) of the $\text{Vis}_{4}(v_{2})$ that acts as a partition line, crossing it does not make $v_{2}$ part of the visibility sequence, as it remains entirely in shadow (both at $a$ and $b$).  \\

     \begin{figure}[H]
\centering
\begin{subfigure}[b]{.49\linewidth}
\includegraphics[width=0.9\linewidth]{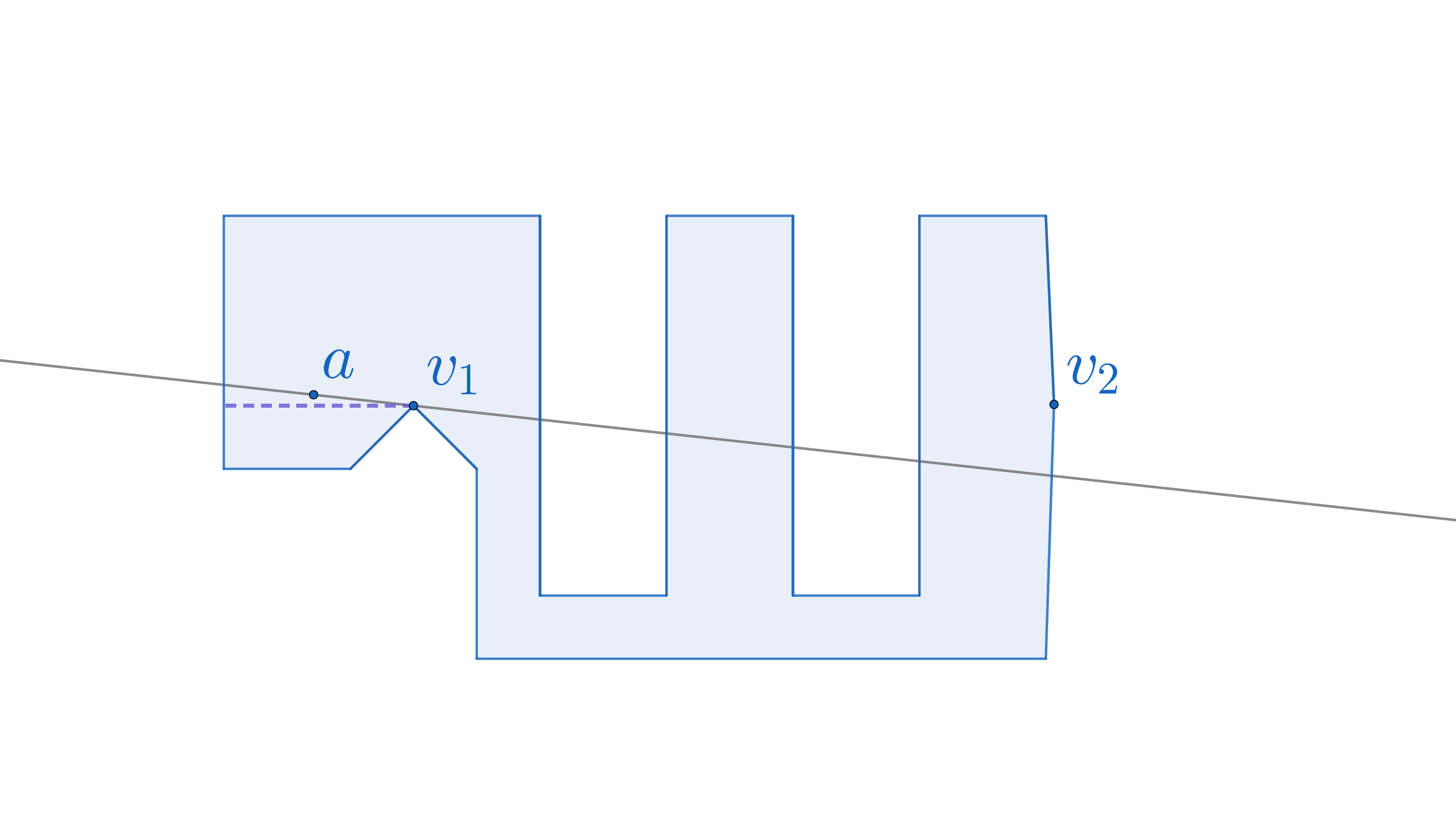}
\caption{Above $H$}\label{fig:CCS-genericA4}
\end{subfigure}
\begin{subfigure}[b]{.49\linewidth}

\includegraphics[width=0.9\linewidth]{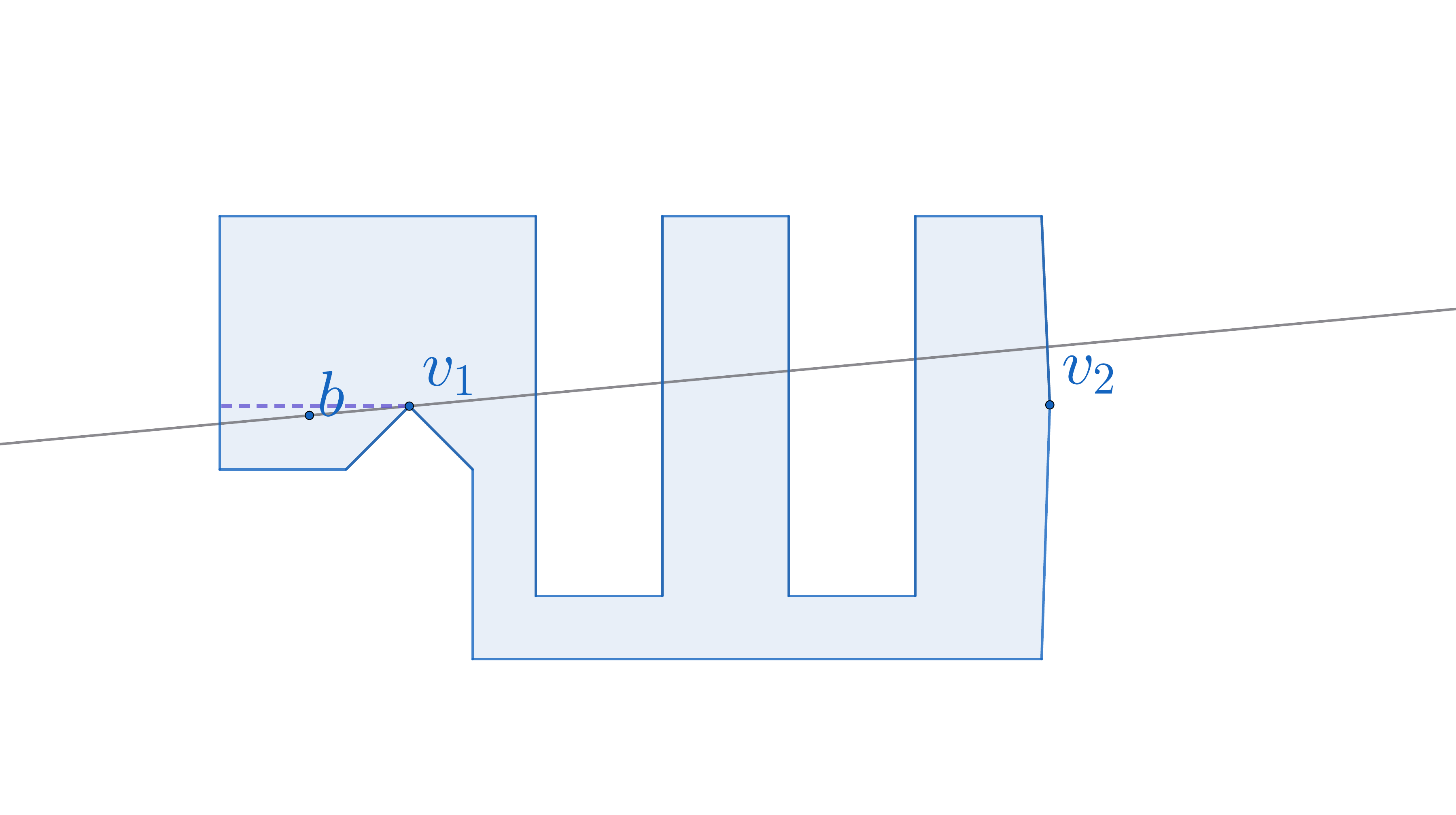}
\caption{Below $H$}\label{fig:CCS-genericB4}
\end{subfigure}

\caption{$\text{Vis}_{4}(v_{2})$}
\label{fig:visk+2}
\end{figure}
Assume there is a horizon line made by the window of $\text{Vis}_x{v}$ where $x \leq k - 4$. When the query point is to one side of the line, $v$ is completely within the visibility polygon of the query point. When the query point is to the other side, $v$ is still completely within the visibility polygon of the query point. Therefore, $v$ does not leave or enter the visibility sequence that defines the visibility polygon, so this horizon line is redundant. The same example in Figure~\ref{fig:visk+2} can be used with $8$-visibility, where $H$ (dashed purple line) is a horizon line of $\text{Vis}_{4}(v_{2})$. $v_{2}$ remains within the visibility polygon at both $a$ and $b$.

\end{itemize}
\end{proof}
\section{Cases for Two Mutually Critical Vertices}
\label{appendix:mc-cases}
\subsection{CCS}
\label{ccs}
See Section~\ref{section:CCS}

% CCS(2) at $v_{2}$ (nothing; b/w $l_{b}$ and $l_{r}$) and $x_{3}x_{4}$ (below $l_{b}$; below $l_{r}$)\\
% CCS(4) at $x_{3}x_{4}$ (below $l_{r}$; below $l_{b}$)
\subsection{CCO}
\label{section:CCO}
\begin{lemma}
\label{lemma:CCO}
A partition line is needed in the following cases:
\renewcommand{\labelitemi}{$\bullet$}
\begin{itemize}
\setlength{\itemsep}{0em}
    \item  $Z = k - 1$ (Figure~\ref{fig:CCO-generic-2})
    \item $W = k$ (Figure~\ref{fig:CCO-generic-2})
    \item $Z = k - 3$ (Figure~\ref{fig:CCO-generic-4})
    \item $W = k - 2$
    (Figure~\ref{fig:CCO-generic-4})
    \item $W = k - 4$
    (Figure~\ref{fig:CCO-generic-6})
    \end{itemize}
\end{lemma}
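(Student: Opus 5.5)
I would follow the template of Lemma~\ref{lemma:ccs} for CCS and enumerate the admissible values of $Z$ and $W$. The parity argument used there carries over unchanged: a ray crossing obstacle edges must enter and leave each obstacle. Hence $Z$ and $W$ move in steps of $2$, with $Z$ odd relative to $k$ and $W$ even relative to $k$ once the incident edges of $v_1,v_2$ are excluded. It therefore suffices to examine, in turn, the regimes $Z\ge k+3$, $W\ge k+4$; $Z=k+1$, $W=k+2$; $Z=k-1$, $W=k$; $Z=k-3$, $W=k-2$; $Z=k-5$, $W=k-4$; and $Z\le k-7$, $W\le k-6$. For each regime I would compare the combinatorial sequences $\mathcal{S}(a)$ and $\mathcal{S}(b)$ locally near $v_2$ (the region governed by $Z$) and near $x_3x_4$ (the region governed by $W$).

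The extreme regimes are dispatched exactly as in CCS, and this step is independent of the orientation of $v_1,v_2$. If $Z\ge k+1$ and $W\ge k+2$, then $v_2$ and $x_3x_4$ lie in shadow from both $a$ and $b$. If $Z\le k-5$ (respectively $W\le k-6$), then the neighbourhood of $v_2$ (respectively of $x_3x_4$) lies in the interior of $\text{Vis}_k$ from both sides. In either situation no tuple involving $v_1,v_2$ changes when $\ell_g$ is crossed.

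For the five listed values I would exhibit the mutation explicitly using the configurations of Figures~\ref{fig:CCO-generic-2}, \ref{fig:CCO-generic-4} and \ref{fig:CCO-generic-6}. The difference from CCS is that in CCO the edges of $v_1$ and $v_2$ lie on opposite sides of $\ell_g$. As a result, the shadow cast behind $v_1$ along $\ell_b$ and the shadow cast behind $v_2$ along $\ell_r$ open toward opposite half-planes. When $q$ moves from $a$ to $b$, the windows therefore do not swap anchors symmetrically. Instead, one window's distal edge becomes an edge incident to $v_2$, or a window anchored at $v_2$ appears or disappears.

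The concrete checks are as follows.
\begin{itemize}
\item For $Z=k-1$, the depth at $v_2$ crosses the threshold $k$ on exactly one side. A window along $\ell_r$ anchored at $v_2$ exists from one of $a,b$ but not from the other.
\item For $W=k$ and $W=k-2$, the window terminating near $x_4$ changes its generator between $v_1$ and $v_2$.
\item For $Z=k-3$, the window near $v_2$ changes its anchor from $v_2$ to $v_1$.
\item For $W=k-4$, I would argue, as in CCS, that only the window near $x_3x_4$ is still affected. This is because the two incident edges of $v_1$ and $v_2$ add the extra crossings needed to bring the depth back to $k$ there.
\end{itemize}
In each of these cases, either a generator tuple $\langle c,e_{\text{prox}},e_{\text{dist}}\rangle$ changes or an entry is inserted or deleted. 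Hence $\mathcal{S}(a)\neq\mathcal{S}(b)$, and $x_1x_2$ must be a partition line.

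The main obstacle will be the bookkeeping of the opposite-side orientation in the $Z=k-5$ case. I must verify that, unlike $W=k-4$, the neighbourhood of $v_2$ is not affected. The reason is that, with opposite orientation, one incident edge of $v_2$ is crossed from one side of $\ell_g$ but not from the other, which could shift the depth by one edge. I would handle this by counting the crossings of $\ell_r$ against the incident edges of $v_2$ separately for $a$ and $b$. This count should show that the depth at $v_2$ stays strictly below $k$ in both positions, while the depth at $x_3x_4$ reaches exactly $k$ on one side. This yields precisely the list stated in Lemma~\ref{lemma:CCO}.
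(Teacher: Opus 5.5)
Your proposal is correct and takes essentially the same approach as the paper. The paper's proof also enumerates the parity-constrained values of $Z$ and $W$: it dismisses $Z\ge k+1$, $W\ge k+2$ as full shadow and $Z\le k-7$, $W\le k-6$ as full visibility, and reads off the mutations for the five listed values from Figures~\ref{fig:CCO-generic-2}, \ref{fig:CCO-generic-4} and \ref{fig:CCO-generic-6}, with $Z=k-5$ appearing only in a figure and not in the list, which matches your treatment. The mechanisms you add, such as the anchor swap at $Z=k-3$ carried over from CCS (which sits awkwardly with your remark that CCO anchors do not swap symmetrically), go beyond what the paper states and should be checked against the figures; they are not essential, since any tuple change, insertion or deletion already gives $\mathcal{S}(a)\neq\mathcal{S}(b)$.
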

\begin {proof} Consider Figures~\ref{fig:CCO-genericA2} - ~\ref{fig:CCO-genericB6}. 

Note: for $Z \geq k + 1$, all of $v_{2}$ and its immediate neighbourhood would be in shadow. For $W \geq k + 2$, $x_{3}x_{4}$ and its immediate neighbourhood is entirely in shadow. 

     \begin{figure}[H]
     
\centering
\begin{subfigure}[b]{.49\linewidth}
\includegraphics[width=\linewidth]{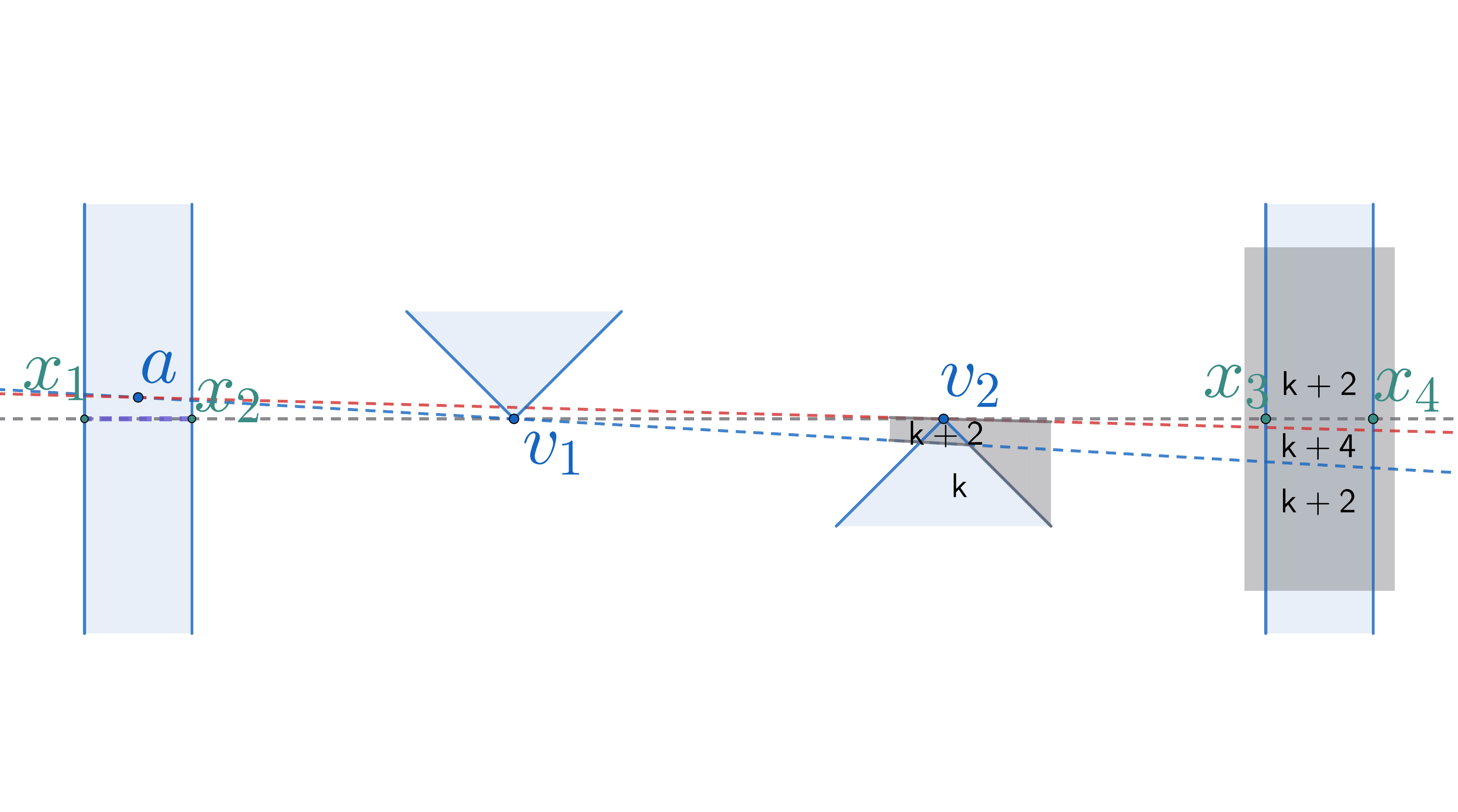}
\caption{Above $l_{g}$}\label{fig:CCO-genericA2}
\end{subfigure}
\begin{subfigure}[b]{.49\linewidth}

\includegraphics[width=\linewidth]{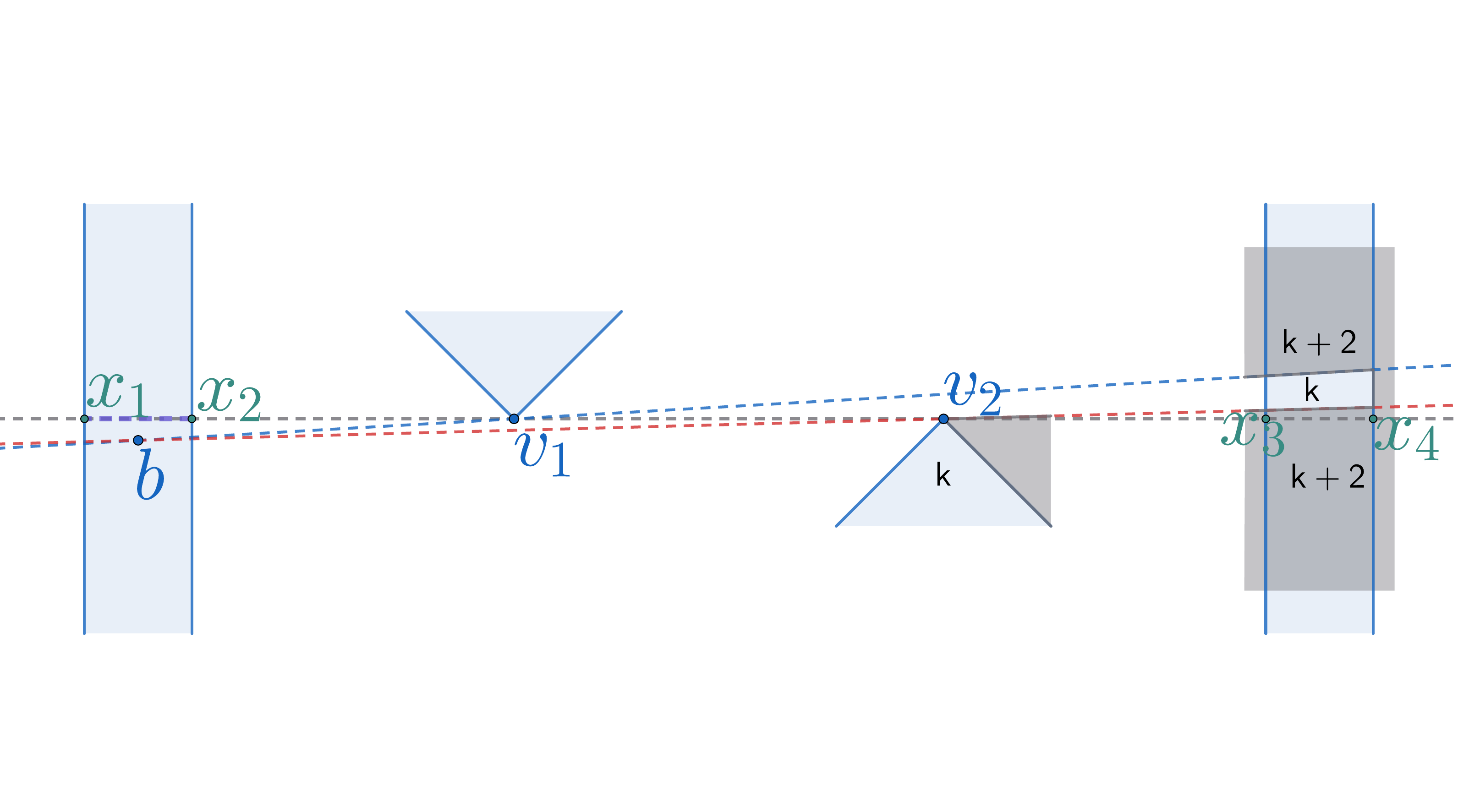}
\caption{Below $l_{g}$}\label{fig:CCO-genericB2}
\end{subfigure}

\caption{CCO (2); $Z = k - 1$, $W = k$}
\label{fig:CCO-generic-2}
\end{figure}
 \begin{figure}[H]
\centering
\begin{subfigure}[b]{.49\linewidth}
\includegraphics[width=\linewidth]{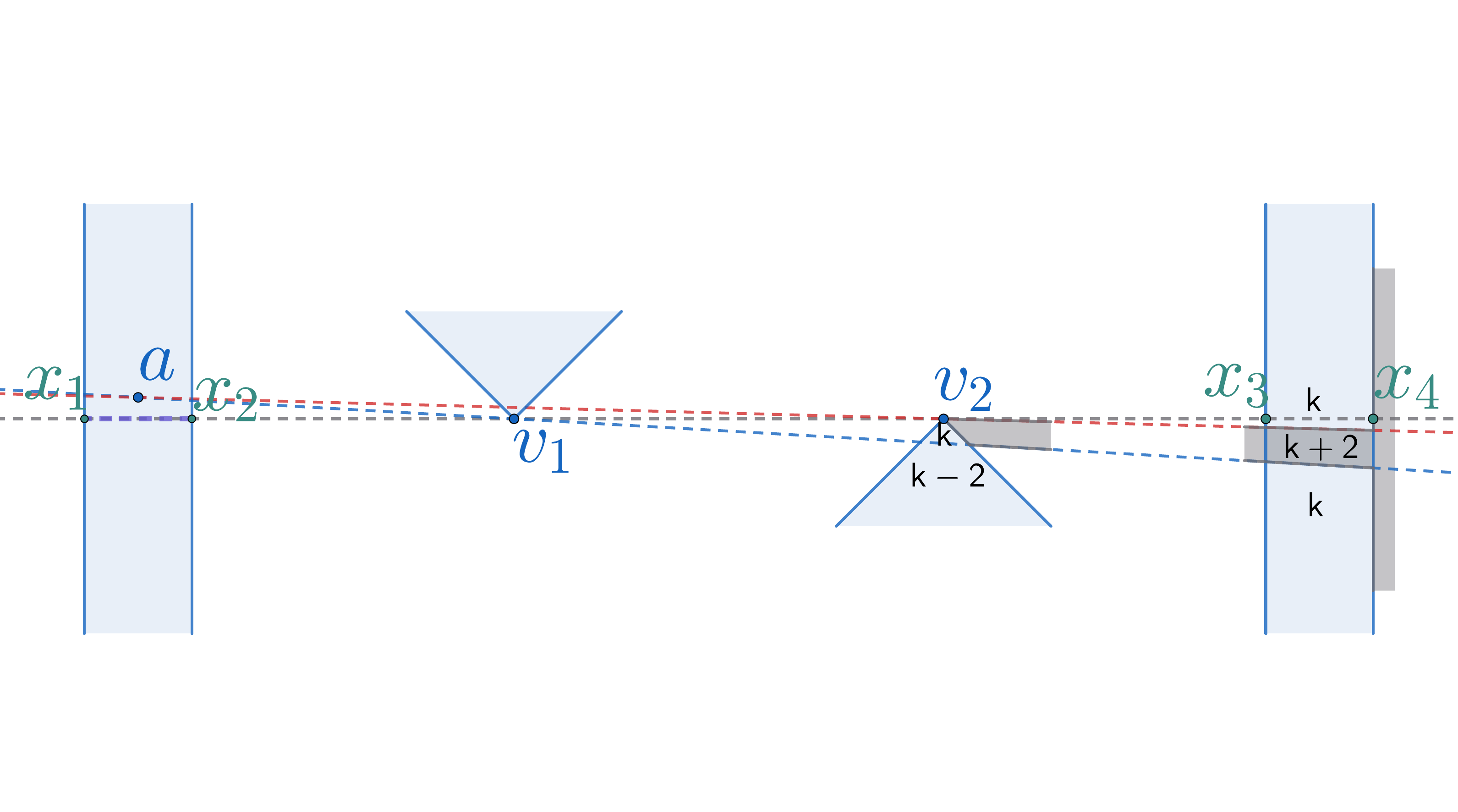}
\caption{Above $l_{g}$}\label{fig:CCO-genericA4}
\end{subfigure}
\begin{subfigure}[b]{.49\linewidth}

\includegraphics[width=\linewidth]{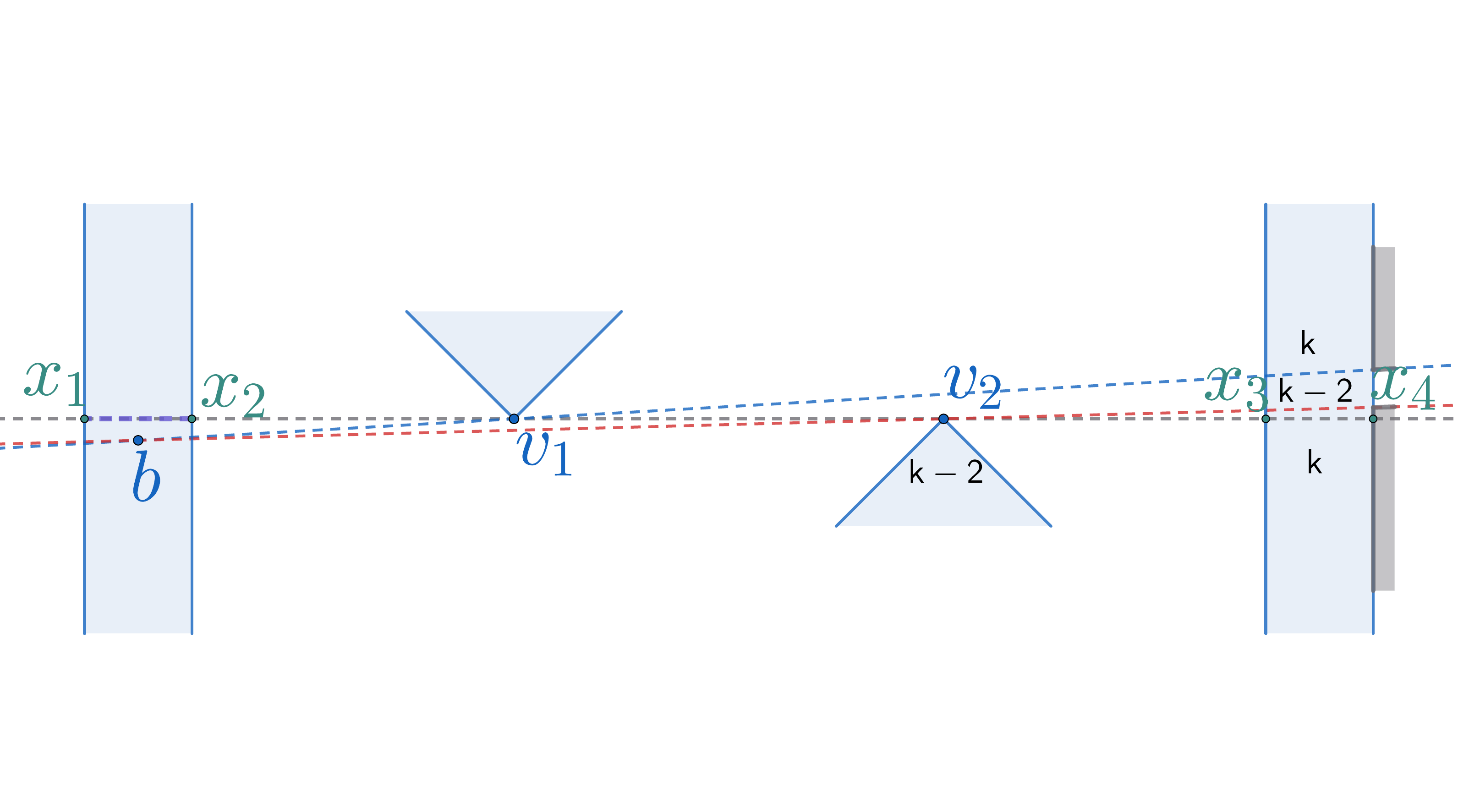}
\caption{Below $l_{g}$}\label{fig:CCO-genericB4}
\end{subfigure}

\caption{CCO (4); $Z = k - 3$, $W = k - 2$}
\label{fig:CCO-generic-4}
\end{figure}

 \begin{figure}[H]
\centering
\begin{subfigure}[b]{.49\linewidth}
\includegraphics[width=\linewidth]{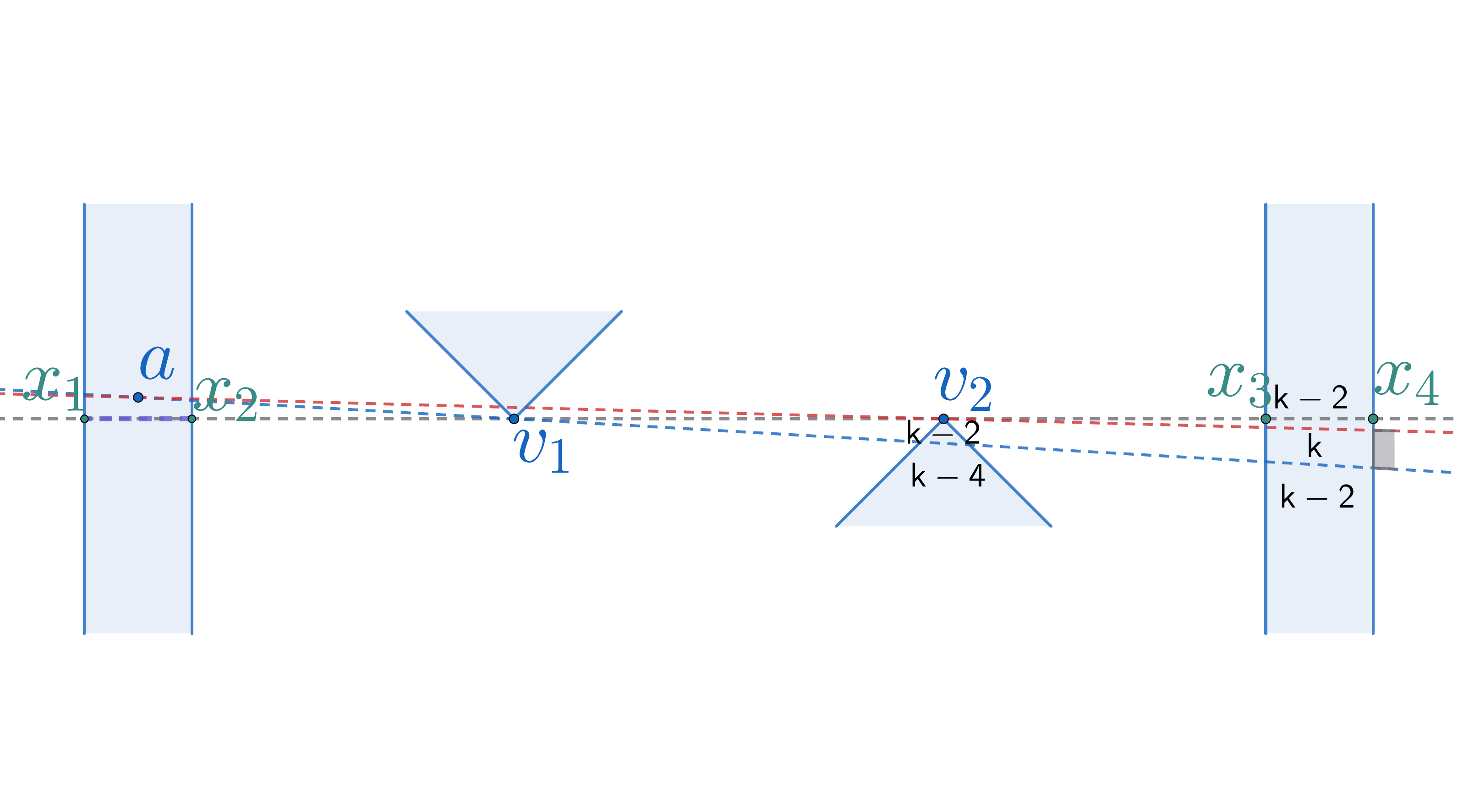}
\caption{Above $l_{g}$}\label{fig:CCO-genericA6}
\end{subfigure}
\begin{subfigure}[b]{.49\linewidth}

\includegraphics[width=\linewidth]{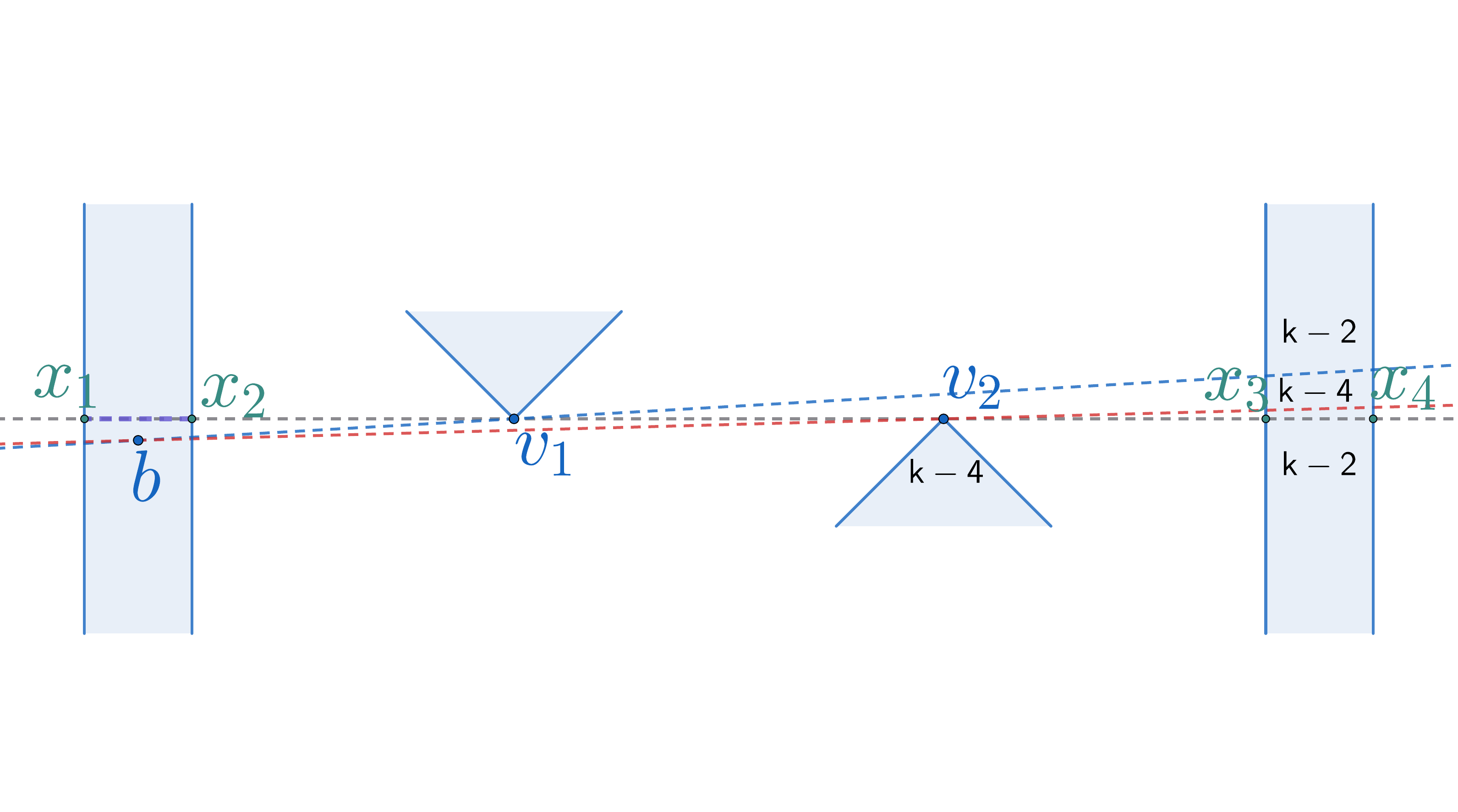}
\caption{Below $l_{g}$}\label{fig:CCO-genericB6}
\end{subfigure}

\caption{CCO; $Z = k - 5$, $W = k - 4$}
\label{fig:CCO-generic-6}
\end{figure}
for $Z \leq k - 7$,  $v_{2}$ and its surrounding is entirely visible, for $W \leq k - 6$,  $x_{3}x_{4}$ are entirely visible.

\end{proof}
    
% CCO(2) at $v_{2}$ (below $l_{b}$; nothing?) and $x_{3}x_{4}$ (nothing; between $l_{b}$ and $l_{r}$)\\
% CCO(4) at $x_{3}x_{4}$ (below $l_{b}$, above $l_{g}$; c.vis)
\section{CRS}
\begin{lemma}
\label{lemma:CRS}
A partition line is needed in the following cases:
\renewcommand{\labelitemi}{$\bullet$}
\begin{itemize}

    \item $Z = k$ (Figure~\ref{fig:CRO-generic2})
    \item $Z = k - 2$ (Figure~\ref{fig:CRO-generic4}
    \item $W = k$ (Figure~\ref{fig:CRO-generic4})
    \item $Z = k -4$ (Figure~\ref{fig:CRO-generic6})
    \item $W = k - 2$ (Figure~\ref{fig:CRO-generic6})
    \item $W = k - 4$ (Figure~\ref{fig:CRO-generic8})
\end{itemize}
\end{lemma}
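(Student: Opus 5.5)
I would follow the same template as the proof of Lemma~\ref{lemma:ccs} for CCS, adapted to the Convex--Reflex--Same configuration. In this configuration $v_1$ is a convex critical vertex and $v_2$ is a reflex critical vertex, and both have their incident edges on the same side of $\ell_g$. The first step is to fix the counting conventions for $Z$ (edges crossed between $x_1$ and $v_2$, excluding the edges incident to $v_1,v_2$) and for $W$ (edges crossed between $x_1$ and $x_4$). Because every ray that enters an obstacle must also leave it, both parameters change in steps of $2$.

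The key difference from CCS is that $v_2$ is reflex. Its two incident edges therefore lie on the far side of $\ell_g$ in a way that shifts the parity bookkeeping by one. This is why the critical values here are $Z\in\{k,k-2,k-4\}$ rather than $\{k-1,k-3\}$. I would make this shift explicit at the outset by computing the depth of points just beyond $v_2$ along $\ell_r$ and $\ell_b$ as functions of $Z$.

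Next I would enumerate the subcases exhaustively, in decreasing order of $Z$ and $W$.

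\textbf{Large values.} For $Z\ge k+2$ and $W\ge k+2$, the neighbourhoods of $v_2$ and of $x_3x_4$ lie in shadow for both query positions $a$ and $b$. Neither region contributes anything to $\mathcal{S}$, so no line is needed.

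\textbf{Critical values.} For $Z=k$, $Z=k-2$ and $Z=k-4$, I would compare the windows near $v_2$ at $a$ and at $b$. The claim to establish is that the window generator tuple near $v_2$ switches its anchor between $v_1$ and $v_2$ across $x_1x_2$, or that a window along $\ell_r$ appears or disappears. This is read off Figures~\ref{fig:CRO-generic2}--\ref{fig:CRO-generic6}. For $W=k$, $W=k-2$ and $W=k-4$, I would make the same comparison for the window terminating near $x_3x_4$. There the distal edge is fixed, but the anchor swaps between $v_1$ and $v_2$. Since $\langle v_1,e_1,e_2\rangle\neq\langle v_2,e_1,e_2\rangle$, we get $\mathcal{S}(a)\neq\mathcal{S}(b)$.

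\textbf{Small values.} For $Z\le k-6$ and $W\le k-6$, both neighbourhoods are entirely $k$-visible from $a$ and from $b$. The sequence is then unchanged, apart from events already covered by the Type 1a and Type 1b lines.

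The main obstacle is the boundary subcases, namely $Z=k-4$ versus $Z=k-6$ and $W=k-4$ versus $W=k-6$. I need to decide exactly where the anchor swap stops being visible. My first attempt is to track the depth of the region strictly between $\ell_b$ and $\ell_r$ beyond $v_2$, and beyond $x_4$, on each side of $\ell_g$. A window exists there exactly when that depth straddles $k$, that is, when it equals $k$ on one side and $k+2$ on the other. If this straddling argument does not cleanly exclude $Z=k-6$, I would fall back on the figure-based argument used in the CCS case. Finally, I would note that any event coinciding with a primary horizon line is already instantiated, so it suffices to exhibit one tuple change for each listed case.
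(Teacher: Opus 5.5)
Your plan is essentially the paper's proof: the paper likewise just reads off from Figures~\ref{fig:CRO-generic2}--\ref{fig:CRO-generic8}, for each listed value, an anchor swap between $v_1$ and $v_2$ or an appearing or disappearing window as the query point moves from $a$ to $b$, and then records the fully shadowed and fully visible ranges (it states $W\ge k+4$ and $Z\le k-8$ rather than your $W\ge k+2$ and $Z\le k-6$, a difference that only affects unlisted values). One caution if you use your straddling test in place of the figures: a window separates depth $\le k$ from depth $>k$ and the depth jumps by $2$ across it, so the two sides can be $k-1$ and $k+1$ rather than $k$ and $k+2$ (this happens in any region whose depth parity differs from that of $k$, e.g.\ the parts of the bounding box outside $P$, which this paper counts), and by my own reconstruction of the depths (the paper does not compute them) the nearby window portion at $Z=k-4$ and $W=k-4$ is of this odd type, so your narrower criterion would likely misclassify exactly those boundary values as no-change cases.
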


\begin{proof}
 See Figures~\ref{fig:CRO-generic2} - ~\ref{fig:CRO-generic8}.

for $Z \geq k + 2$, all of $v_{2}$ and its immediate neighbourhood are entirely in shadow. For $W \geq k + 4$, $x_{3}x_{4}$ and its immediate neighbourhood is entirely in shadow. 
\begin{figure}[H]
\centering
\begin{subfigure}[b]{.49\linewidth}
\includegraphics[width=\linewidth]{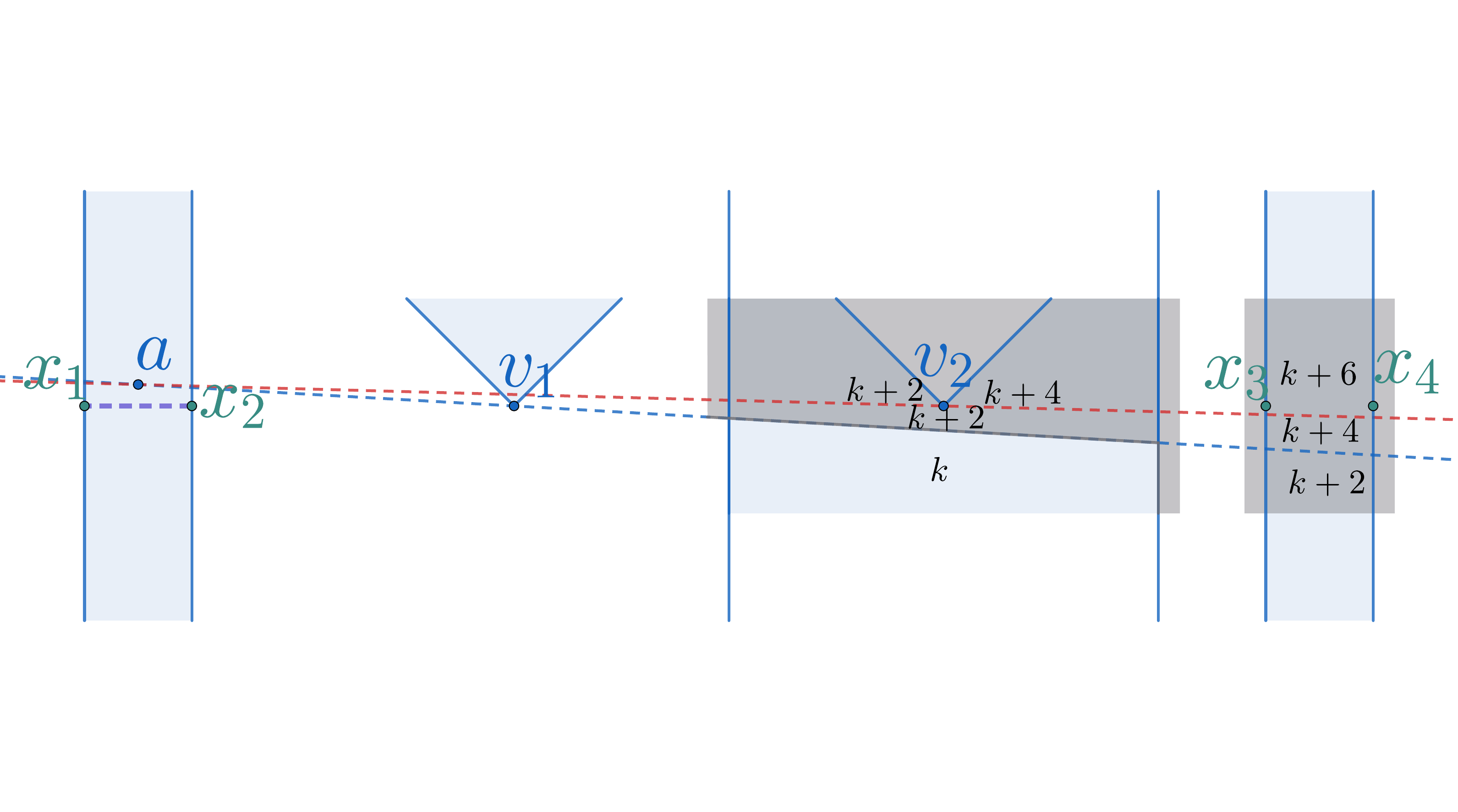}
\caption{Above $l_{g}$}\label{fig:CRO-genericA2}
\end{subfigure}
\begin{subfigure}[b]{.49\linewidth}
\includegraphics[width=\linewidth]{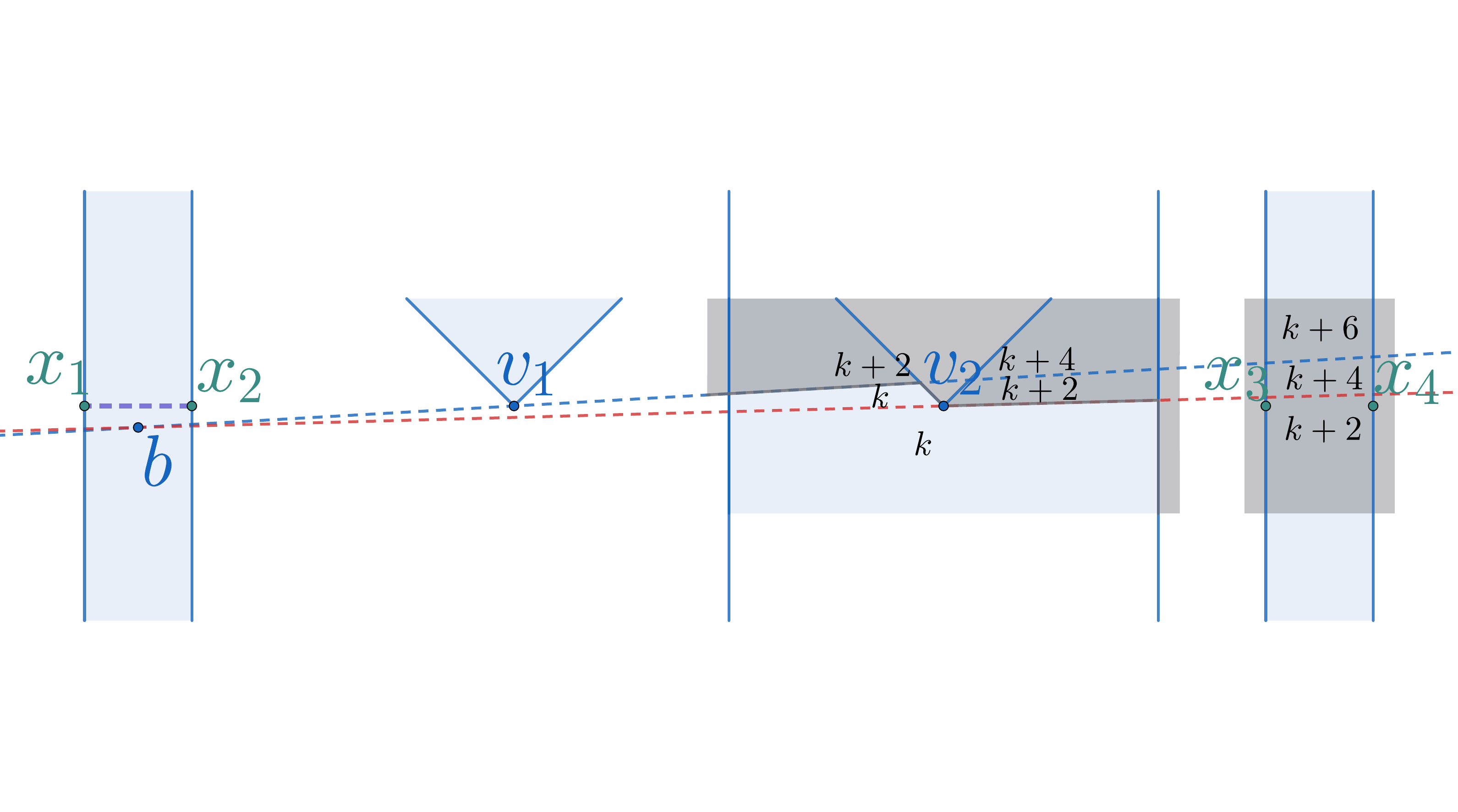}
\caption{Below $l_{g}$}\label{fig:CRO-genericB2}
\end{subfigure}

\caption{CRS; $Z=k$; $W=k+2$ }
\label{fig:CRO-generic2}
\end{figure}

 \begin{figure}[H]
\centering
\begin{subfigure}[b]{.49\linewidth}
\includegraphics[width=\linewidth]{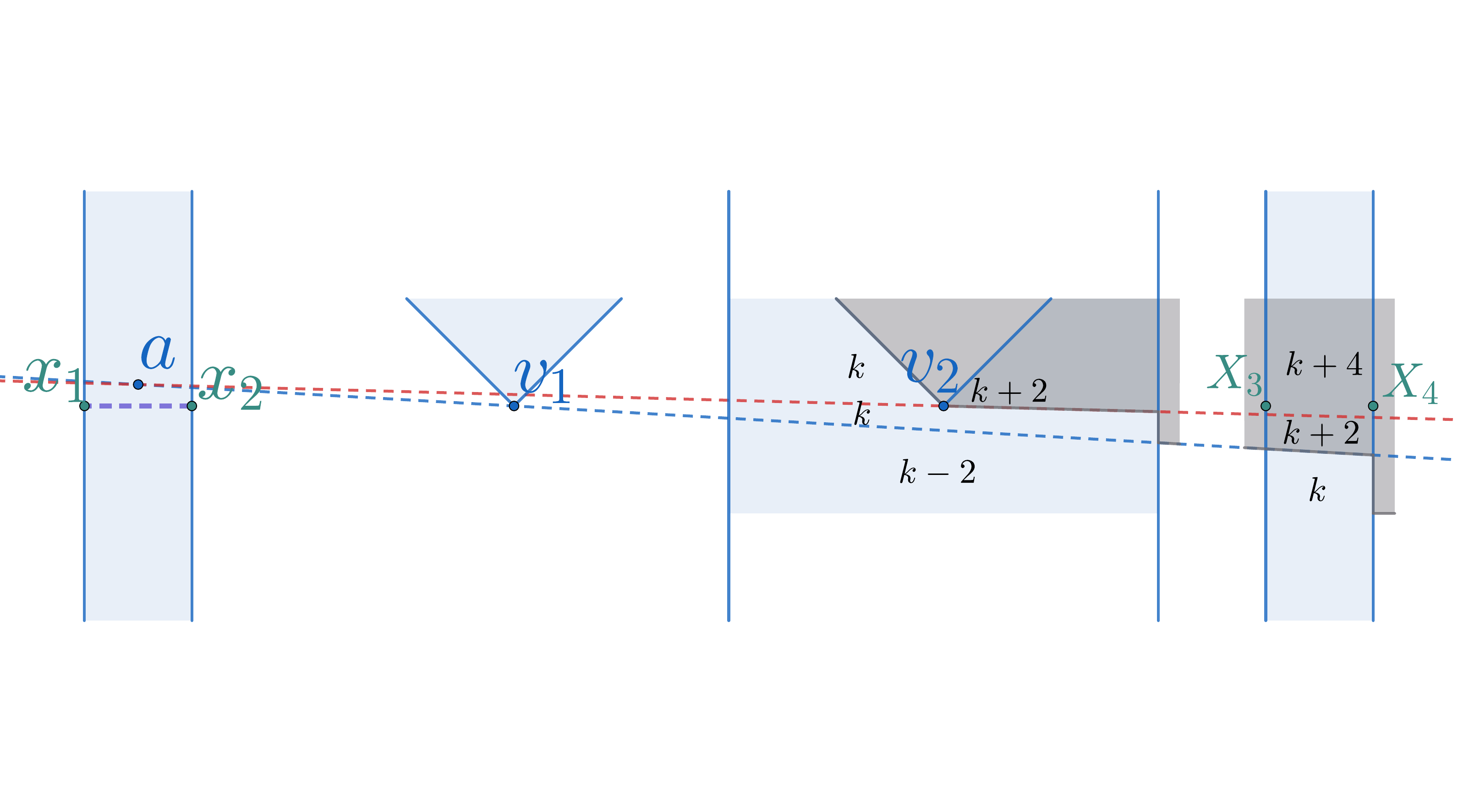}
\caption{Above $l_{g}$}\label{fig:CRO-genericA4}
\end{subfigure}
\begin{subfigure}[b]{.49\linewidth}
\includegraphics[width=\linewidth]{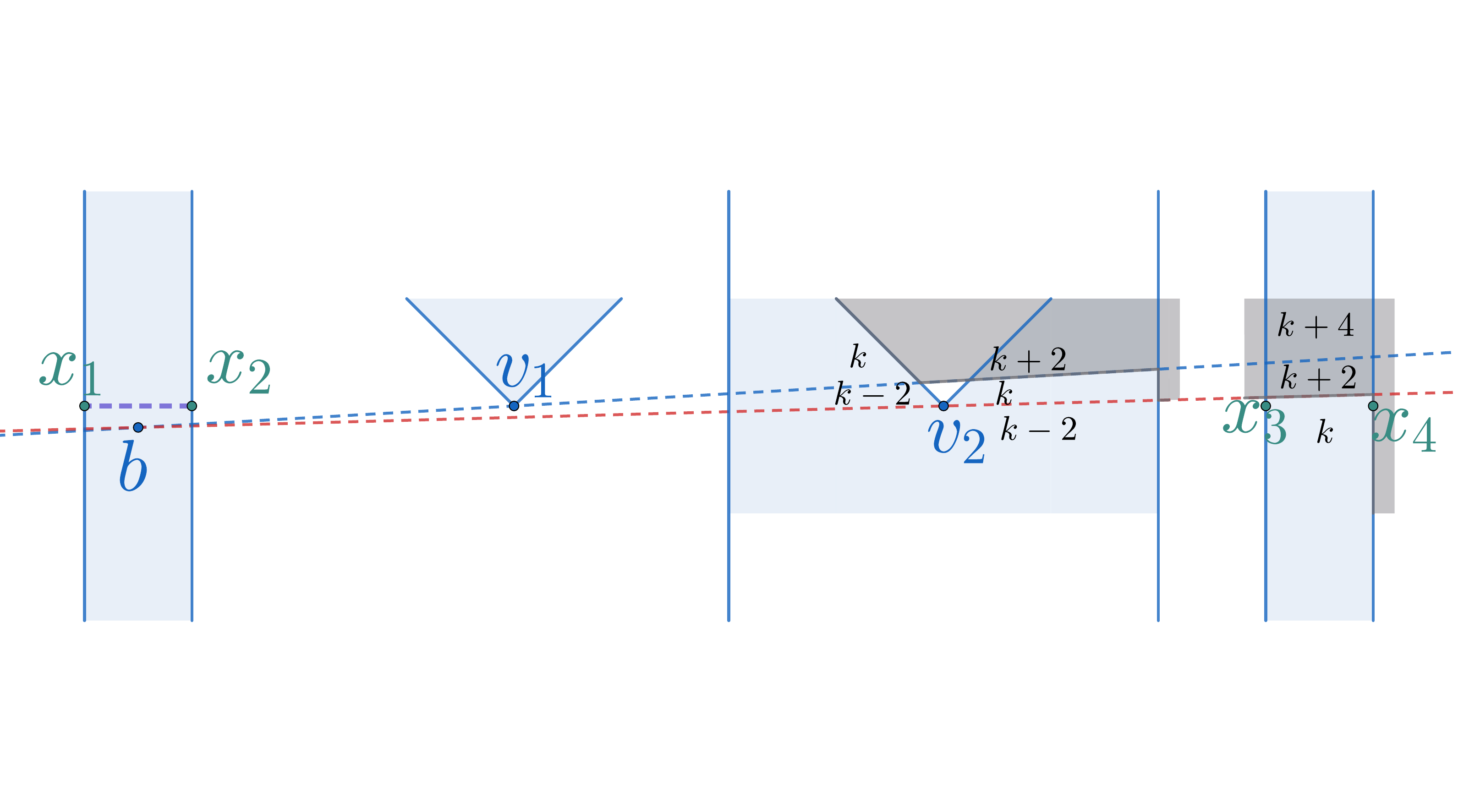}
\caption{Below $l_{g}$}\label{fig:CRO-genericB4}
\end{subfigure}

\caption{CRS; $Z = k - 2$; $W = k$}
\label{fig:CRO-generic4}
\end{figure}

 \begin{figure}[H]
\centering
\begin{subfigure}[b]{.49\linewidth}
\includegraphics[width=\linewidth]{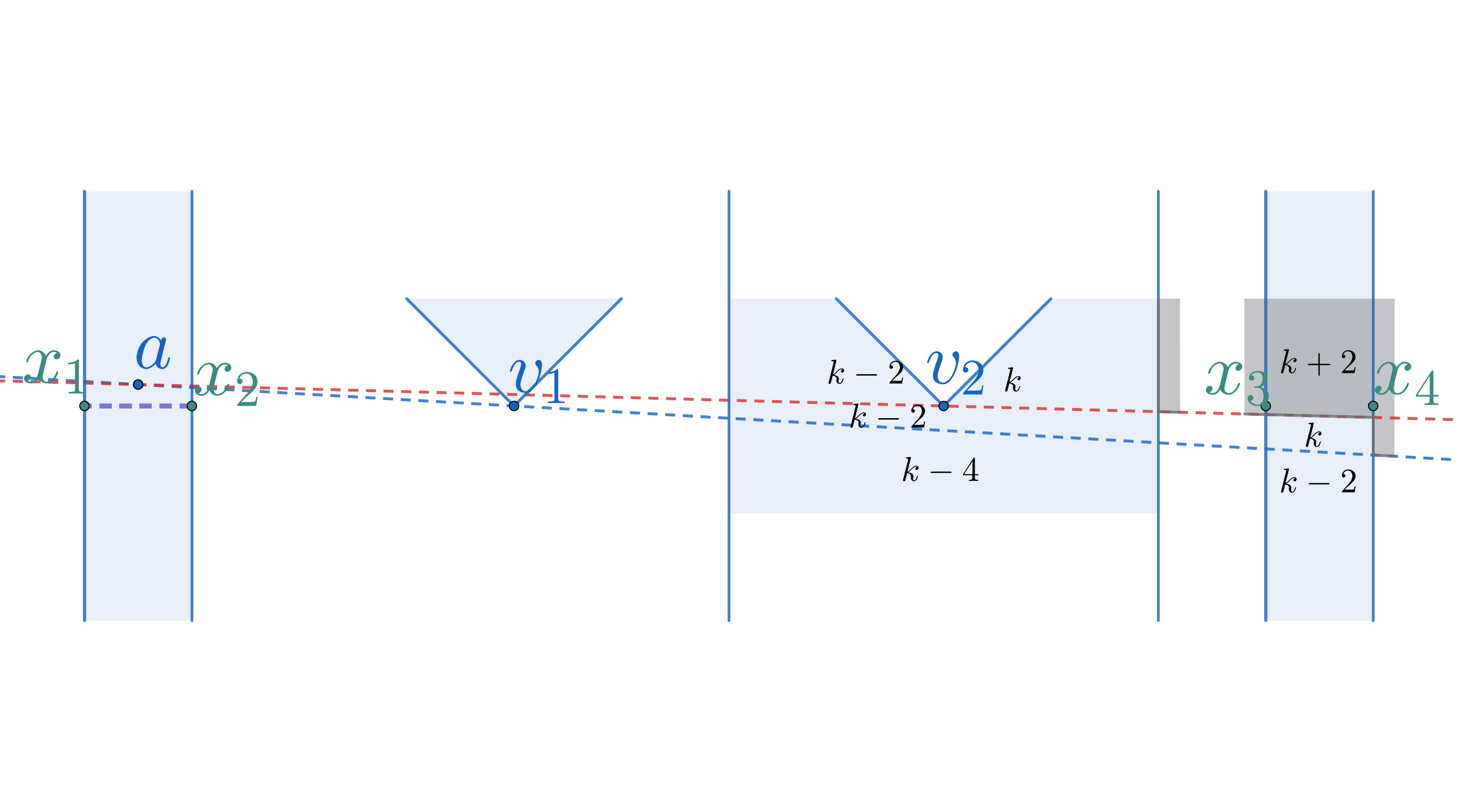}
\caption{Above $l_{g}$ }\label{fig:CRO-genericA6}
\end{subfigure}
\begin{subfigure}[b]{.49\linewidth}
\includegraphics[width=\linewidth]{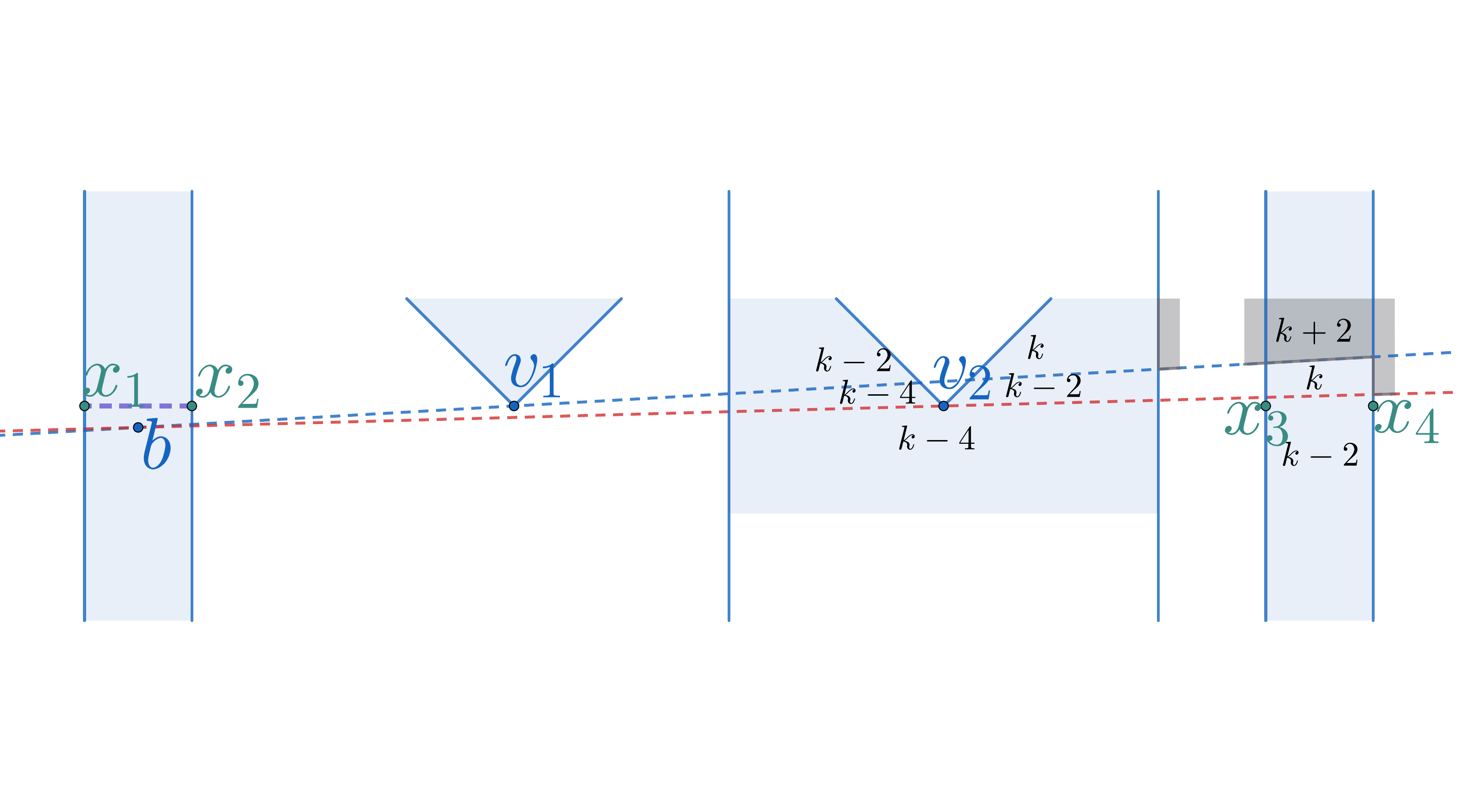}
\caption{Below $l_{g}$ }\label{fig:CRO-genericB6}
\end{subfigure}

\caption{CRS; $Z = k - 4$, $W = k - 2$}
\label{fig:CRO-generic6}
\end{figure}

 \begin{figure}[H]
\centering
\begin{subfigure}[b]{.49\linewidth}
\includegraphics[width=\linewidth]{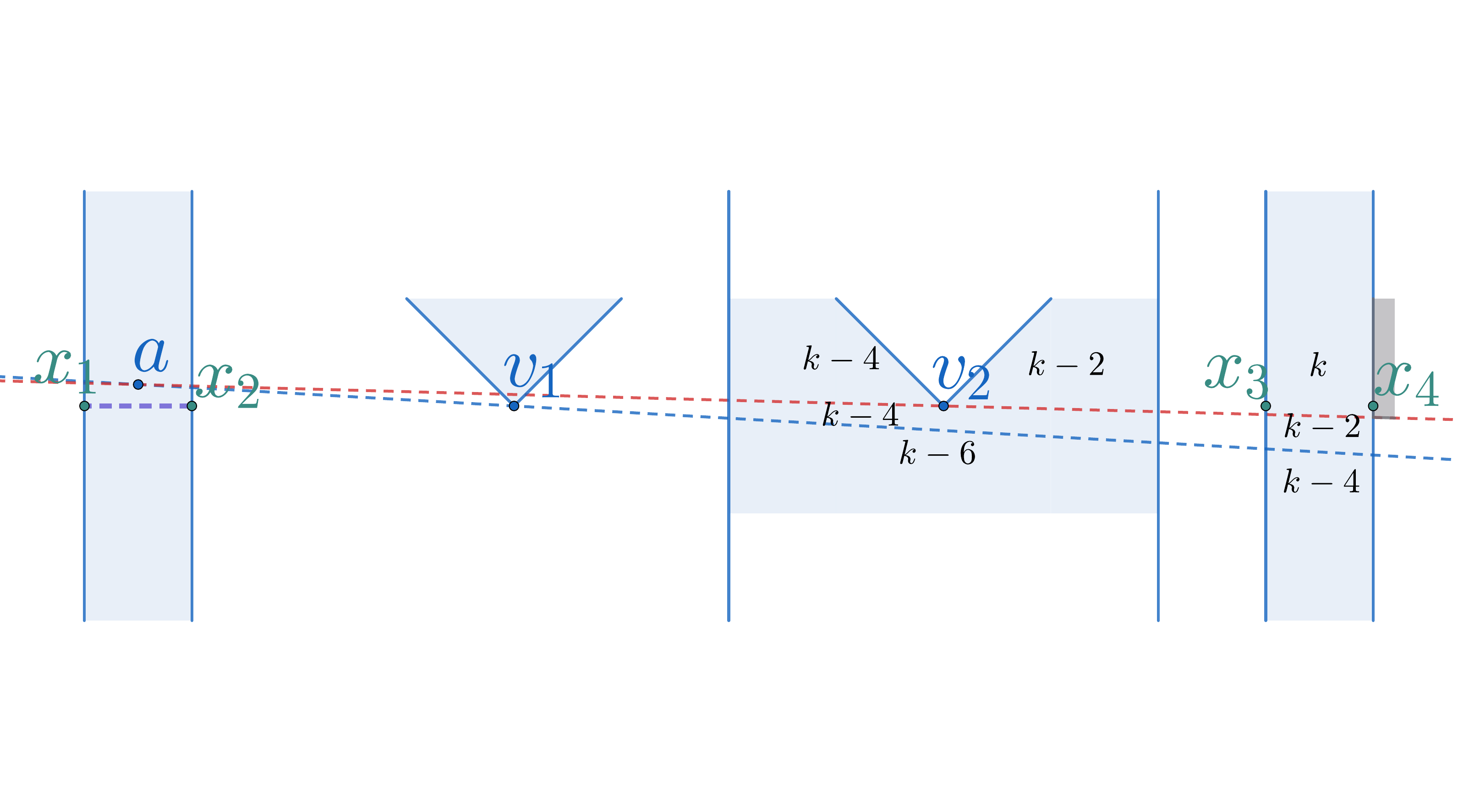}
\caption{Above $l_{g}$ }\label{fig:CRO-genericA8}
\end{subfigure}
\begin{subfigure}[b]{.49\linewidth}
\includegraphics[width=\linewidth]{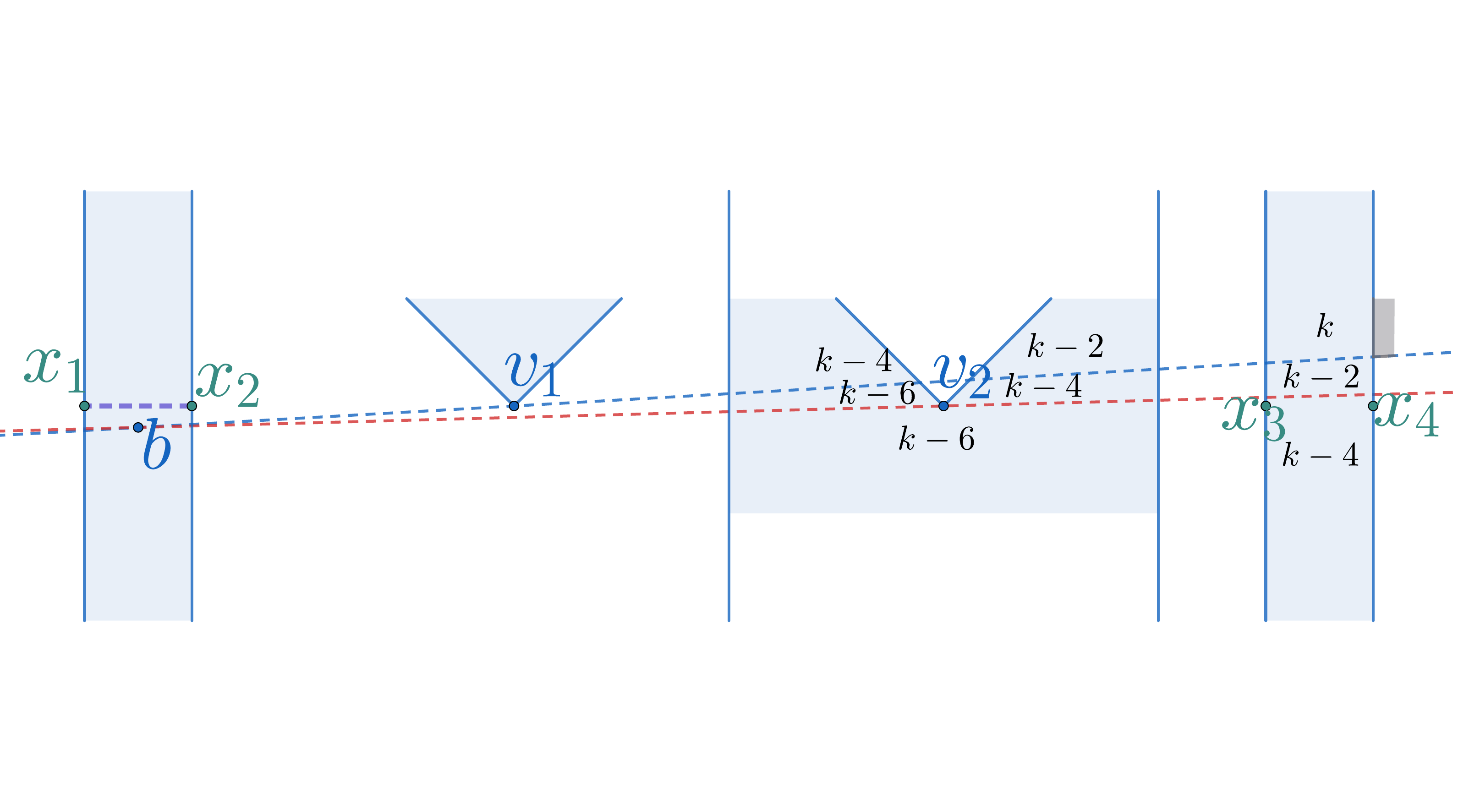}
\caption{Below $l_{g}$ }\label{fig:CRO-genericB8}
\end{subfigure}

\caption{CRS; $Z = k - 6$, $W = k - 4$}
\label{fig:CRO-generic8}

\end{figure}

For $Z \leq k - 8$, $v_{2}$ and its surroundings are entirely visible. For $W \leq k - 6$, $x_{3}x_{4}$ and its surroundings are entirely visible.
% CRS(2) at $v_{2}$
% CRS(4) at $v_{2}$ and $x_{3}x_{4}$
% CRS(6) at $x_{3}x_{4}$
\end{proof}

\section{CRO}
% CRO(2) at $v_{2}$ (below $l_{b}$, above $l_{r}$)\\
% CRO(4) at $v_{2}$ (below $l_{b}$ and above $l_{r}$) and $x_{3}x_{4}$ (invis; between $l_{b}$ and $l_{r}$)

\begin{lemma}
\label{lemma:CRO}
A partition line is needed in the following cases:
\begin{itemize}
\renewcommand{\labelitemi}{$\bullet$}
    \item $Z = k$ (Figure~\ref{fig:CRS-generic2})
    \item $Z = k - 2$ (Figure~\ref{fig:CRS-generic4})
    \item $W = k$ (Figure~\ref{fig:CRS-generic4}
    \item $Z = k - 4$ (Figure~\ref{fig:CRS-generic6})
    \item $W = k - 2$ (Figure~\ref{fig:CRS-generic6})
    \item $W = k - 4$ (Figure~\ref{fig:CRS-generic8})
\end{itemize}

\end{lemma}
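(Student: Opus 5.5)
The plan follows the template of Lemma~\ref{lemma:ccs} and Lemma~\ref{lemma:CRS}. I fix the CRO configuration: $v_1$ is convex and critical to $v_2$, $v_2$ is reflex and critical to $v_1$, and the incident edges of $v_1$ and $v_2$ lie on opposite sides of $\ell_g$. I then sweep the obstruction parameters $Z$ and $W$ over all admissible values. Since every transversal crossing of an obstacle wall is paired, both $Z$ and $W$ change in steps of $2$. As in CRS, the reflex vertex $v_2$ contributes its incident edges to the count differently from a convex one. This fixes the parity so that the relevant values are $Z\in\{k,k-2,k-4,\dots\}$ and $W\in\{k+2,k,k-2,k-4,\dots\}$. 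The proof then has three parts: two extreme ranges where nothing happens, and a finite middle window where each value is checked by comparing the generator tuples at $a$ and $b$.

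\textbf{Step 1 (bounding the range).} For $Z\ge k+2$, I will argue that every ray from a query point near $x_1x_2$ to a neighbourhood of $v_2$ meets more than $k$ boundary crossings on both sides of $\ell_g$. So $v_2$ and its neighbourhood lie in shadow at $a$ and at $b$, and no tuple involving $v_2$ appears in either sequence. Likewise, for $W\ge k+4$ the segment $x_3x_4$ is in shadow at both positions. At the other extreme, for $Z\le k-8$ and $W\le k-6$, the neighbourhoods of $v_2$ and $x_3x_4$ are $k$-visible at both $a$ and $b$, with depth bounded away from $k$ even after the at most two extra crossings gained when passing $\ell_g$. So no window is anchored there and $\mathcal{S}(a)=\mathcal{S}(b)$ locally.

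\textbf{Step 2 (middle values).} For each listed value I compare the windows along $\ell_b$ (through $v_1$) and $\ell_r$ (through $v_2$) at $a$ and at $b$, using the corresponding figure.
\begin{itemize}
\item At $Z=k$, $v_2$ is exactly at depth $k$ on one side. Crossing $\ell_g$ changes whether the window along $\ell_b$ terminates on an edge incident to $v_2$ or a new window emanates from $v_2$ along $\ell_r$. This gives an insertion or deletion of a generator tuple.
\item At $Z=k-2$ and $Z=k-4$, the window near $v_2$ changes its anchor between $v_1$ and $v_2$ (a REPLACE of $c$), or changes its proximal edge between the two edges incident to $v_2$. Because the edges lie on opposite sides, this is the distinguishing feature from CRS.
\item At $W=k,k-2,k-4$, the window terminating near $x_4$ swaps its generator between $v_1$ and $v_2$, exactly as in the $W$ items of Lemma~\ref{lemma:ccs}.
\end{itemize}
In each of these cases $\mathcal{S}(a)\ne\mathcal{S}(b)$, so $x_1x_2$ is a partition line. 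I will also check the complementary values ($W=k+2$, $Z=k-6$, $W=k-6$) and show that no tuple changes there, because the relevant neighbourhood is uniformly shadowed or uniformly visible.

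\textbf{Main obstacle.} The hard step is the boundary values $Z=k-4$ and $W=k-4$. There the window structure near $v_2$ depends on whether the opposite-side edges of the reflex $v_2$ add a crossing to $\ell_r$ but not to $\ell_b$. I would first settle this by counting crossings on $\ell_b$ and $\ell_r$ separately at $a$ and $b$: one side gains the two edges incident to $v_2$, the other does not. I would then verify which ray reaches depth exactly $k$ at $x_3x_4$, reusing the CRS parity shift of $Z$ relative to $W$ to avoid redoing the count from scratch.
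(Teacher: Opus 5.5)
\textbf{Where Step~2 fails.} Your overall template is the paper's. Its proof of this lemma only reads each listed value off the CRO figures, plus remarks on the extreme ranges. The gap is in what you claim those figures show. Your Step~2 mechanisms are imported from the same-side cases (``exactly as in the $W$ items of Lemma~\ref{lemma:ccs}''), and they do not transfer to the opposite-side configuration.

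In CRO the two edges at $v_1$ and the two edges at $v_2$ lie on opposite sides of $\ell_g$. So a sight line from the query point crosses the pair at $v_1$ only when it passes $v_1$ on the side of $v_1$'s edges. It crosses the pair at $v_2$ only when it passes $v_2$ on the other side of $\ell_g$. Beyond $v_2$ and near $\ell_g$, a point outside the wedge bounded by $\ell_b$ and $\ell_r$ is therefore reached from $a$ and from $b$ across exactly one of the two pairs. A point inside the wedge is reached from one query side across both pairs and from the other across neither, so its depth jumps by $4$, not $2$.

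Consequently the change at $x_3x_4$ is not a REPLACE of a single window's anchor. It is the appearance or disappearance of a region bounded by two windows, one anchored at $v_1$ along $\ell_b$ and one at $v_2$ along $\ell_r$, i.e.\ INSERT/DELETE operations. This is what the paper records for the CRO figure with $W=k$: $x_3x_4$ is invisible from one side and visible only between $\ell_b$ and $\ell_r$ from the other. Near $v_2$, the paper's CRO notes likewise describe the visible part as the region below $\ell_b$ and above $\ell_r$, not your ``REPLACE of $c$ or change of proximal edge''. Your Step~1 justification, ``at most two extra crossings gained when passing $\ell_g$'', is also false inside the wedge.

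\textbf{How to repair it.} An inserted or deleted pair of windows still changes the sequence, so the conclusion $\mathcal{S}(a)\neq\mathcal{S}(b)$ is not itself in danger. Your lower cut-offs $Z\le k-8$, $W\le k-6$ also leave enough slack for a jump of $4$. The argument is therefore repairable, but you must do the region-by-region crossing count (inside versus outside the wedge, from $a$ and from $b$) up front for every value and read each change off it. Do not postpone that count to the $k-4$ cases and borrow the rest from CCS and CRS.

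The same applies to your shadow cut-offs $Z\ge k+2$, $W\ge k+4$, which are copied from CRS. The paper treats $Z=k+2$, $W=k+4$ as a separately drawn CRO configuration and declares total shadow only from $Z\ge k+4$, $W\ge k+6$. With a depth jump of $4$ in the wedge, that configuration has to be checked rather than assumed.
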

\begin{proof}
    See Figures~\ref{fig:CRS-generic0}-~\ref{fig:CRS-generic8}.
\end{proof}
Note: for $Z \geq k + 4$, $v_{2}$ and its surroundings are entirely in shadow. For $W \geq k + 6$, $x_{3}x_{4}$ and its surroundings are entirely in shadow.

 \begin{figure}[H]
\centering
\begin{subfigure}[b]{.49\linewidth}
\includegraphics[width=\linewidth]{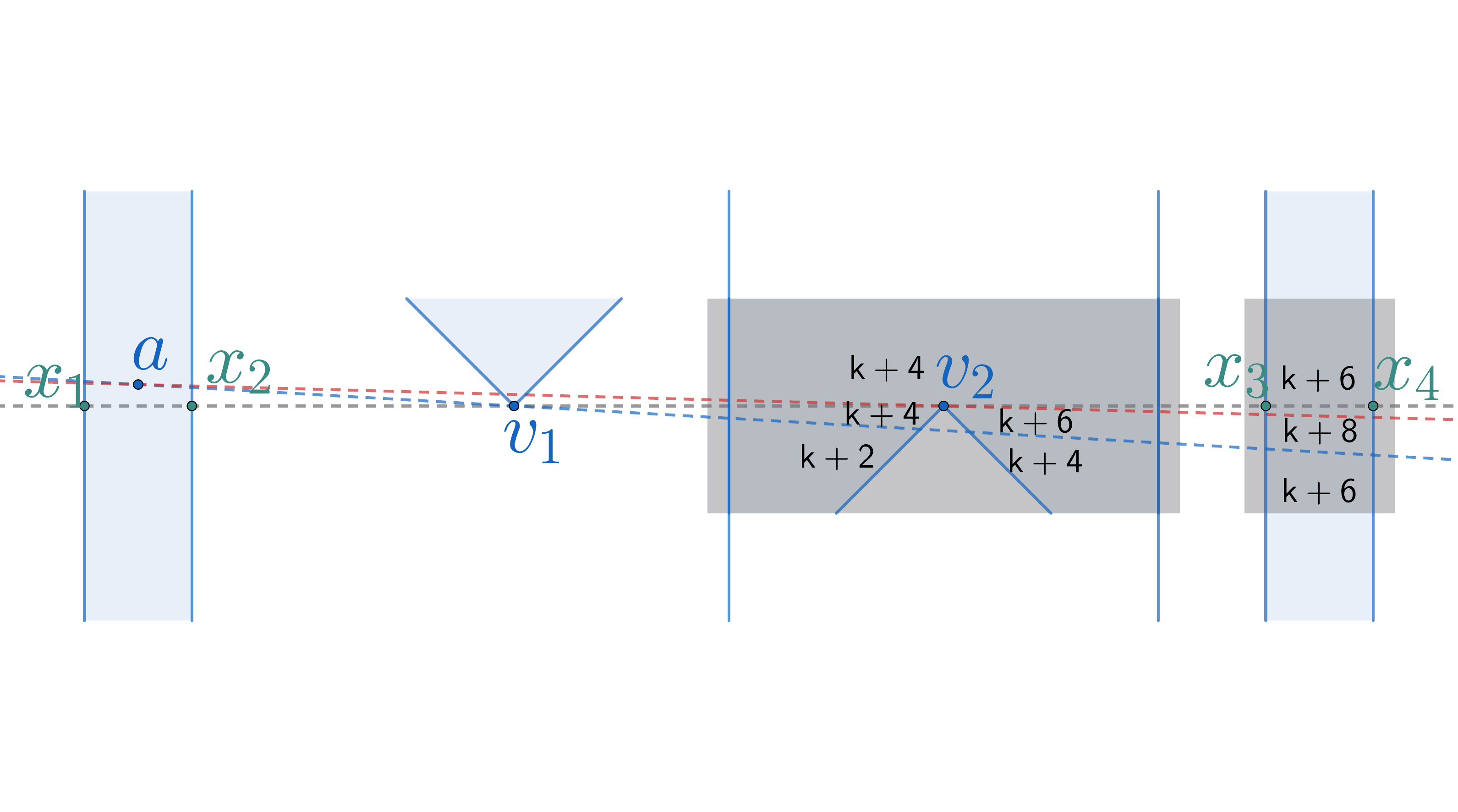}
\caption{Above $l_{g}$}\label{fig:CRS-genericA0}
\end{subfigure}
\begin{subfigure}[b]{.49\linewidth}
\includegraphics[width=\linewidth]{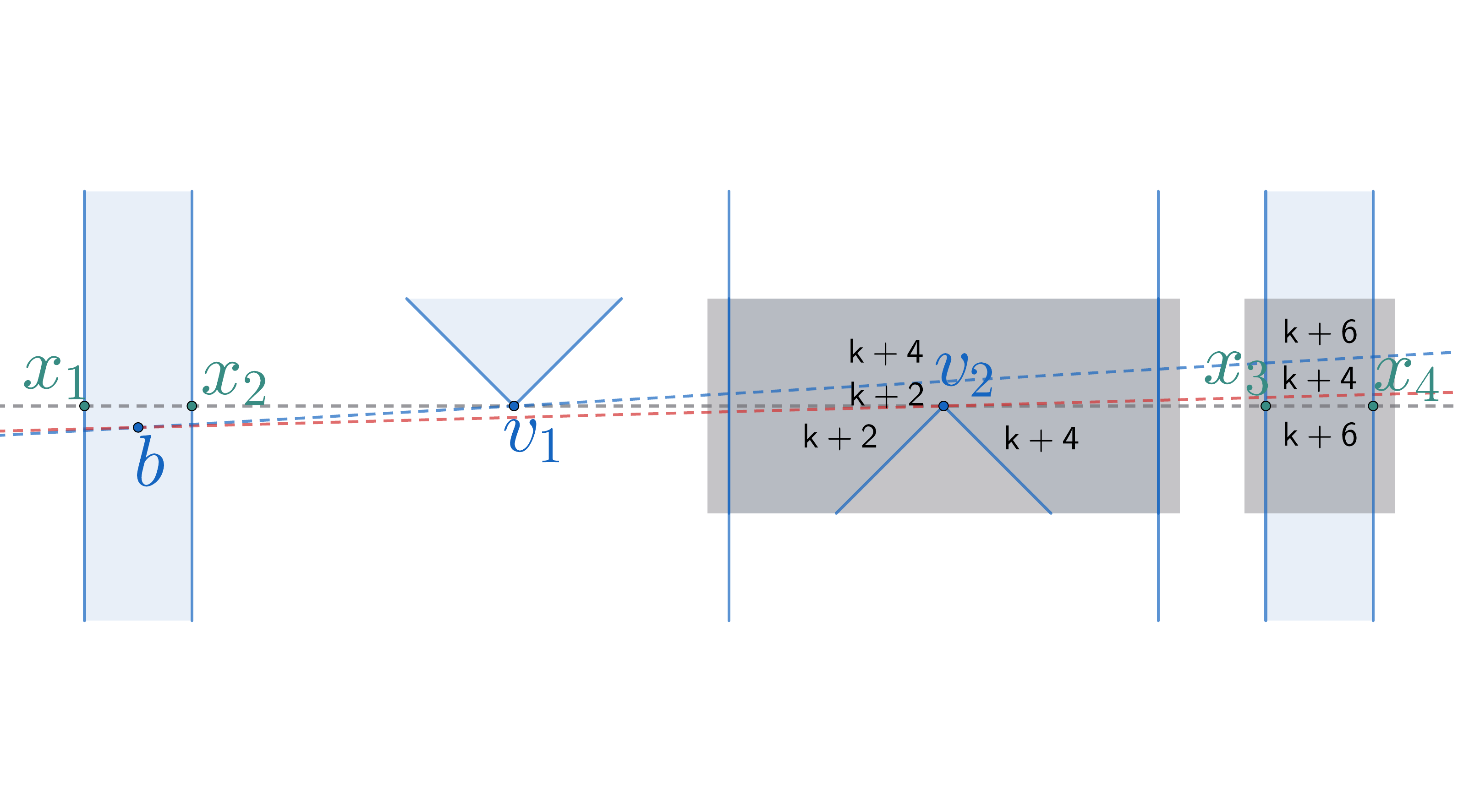}
\caption{Below $l_{g}$}\label{fig:CRS-genericB0}
\end{subfigure}

\caption{CRO; $Z=k+2$; $W=k+4$}
\label{fig:CRS-generic0}
\end{figure}

 \begin{figure}[H]
\centering
\begin{subfigure}[b]{.49\linewidth}
\includegraphics[width=\linewidth]{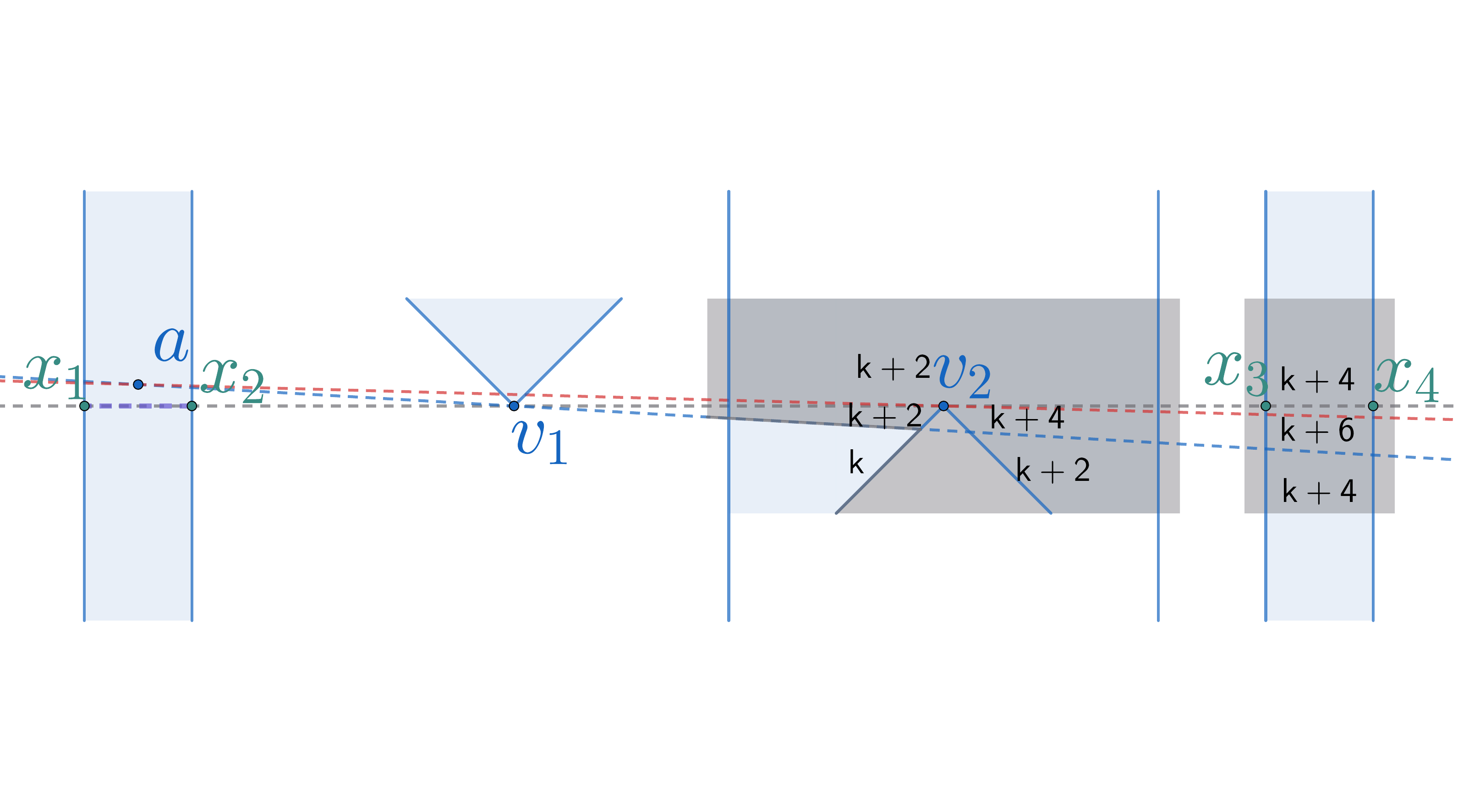}
\caption{Above $l_{g}$}\label{fig:CRS-genericA2}
\end{subfigure}
\begin{subfigure}[b]{.49\linewidth}
\includegraphics[width=\linewidth]{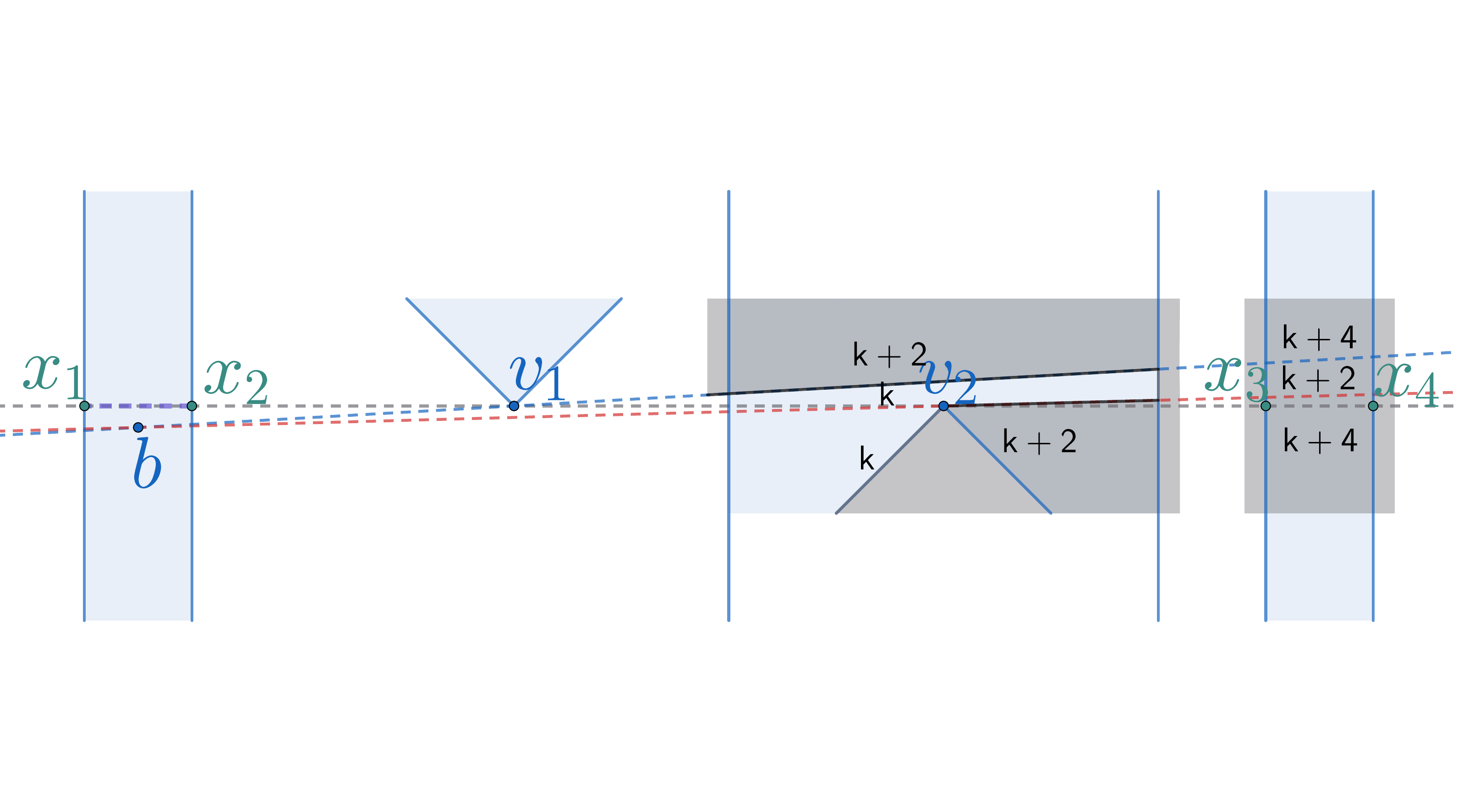}
\caption{Below $l_{g}$}\label{fig:CRS-genericB2}
\end{subfigure}

\caption{CRO; $Z=k$ and $W=k+2$}
\label{fig:CRS-generic2}
\end{figure}

 \begin{figure}[H]
\centering
\begin{subfigure}[b]{.49\linewidth}
\includegraphics[width=\linewidth]{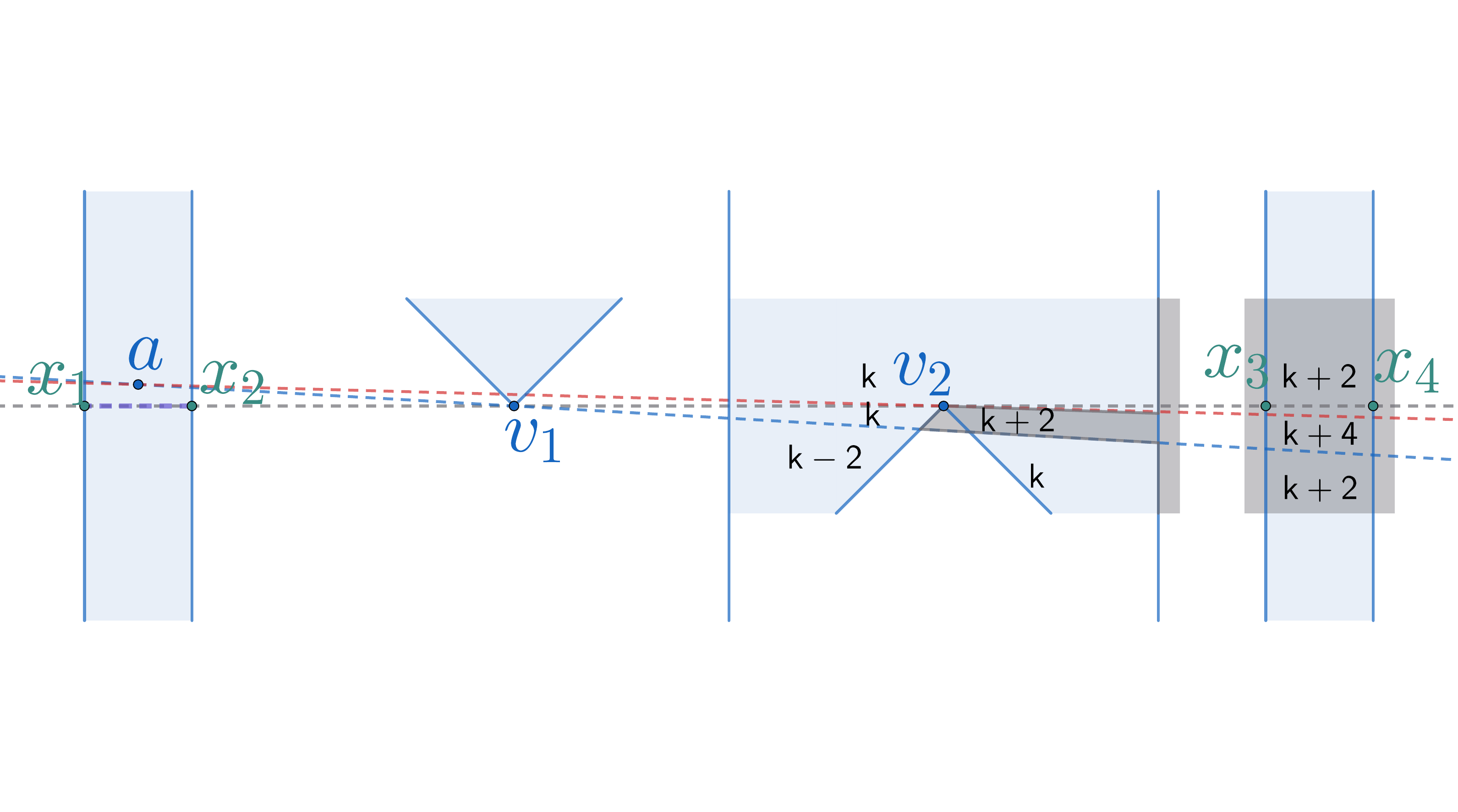}
\caption{Above $l_{g}$}\label{fig:CRS-genericA4}
\end{subfigure}
\begin{subfigure}[b]{.49\linewidth}
\includegraphics[width=\linewidth]{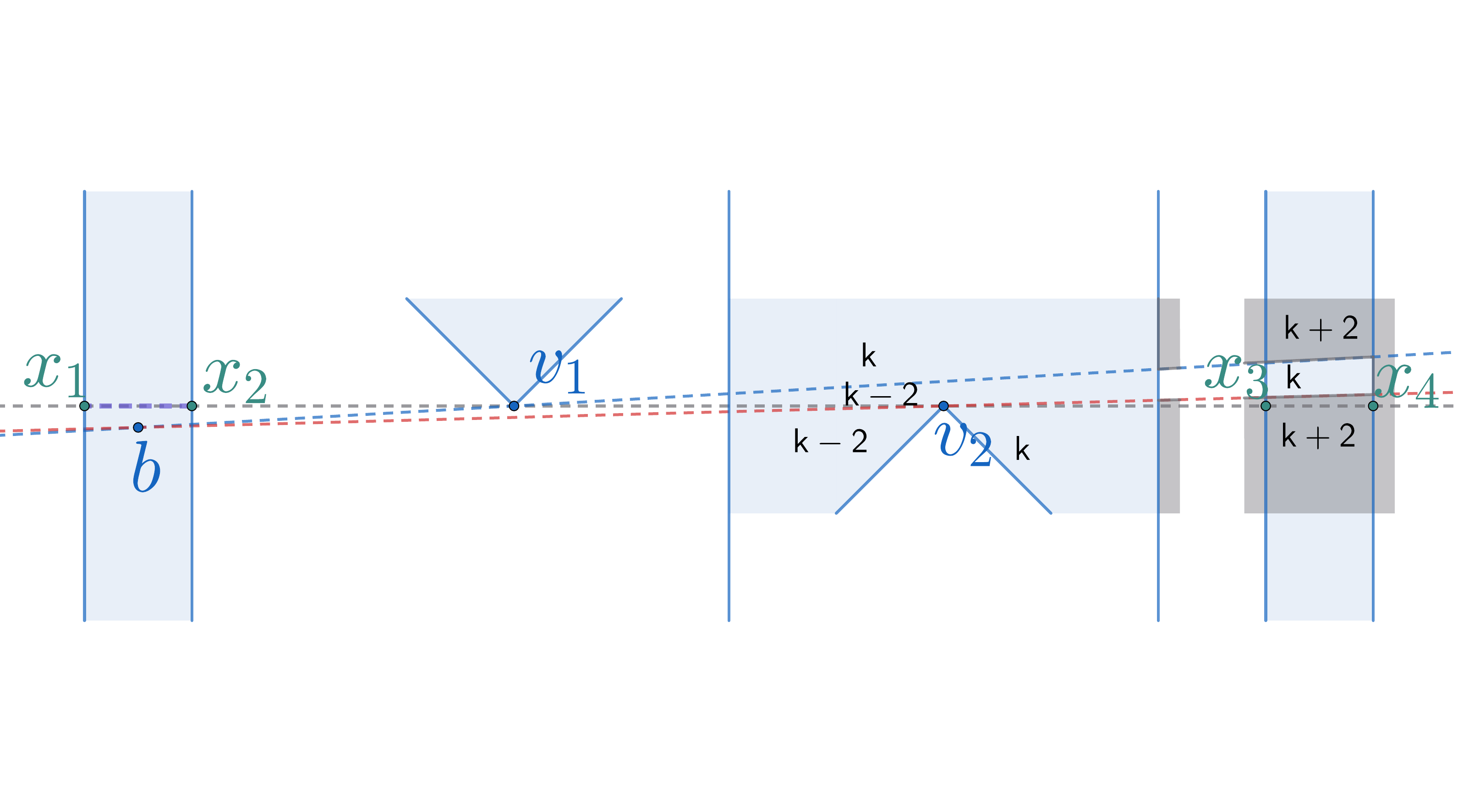}
\caption{Below $l_{g}$}\label{fig:CRS-genericB4}
\end{subfigure}

\caption{CRO; $Z = k - 2$, $W = k$}
\label{fig:CRS-generic4}
\end{figure}

 \begin{figure}[H]
\centering
\begin{subfigure}[b]{.49\linewidth}
\includegraphics[width=\linewidth]{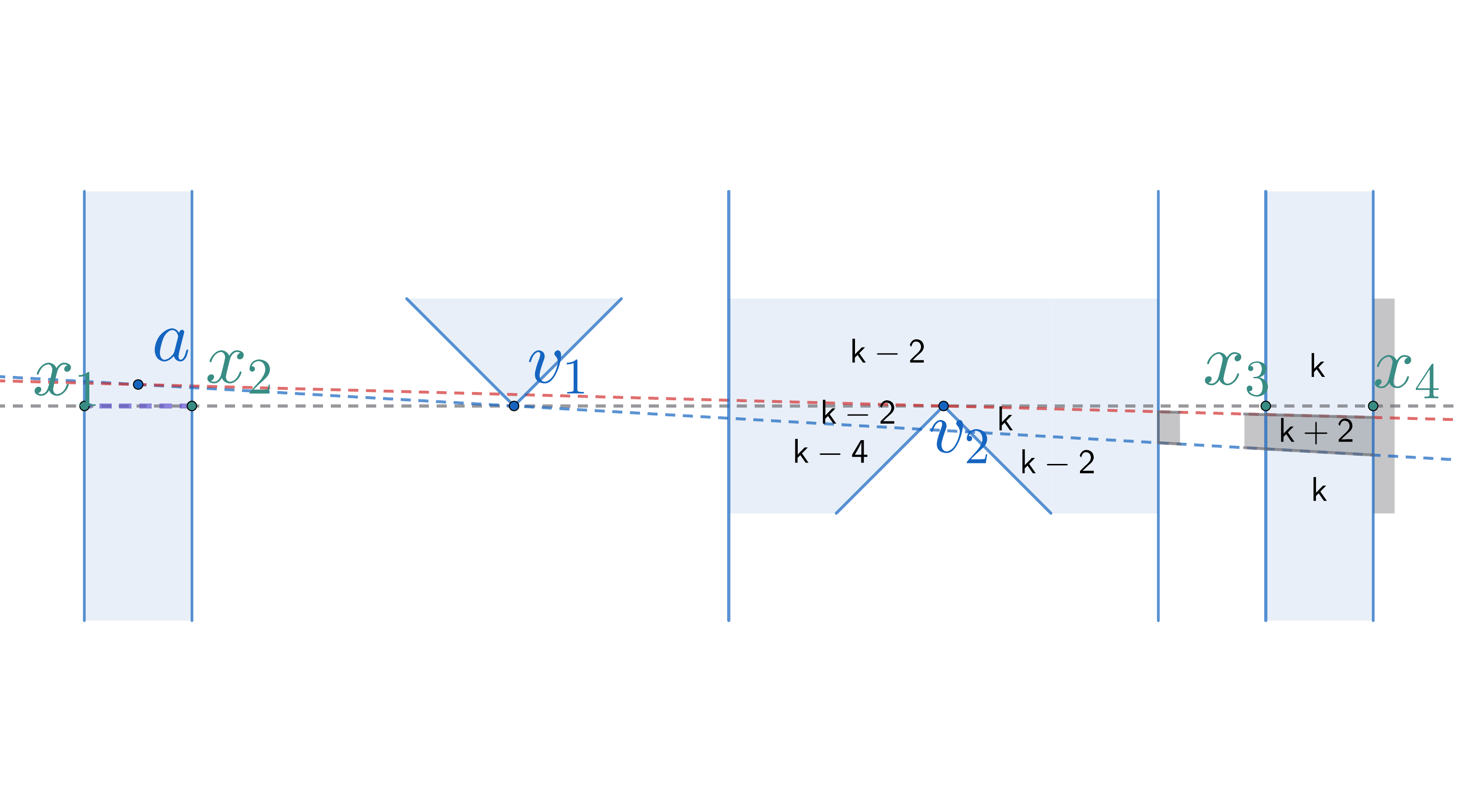}
\caption{Above $l_{g}$}\label{fig:CRS-genericA6}
\end{subfigure}
\begin{subfigure}[b]{.49\linewidth}
\includegraphics[width=\linewidth]{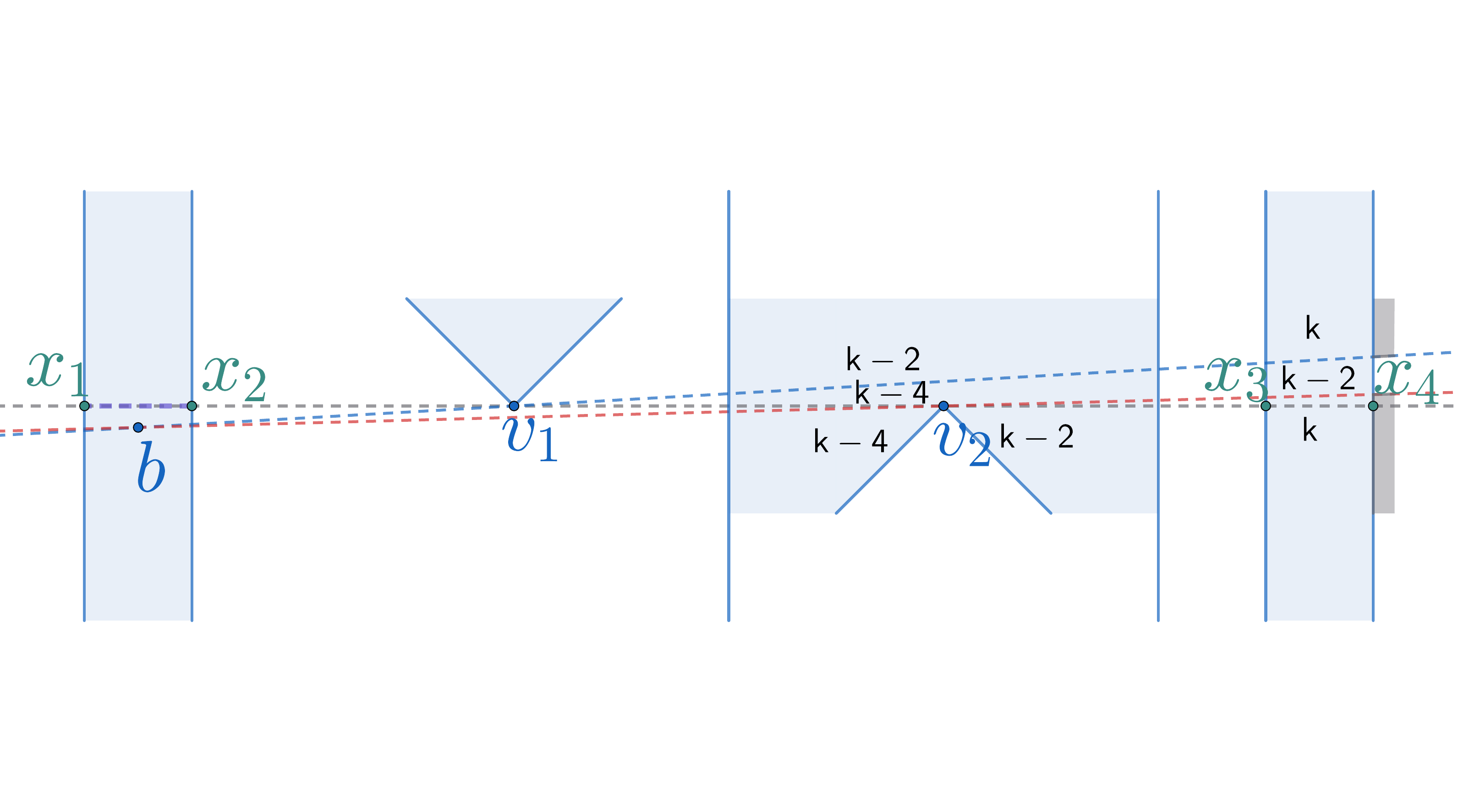}
\caption{Below $l_{g}$}\label{fig:CRS-genericB6}
\end{subfigure}

\caption{CRO; $Z = k - 4$, $W = k - 2$}
\label{fig:CRS-generic6}
\end{figure}

 \begin{figure}[H]
\centering
\begin{subfigure}[b]{.49\linewidth}
\includegraphics[width=\linewidth]{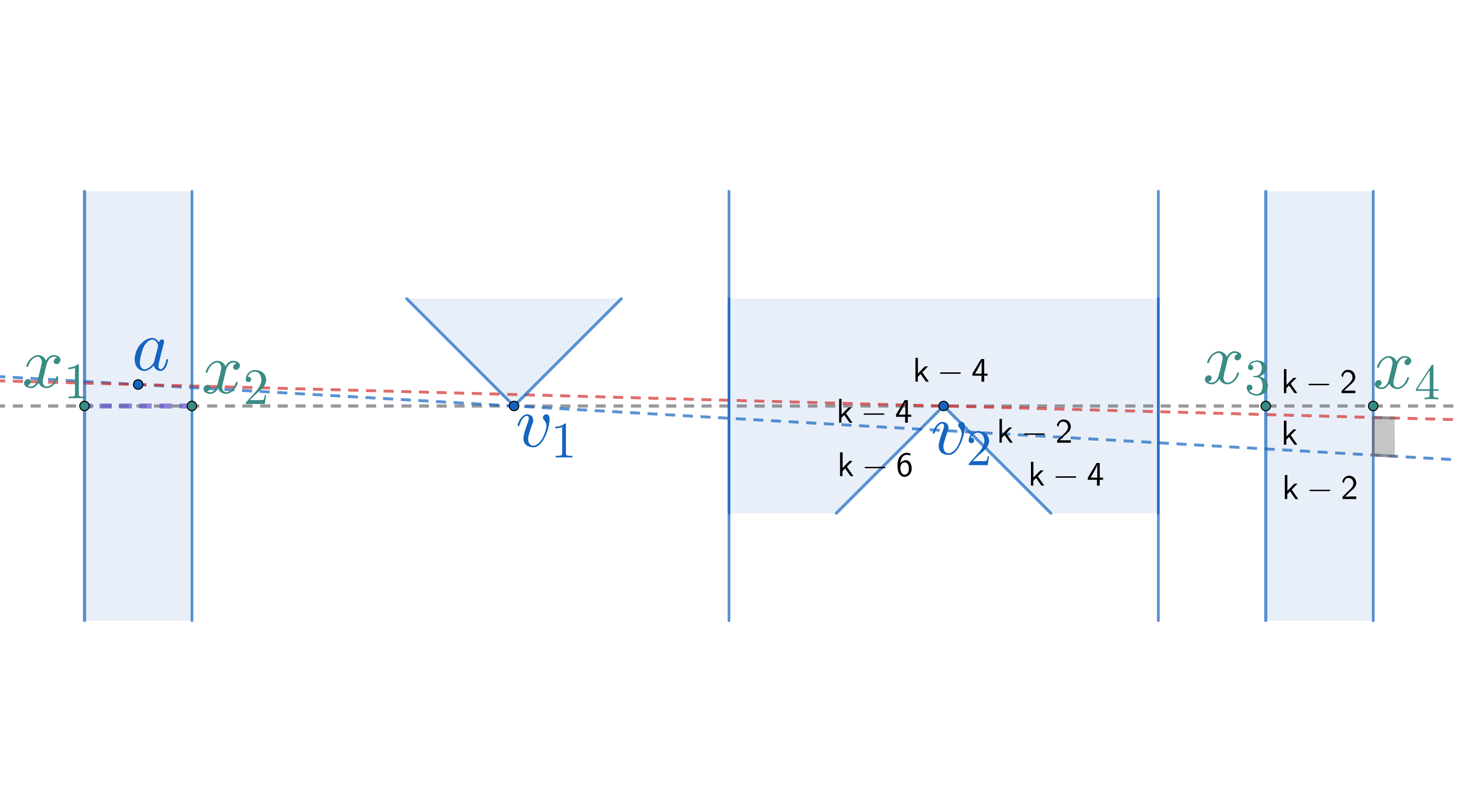}
\caption{Above $l_{g}$}\label{fig:CRS-genericA6}
\end{subfigure}
\begin{subfigure}[b]{.49\linewidth}
\includegraphics[width=\linewidth]{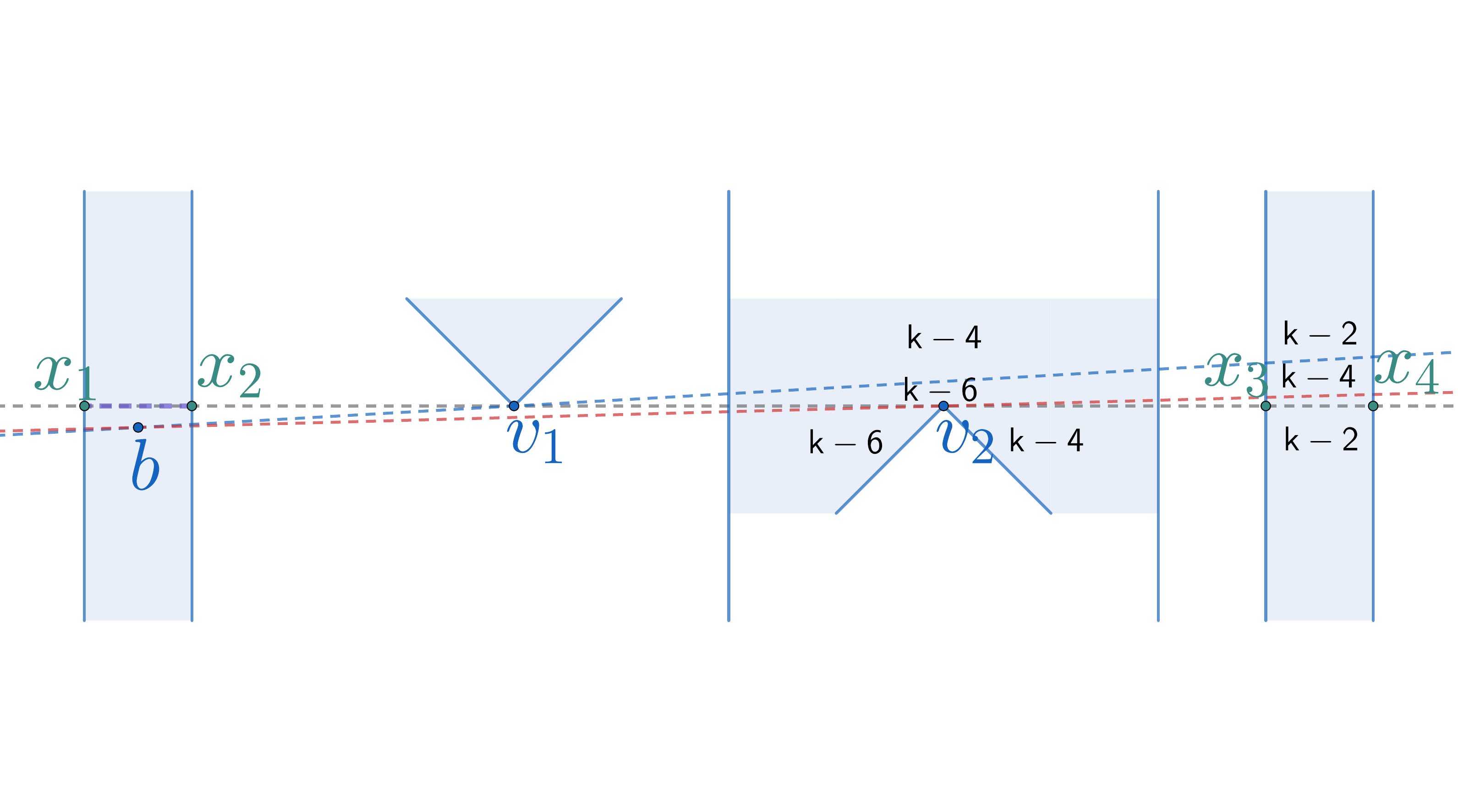}
\caption{Below $l_{g}$}\label{fig:CRS-genericB6}
\end{subfigure}

\caption{CRO; $Z = k - 6$, $W = k - 4$} 
\label{fig:CRS-generic8}
\end{figure}

For $Z \leq k - 8$, $v_{2}$ and its surroundings are entirely visible. For $W \leq k - 6$, $x_{3}x_{4}$ and its surroundings are entirely visible.

\section{RCS}

\begin{lemma}
\label{lemma:RCS}
    
A partition line is needed in the following cases:
\begin{itemize}
\renewcommand{\labelitemi}{$\bullet$}
    \item $Z = k-1$ (Figure~\ref{fig:RCS-generic-2}) 
    \item $W = k$ (Figure~\ref{fig:RCS-generic-2})
    \item $Z = k - 3$ (Figure~\ref{fig:RCS-generic-4})
    \item $W = k - 2$ (Figure~\ref{fig:RCS-generic-4})
    \item $W = k - 4$ (Figure~\ref{fig:RCS-generic-6})
\end{itemize}

\end{lemma}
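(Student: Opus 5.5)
The plan mirrors the proof of Lemma~\ref{lemma:ccs} and of Lemma~\ref{lemma:CCO}, since in RCS the vertex $v_2$ is convex exactly as in CCS, so the parity offsets of $Z$ relative to $k$ should match those cases. The only structural difference is that $v_1$ is reflex. As a result, the ray $\ell_g$ beyond $v_1$ crosses the two edges incident to $v_1$ differently, but this is absorbed into the definitions of $Z$ and $W$, which exclude the edges incident to $v_1$ and $v_2$. The argument is therefore a case split over all admissible values of $Z$ and $W$. Parity is fixed by the observation used in Lemma~\ref{lemma:ccs}: any ray leaving and re-entering $P$ crosses $\partial P$ an even number of extra times, so $Z$ and $W$ move in steps of $2$. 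Because $v_2$ is convex, the smallest occluded value of $Z$ is $k+1$ and of $W$ is $k+2$.

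\textbf{Step 1: extreme regimes.} First I would dispose of the cases where no partition line is needed.
\begin{itemize}
\item For $Z \ge k+1$, the vertex $v_2$ and a neighbourhood of it have depth exceeding $k$ from both $a$ and $b$, so nothing near $v_2$ appears in $\mathcal{S}(a)$ or $\mathcal{S}(b)$.
\item For $W \ge k+2$, the same holds for $x_3x_4$.
\item For $Z \le k-5$, the neighbourhood of $v_2$ lies in the interior of $\text{Vis}_k$ from both sides of $\ell_g$.
\item For $W \le k-6$, the same holds for $x_3x_4$.
\end{itemize}
In each of these regimes neither window anchored at $v_1$ or $v_2$ has an endpoint near $v_2$ or $x_3x_4$, so the local part of $\mathcal{S}$ is unchanged.

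\textbf{Step 2: intermediate values.} Next I would treat each intermediate value separately, following Figures~\ref{fig:RCS-generic-2}--\ref{fig:RCS-generic-6}. In each case I would compare the window generator tuples at $a$ and at $b$, using the rays $\ell_b$ through $v_1$ and $\ell_r$ through $v_2$.
\begin{itemize}
\item $Z=k-1$. At $a$ a window runs along $\ell_b$ past $v_2$. At $b$ the window along $\ell_b$ terminates on an edge incident to $v_2$, and a new window anchored at $v_2$ appears along $\ell_r$. So the tuple set changes.
\item $W=k$. The window terminating near $x_4$ swaps its anchor from $v_1$ to $v_2$.
\item $Z=k-3$, $W=k-2$ and $W=k-4$. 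The anchor of the window near $v_2$, respectively near $x_3x_4$, swaps between $v_1$ and $v_2$.
\end{itemize}
In every case $\mathcal{S}(a)\neq\mathcal{S}(b)$, which forces a line on $x_1x_2$. For $Z=k-5$ I would check that both sides see the neighbourhood of $v_2$ fully, so no line is needed.

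\textbf{Main obstacle.} The hard part is verifying that the reflex $v_1$ does not shift these thresholds relative to CCS. When $v_1$ is reflex, the ray from $a$ or $b$ through $v_1$ may graze into the exterior on one side only, and this could change by one crossing the depth at which the window along $\ell_b$ reaches $v_2$. I would first try a local computation near $v_1$. The plan is to count crossings along $\ell_b$ for $q$ slightly above and slightly below $\ell_g$, and to show the count agrees with the count along $\ell_g$ up to the excluded incident edges. This would confirm that the offsets are $k-1,k-3$ for $Z$ and $k,k-2,k-4$ for $W$, as in CCS. If this fails, the thresholds shift by $2$, as in CRS, and the list of cases would need to be adjusted accordingly.
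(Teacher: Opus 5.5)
Your plan matches the paper's proof, which is just the figure-by-figure check of Figures~\ref{fig:RCS-generic-2}--\ref{fig:RCS-generic-6} mirroring Lemma~\ref{lemma:ccs}, with the same shadow thresholds $Z \ge k+1$ and $W \ge k+2$ and the same visibility threshold $W \le k-6$. The paper states full visibility only for $Z \le k-7$, but $Z = k-5$ also yields no line, so your $Z \le k-5$ is consistent. The obstacle you flag closes immediately, and your worry that the ray might graze into the exterior ``on one side only'' does not arise. Both edges at $v_1$ lie strictly on one side of $\ell_g$, so rays from points near $x_1x_2$ through $v_1$ touch it without crossing, and rays passing $v_1$ on the edge side cross both incident edges ($+2$) whether $v_1$ is convex or reflex. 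Hence only the type of $v_2$ fixes the offsets of $Z$, and your fallback to CRS-type thresholds is never needed; those shifts come from $v_2$ being reflex, not $v_1$.
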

\begin{proof}
    See Figures~\ref{fig:RCS-generic-2}-~\ref{fig:RCS-generic-6}.
\end{proof}

Note: for $Z \geq k + 1$, $v_{2}$ and its surroundings are entirely in shadow. For $W \geq k + 2$, $x_{3}x_{4}$ and its surroundings are entirely in shadow.

 \begin{figure}[H]
\centering
\begin{subfigure}[b]{.49\linewidth}
\includegraphics[width=\linewidth]{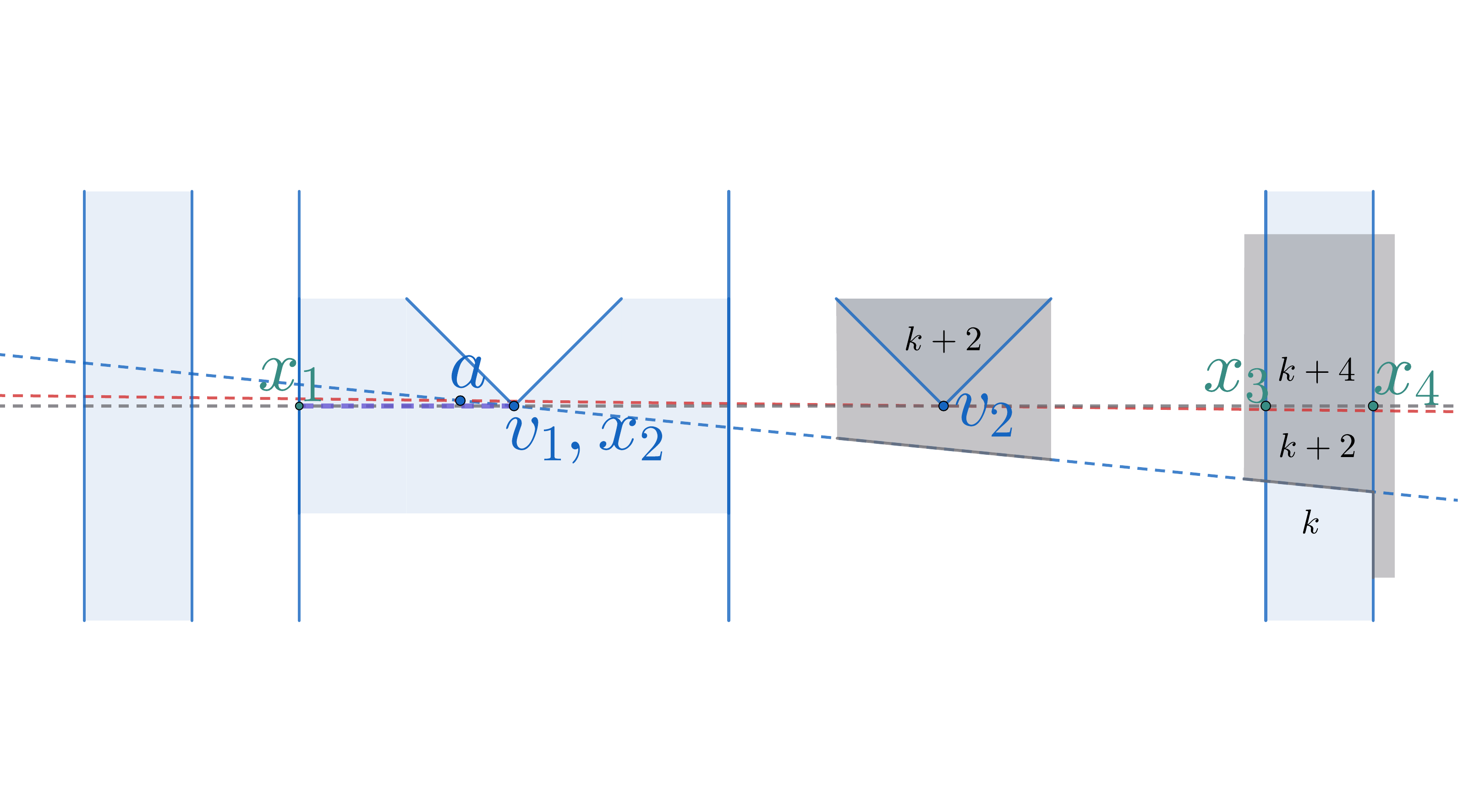}
\caption{Above $l_{g}$}\label{fig:RCS-genericA2}
\end{subfigure}
\begin{subfigure}[b]{.49\linewidth}
\includegraphics[width=\linewidth]{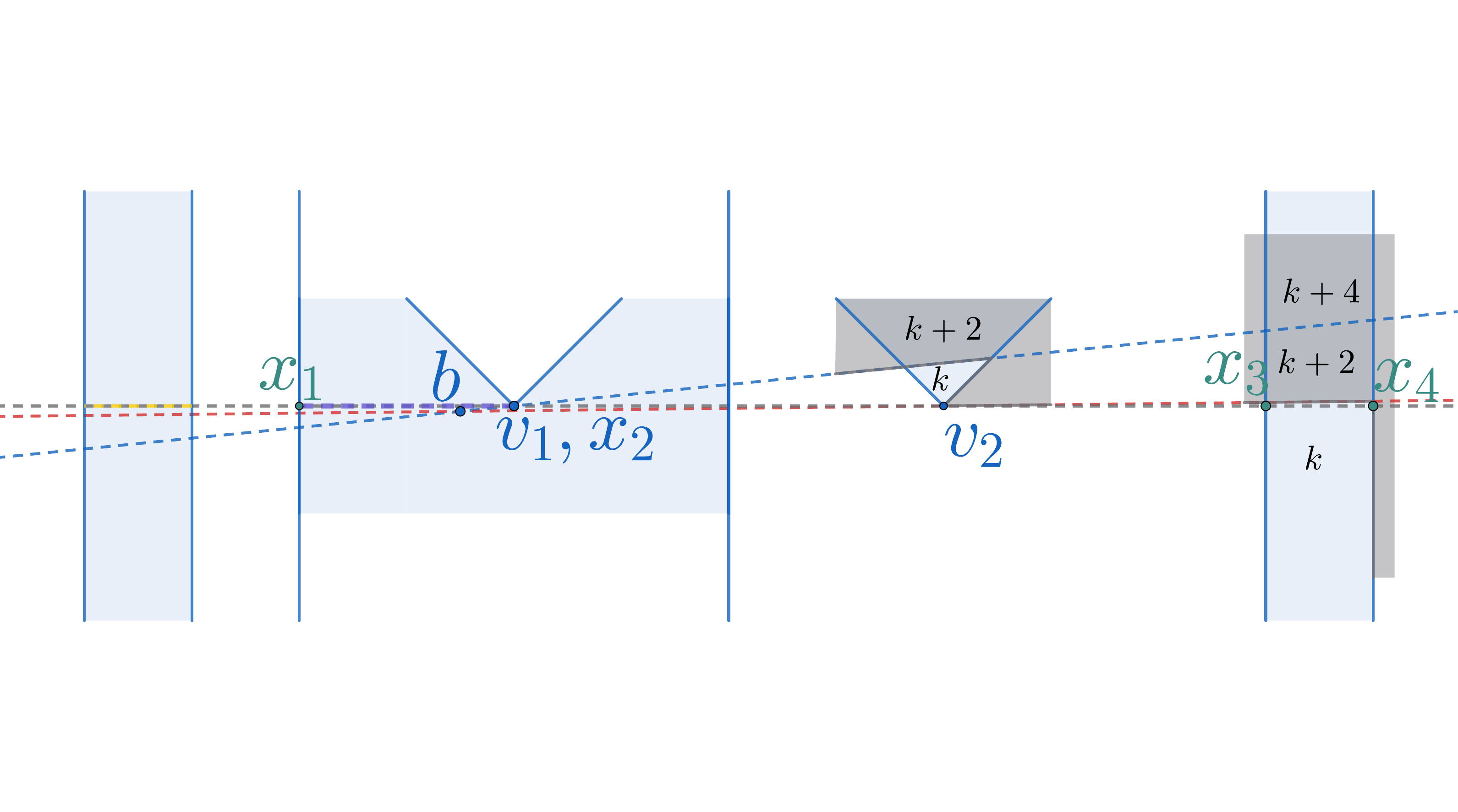}
\caption{Below $l_{g}$}\label{fig:RCS-genericB2}
\end{subfigure}

\caption{RCS; $Z = k - 1$, $W = k$}
\label{fig:RCS-generic-2}
\end{figure}

 \begin{figure}[H]
\centering
\begin{subfigure}[b]{.49\linewidth}
\includegraphics[width=\linewidth]{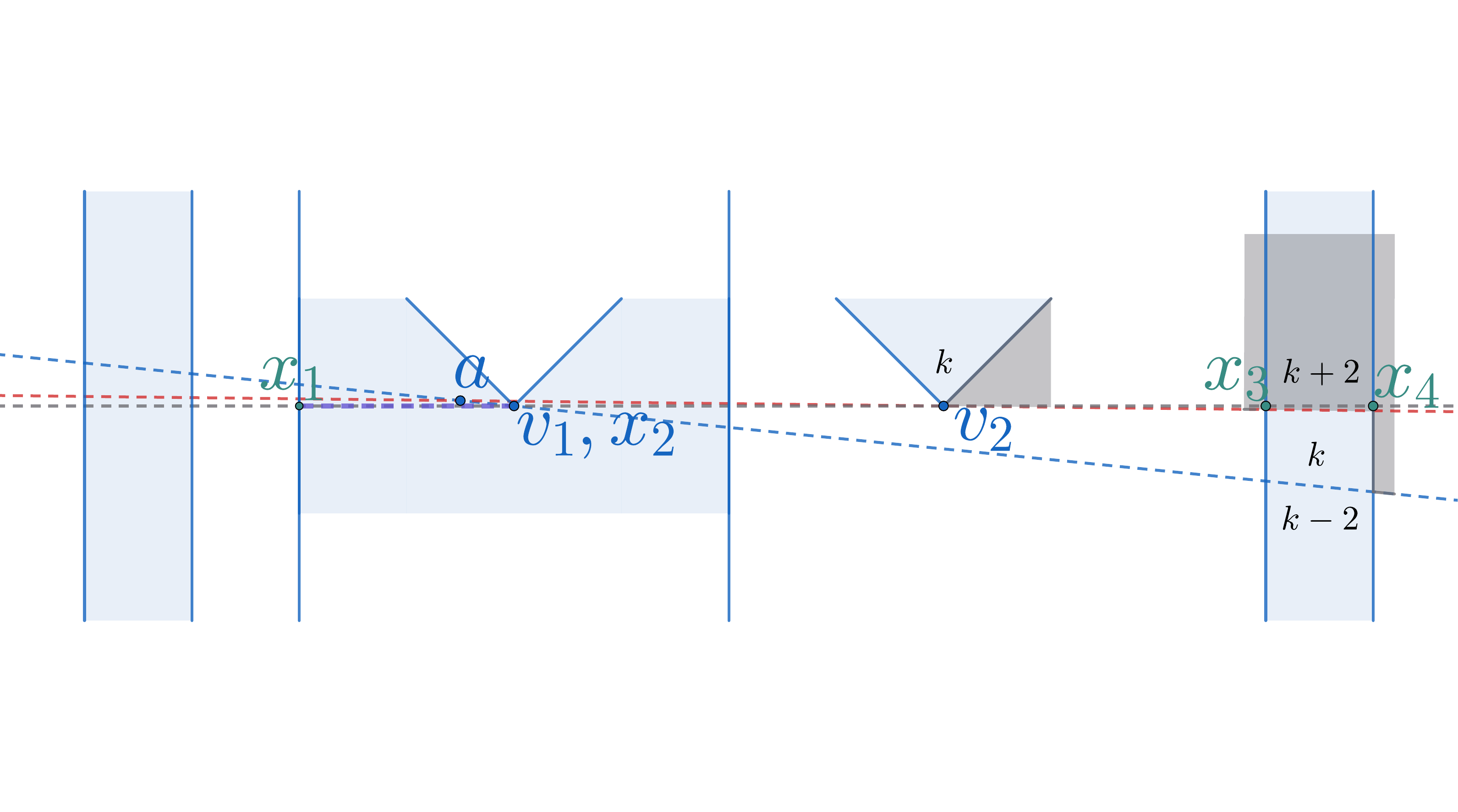}
\caption{Above $l_{g}$}\label{fig:RCS-genericA4}
\end{subfigure}
\begin{subfigure}[b]{.49\linewidth}
\includegraphics[width=\linewidth]{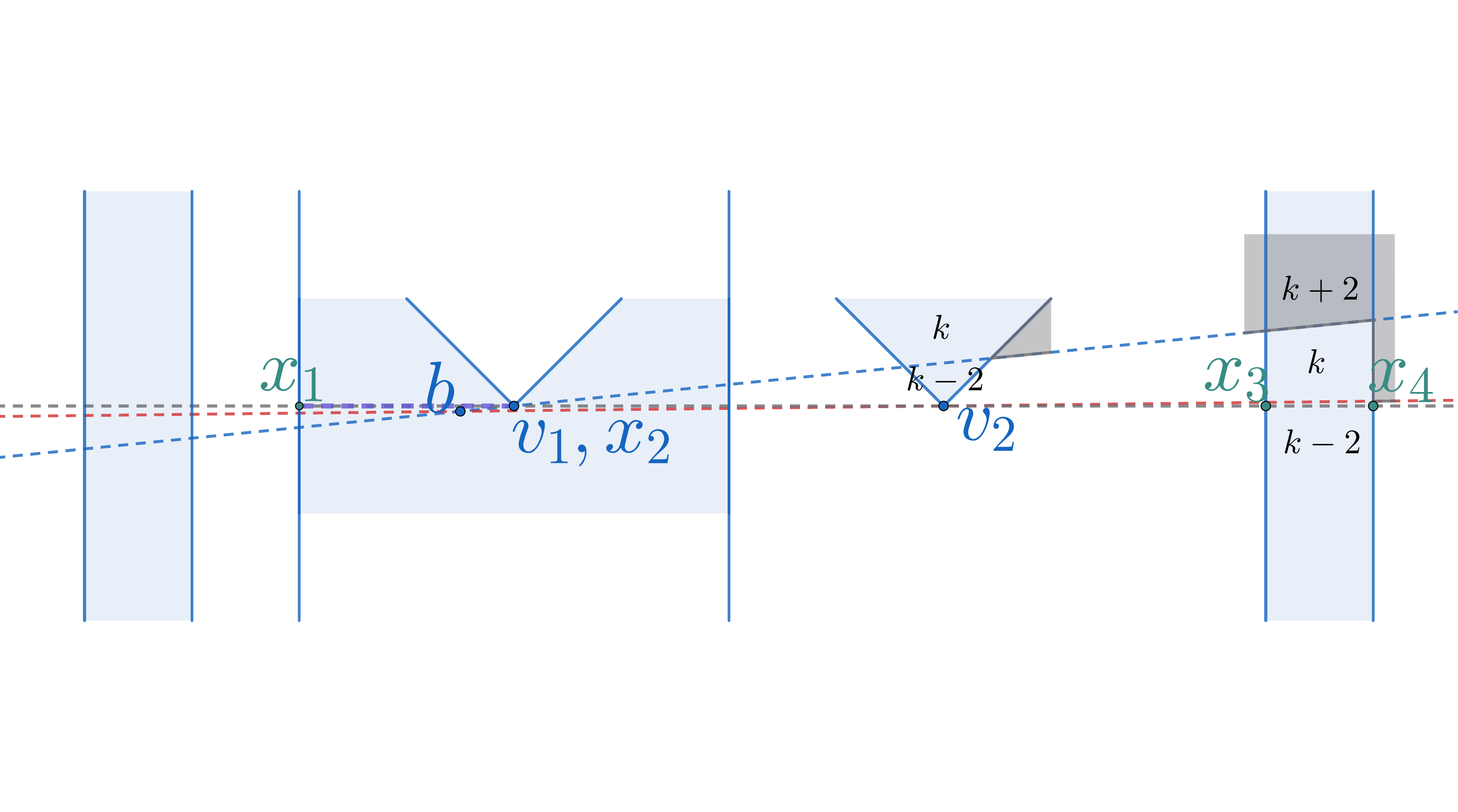}
\caption{Below $l_{g}$}\label{fig:RCS-genericB4}
\end{subfigure}

\caption{RCS; $Z = k - 3$, $W = k - 2$}
\label{fig:RCS-generic-4}
\end{figure}

 \begin{figure}[H]
\centering
\begin{subfigure}[b]{.49\linewidth}
\includegraphics[width=\linewidth]{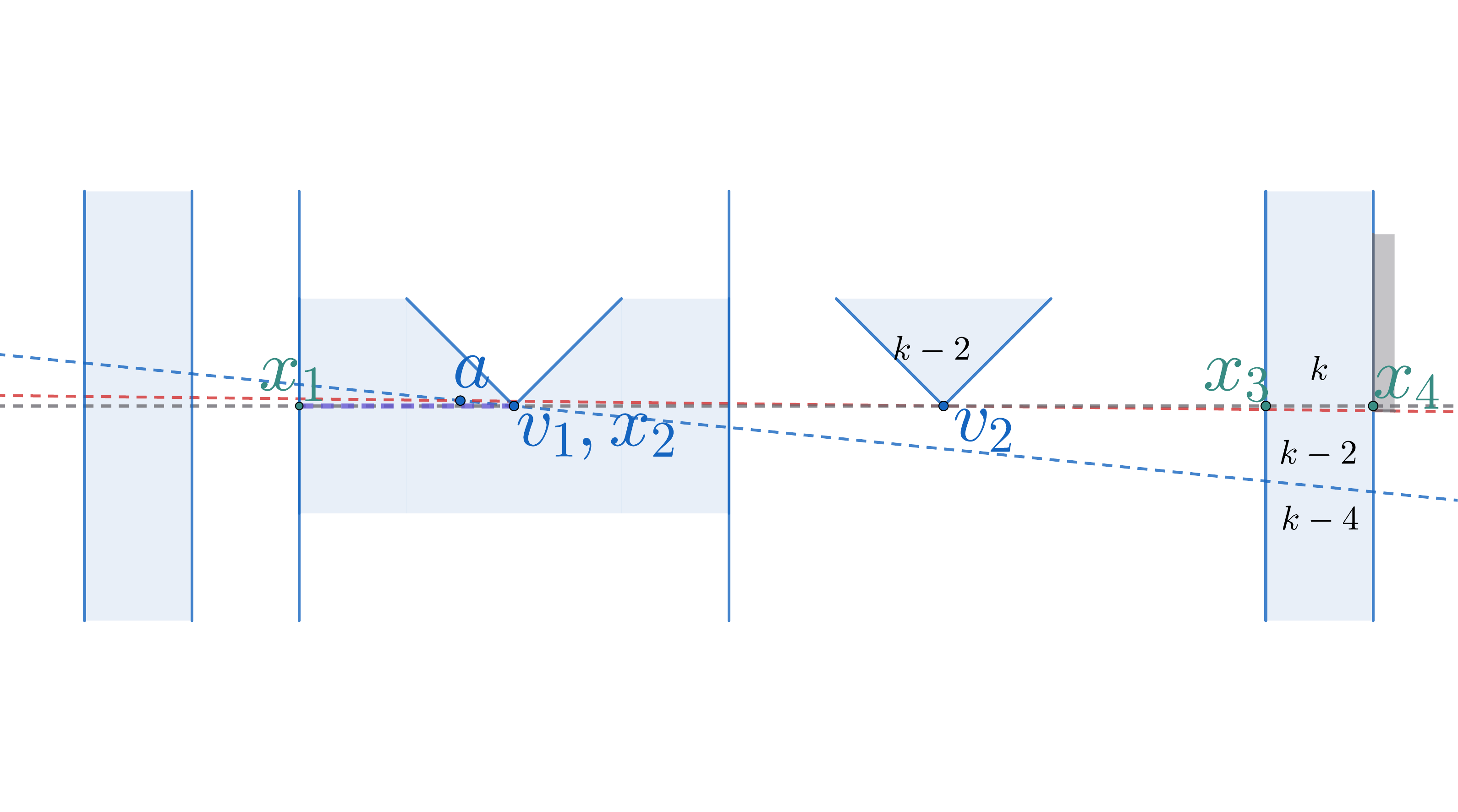}
\caption{Above $l_{g}$}\label{fig:RCS-genericA6}
\end{subfigure}
\begin{subfigure}[b]{.49\linewidth}
\includegraphics[width=\linewidth]{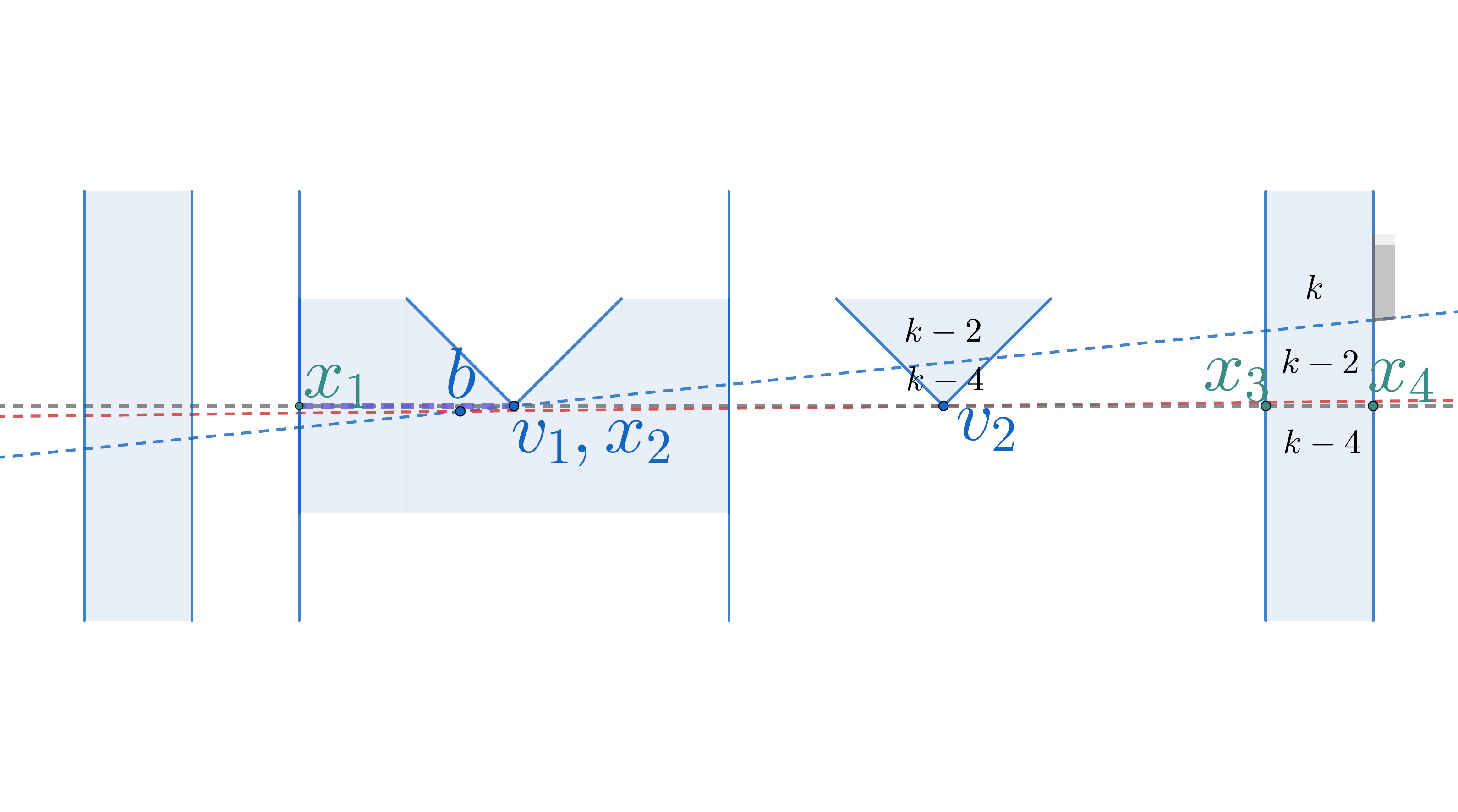}
\caption{Below $l_{g}$}\label{fig:RCS-genericB6}
\end{subfigure}

\caption{RCS; $Z = k - 5$, $W = k - 4$}
\label{fig:RCS-generic-6}
\end{figure}

For $Z \leq k - 7$, $v_{2}$ and its surroundings are entirely visible. For $W \leq k - 6$, $x_{3}x_{4}$ and its surroundings are entirely visible.

\section{RCO}

\begin{lemma}
\label{lemma:RCO}
A partition line is needed in the following cases:
\renewcommand{\labelitemi}{$\bullet$}
\begin{itemize}
    \item $Z = k - 1$ (Figure~\ref{fig:RCO-generic-2})
    \item $W = k$ (Figure~\ref{fig:RCO-generic-2})
    \item $Z = k - 3$ (Figure~\ref{fig:RCO-generic-4})
    \item $W = k - 2$ (Figure~\ref{fig:RCO-generic-4})
    \item $W = k - 4$  (Figure~\ref{fig:RCO-generic-6})
\end{itemize}
\end{lemma}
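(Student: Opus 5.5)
Lemma~\ref{lemma:RCO} can be proved with the same parity-indexed sweep over the obstruction parameters used for CCS (Lemma~\ref{lemma:ccs}) and RCS (Lemma~\ref{lemma:RCS}). The only change is that $v_1$ is now reflex and the edges incident to $v_2$ lie on the opposite side of $\ell_g$ from those of $v_1$. For a fixed query point on either side of $\ell_g$, every ray from $q$ crosses $\partial P$ in pairs except at the critical hinges. So $Z$ and $W$ take values of a fixed parity, and it suffices to examine the candidate values $Z\in\{\dots,k+1,k-1,k-3,k-5,\dots\}$ and $W\in\{\dots,k+2,k,k-2,k-4,k-6,\dots\}$. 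Since $v_1$ is reflex, as in RCS, the parity offset between $Z$ and $W$ matches the CCS/RCS pattern rather than the CRS/CRO pattern. I would therefore fix the pairing $(Z,W)=(k+1-2j,\,k+2-2j)$ for $j\ge 0$ and treat the neighbourhood of $v_2$, governed by $Z$, and the segment $x_3x_4$, governed by $W$, separately.

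First I would dispose of the two extremes. If $Z\ge k+1$ (respectively $W\ge k+2$), then from both $a$ and $b$ the depth along $\ell_b$ and $\ell_r$ near $v_2$ (respectively near $x_3x_4$) already exceeds $k$. The whole local neighbourhood is in shadow, so no element of $\mathcal{S}(q)$ is anchored there and no partition line is needed. If $Z\le k-7$ (respectively $W\le k-6$), then even after adding the at most two extra crossings contributed by the edges at $v_1$ and $v_2$, the neighbourhood stays strictly below depth $k$. It is therefore in the interior of $\text{Vis}_k(q)$ for both $a$ and $b$, and again no tuple there changes.

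Next come the intermediate values, which are the five listed cases. For each I would compare the counter-clockwise sequences at $a$ and $b$ as in the CCS proof. When $q$ crosses $\ell_g$, the rays $\ell_b$ and $\ell_r$ swap their relative order. Because $v_2$'s edges are on the opposite side, the side from which the shadow cast by $v_1$ meets $v_2$ flips.
\begin{itemize}
\item For $Z=k-1$, at one of $a$ or $b$ the window along $\ell_b$ terminates on an edge incident to $v_2$; at the other, a new window anchored at $v_2$ appears along $\ell_r$. This is an INSERT/DELETE of a generator tuple.
\item For $Z=k-3$, the window near $v_2$ changes anchor between $v_1$ and $v_2$.
\item For $W=k$, $W=k-2$ and $W=k-4$, the window whose distal edge lies on $x_3x_4$ changes its generator between $v_1$ and $v_2$, or its proximal/distal edge. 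The reason is that the crossings counted along $\ell_b$ and $\ell_r$ differ by exactly the two edges at $v_2$, so exactly one of the two rays is at depth $k$ on $x_3x_4$.
\end{itemize}
In each case $\mathcal{S}(a)\ne\mathcal{S}(b)$, so $x_1x_2$ must be a partition line. I would also check that $Z=k-5$ produces no change: the neighbourhood of $v_2$ is fully visible from both sides, just as in CCS.

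The main obstacle is the $Z=k-3$ and $W=k-4$ cases. There, the opposite orientation of $v_2$'s edges could in principle make the two anchor swaps cancel. It must be verified that the depth count along $\ell_r$ beyond $v_2$ indeed differs from that along $\ell_b$ by two, so that the swap is genuine. My first approach would be to count crossings explicitly on the two rays in a small neighbourhood of $\ell_g$, using the RCO configuration of Figures~\ref{fig:RCO-generic-2}--\ref{fig:RCO-generic-6}, and reduce to the corresponding RCS computation with $v_2$'s incident edges reflected across $\ell_g$.
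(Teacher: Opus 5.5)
Your skeleton matches the paper's proof of this lemma, which consists only of the figures plus the notes on the extreme values: a parity-indexed sweep over $Z$ and $W$, discarding the extremes, then comparing $\mathcal{S}(a)$ with $\mathcal{S}(b)$ for each listed value. The gap is in the mechanism you give for the three $W$ cases, which is also what your last paragraph proposes to verify.

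You claim that the hinge crossings along $\ell_b$ and $\ell_r$ beyond $v_2$ differ by exactly the two edges at $v_2$. That is true in a \emph{same-side} configuration but false in an \emph{opposite} one. Put $v_1$'s edges above $\ell_g$ and $v_2$'s below, and let $q=a$ lie above. Then $\ell_r$ passes just above $v_1$ and crosses both of its edges, while $\ell_b$ passes just below $v_2$ and crosses both of its edges. So the two rays carry the \emph{same} hinge count.

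Now split the region beyond $v_2$ into the sector above $\ell_r$, the wedge between the rays, and the sector below $\ell_b$. These pick up $(+2,+4,+2)$ hinge crossings from $a$ and $(+2,0,+2)$ from $b$; the other orientation just swaps $a$ and $b$. In CCS/RCS, by contrast, one gets a monotone staircase such as $(0,+2,+4)$ from both sides, with the two bounding rays exchanged, and that exchange is what produces the anchor swap. Reflecting $v_2$'s edges across $\ell_g$ is exactly what turns the staircase into a bump versus a dip. So there is no reduction to the RCS computation, and the check you propose would come out negative: neither ray is singled out as the one at depth $k$.

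The lemma still holds, but by a wedge argument. Read $W$ as the non-hinge depth of $x_3x_4$; this reproduces the paper's list. The outer sectors are at depth $W+2$ from both sides, while the wedge is at $W+4$ from one side and $W$ from the other.
\begin{itemize}
\item For $W=k$, one side sees nothing near $x_3x_4$. The other sees the wedge at depth $k$, bounded by windows anchored at $v_1$ and $v_2$.
\item For $W=k-4$, only one side has the wedge at depth $k$, which produces two windows beyond $x_4$.
\item For $W=k-2$, both sides have two windows. One pair bounds a shadow wedge ending on the edge at $x_4$; the other bounds a visible wedge starting there, so the proximal and distal edges differ.
\end{itemize}
Hence $\mathcal{S}(a)\neq\mathcal{S}(b)$ in each case, via insertion, deletion or re-edging of a \emph{pair} of generator tuples rather than re-anchoring. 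Your $Z$-case descriptions were also transplanted from CCS and should be rederived from the same sector count.

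Two smaller slips. First, the $Z$/$W$ offset you share with CCS and RCS comes from $v_2$ being convex, not from $v_1$ being reflex; CCS has a convex $v_1$ and the same pattern. Second, the ``at most two extra crossings'' in your lower extreme undercounts, since the wedge between the rays can cross both hinge pairs.
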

\begin{proof}
    See Figures~\ref{fig:RCO-generic-2}-~\ref{fig:RCO-generic-6}.
\end{proof}

Note: for $Z \geq k + 1$, $v_{2}$ and its surroundings are entirely in shadow. For $W \geq k + 2$, $x_{3}x_{4}$ and its surroundings are entirely in shadow.

 \begin{figure}[H]
\centering
\begin{subfigure}[b]{.49\linewidth}
\includegraphics[width=\linewidth]{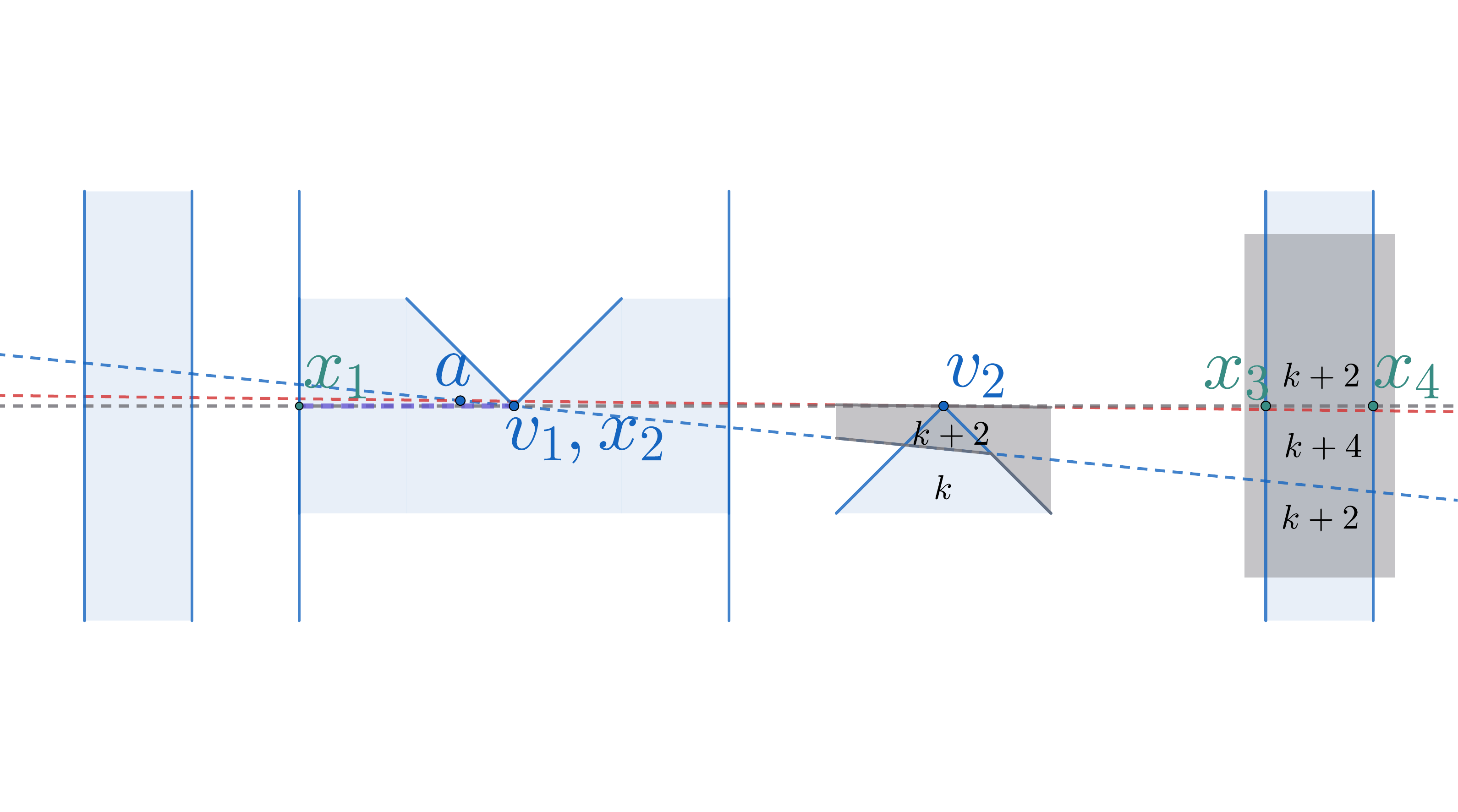}
\caption{Above $l_{g}$}\label{fig:RCO-genericA2}
\end{subfigure}
\begin{subfigure}[b]{.49\linewidth}
\includegraphics[width=\linewidth]{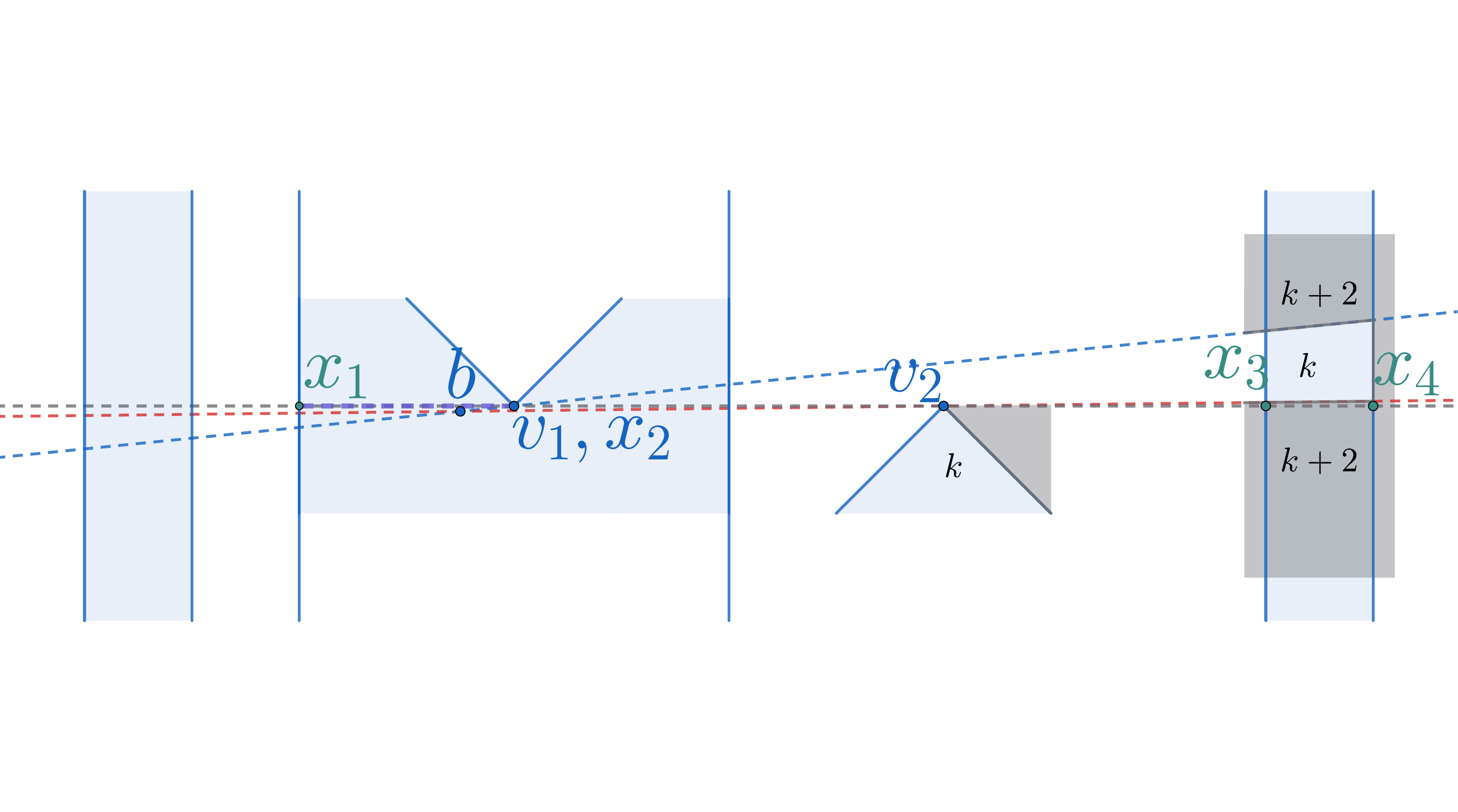}
\caption{Below $l_{g}$}\label{fig:RCO-genericB2}
\end{subfigure}

\caption{RCO; $Z = k - 1$, $W = k$}
\label{fig:RCO-generic-2}
\end{figure}

 \begin{figure}[H]
\centering
\begin{subfigure}[b]{.49\linewidth}
\includegraphics[width=\linewidth]{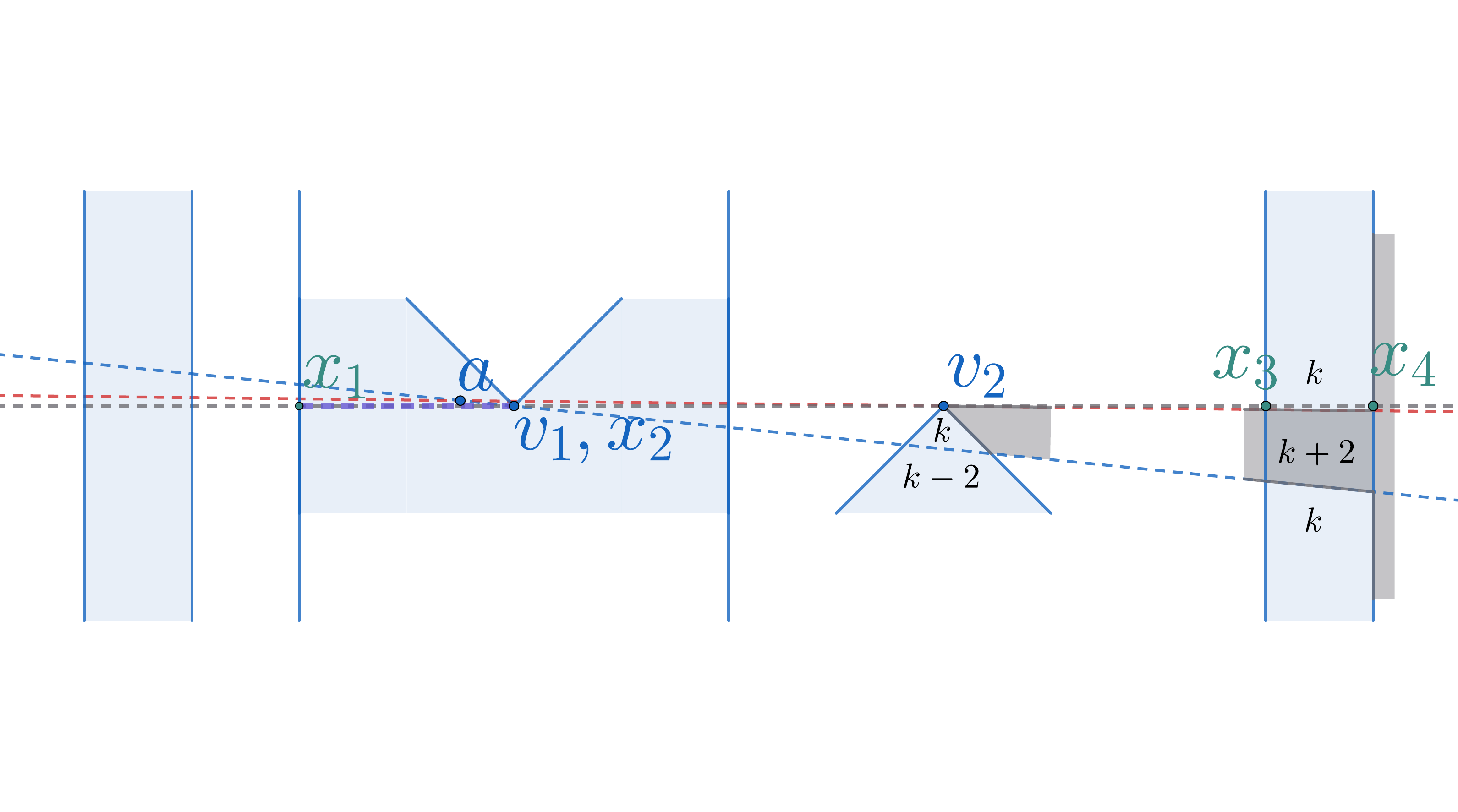}
\caption{Above $l_{g}$}\label{fig:RCO-genericA4}
\end{subfigure}
\begin{subfigure}[b]{.49\linewidth}
\includegraphics[width=\linewidth]{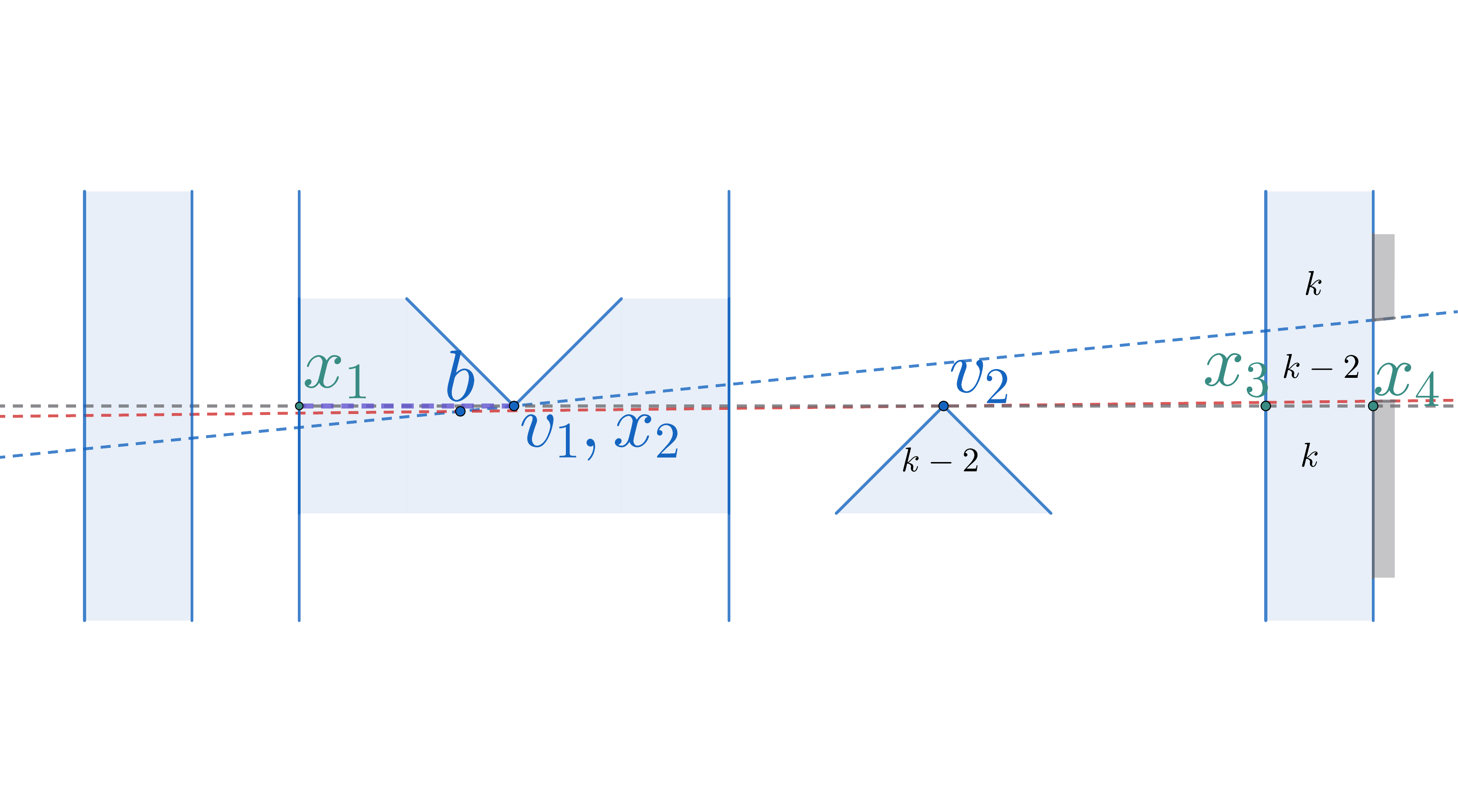}
\caption{Below $l_{g}$}\label{fig:RCO-genericB4}
\end{subfigure}

\caption{RCO; $Z = k - 3$, $W = k - 2$}
\label{fig:RCO-generic-4}
\end{figure}

 \begin{figure}[H]
\centering
\begin{subfigure}[b]{.49\linewidth}
\includegraphics[width=\linewidth]{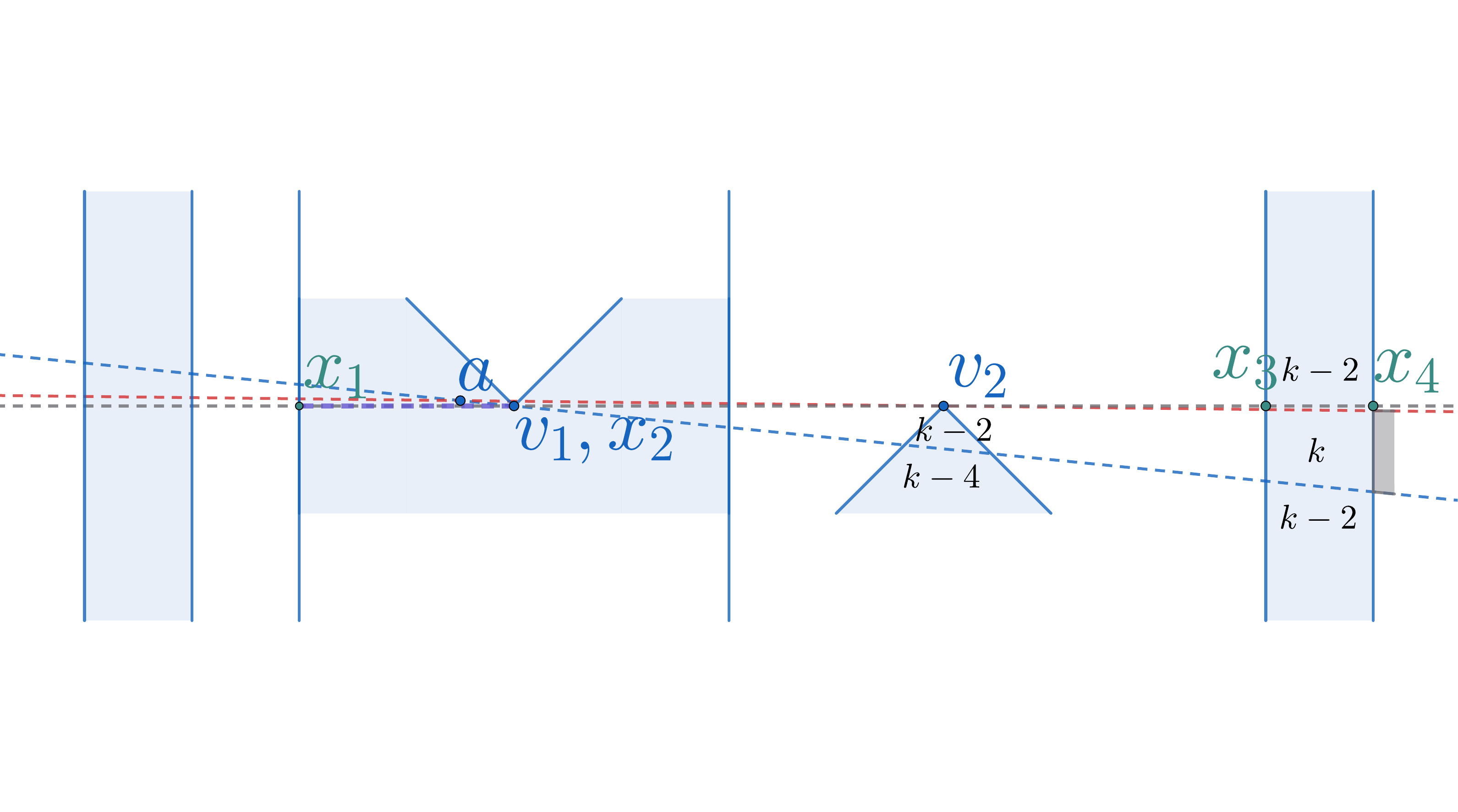}
\caption{Above $l_{g}$}\label{fig:RCO-genericA6}
\end{subfigure}
\begin{subfigure}[b]{.49\linewidth}
\includegraphics[width=\linewidth]{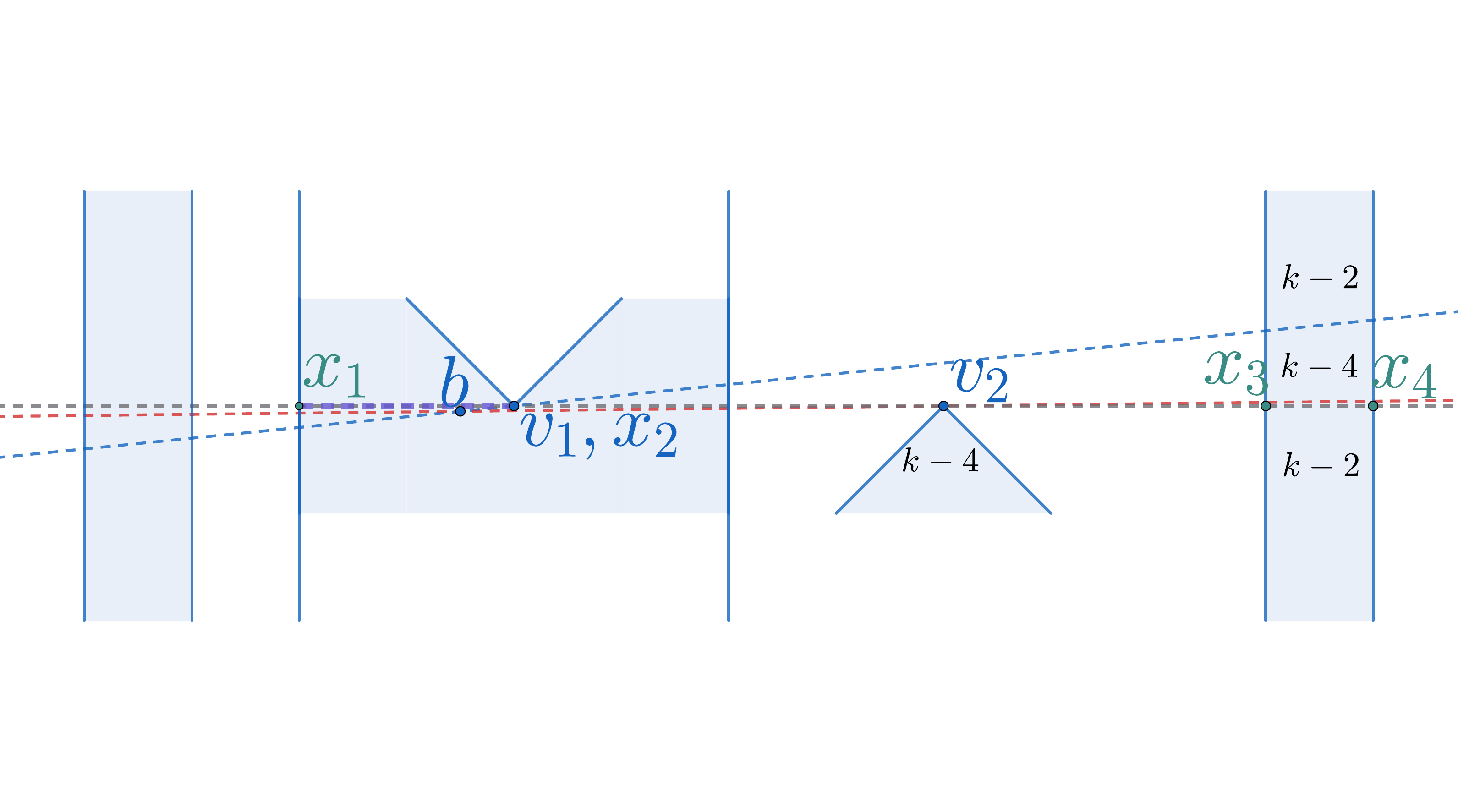}
\caption{Below $l_{g}$}\label{fig:RCO-genericB6}
\end{subfigure}

\caption{RCO; $Z = k - 5$, $W = k - 4$}
\label{fig:RCO-generic-6}
\end{figure}

Note: For $Z \leq k - 7$, $v_{2}$ and its surroundings are entirely visible. For $W \leq k - 6$, $x_{3}x_{4}$ and its surroundings are entirely visible.

\section{RRS}
\begin{lemma} 
\label{lemma:RRS}
A partition line at $x_1x_2$ is needed in the following cases:
\renewcommand{\labelitemi}{$\bullet$}
\begin{itemize}
    \item $Z = k$ (Figure~\ref{fig:RRS-generic-0})
    \item $Z = k - 2$   (Figure~\ref{fig:RRS-generic-2})
    \item $W = k$  (Figure~\ref{fig:RRS-generic-2})
    \item $Z = k - 4$  (Figure~\ref{fig:RRS-generic-4})
    \item $W = k - 2$  (Figure~\ref{fig:RRS-generic-4})
    \item $W = k - 4$  (Figure~\ref{fig:RRS-generic-6})
\end{itemize}

\end{lemma}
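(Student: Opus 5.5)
The plan follows the template of the CCS lemma (Lemma~\ref{lemma:ccs}) and of the CRS and CRO lemmas, since RRS differs from CRS only in that $v_1$ is also reflex. We fix the configuration: $v_1,v_2$ are reflex, mutually critical, and their incident edges lie on the same side of $\ell_g$. The query point is at $a$ (above $x_1x_2$) or $b$ (below), and $\ell_b,\ell_r$ are the rays from the query point through $v_1$ and $v_2$.

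The first step is parity bookkeeping. Because $v_1$ is reflex, the rays $\ell_b$ and $\ell_r$ near $v_2$ meet one more edge incident to $v_1$ on one side of $\ell_g$ than on the other. Together with the reflex neighbourhood of $v_2$, this makes the relevant depth thresholds for $Z$ the even offsets $k, k-2, k-4$ rather than the odd offsets of CCS. This matches the shift already seen from CCS to CRS. As in Lemma~\ref{lemma:ccs}, we argue that $Z$ and $W$ move in steps of $2$, so only the values $Z\in\{\dots,k+2,k,k-2,\dots\}$ and $W\in\{\dots,k+2,k,k-2,\dots\}$ need to be examined.

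Next we split into a finite list of subcases, treating the neighbourhood of $v_2$ (governed by $Z$) and of $x_3x_4$ (governed by $W$) separately.
\begin{itemize}
\item \emph{Extremes.} For $Z\ge k+2$ and $W\ge k+2$ (after fixing the exact thresholds from the figures), $v_2$ and $x_3x_4$ lie in shadow from both $a$ and $b$, so $\mathcal S(a)=\mathcal S(b)$ locally. Symmetrically, for $Z\le k-6$ and $W\le k-6$ the neighbourhoods are fully inside $\text{Vis}_k$ from both positions, so no mutation occurs.
\item \emph{Intermediate values.} For $Z=k$, $Z=k-2$ and $Z=k-4$, we exhibit, as in the CCS proof, that the window adjacent to $v_2$ changes its generator or its proximal/distal edge when crossing $x_1x_2$. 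In one direction $\ell_r$ produces a window anchored at $v_2$, while in the other the window along $\ell_b$ terminates on an edge incident to $v_2$ or is anchored at $v_1$. For $W=k$, $W=k-2$ and $W=k-4$, the window ending near $x_4$ switches its anchor between $v_1$ and $v_2$.
\end{itemize}
In every intermediate case the tuple $\langle c,e_{\text{prox}},e_{\text{dist}}\rangle$ differs between $a$ and $b$, so $\mathcal S(a)\ne\mathcal S(b)$. The argument of the Type 2 theorem then forces a partition line on $x_1x_2$.

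The main obstacle is pinning the exact offsets for the reflex--reflex geometry. We need the precise count of incident edges crossed by $\ell_b$ and $\ell_r$ on each side of $\ell_g$, and in particular we must verify that $Z=k-6$ no longer produces a change while $W=k-4$ still does. We would settle this first by counting crossings along $\ell_r$ just above and just below $\ell_g$ near $v_1$ and $v_2$, and only then read off which subcases produce a window anchor swap, as in the figures for RRS.
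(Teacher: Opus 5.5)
Your route is the paper's: its proof of this lemma is only the figure-driven split over $Z$ and $W$ (Figures~\ref{fig:RRS-generic-0}--\ref{fig:RRS-generic-6}) with notes on the extreme ranges, and you likewise read the six mutations off the figures. The concrete problem is the step you make the crux, the parity bookkeeping, which is false as stated. Since $v_1$ is critical to $v_2$, both edges incident to $v_1$ lie strictly on one side of $\ell_g$. A ray from a query point near $x_1x_2$ toward $v_2$, or toward $x_3x_4$, passes just next to $v_1$ on the query point's side of $\ell_g$. So from one of $a,b$ it crosses both edges of $v_1$, and from the other it crosses neither. The asymmetry is two, never one, and it is the same whether $v_1$ is convex or reflex. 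Hence the type of $v_1$ moves no threshold. The even offsets $k,k-2,k-4$ for $Z$ come from $v_2$ being reflex, exactly as in CRS.

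Taken literally, your ``one more edge of $v_1$'' argument would make the RCS thresholds differ from the CCS ones. Combined with the shift coming from $v_2$, it would also make the RRS thresholds differ from the CRS ones, i.e.\ odd offsets again. The paper contradicts both predictions: Lemma~\ref{lemma:RCS} has the same list as Lemma~\ref{lemma:ccs}, and the present list and threshold notes coincide verbatim with those of Lemma~\ref{lemma:CRS}. Doing the count correctly is what resolves the ``main obstacle'' you name: it shows that the thresholds, and hence the subcases that mutate, are those of CRS. What remains, as in the paper, is to exhibit for each of the four figures a tuple $\langle c,e_{\text{prox}},e_{\text{dist}}\rangle$ near $v_2$ (for the $Z$-cases) or near $x_4$ (for the $W$-cases) that differs between $a$ and $b$. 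For these cases your text only restates the CCS mechanism generically.
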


\begin{proof}
See Figures~\ref{fig:RRS-generic-0}-~\ref{fig:RRS-generic-6}.

Note: for $Z \geq k + 2$, $v_{2}$ and its surroundings are entirely in shadow. For $W \geq k + 4$, $x_{3}x_{4}$ and its surroundings are entirely in shadow.

 \begin{figure}[H]
\centering
\begin{subfigure}[b]{.49\linewidth}
\includegraphics[width=\linewidth]{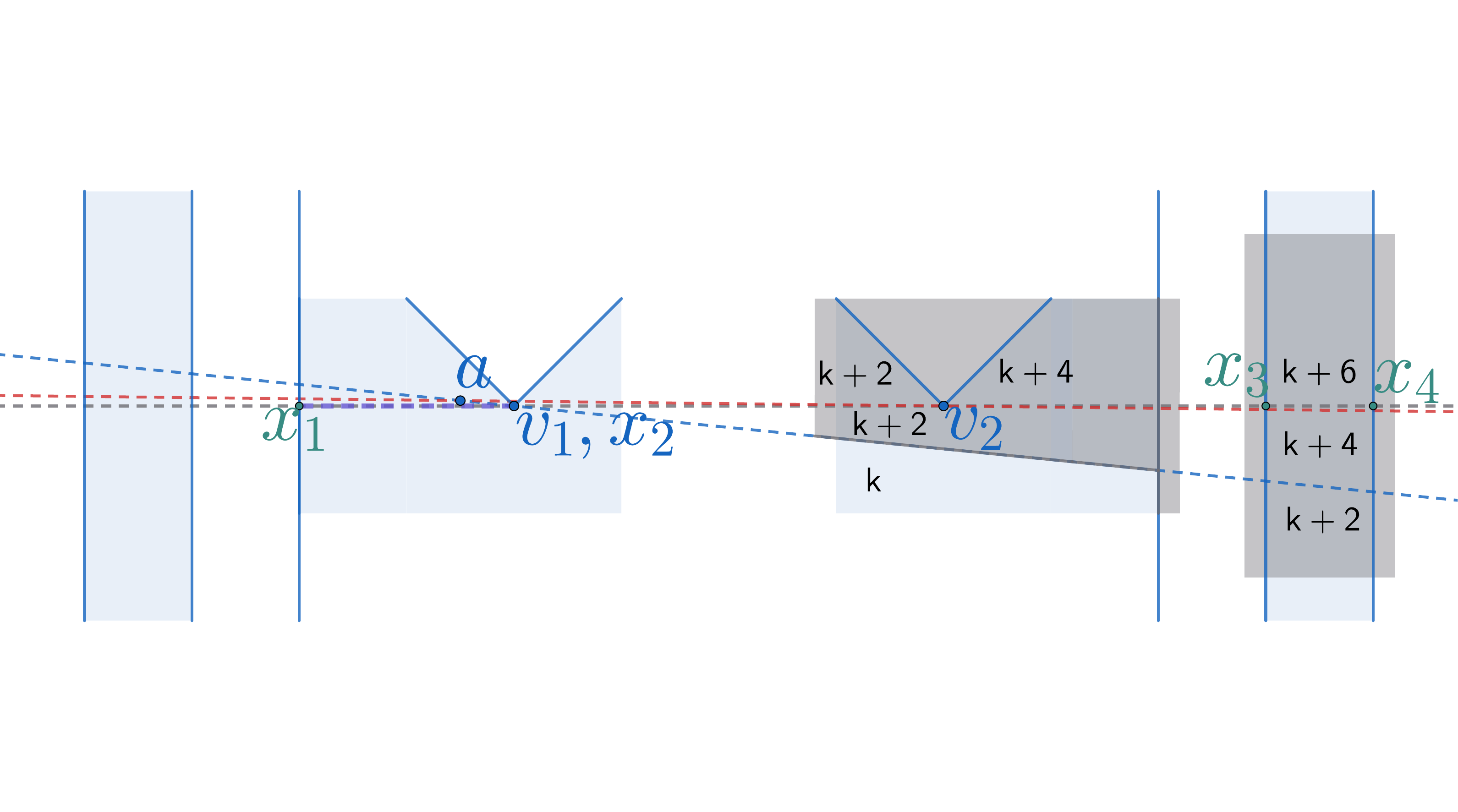}
\caption{Above $l_{g}$}\label{fig:RRS-genericA0}
\end{subfigure}
\begin{subfigure}[b]{.49\linewidth}
\includegraphics[width=\linewidth]{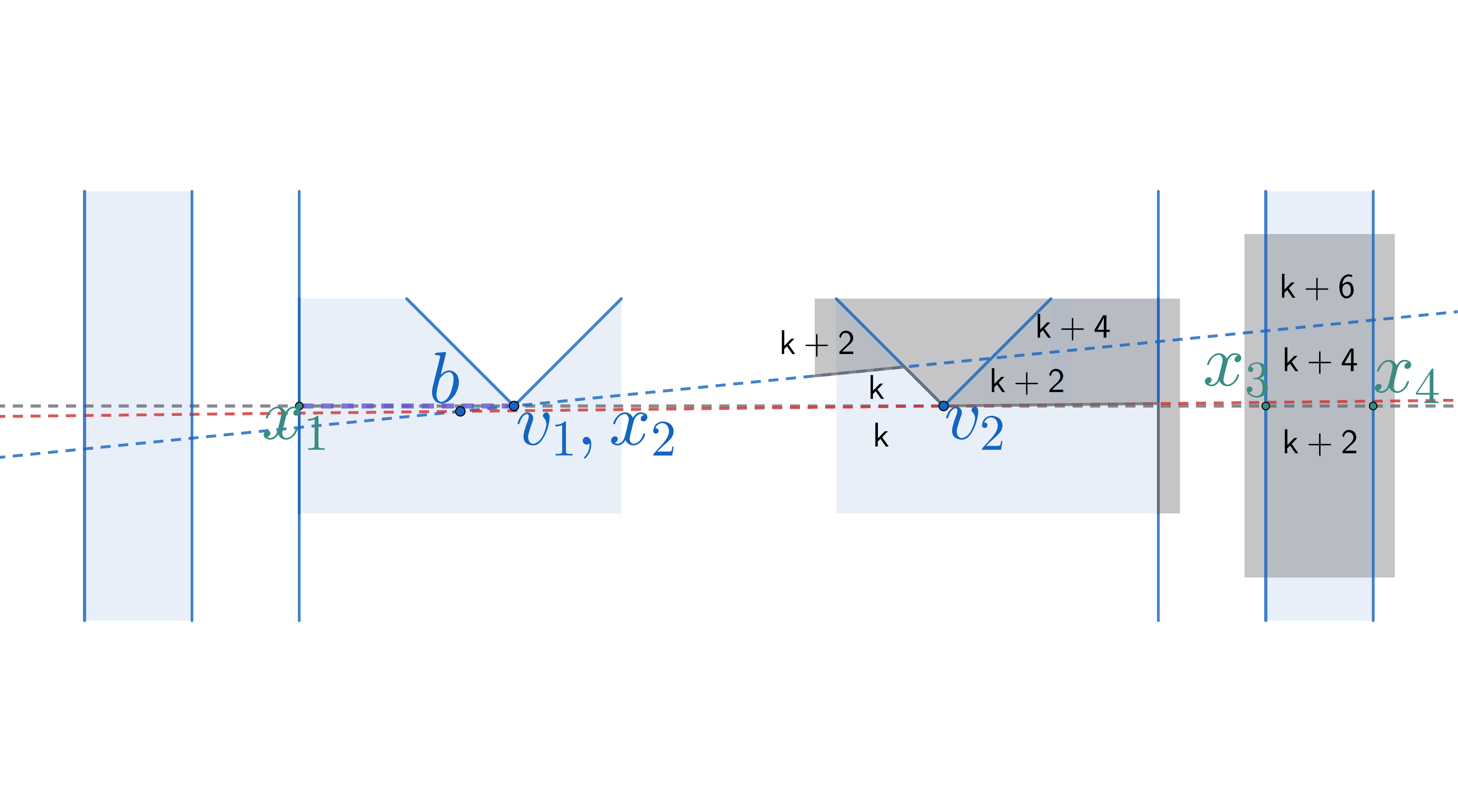}
\caption{Below $l_{g}$}\label{fig:RRS-genericB0}
\end{subfigure}

\caption{RRS; $Z = k$, $W = k + 2$}
\label{fig:RRS-generic-0}
\end{figure}

 \begin{figure}[H]
\centering
\begin{subfigure}[b]{.49\linewidth}
\includegraphics[width=\linewidth]{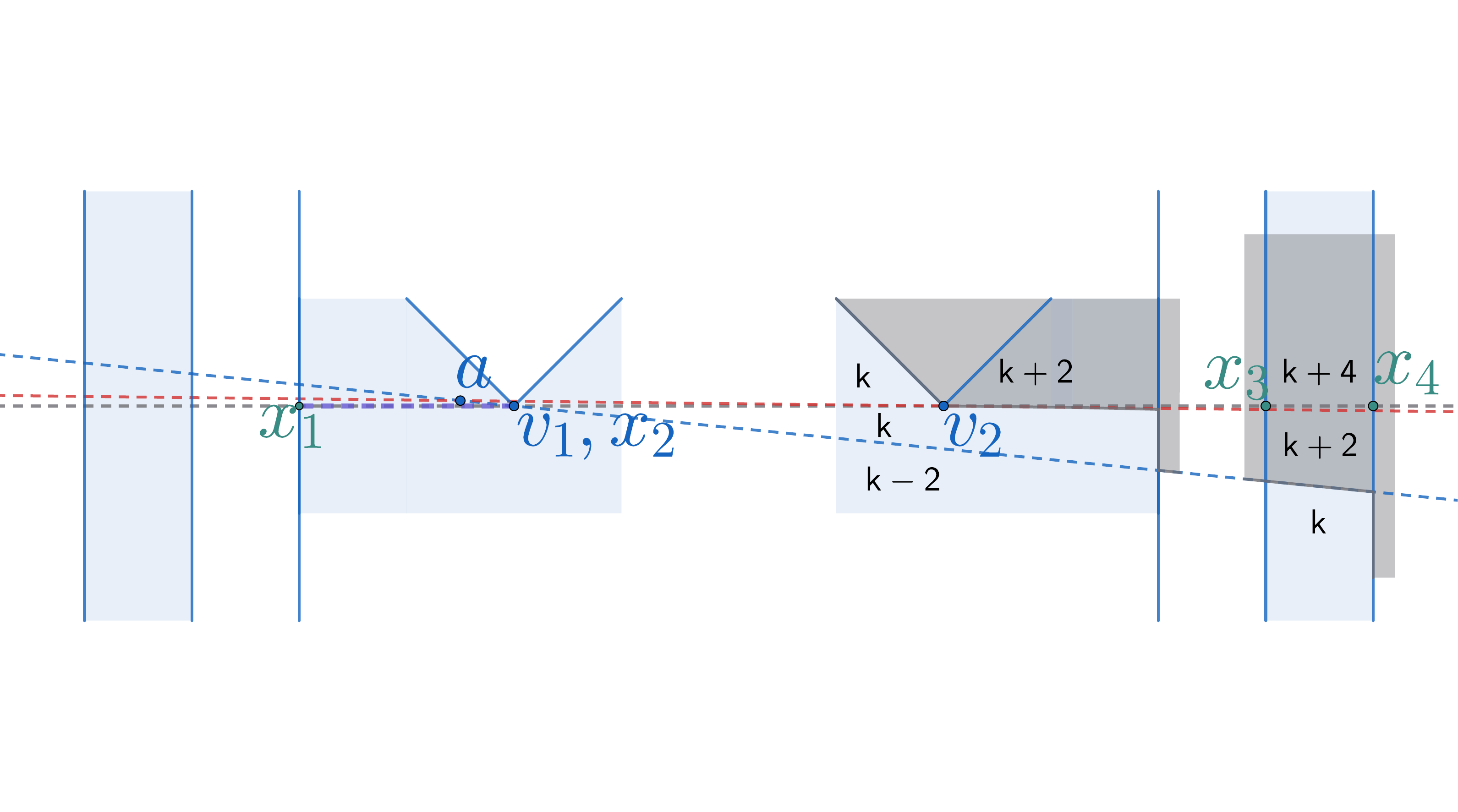}
\caption{Above $l_{g}$}\label{fig:RRS-genericA2}
\end{subfigure}
\begin{subfigure}[b]{.49\linewidth}
\includegraphics[width=\linewidth]{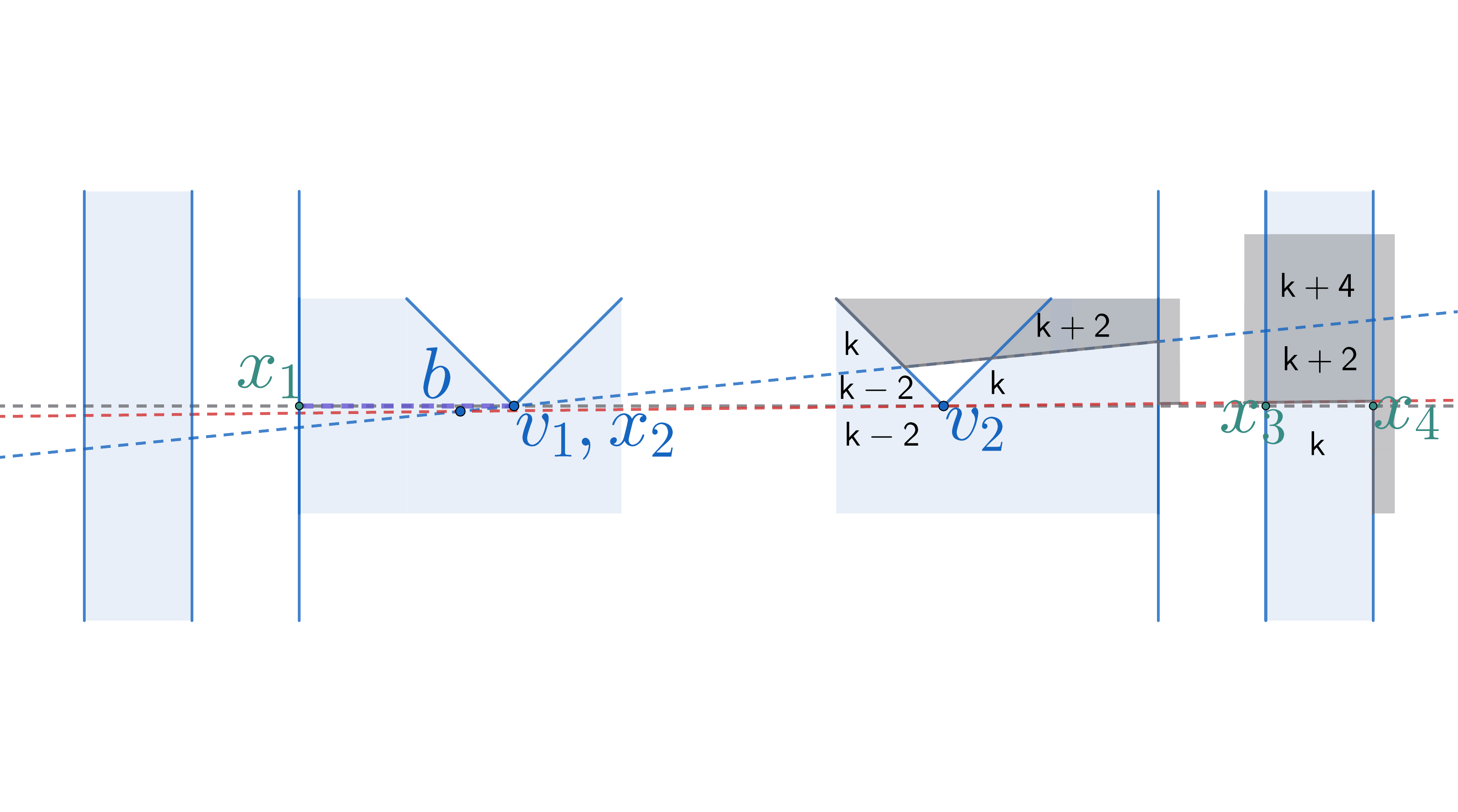}
\caption{Below $l_{g}$}\label{fig:RRS-genericB2}
\end{subfigure}

\caption{RRS; $Z = k - 2$, $W = k$}
\label{fig:RRS-generic-2}
\end{figure}

 \begin{figure}[H]
\centering
\begin{subfigure}[b]{.49\linewidth}
\includegraphics[width=\linewidth]{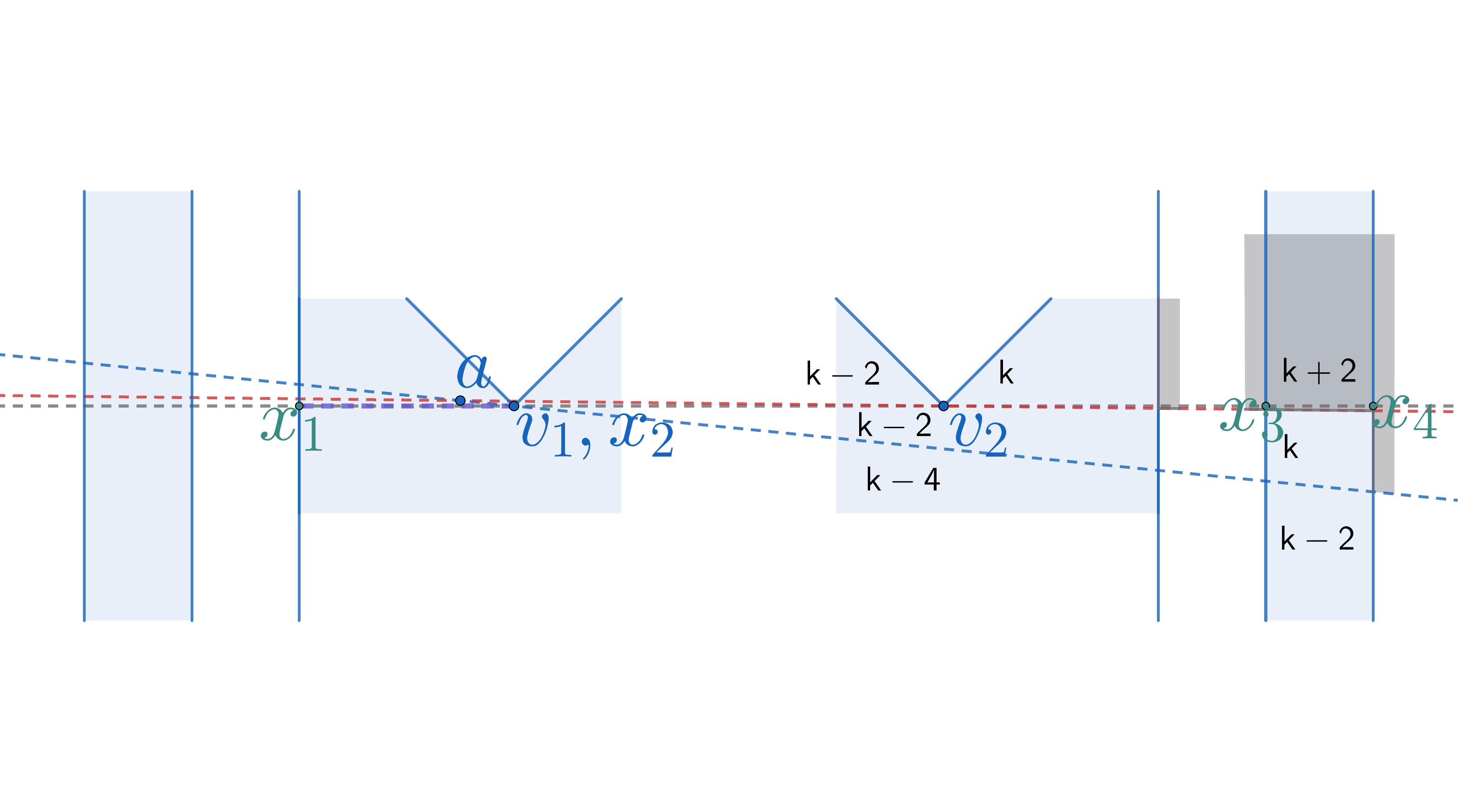}
\caption{Above $l_{g}$}\label{fig:RRS-genericA4}
\end{subfigure}
\begin{subfigure}[b]{.49\linewidth}
\includegraphics[width=\linewidth]{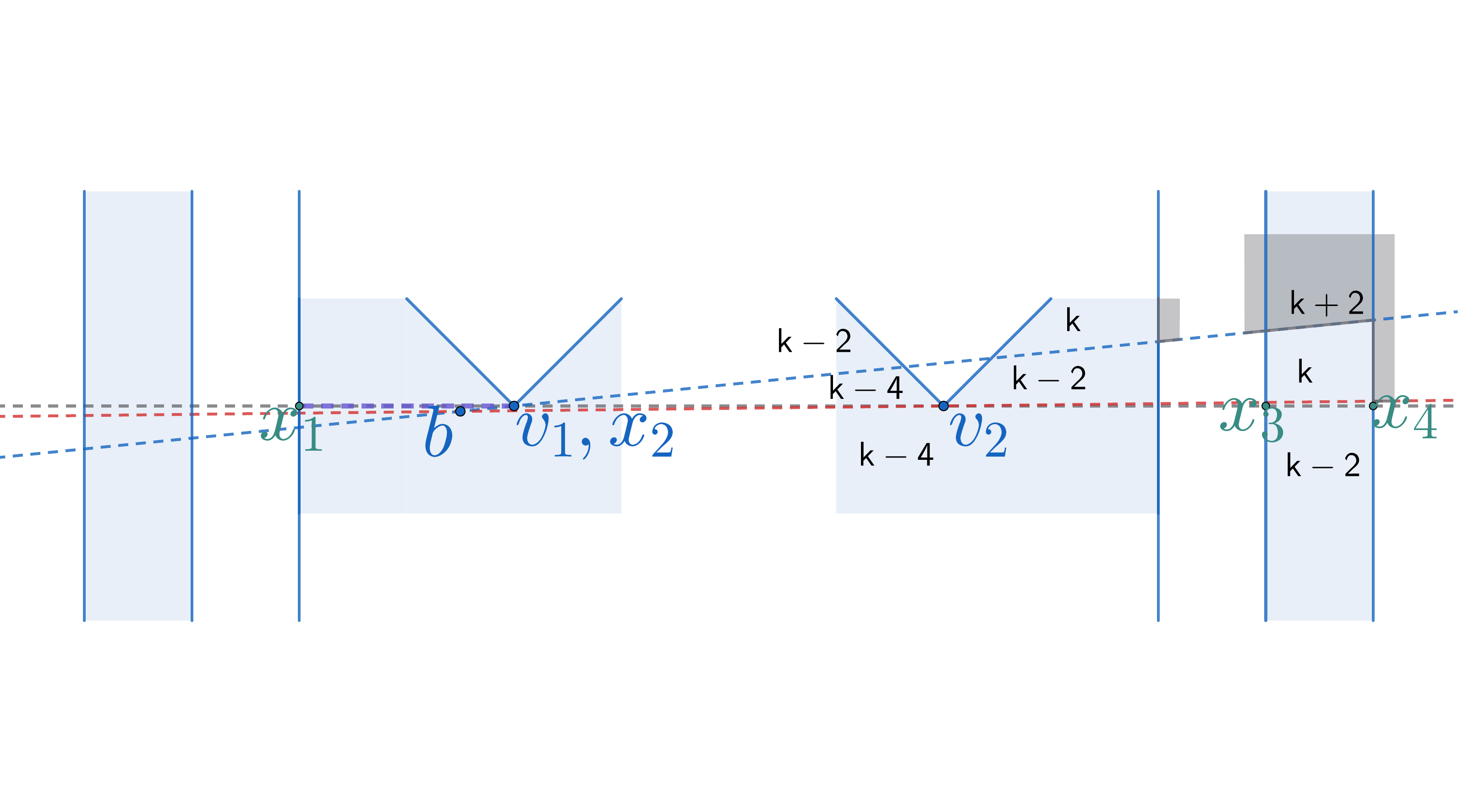}
\caption{Below $l_{g}$}\label{fig:RRS-genericB4}
\end{subfigure}

\caption{RRS; $Z = k - 4$, $W = k - 2$}
\label{fig:RRS-generic-4}
\end{figure}

 \begin{figure}[H]
\centering
\begin{subfigure}[b]{.49\linewidth}
\includegraphics[width=\linewidth]{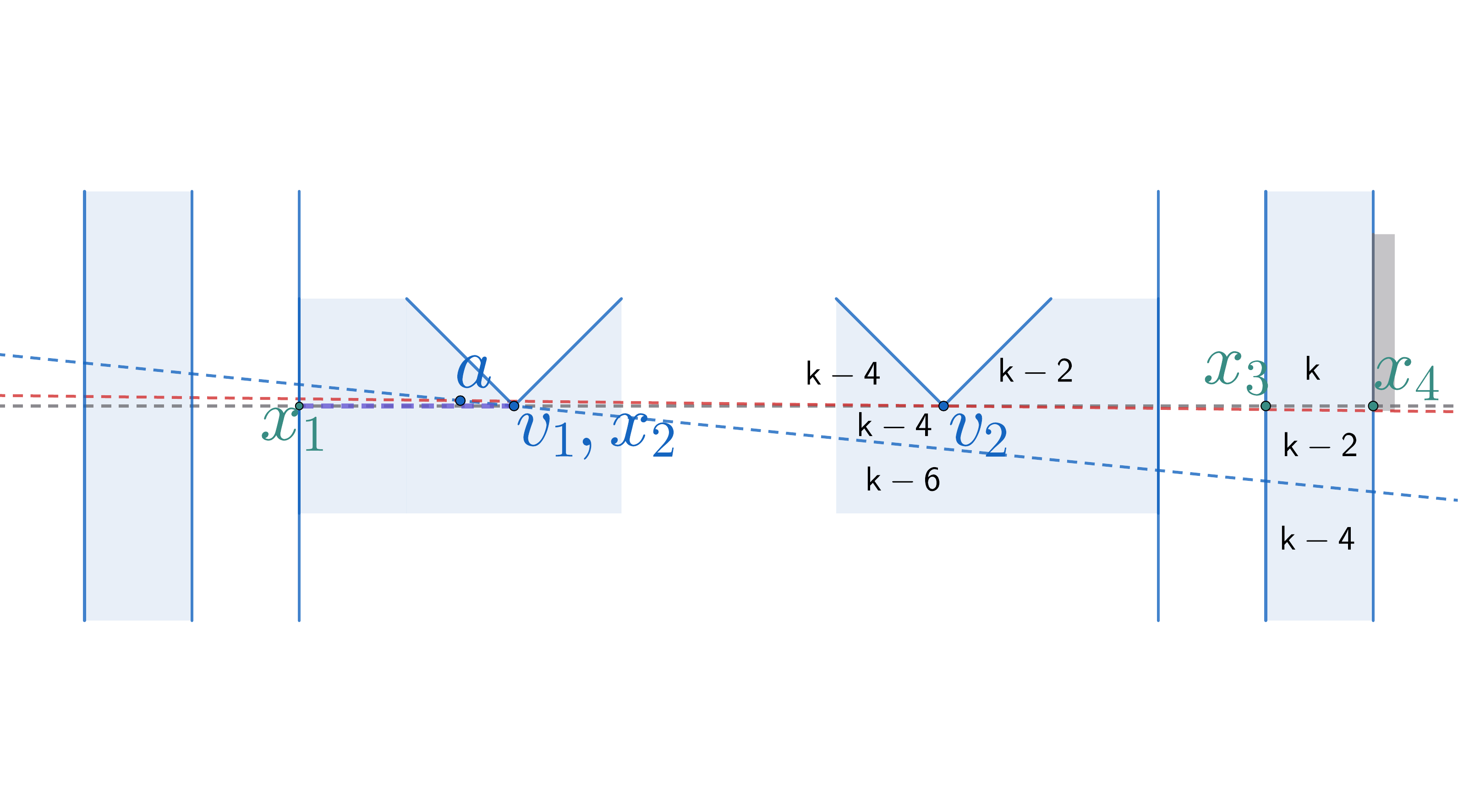}
\caption{Above $l_{g}$}\label{fig:RRS-genericA6}
\end{subfigure}
\begin{subfigure}[b]{.49\linewidth}
\includegraphics[width=\linewidth]{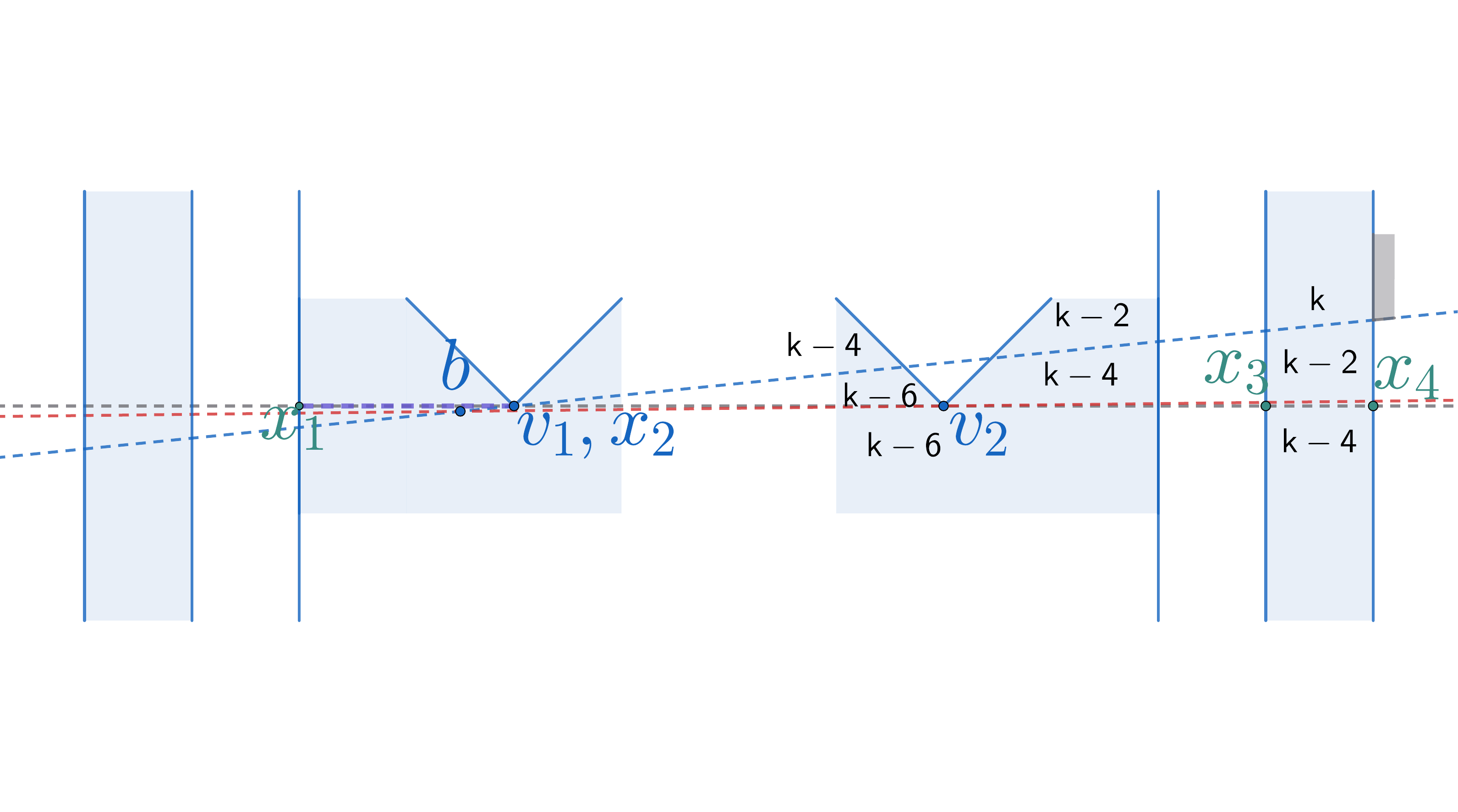}
\caption{Below $l_{g}$}\label{fig:RRS-genericB6}
\end{subfigure}

\caption{RRS; $Z = k - 6$, $W = k - 4$}
\label{fig:RRS-generic-6}
\end{figure}

Note: For $Z \leq k - 8$, $v_{2}$ and its surroundings are entirely visible. For $W \leq k - 6$, $x_{3}x_{4}$ and its surroundings are entirely visible.

\end{proof}
\section{RRO}

\begin{lemma}
\label{lemma:RRO}
A partition line at $x_1x_2$ is needed in the following cases: 
\renewcommand{\labelitemi}{$\bullet$}
\begin{itemize}
    \item $Z = k$ (Figure~\ref{fig:RRO-generic-0})
    \item $Z = k - 2$ (Figure~\ref{fig:RRO-generic-2})
    \item $W = k$ (Figure~\ref{fig:RRO-generic-2})
    \item $Z = k - 4$ (Figure~\ref{fig:RRO-generic-4})
    \item $W = k-2$ (Figure~\ref{fig:RRO-generic-4})
    \item $W=k-4$ (Figure~\ref{fig:RRO-generic-6})
\end{itemize}
\end{lemma}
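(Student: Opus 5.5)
We follow the template already used for CCS (Lemma~\ref{lemma:ccs}) and for RRS (Lemma~\ref{lemma:RRS}). Fix the RRO configuration: $v_1$ and $v_2$ are both reflex and mutually critical, and their incident edges lie on opposite sides of $\ell_g$. Take $x_1,x_2,x_3,x_4$, the rays $\ell_b$ and $\ell_r$, and the query positions $a$ (above $x_1x_2$) and $b$ (below) exactly as in Section~\ref{section:additionalLines}. Since any ray crossing an obstacle must enter and exit it, $Z$ and $W$ change in steps of $2$. So we can sweep the parity-admissible values of $Z$ from high to low and pair each with the matching $W$: $(Z,W)=(k,k+2),(k-2,k),(k-4,k-2),(k-6,k-4)$, plus the two extremes. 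In each pair, $Z$ governs the neighbourhood of $v_2$ and $W$ governs the segment $x_3x_4$. For each case we would compare the combinatorial sequences $\mathcal{S}(a)$ and $\mathcal{S}(b)$ locally near $v_2$ and near $x_3x_4$.

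\textbf{Extremes.} If $Z\ge k+2$, then $v_2$ has depth greater than $k$ from both $a$ and $b$ at every point of its neighbourhood, so nothing near it enters the sequence. Likewise, if $W\ge k+4$, the segment $x_3x_4$ is in shadow from both positions. At the other end, if $Z\le k-8$ or $W\le k-6$, the corresponding region lies in the interior of $\text{Vis}_k$ from both sides, so again no tuple is anchored there. By the decoupling argument in the completeness discussion, these extreme cases need no line.

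\textbf{Middle cases.} For each listed value we would exhibit one generator tuple that changes across $x_1x_2$.
\begin{itemize}
\item $Z=k$: the reflex $v_2$ sits exactly at depth threshold along $\ell_r$. On one side of $\ell_g$ the window along $\ell_b$ terminates on an edge incident to $v_2$. On the other side, $v_2$ anchors its own window, whose proximal edge is the opposite incident edge. This opposite incident edge is forced by the opposite-side orientation of RRO.
\item $Z=k-2$ and $Z=k-4$: the window near $v_2$ has anchor $v_1$ on one side and $v_2$ on the other.
\item $W=k$, $W=k-2$, $W=k-4$: the window terminating near $x_4$ switches its anchor between $v_1$ and $v_2$. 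This matches the swap described in the proof of the Type~2 theorem.
\end{itemize}
In each case $\mathcal{S}(a)\neq\mathcal{S}(b)$, so $x_1x_2$ is a partition line.

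\textbf{Main obstacle.} The hard step is the exact off-by-one bookkeeping that separates RRO from RCO and CCO. Because $v_1$ is reflex rather than convex, the incident edges of $v_1$ are not traversed by $\ell_r$. This shifts the $Z$ thresholds by one relative to CCS: $Z=k$ instead of $k-1$, and the sequence ends at $k-6$ rather than $k-5$. We would need to check that the opposite-side orientation gives the same shift as RRS and not a different one. We would verify this first by explicitly counting the edges crossed by $\ell_b$ and $\ell_r$ from $a$ and from $b$ near $v_2$, and only then read off which of the two windows changes.
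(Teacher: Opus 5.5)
Your template matches the paper's own proof of this lemma. That proof is just the parity-constrained sweep over $(Z,W)$ pairs shown in Figures~\ref{fig:RRO-generic-0}--\ref{fig:RRO-generic-8}, plus remarks on the fully shadowed and fully visible extremes, and you treat the six listed values at the same figure-level detail.

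The trouble is the step you flagged yourself. You import the RRS bookkeeping, and for RRO the paper's analysis disagrees with it at the visible end:
\begin{itemize}
\item The paper's sweep has a fifth configuration, $(Z,W)=(k-8,k-6)$ (Figure~\ref{fig:RRO-generic-8}).
\item The paper asserts full visibility only for $Z\le k-10$ and $W\le k-8$, not for $Z\le k-8$ and $W\le k-6$ as in RRS and CRO.
\end{itemize}
So your ``Extremes'' paragraph declares the neighbourhoods of $v_2$ and $x_3x_4$ interior to $\text{Vis}_k$ from both sides in a configuration the paper treats as not yet fully visible. Your enumeration absorbs $(k-8,k-6)$ into that extreme instead of examining it, and your remark that the sequence ``ends at $k-6$'' is false for RRO. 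This does not damage the six positive claims, since the lemma only asserts that a line is needed for the listed values. As an exhaustive sweep, however, it is incomplete. You would still have to check $(k-8,k-6)$ directly and show that no generator tuple changes there; the absence of $Z=k-8$ and $W=k-6$ from the lemma's list is where the paper records that outcome.

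Second, your explanation of the off-by-one shift relative to CCS points at the wrong vertex. In the paper's table, the starting value $Z=k$ (instead of $Z=k-1$) tracks the type of $v_2$, not $v_1$. CRS and CRO, with $v_1$ convex, already start at $Z=k$, while RCS and RCO, with $v_1$ reflex, start at $Z=k-1$. Parity forces this:
\begin{itemize}
\item The two edges incident to $v_1$ lie on one side of $\ell_g$, so a ray passing near $v_1$ crosses both or neither. That changes depths by $0$ or $2$ and can never produce a shift by one.
\item The parity of $Z$ is instead fixed by whether $\ell_g$ reaches $v_2$ from inside $P$ (reflex $v_2$) or from outside (convex $v_2$).
\end{itemize}
For RRO your value $Z=k$ is still right because $v_2$ is reflex. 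But the explicit edge count you plan to do first would contradict the reason you give for it.
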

\begin{proof}
See Figures~\ref{fig:RRO-generic-0}-~\ref{fig:RRO-generic-8}.

Note: for $Z \geq k + 2$, all of $v_{2}$ would be in shadow. For $W \geq k + 4$, $x_{3}x_{4}$ and its surroundings is entirely in shadow.

 \begin{figure}[H]
\centering
\begin{subfigure}[b]{.49\linewidth}
\includegraphics[width=\linewidth]{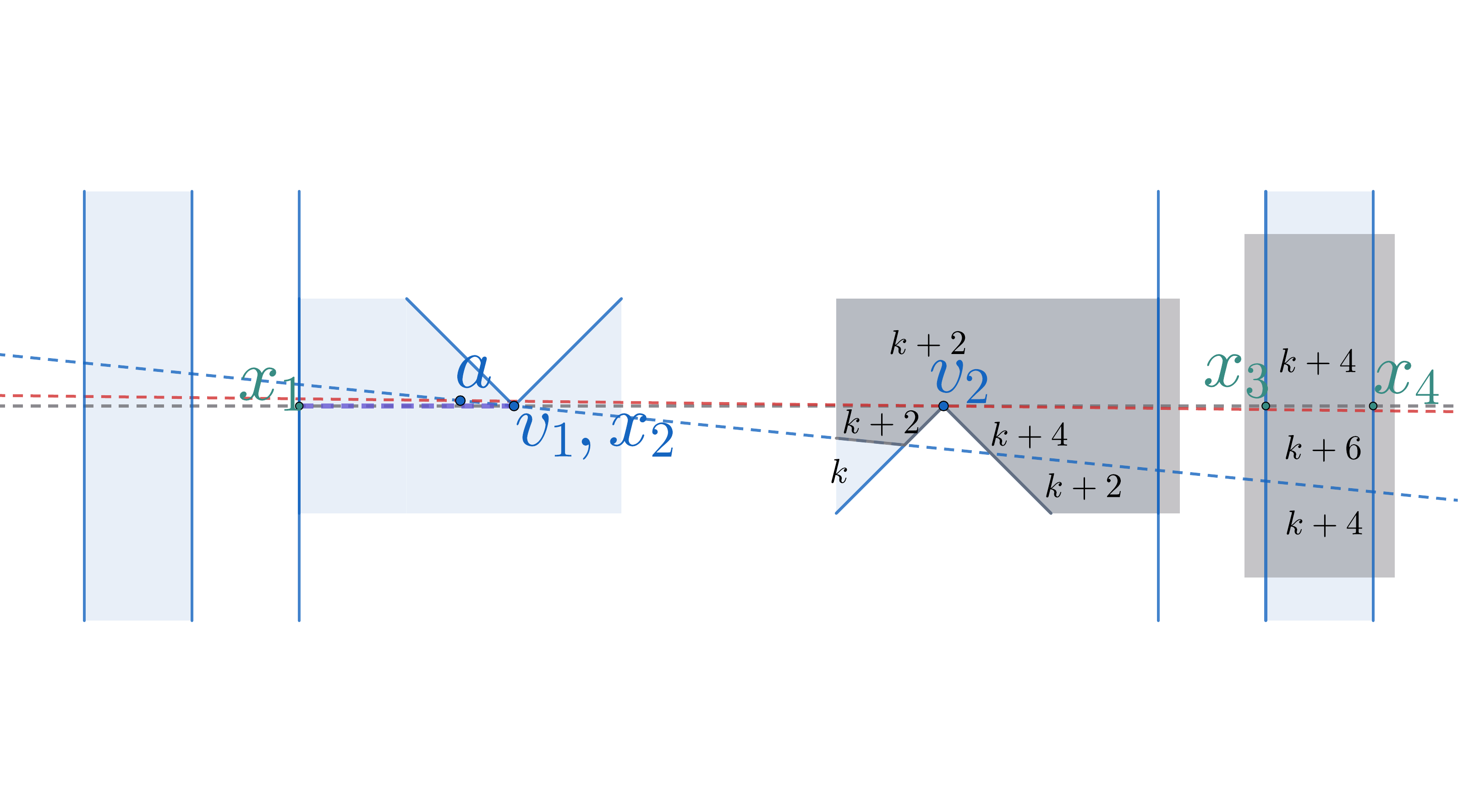}
\caption{Above $l_{g}$}\label{fig:RRO-genericA0}
\end{subfigure}
\begin{subfigure}[b]{.49\linewidth}
\includegraphics[width=\linewidth]{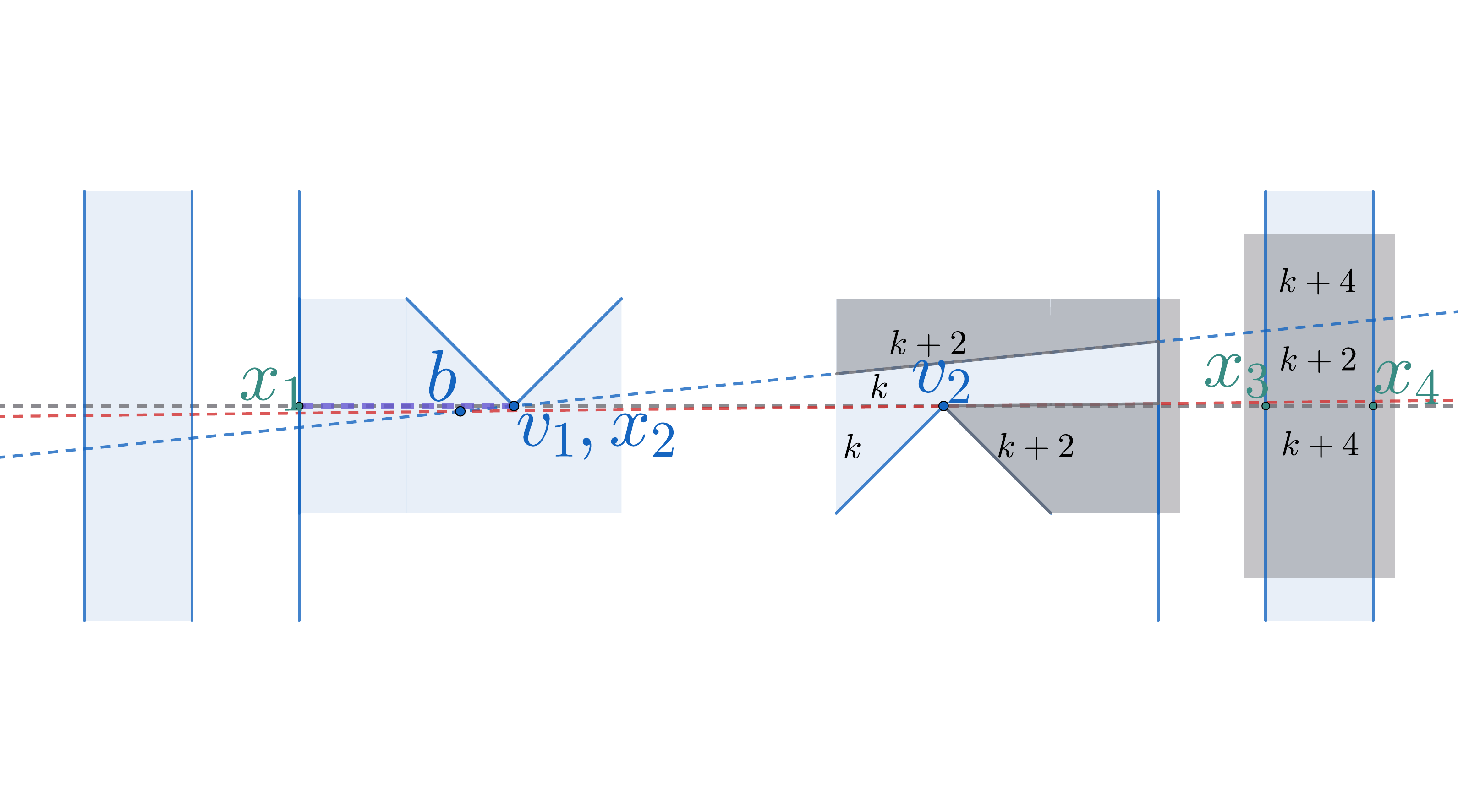}
\caption{Below $l_{g}$}\label{fig:RRO-genericB0}
\end{subfigure}

\caption{RRO; $Z = k$, $W = k + 2$}
\label{fig:RRO-generic-0}
\end{figure}

 \begin{figure}[H]
\centering
\begin{subfigure}[b]{.49\linewidth}
\includegraphics[width=\linewidth]{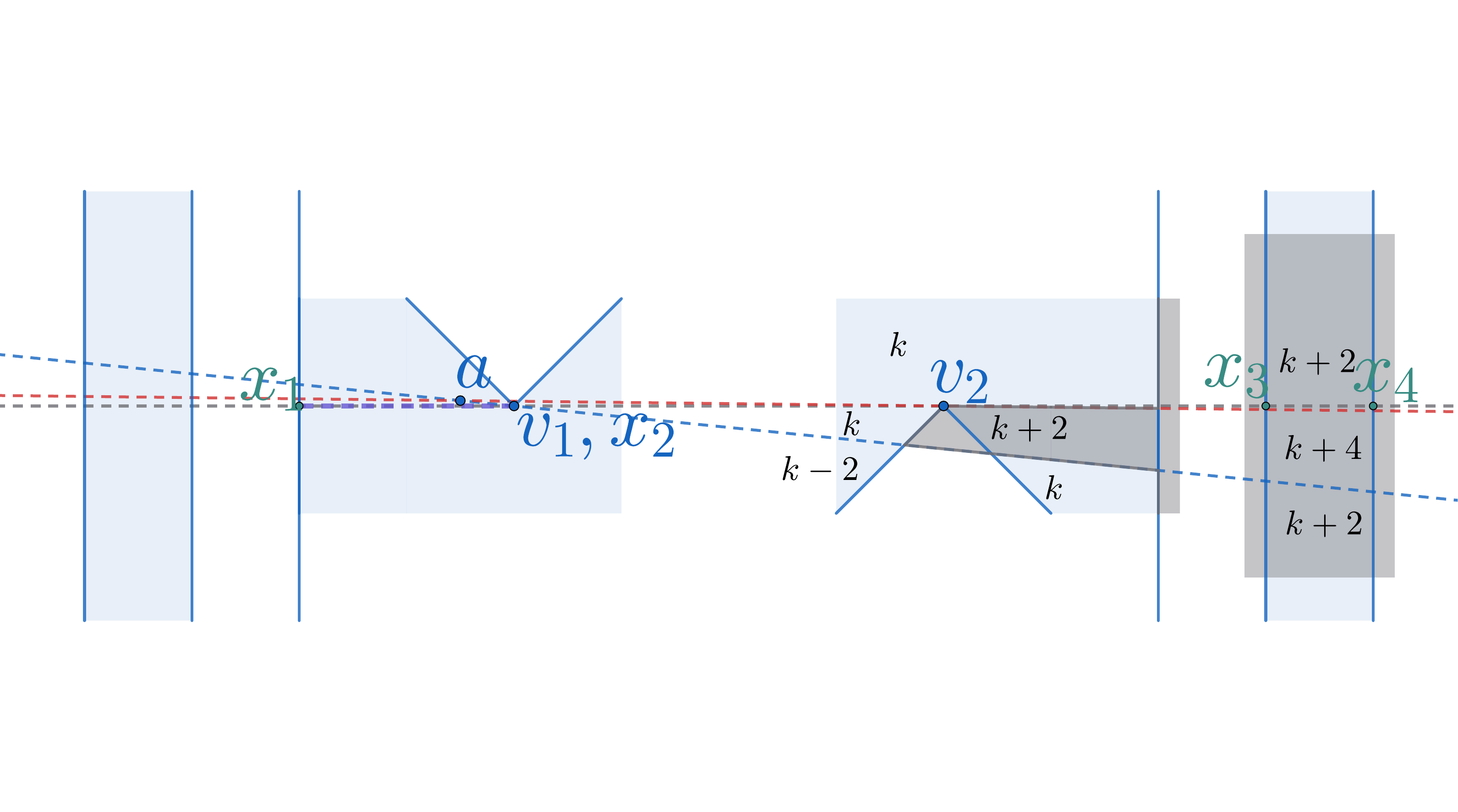}
\caption{Above $l_{g}$}\label{fig:RRO-genericA2}
\end{subfigure}
\begin{subfigure}[b]{.49\linewidth}
\includegraphics[width=\linewidth]{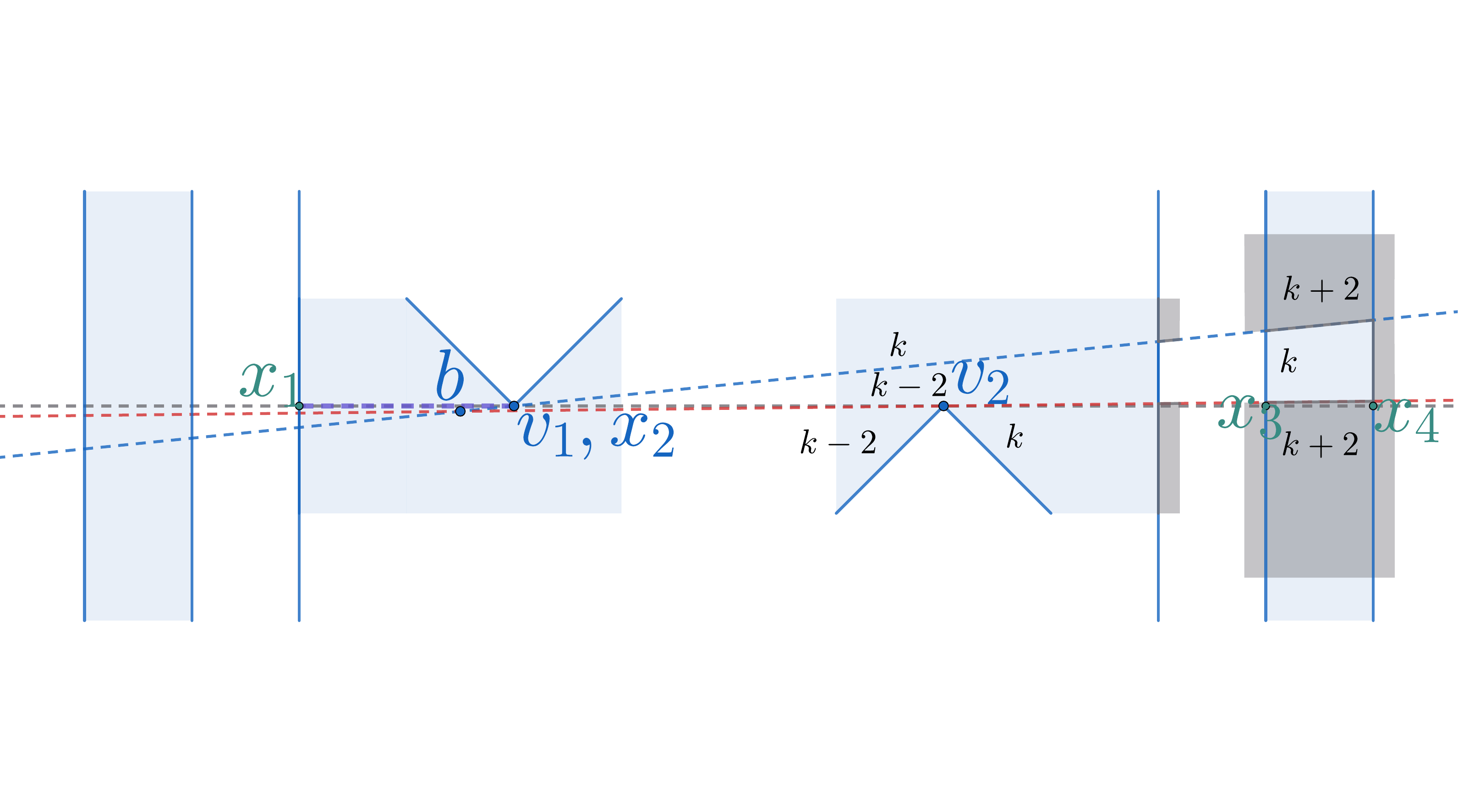}
\caption{Below $l_{g}$}\label{fig:RRO-genericB2}
\end{subfigure}

\caption{RRO; $Z = k - 2$, $W = k$}
\label{fig:RRO-generic-2}
\end{figure}

 \begin{figure}[H]
\centering
\begin{subfigure}[b]{.49\linewidth}
\includegraphics[width=\linewidth]{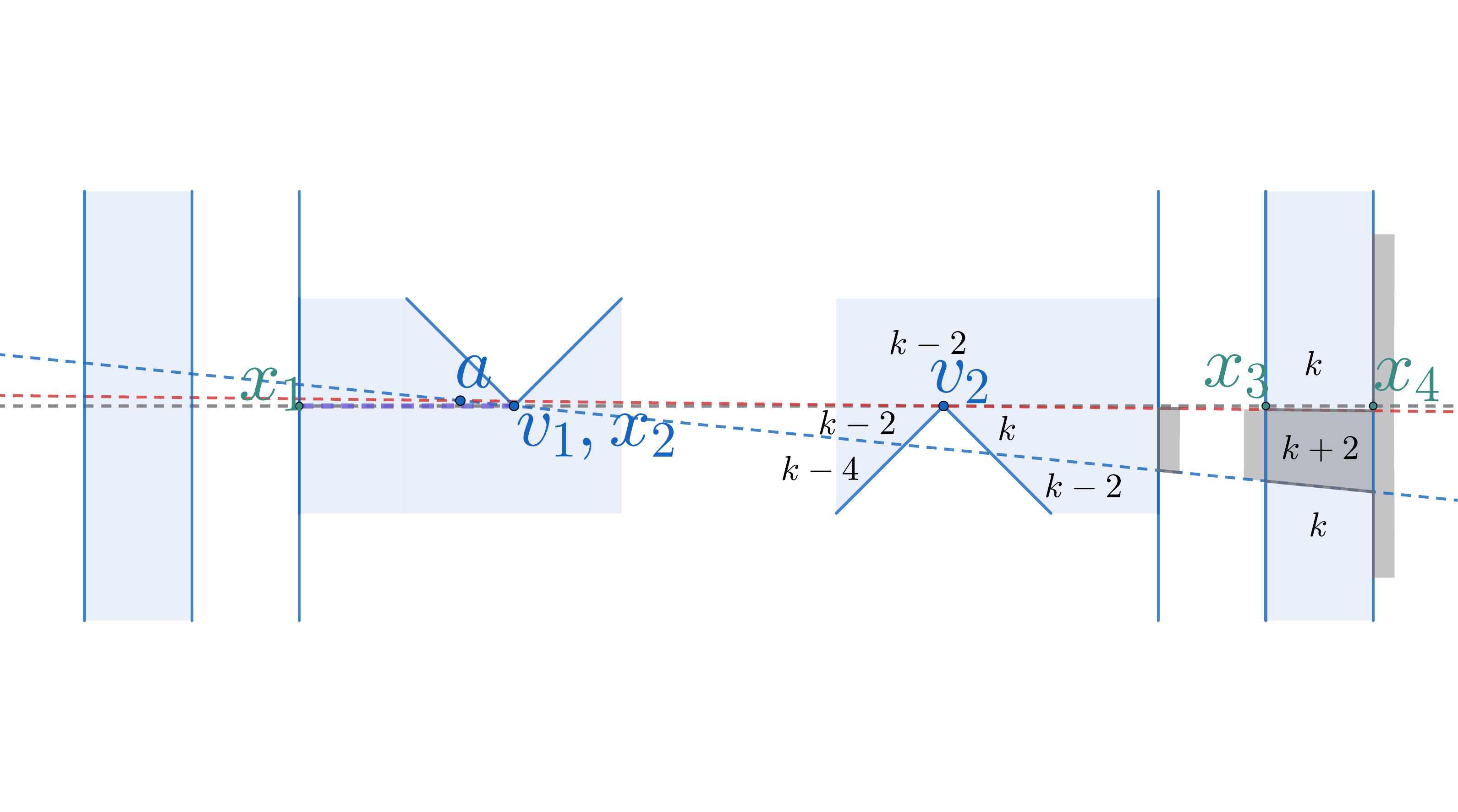}
\caption{Above $l_{g}$}\label{fig:RRO-genericA4}
\end{subfigure}
\begin{subfigure}[b]{.49\linewidth}
\includegraphics[width=\linewidth]{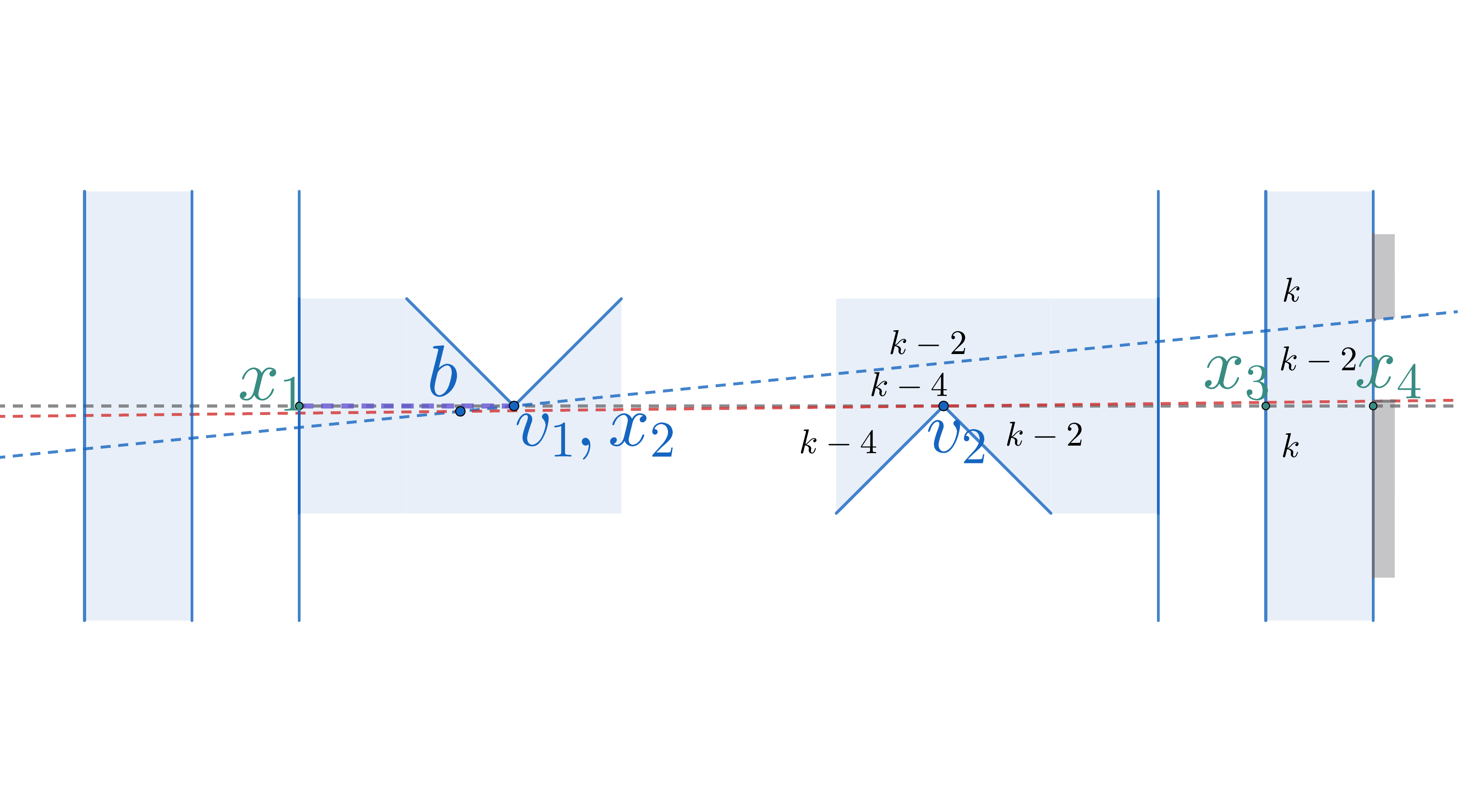}
\caption{Below $l_{g}$}\label{fig:RRO-genericB4}
\end{subfigure}

\caption{RRO; $Z = k - 4$, $W = k - 2$}
\label{fig:RRO-generic-4}
\end{figure}

 \begin{figure}[H]
\centering
\begin{subfigure}[b]{.49\linewidth}
\includegraphics[width=\linewidth]{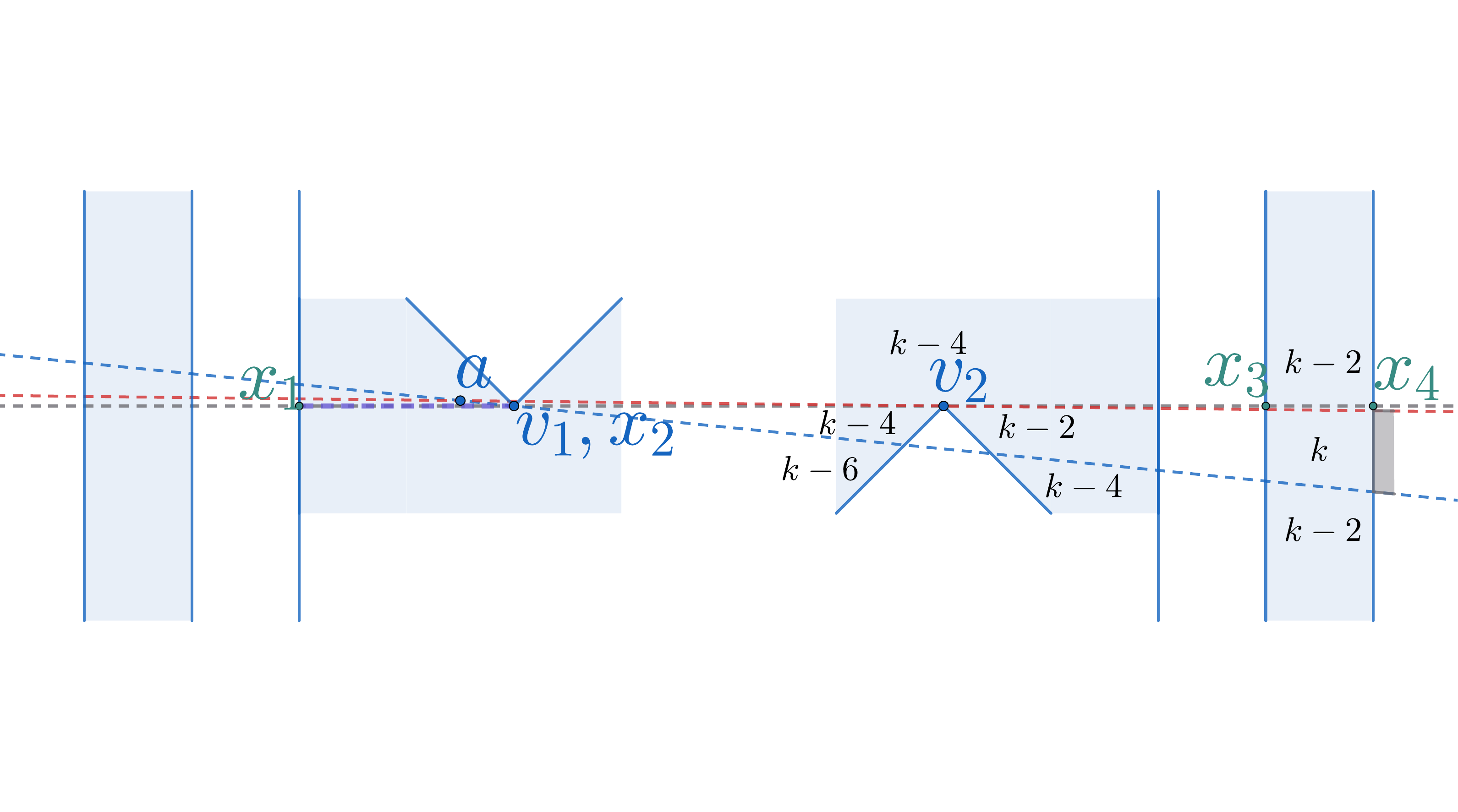}
\caption{Above $l_{g}$}\label{fig:RRO-genericA6}
\end{subfigure}
\begin{subfigure}[b]{.49\linewidth}
\includegraphics[width=\linewidth]{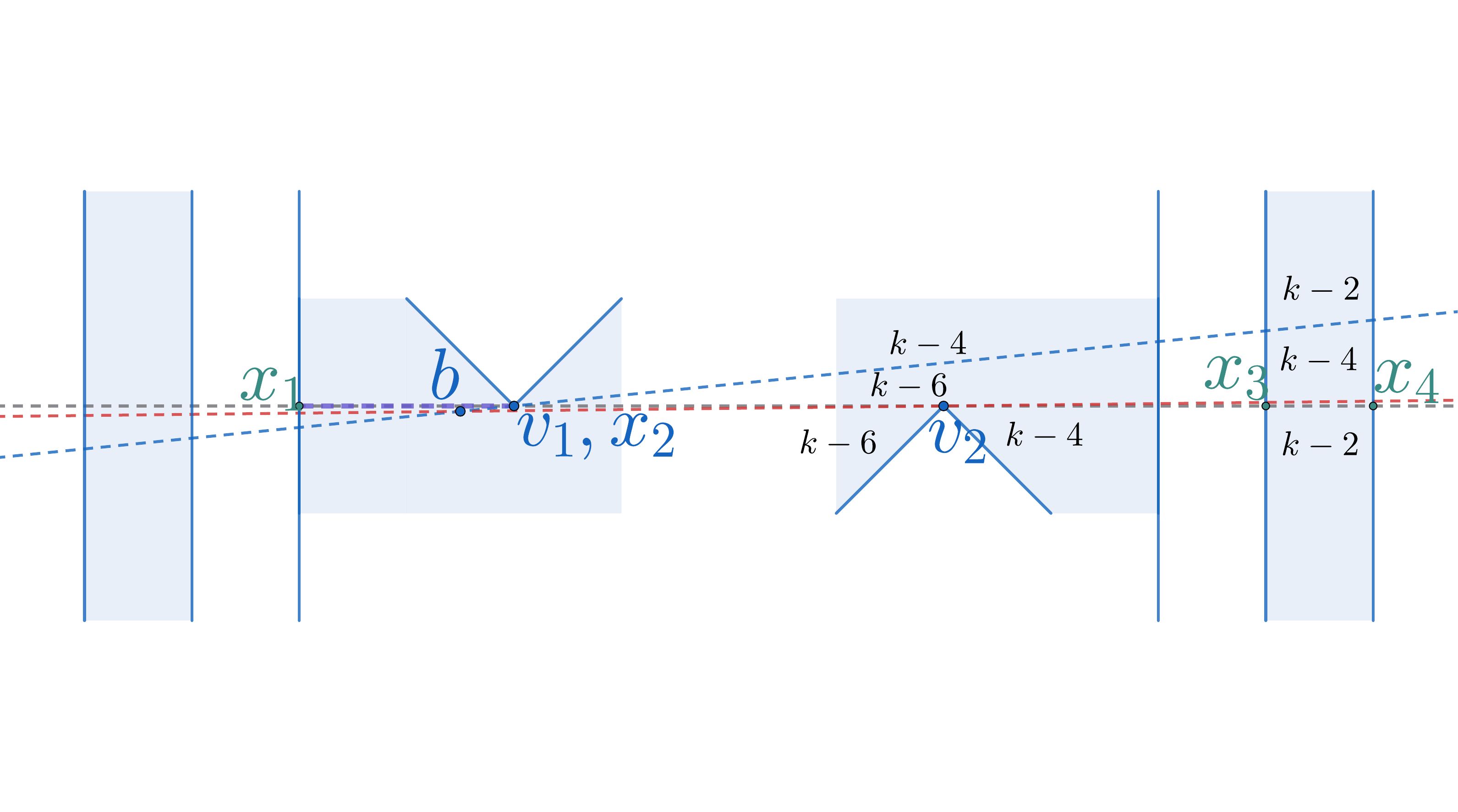}
\caption{Below $l_{g}$}\label{fig:RRO-genericB6}
\end{subfigure}

\caption{RRO; $Z = k - 6$, $W = k - 4$}
\label{fig:RRO-generic-6}
\end{figure}

 \begin{figure}[H]
\centering
\begin{subfigure}[b]{.49\linewidth}
\includegraphics[width=\linewidth]{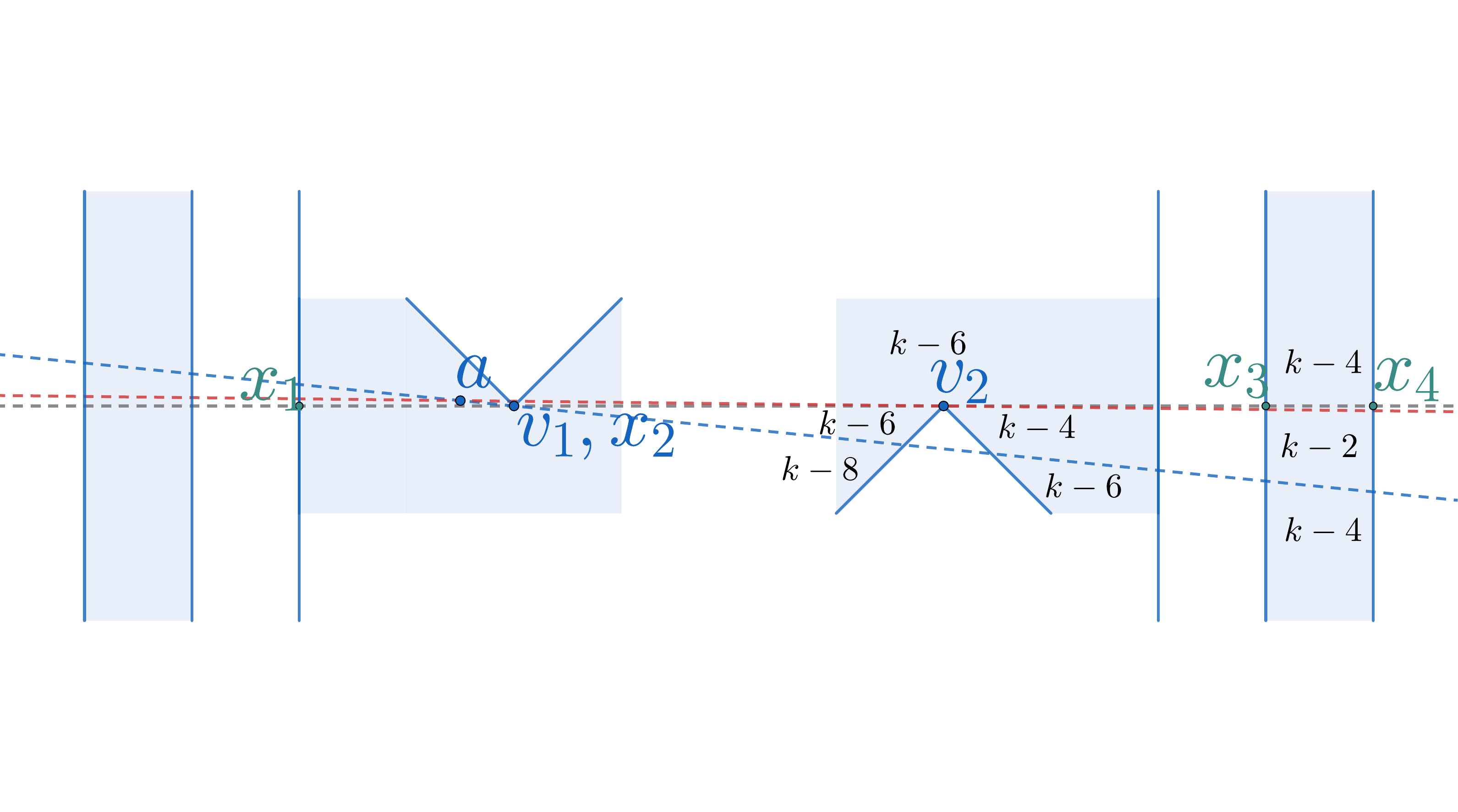}
\caption{Above $l_{g}$}\label{fig:RRO-genericA8}
\end{subfigure}
\begin{subfigure}[b]{.49\linewidth}
\includegraphics[width=\linewidth]{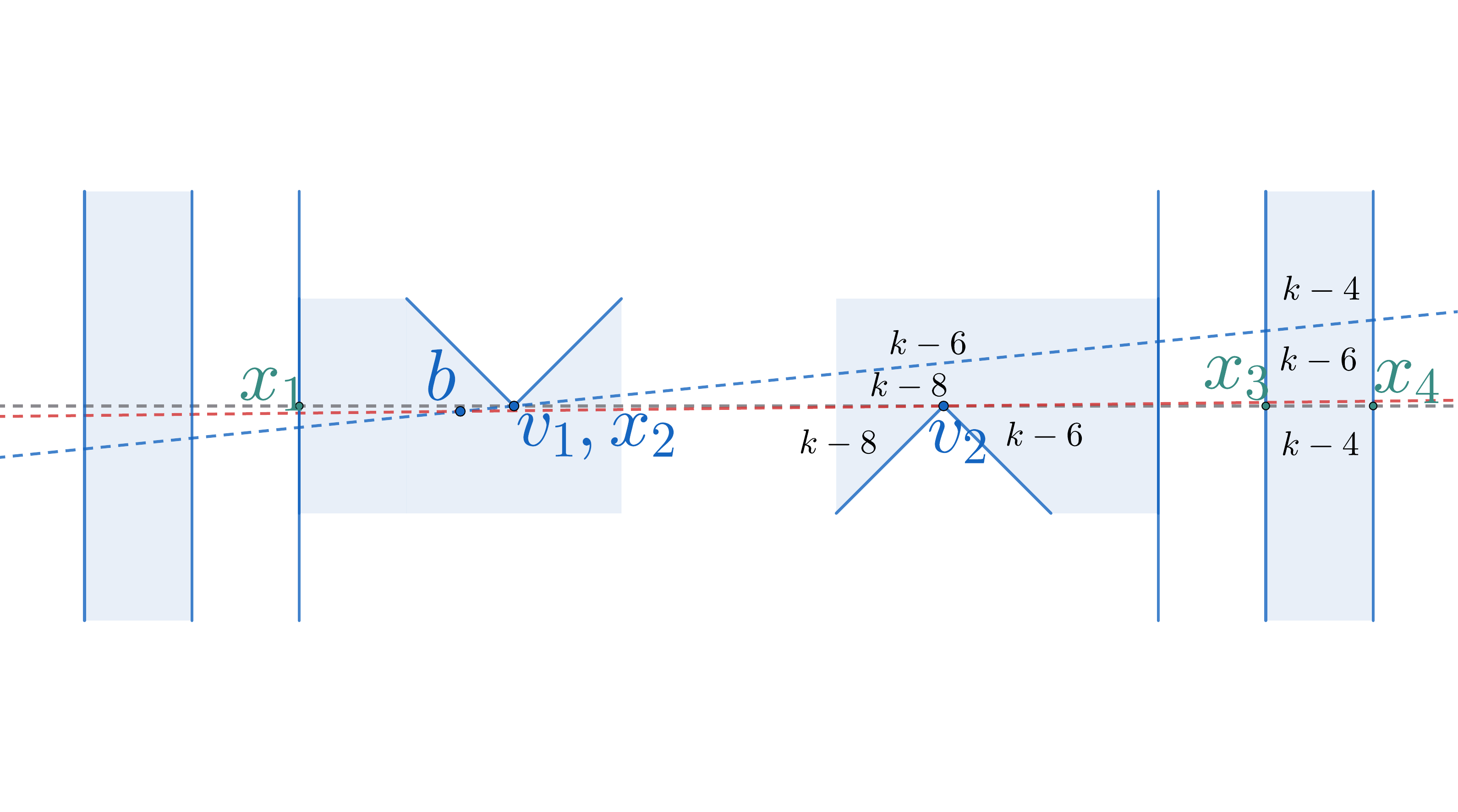}
\caption{Below $l_{g}$}\label{fig:RRO-genericB8}
\end{subfigure}

\caption{RRO; $Z = k - 8$, $W = k - 6$}
\label{fig:RRO-generic-8}
\end{figure}

for $Z \leq k - 10$,  $v_{2}$ and its surroundings is entirely visible. For $W \leq k - 8$, $x_{3}x_{4}$ and its surroundings is entirely visible. 

\end{proof}
\section{RC-SC}

\begin{lemma}
\label{lemma:RC-SC}
A partition line is needed in the following cases: 
\renewcommand{\labelitemi}{$\bullet$}
\begin{itemize}
    \item $Z = k$ (Figure~\ref{fig:RC-SpecialCase-0})
    \item $Z = k - 2$ (Figure~\ref{fig:RC-SpecialCase-2})
    \item $W = k - 1$ (Figure~\ref{fig:RC-SpecialCase-2})
    \item $W = k - 3$ (Figure~\ref{fig:RC-SpecialCase-4})
 
\end{itemize}
\end{lemma}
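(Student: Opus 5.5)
The plan follows the template of Lemma~\ref{lemma:ccs} and the appendix lemmas, adapted to the adjacent special case RC. Here $v_1$ is reflex, $v_2$ is convex, and $v_1v_2$ is an edge of $P$. Because the two vertices are adjacent, the edge $v_1v_2$ lies on $\ell_g$ itself. It is therefore not counted as a crossing, and the only edges contributing to $Z$ are those met on the way from $x_1$ to $v_1$, excluding the edges incident to $v_1$ and $v_2$. This shifts the parity of the relevant thresholds relative to the generic RCS/RCO cases: $Z$ now hits $k$ and $k-2$, and $W$ hits odd offsets $k-1, k-3$.

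First I fix the local picture at $a$ (above $x_1x_2$) and at $b$ (below). From $a$, the rays $\ell_b$ and $\ell_r$ pass through $v_1$ and $v_2$ on one side of the shared edge. From $b$ they pass on the other side, so the edge $v_1v_2$ is either grazed or crossed depending on the side. Using the convention that $Z$ and $W$ change in steps of $2$, I then enumerate $Z \ge k+2$, $Z=k$, $Z=k-2$, $Z \le k-4$ and, correspondingly, $W \ge k+1$, $W=k-1$, $W=k-3$, $W\le k-5$.

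For the extreme ranges I argue exactly as in the CCS proof. If $Z\ge k+2$, then $v_2$ and its neighbourhood have depth exceeding $k$ from both $a$ and $b$; if $W\ge k+1$, the same holds for $x_3x_4$. If $Z\le k-4$ or $W\le k-5$, the corresponding region is wholly inside $\text{Vis}_k$ from both sides. In all four situations no tuple in $\mathcal{S}$ involving $v_1$ or $v_2$ changes, so no partition line is needed.

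For the four listed values I exhibit the mutation, as in Figures~\ref{fig:RC-SpecialCase-0}--\ref{fig:RC-SpecialCase-4}.
\begin{itemize}
\item $Z=k$: from one side $v_2$ has depth $k$ and sits in the sequence as a window generator along $\ell_r$. From the other side the extra crossing of the edge incident to $v_2$ pushes it beyond depth $k$, and the window along $\ell_b$ instead terminates on that edge.
\item $Z=k-2$ and $W=k-1$: the window near $x_3x_4$ is anchored by $v_1$ on one side and by $v_2$ on the other. This is the same anchor swap $\langle v_1,\cdot,\cdot\rangle \leftrightarrow \langle v_2,\cdot,\cdot\rangle$ used in the Type~2 theorem.
\item $W=k-3$: the window reaching $x_3x_4$ similarly changes anchor.
\end{itemize}
In each case $\mathcal{S}(a)\neq\mathcal{S}(b)$, so $x_1x_2$ must be a partition line, by the argument of the Type~2 theorem.

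The main obstacle is the bookkeeping of the degenerate crossing at the shared edge $v_1v_2$. I have to decide consistently whether a ray from $a$ or $b$ that passes through the reflex $v_1$ and then along or just past the edge $v_1v_2$ counts it as one crossing or none. This choice is exactly what produces the odd offsets for $W$, so I would settle it first, by perturbing $q$ slightly off $\ell_g$ and counting crossings of the open segment, before running the case enumeration.
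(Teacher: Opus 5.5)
Your proposal follows the paper's argument. The paper's proof is just a figure-based check of the listed values ($Z=k$, $Z=k-2$, $W=k-1$, $W=k-3$, Figures~\ref{fig:RC-SpecialCase-0}--\ref{fig:RC-SpecialCase-4}), plus the remark that larger values leave $v_2$ and $x_3x_4$ entirely in shadow and smaller ones leave them entirely visible; that is your step-of-two enumeration on the CCS template. Two small points: in this framework $Z$ governs the neighbourhood of $v_2$, so the $Z=k-2$ event should be a window change near $v_2$, not near $x_3x_4$ as you wrote. Also, your cutoffs $W\ge k+1$ and $Z\le k-4$ are sharper than the paper's stated $W\ge k+3$ and $Z\le k-6$, but they agree with what the paper cites Figures~\ref{fig:RC-SpecialCase-0} and~\ref{fig:RC-SpecialCase-4} as showing.
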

\begin{proof}
See Figures~\ref{fig:RC-SpecialCase-0}-~\ref{fig:RC-SpecialCase-4}.

Note: for $Z \geq k + 2$, all of $v_{2}$ would be in shadow. For $W \geq k + 3$, $x_{3}x_{4}$ and its surroundings is entirely in shadow as in Figure~\ref{fig:RC-SpecialCase-0}. for $Z \leq k - 6$, $W \leq k - 5$, $v_{2}$ and $x_{3}x_{4}$ are entirely visible as Figure~\ref{fig:RC-SpecialCase-4}.

 \begin{figure}[H]
\centering
\begin{subfigure}[b]{.49\linewidth}
\includegraphics[width=\linewidth]{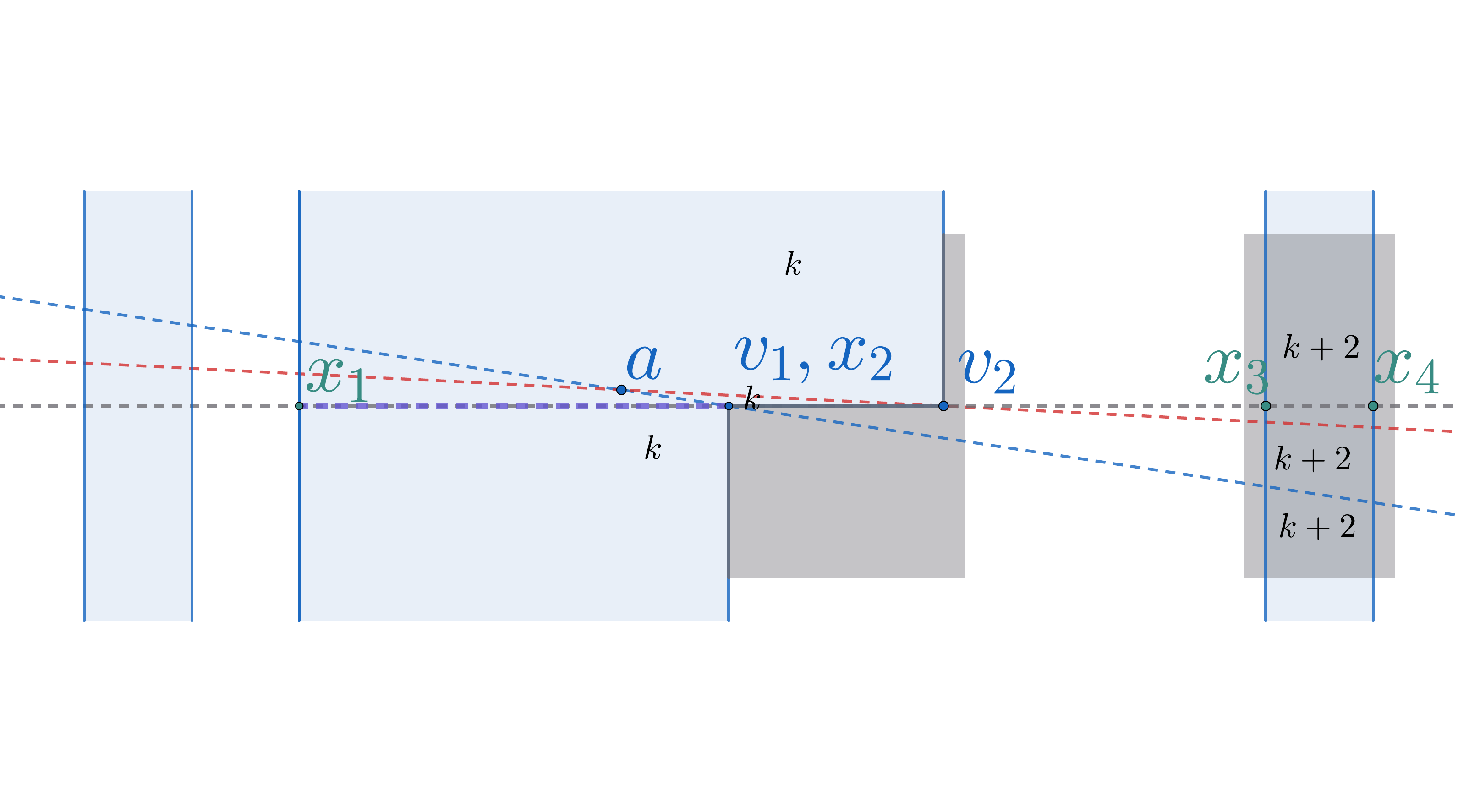}
\caption{Above $l_{g}$}\label{fig:RC-SpecialCase-A0}
\end{subfigure}
\begin{subfigure}[b]{.49\linewidth}
\includegraphics[width=\linewidth]{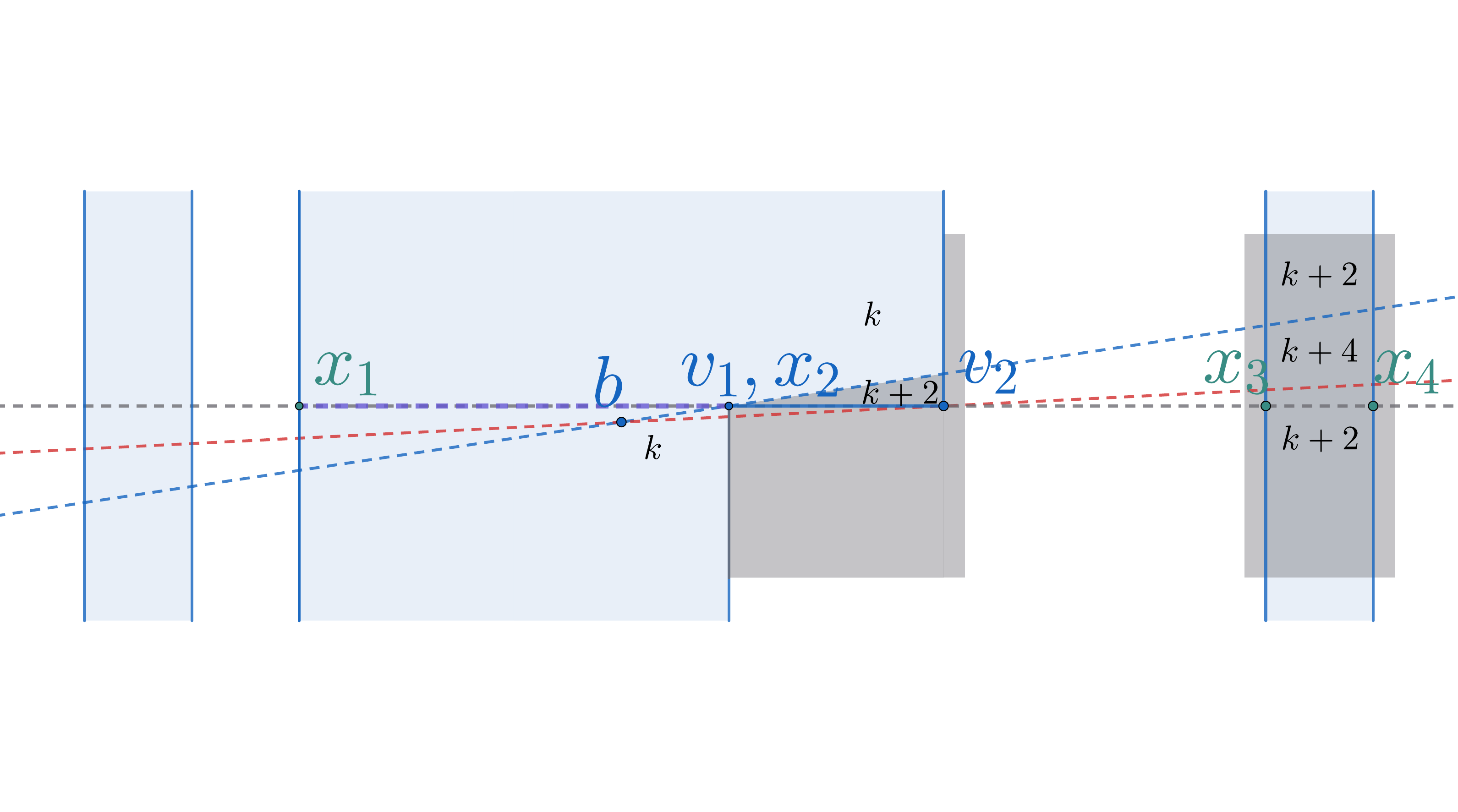}
\caption{Below $l_{g}$}\label{fig:RC-SpecialCase-B0}
\end{subfigure}

\caption{RC-SC; $Z = k$, $W = k + 1$}
\label{fig:RC-SpecialCase-0}
\end{figure}

 \begin{figure}[H]
\centering
\begin{subfigure}[b]{.49\linewidth}
\includegraphics[width=\linewidth]{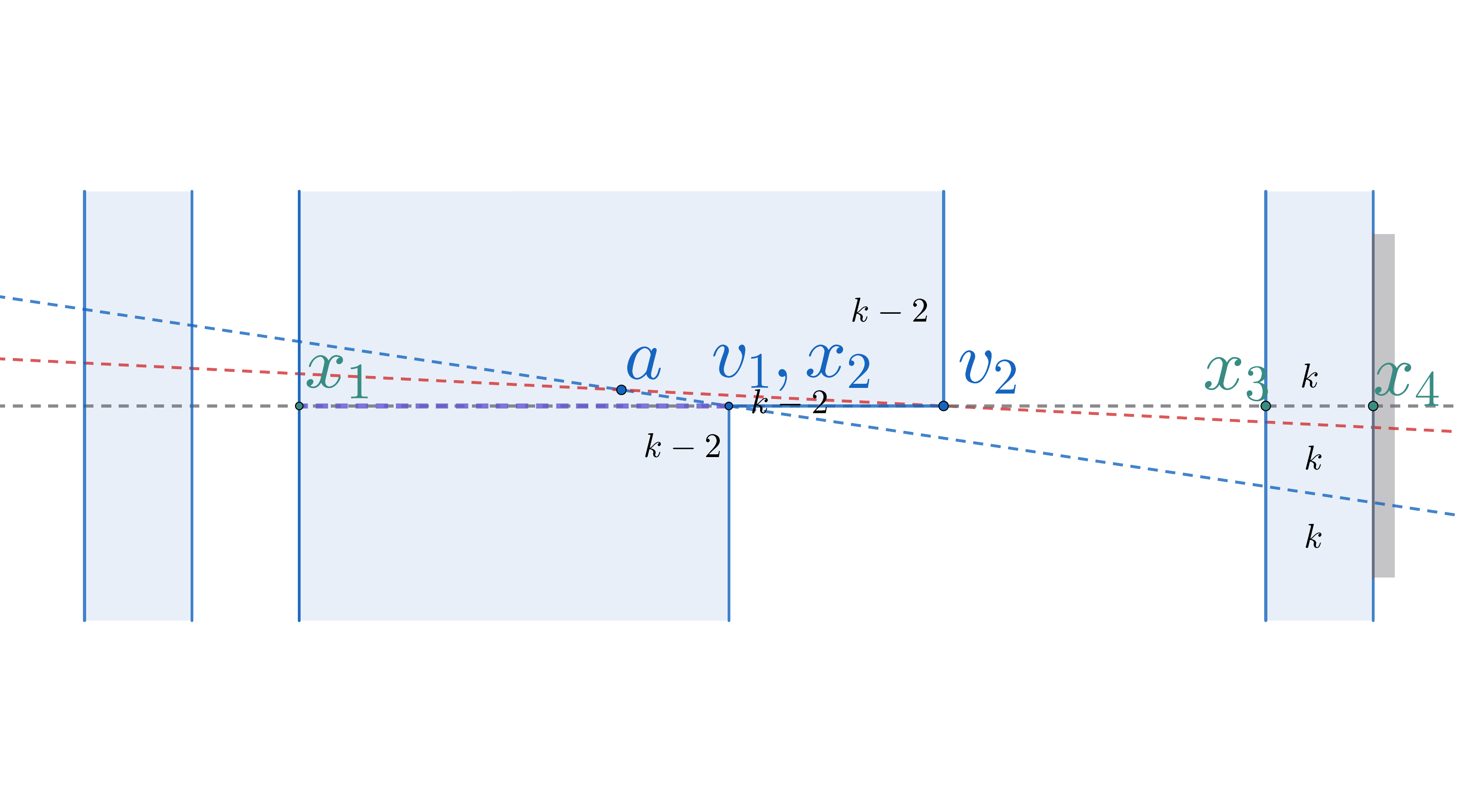}
\caption{Above $l_{g}$}\label{fig:RC-SpecialCase-A2}
\end{subfigure}
\begin{subfigure}[b]{.49\linewidth}
\includegraphics[width=\linewidth]{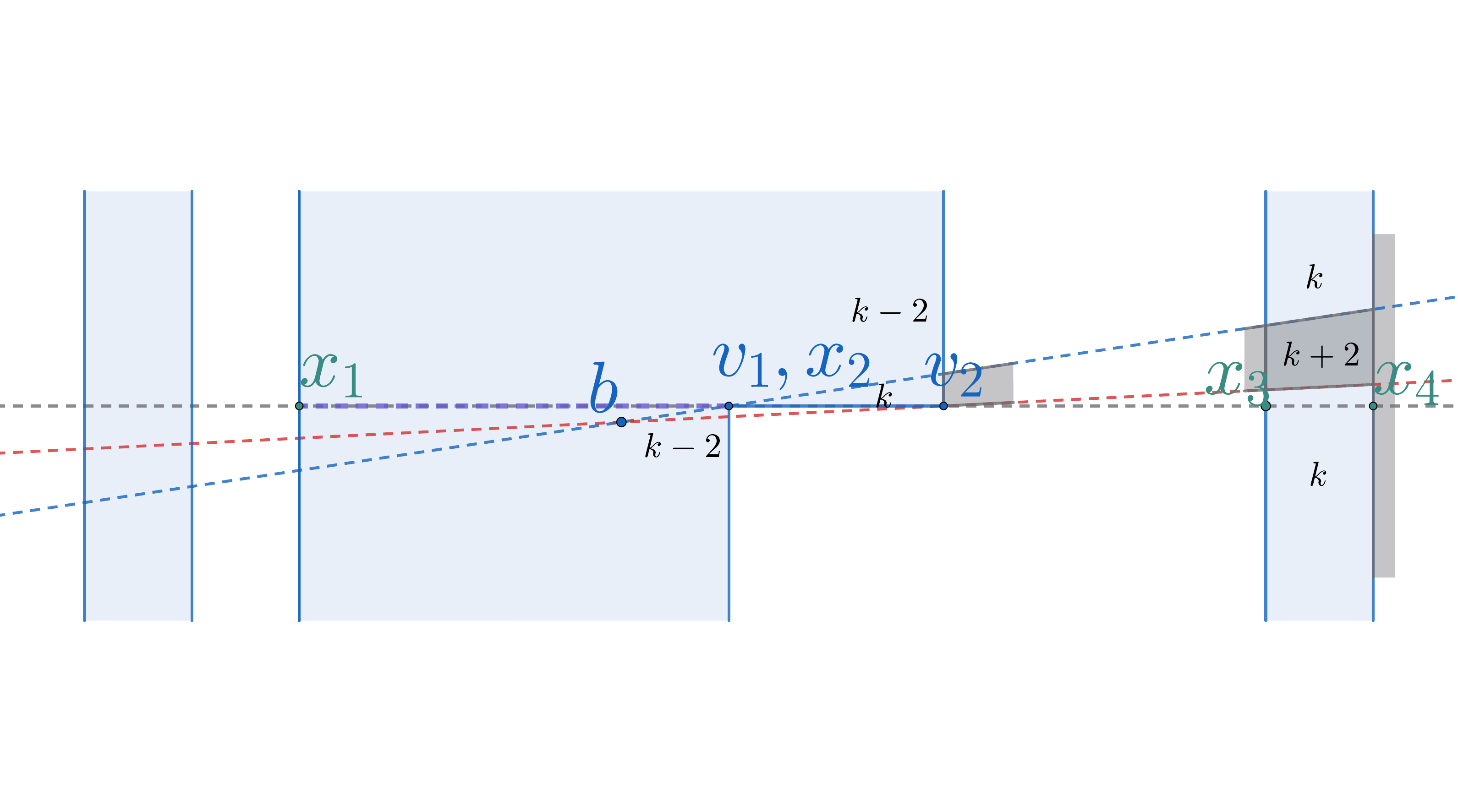}
\caption{Below $l_{g}$}\label{fig:RC-SpecialCase-B2}
\end{subfigure}

\caption{RC-SC; $Z = k - 2$, $W = k - 1$}
\label{fig:RC-SpecialCase-2}
\end{figure}

  \begin{figure}[H]
\centering
\begin{subfigure}[b]{.49\linewidth}
\includegraphics[width=\linewidth]{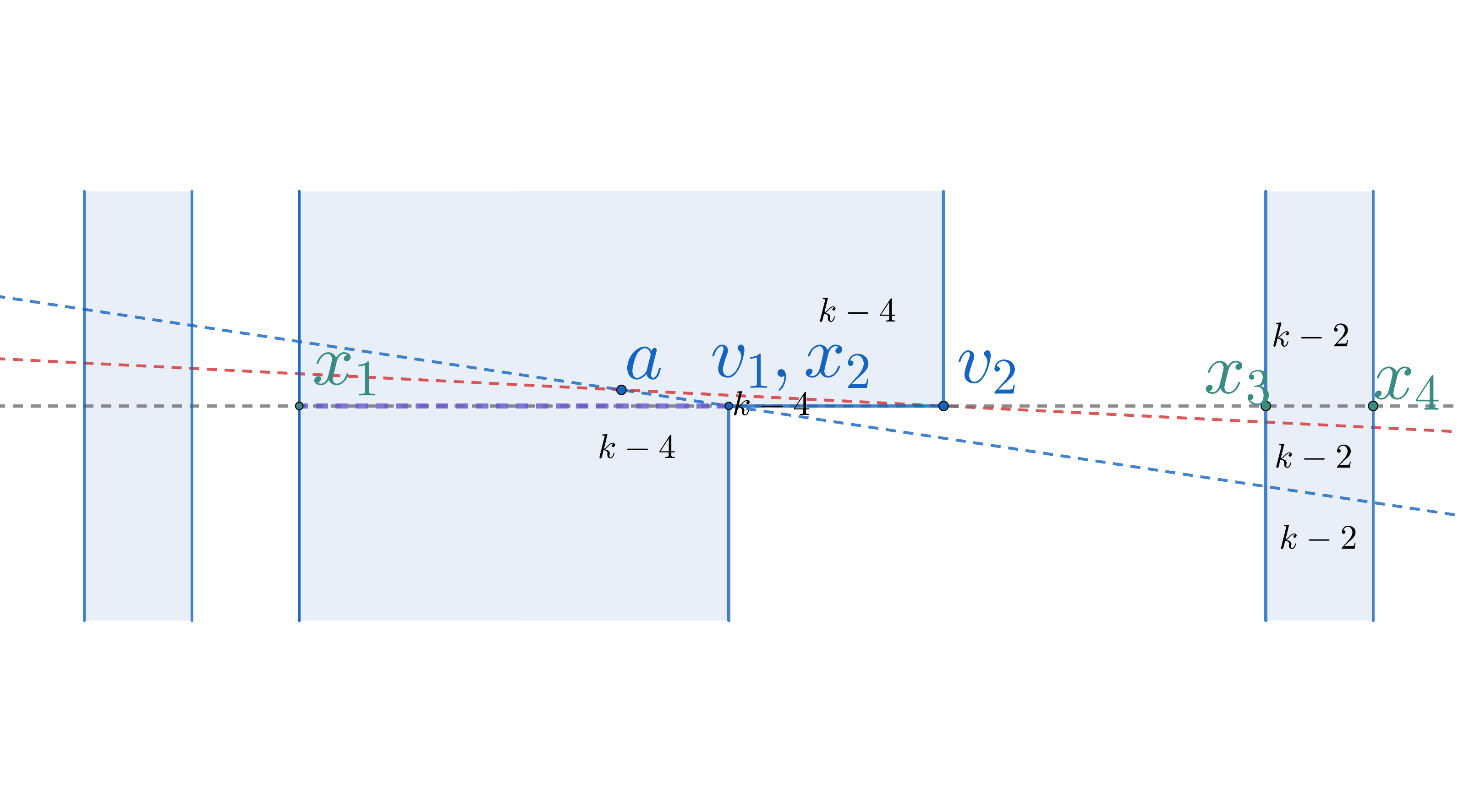}
\caption{Above $l_{g}$}\label{fig:RC-SpecialCase-A4}
\end{subfigure}
\begin{subfigure}[b]{.49\linewidth}
\includegraphics[width=\linewidth]{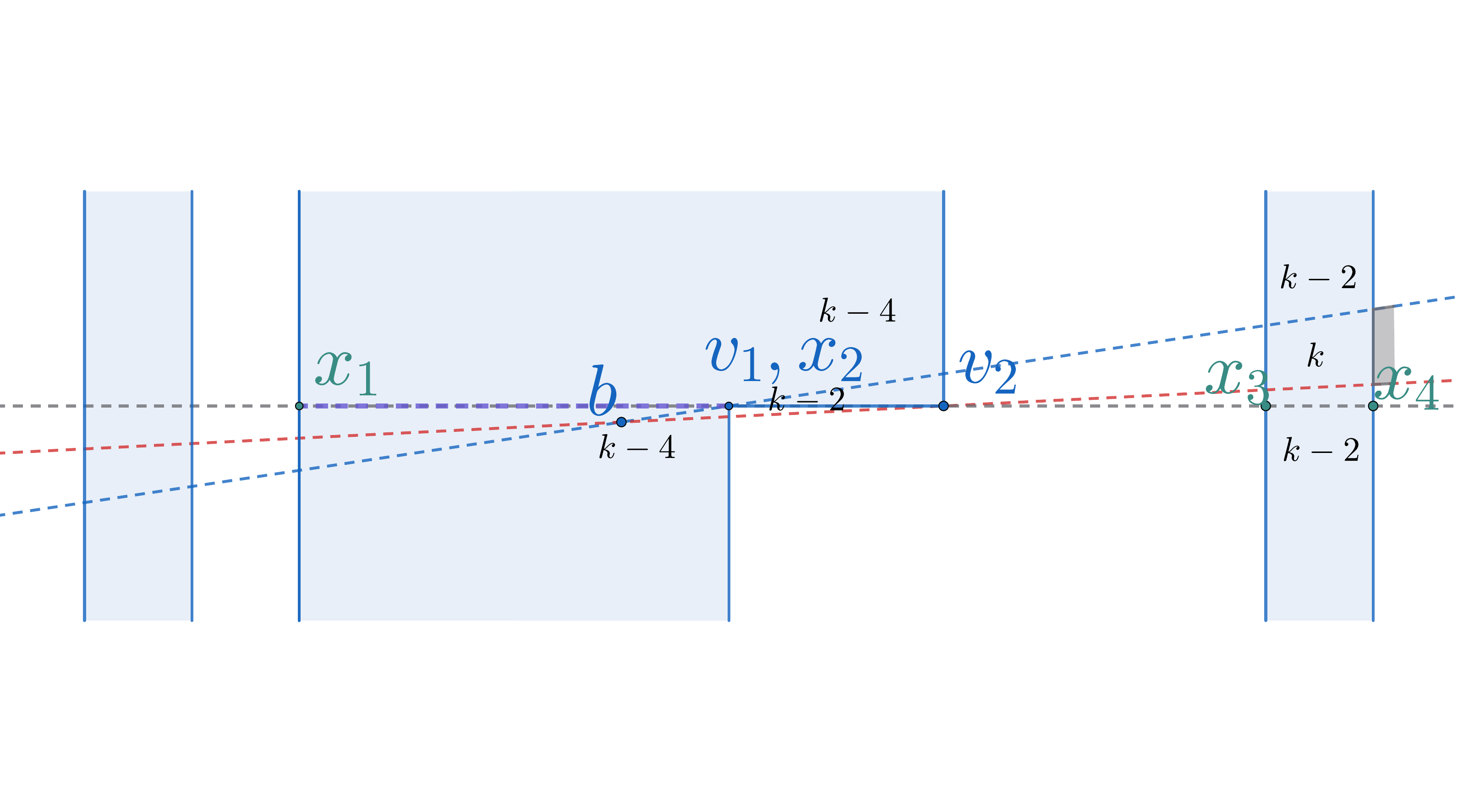}
\caption{Below $l_{g}$}\label{fig:RC-SpecialCase-B4}
\end{subfigure}

\caption{RC-SC; $Z = k - 4$, $W = k - 3$}
\label{fig:RC-SpecialCase-4}
\end{figure}

\end{proof}

\section{RR-SC}

\begin{lemma}
\label{lemma:RR-SC}

A partition line is needed in the following cases: 
\renewcommand{\labelitemi}{$\bullet$}
\label{appendix:RR-SC}
\begin{itemize}
    \item $Z = k$  (Figure~\ref{fig:RR-SpecialCase-0})
    \item $Z = k - 2$  (Figure~\ref{fig:RR-SpecialCase-2})
    \item $W = k$  (Figure~\ref{fig:RR-SpecialCase-2})
    \item $W = k - 2$  (Figure~\ref{fig:RR-SpecialCase-4})
\end{itemize}

\end{lemma}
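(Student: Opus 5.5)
We follow the template of Lemma~\ref{lemma:ccs} and the other mutually critical cases, now for the adjacent special case RR-SC. Here $v_1$ and $v_2$ are both reflex, joined by a polygon edge $v_1v_2$ lying on $\ell_g$, and mutually critical. The first step fixes the counting conventions. Because the edge $v_1v_2$ is collinear with $\ell_g$ and the edges incident to $v_1,v_2$ are excluded from $Z$ and $W$, the parity offsets differ from the generic RRS/RRO cases. We will argue that $Z$ and $W$ again move in steps of $2$: every additional obstacle between $x_1$ and $x_4$ is entered and exited, as in the proof of Lemma~\ref{lemma:ccs}. With no intervening edges between $v_1$ and $v_2$, the $Z$ and $W$ counts share the same parity, which explains why the listed values are $Z\in\{k,k-2\}$ and $W\in\{k,k-2\}$. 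This contrasts with the RC-SC case, where the single convex vertex shifts $W$ by one.

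We then sweep the parameter range from the fully occluded extreme to the fully visible extreme. For each admissible value we compare the combinatorial sequences $\mathcal{S}(a)$ and $\mathcal{S}(b)$, using the rays $\ell_b$ (through $v_1$) and $\ell_r$ (through $v_2$).

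\emph{Extremes.} For $Z\ge k+2$, the vertex $v_2$ and its neighbourhood have depth exceeding $k$ from both $a$ and $b$. For $W\ge k+2$, the same holds for $x_3x_4$. Hence no tuple involving $v_1$ or $v_2$ appears in either sequence, and no mutation occurs. Symmetrically, for $Z\le k-4$ and $W\le k-4$ these regions are entirely inside $\mathrm{Vis}_k$ from both sides, so again nothing changes.

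\emph{Needed values.} For $Z=k$ we show the following: on one side of $\ell_g$ the window near $v_2$ is anchored at $v_1$, with distal edge the edge of $v_2$. On the other side, $v_2$ becomes a $k$-visible boundary vertex, and a new window anchored at $v_2$ along $\ell_r$ appears. This is an INSERT together with a REPLACE, matching Figure~\ref{fig:RR-SpecialCase-0}.

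For $Z=k-2$ and $W=k$ (Figure~\ref{fig:RR-SpecialCase-2}), the window terminating near $x_3x_4$ switches its generator between $v_1$ and $v_2$ as the query crosses from $a$ to $b$, because the relative order of $\ell_b$ and $\ell_r$ flips. The generator tuples therefore differ, as in the Type~2 theorem.

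For $W=k-2$ (Figure~\ref{fig:RR-SpecialCase-4}), the same anchor swap occurs on the window reaching $x_3x_4$, while the neighbourhood of $v_2$ stays inside the visibility polygon.

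The main obstacle is verifying that the intermediate value $Z=k-4$, and the relation between $Z$ and $W$, create no hidden mutation. In the adjacent configuration the collinear edge $v_1v_2$ makes $\ell_b$ and $\ell_r$ nearly coincide near $\ell_g$. We must therefore check carefully whether a tuple $\langle v_i, e_{\text{prox}}, e_{\text{dist}}\rangle$ merely changes geometrically or actually changes combinatorially. The check we plan is to track which boundary edges $\ell_b$ and $\ell_r$ hit immediately before and after crossing $x_1x_2$. A change appears only when the depth at the crossing point equals $k$ or $k+1$ along the relevant ray. Reducing each case to this criterion yields exactly the four listed values, and it also shows that all other values coincide with primary Type~1a/1b lines or produce no change.
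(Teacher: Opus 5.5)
Your overall route is the paper's. Its proof of this lemma is only the three configurations of Figures~\ref{fig:RR-SpecialCase-0}--\ref{fig:RR-SpecialCase-4} plus remarks on the fully occluded and fully visible extremes. Your witnesses for the listed values are the ones those figures display: a change at $v_2$ for $Z=k$, and an anchor swap $v_1\leftrightarrow v_2$ on the window reaching $x_3x_4$ for $W=k$ and $W=k-2$, with $Z=k-2$ carried by the same configuration as $W=k$, as the paper pairs them.

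However, your mechanism for $Z=k$ does not occur here; it is the CCS picture for $Z=k-1$. Because $v_1,v_2$ are adjacent and both reflex, their edges other than $v_1v_2$ lie on the same side of $\ell_g$. From a query point on that side, $v_1$ is the hinge, and $\ell_b$ leaves $v_1$ strictly on the opposite side of $\ell_g$. It therefore meets neither $v_1v_2\subset\ell_g$ nor the other edge at $v_2$, so no window anchored at $v_1$ ends on an edge of $v_2$.

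The correct witness is this. From the side of $\ell_g$ on which $P$ lies along $v_1v_2$, the depth of $v_2$ is $Z=k$. So $v_2$ and $v_1v_2$ lie on $\partial\text{Vis}_k$, and the window hinges at $v_2$. From the other side, the sight line to $v_2$ enters the obstacle through the other edge at $v_1$. So $v_2$ has depth $k+1$ (its whole neighbourhood has depth $k+1$ or $k+2$), it drops out of $\mathcal{S}$, and the hinge moves to $v_1$. Your conclusion $\mathcal{S}(a)\neq\mathcal{S}(b)$ survives, but only through the second half of your sentence.

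Your closing paragraph is a plan rather than an argument, and its criterion cannot deliver what you claim. You note yourself that $Z$ and $W$ move in steps of $2$, so in the lemma's labelling every admissible value is congruent to $k$ modulo $2$. A test ``the depth equals $k$ or $k+1$'', applied to any quantity differing from $Z$ (or $W$) by a fixed offset, therefore fires for at most one value of each parameter. It cannot return both $k$ and $k-2$. What has to be checked is whether $v_2$ changes status, or whether a window hinged at $v_1$ or $v_2$ appears on $\partial\text{Vis}_k$ somewhere along the hinge ray. That is read off configuration by configuration, as the paper does. Since the lemma only asserts necessity in the four listed cases, you should simply drop the exhaustiveness claim.

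Relatedly, your extreme thresholds ($W\ge k+2$ fully occluded, $Z\le k-4$ fully visible) are stronger than the paper's ($W\ge k+4$, $Z\le k-6$) and are asserted without argument. Your later worry about $Z=k-4$ contradicts your own extremes paragraph. It is also unnecessary: in the paper, $Z=k-4$ appears only as the companion of $W=k-2$ in Figure~\ref{fig:RR-SpecialCase-4} and is not claimed as an event.
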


\begin{proof}
See Figures~\ref{fig:RR-SpecialCase-0}-~\ref{fig:RR-SpecialCase-4}.

Note: 
For $Z \geq k + 2$, $v_{2}$ and its surroundings are completely invisible, For $W \geq k + 4$, $x_{3}x_{4}$ and its surroundings are completely invisible.

for $Z \leq k - 6$, $v_{2}$ and its surroundings are completely visible. For $W \leq k - 4$,  $x_{3}x_{4}$ and its surroundings are completely visible.

 \begin{figure}[H]
\centering
\begin{subfigure}[b]{.49\linewidth}
\includegraphics[width=\linewidth]{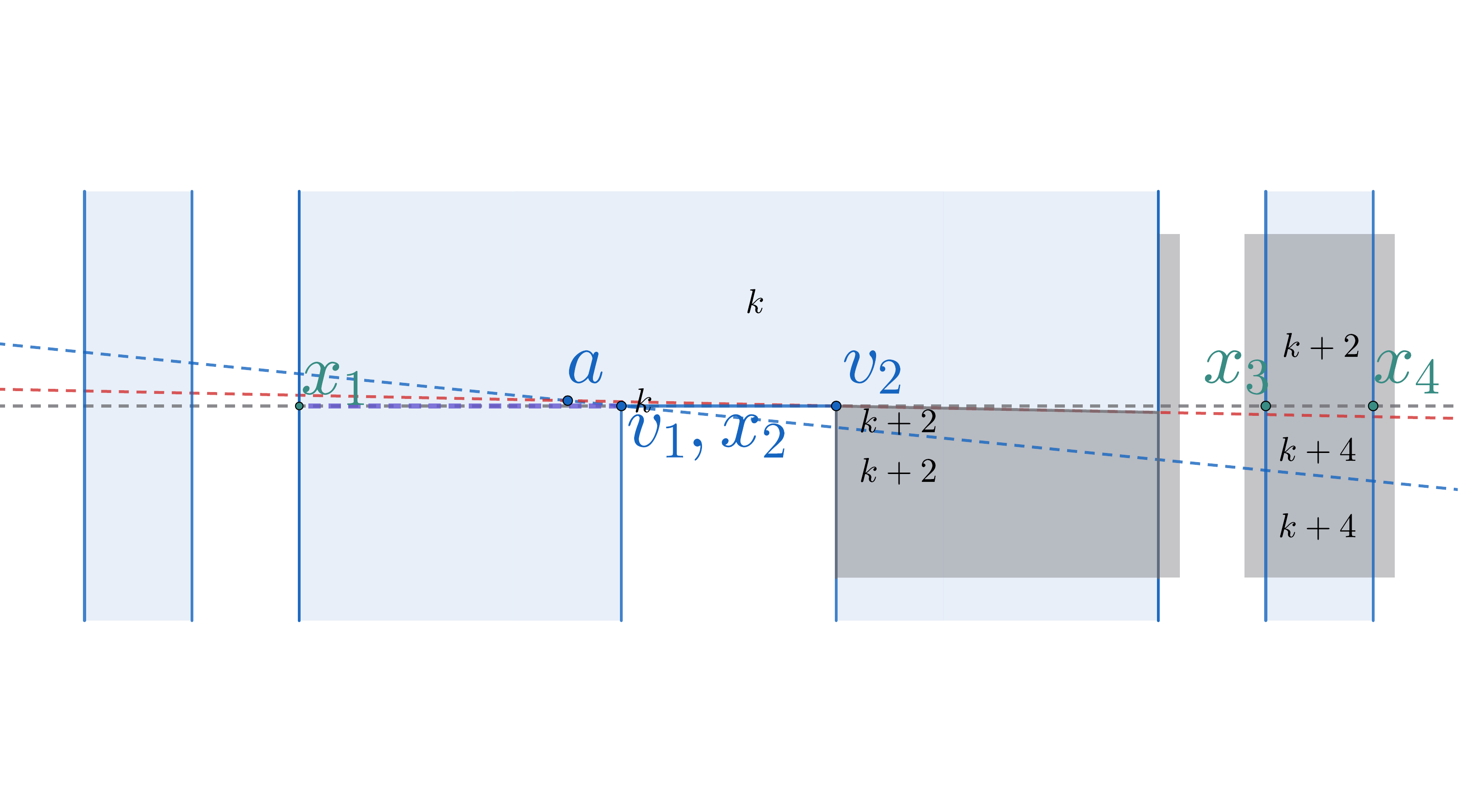}
\caption{Above $l_{g}$}\label{fig:RR-SpecialCase-A0}
\end{subfigure}
\begin{subfigure}[b]{.49\linewidth}
\includegraphics[width=\linewidth]{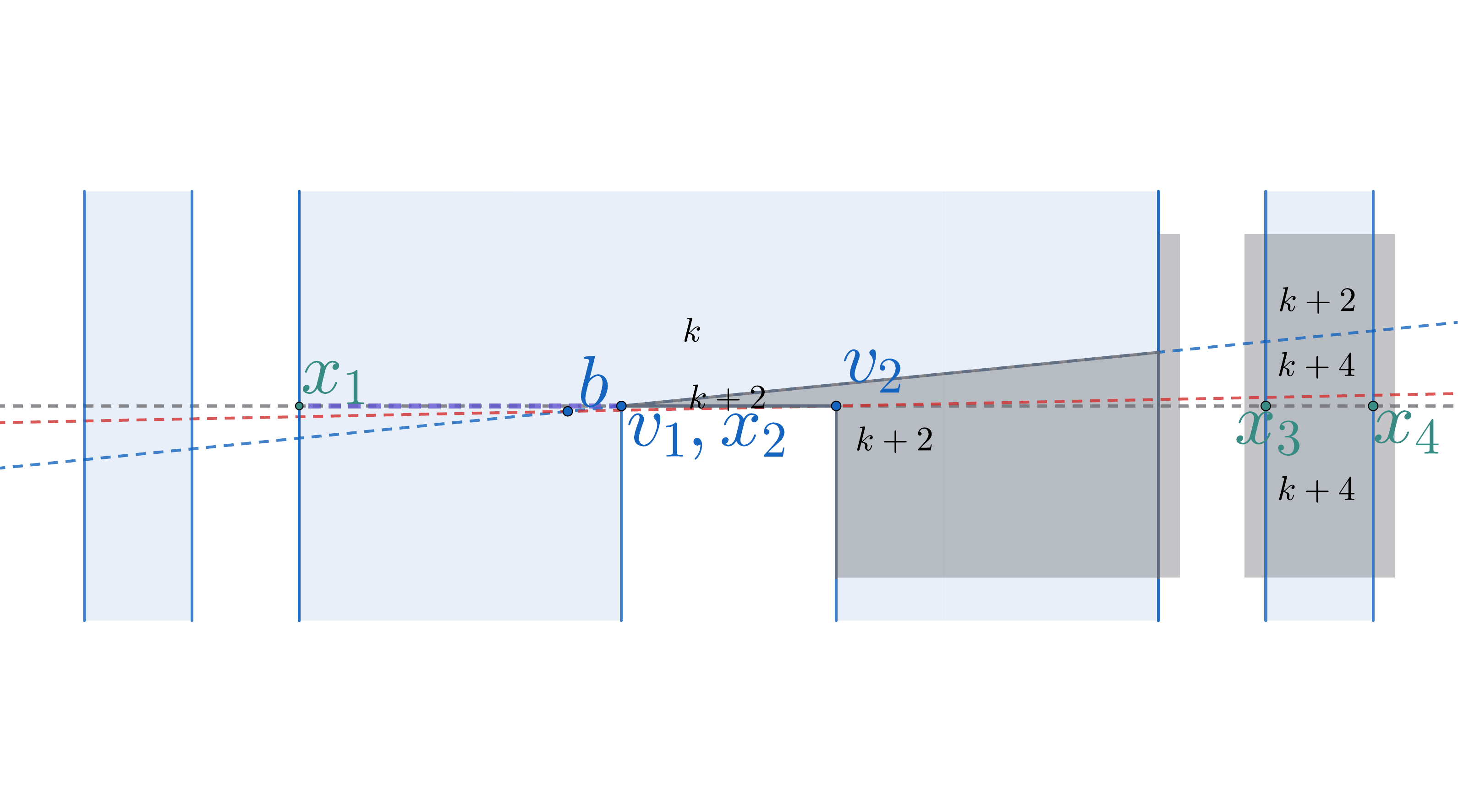}
\caption{Below $l_{g}$}\label{fig:RR-SpecialCase-B0}
\end{subfigure}

\caption{RR-SC; $Z = k$, $W = k + 2$}
\label{fig:RR-SpecialCase-0}
\end{figure}

 \begin{figure}[H]
\centering
\begin{subfigure}[b]{.49\linewidth}
\includegraphics[width=\linewidth]{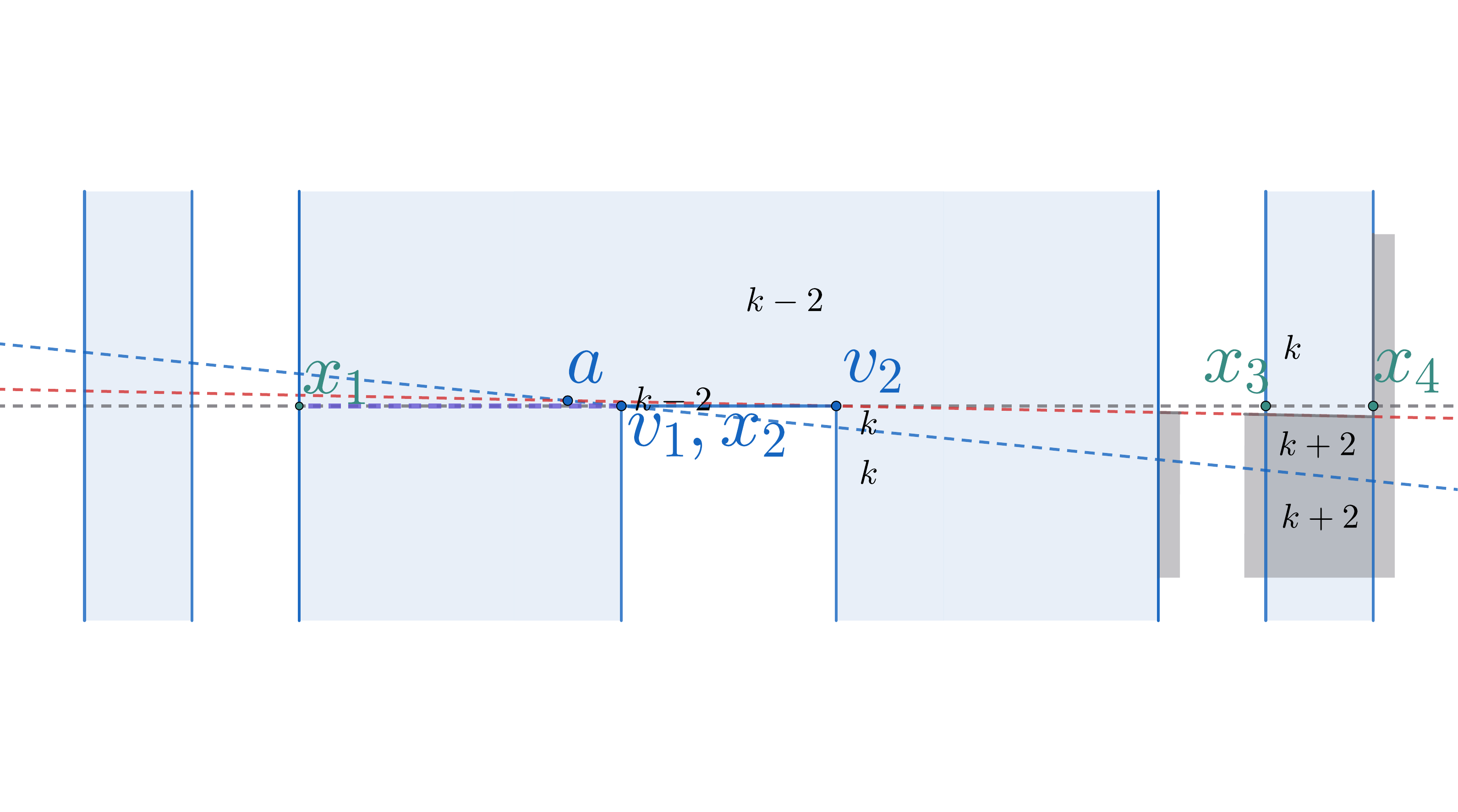}
\caption{Above $l_{g}$}\label{fig:RR-SpecialCase-A2}
\end{subfigure}
\begin{subfigure}[b]{.49\linewidth}
\includegraphics[width=\linewidth]{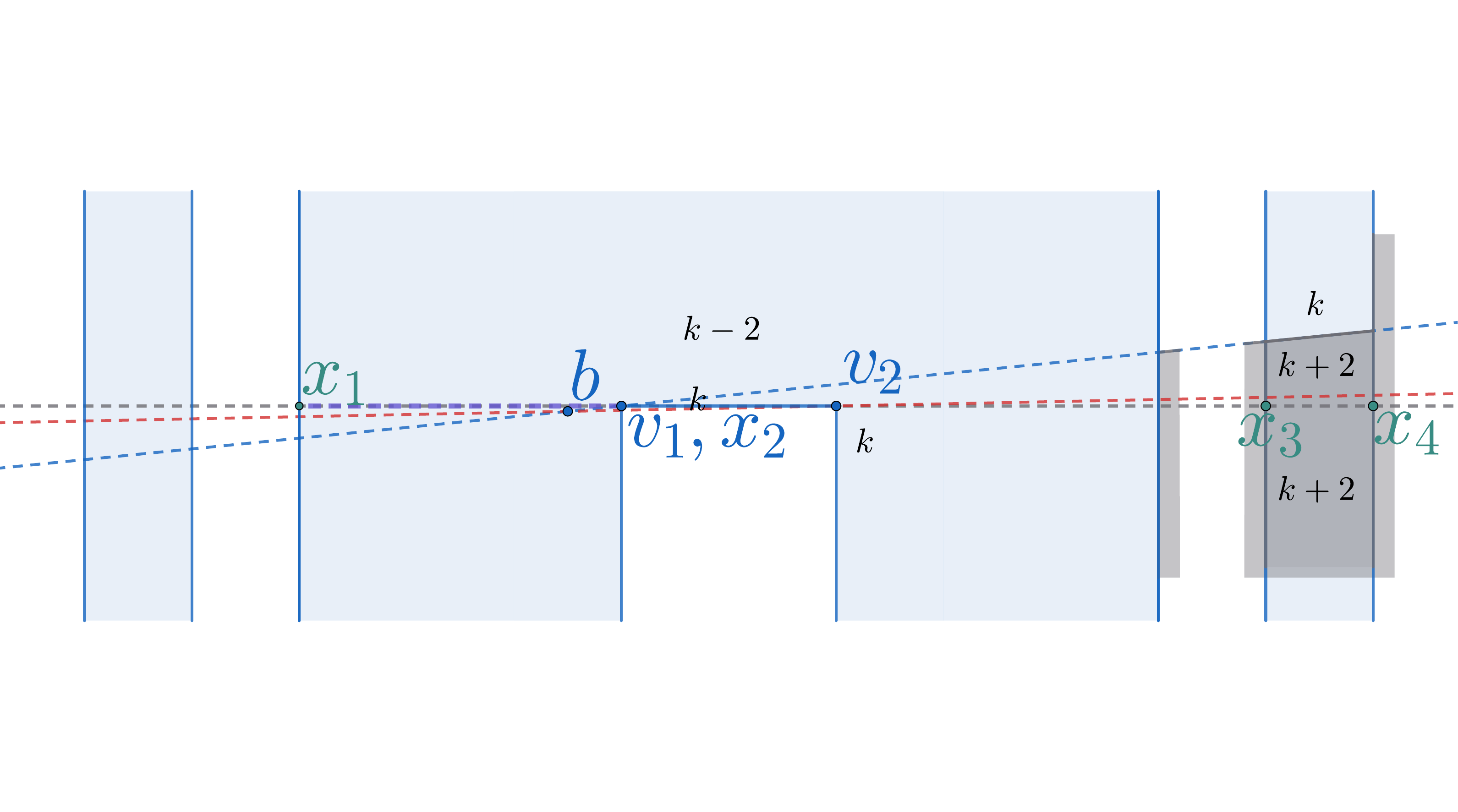}
\caption{Below $l_{g}$}\label{fig:RR-SpecialCase-B2}
\end{subfigure}

\caption{RR-SC; $Z = k - 2$, $W = k$}
\label{fig:RR-SpecialCase-2}
\end{figure}

  \begin{figure}[H]
\centering
\begin{subfigure}[b]{.49\linewidth}
\includegraphics[width=\linewidth]{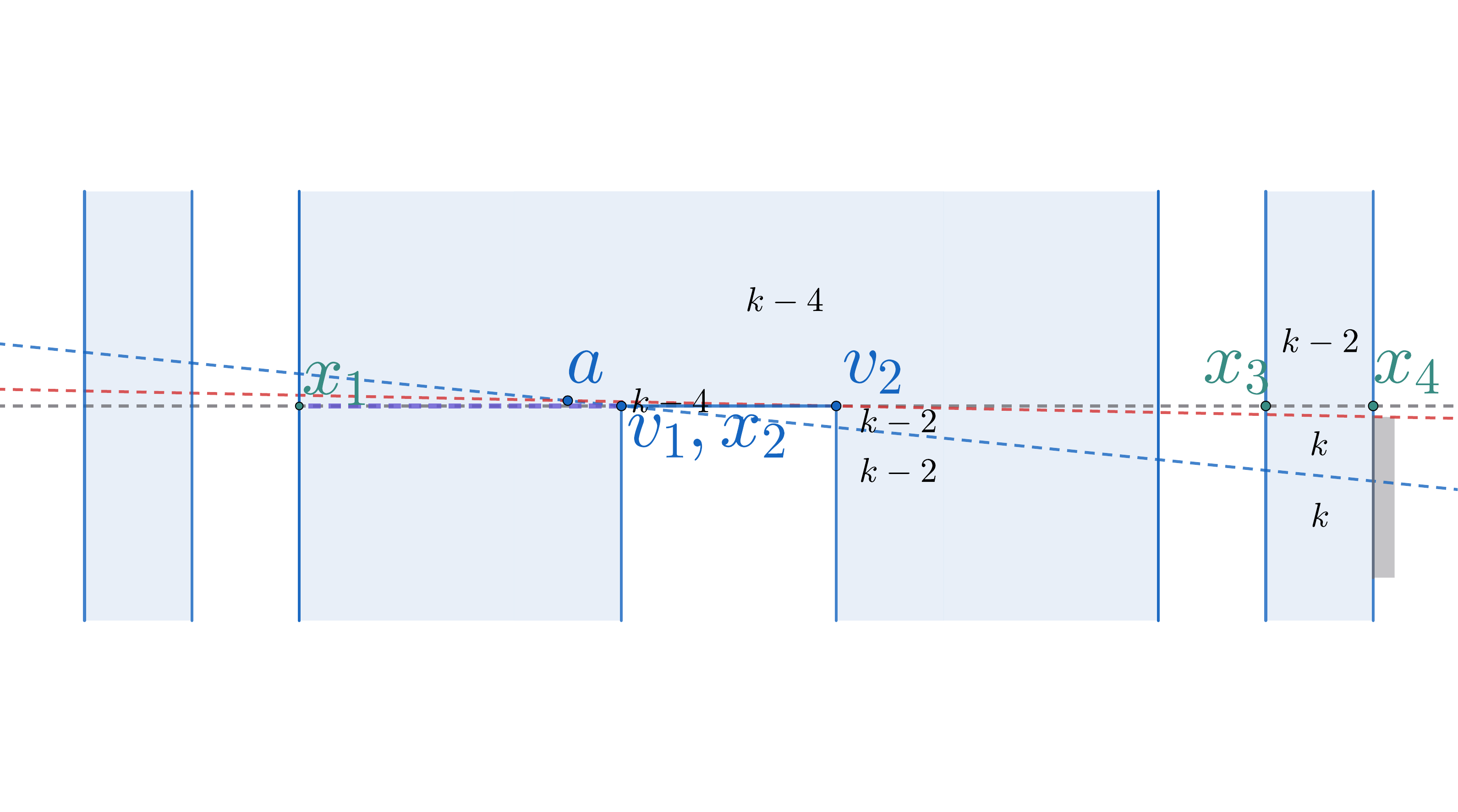}
\caption{Above $l_{g}$}\label{fig:RR-SpecialCase-A4}
\end{subfigure}
\begin{subfigure}[b]{.49\linewidth}
\includegraphics[width=\linewidth]{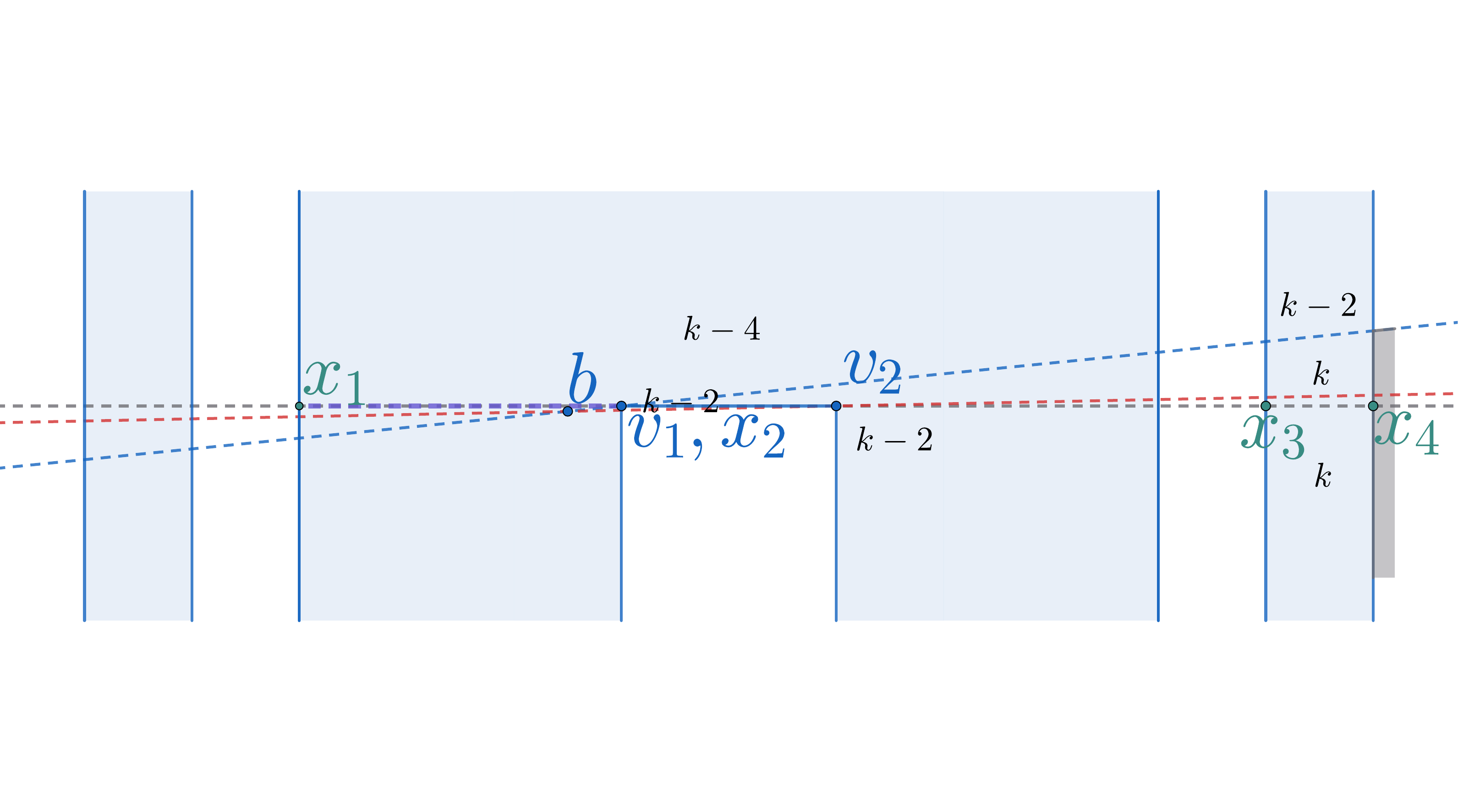}
\caption{Below $l_{g}$}\label{fig:RR-SpecialCase-B4}
\end{subfigure}

\caption{RR-SC; $Z = k - 4$, $W = k - 2$}
\label{fig:RR-SpecialCase-4}
\end{figure}

\end{proof}
\section{CR-SC}

\begin{lemma}
\label{lemma:CR-SC}

A partition line is needed in the following cases: 
\renewcommand{\labelitemi}{$\bullet$}
\label{appendix:RR-SC}
\begin{itemize}
    \item $Z = k - 1$ (Figure~\ref{fig:CR-SpecialCase-2})
    \item $Z = k - 3$ (Figure~\ref{fig:CR-SpecialCase-4})
    \item $W = k - 1$ (Figure~\ref{fig:CR-SpecialCase-4})
    \item $W = k - 3$ (Figure~\ref{fig:CR-SpecialCase-6})
\end{itemize}
\end{lemma}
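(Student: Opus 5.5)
The plan is to follow the template of Lemma~\ref{lemma:ccs} and the neighbouring special cases (Lemma~\ref{lemma:RC-SC}, Lemma~\ref{lemma:RR-SC}). We fix the CR-SC configuration: $v_1$ is convex, $v_2$ is reflex, and $v_1v_2$ is a polygon edge lying on $\ell_g$. We then sweep the obstruction parameter $Z$ (and with it $W$) in steps of $2$, as justified by the parity argument used in Lemma~\ref{lemma:ccs}.

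Because $v_1$ and $v_2$ are adjacent, the shared edge $v_1v_2$ is not counted, so one incident edge of each vertex drops out. A ray along $\ell_b$ or $\ell_r$ therefore meets one fewer incident edge than in the generic CRS/CRO cases. As in RC-SC, $W$ should then have the parity $Z$ has in the generic convex-first cases, i.e.\ $W = Z + 2$ shifted by one. This explains why the critical values are $Z \in \{k-1, k-3\}$ and $W \in \{k-1, k-3\}$, rather than those of Lemma~\ref{lemma:CRS}. The first step is to record this bookkeeping explicitly: $\overline{qv_2}$ and $\overline{qx_4}$ cross exactly $Z$ (respectively $W$) edges, plus the contribution of the edges incident to $v_1$ and $v_2$ that the ray straddles. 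That contribution differs between the positions $a$ and $b$ because crossing $\ell_g$ flips which side of $v_1$ and $v_2$ the ray passes.

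Next we handle the extremes, as in the notes of the other lemmas. For $Z \ge k+1$ and $W \ge k+1$, the neighbourhoods of $v_2$ and of $x_3x_4$ have depth exceeding $k$ from both $a$ and $b$, so $\mathcal{S}(a)$ and $\mathcal{S}(b)$ agree there. For $Z \le k-5$ and $W \le k-5$, both neighbourhoods lie in the interior of $\text{Vis}_k$ from both sides, so again no mutation occurs.

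For each of the four listed values, we exhibit the mutation by comparing $a$ and $b$, as in the CCS subcases.
\begin{itemize}
\item $Z=k-1$: from one side, $v_2$ is $k$-visible and anchors a window along $\ell_r$. From the other side, the window along $\ell_b$ ends on the edge incident to $v_2$, so a generator tuple $\langle v_2,\cdot,\cdot\rangle$ appears or disappears.
\item $Z=k-3$ and $W=k-1$ (the middle figure): the window near $v_2$, respectively near $x_4$, is anchored at $v_2$ on one side and at $v_1$ on the other, so the anchor of the stored tuple changes.
\item $W=k-3$: the window terminating near $x_3x_4$ switches its anchor between $v_1$ and $v_2$.
\end{itemize}
In every case $\mathcal{S}(a)\neq\mathcal{S}(b)$, so $x_1x_2$ must be a partition line.

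The main obstacle will be the parity and offset bookkeeping that arises from the shared edge. We must get right which of the two edges incident to the reflex vertex $v_2$ is crossed from above versus below, and confirm that intermediate values such as $Z=k-5$ with $W=k-3$ do or do not produce a change. To settle this, we would first compute the depth of $v_2$ and of points just past $x_4$ along $\ell_r$ and $\ell_b$ in a canonical picture for each side. We would then check each claimed transition against that count before relying on the figures.
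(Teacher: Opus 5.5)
\textbf{Where the argument breaks.} Your outline has the same shape as the paper's proof. For CR-SC the paper only reads off the three figure pairs ($Z=k-1,W=k+1$; $Z=k-3,W=k-1$; $Z=k-5,W=k-3$) and dismisses $Z\ge k+1$, $W\ge k+3$, $Z\le k-7$, $W\le k-5$. The gap is the mechanism you attach to each case. It is transplanted from CCS and cannot occur here.

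In the adjacent configuration the edge $v_1v_2$ lies on $\ell_g$. So for $q$ just off $\ell_g$, the side of the line through $q$ and $v_i$ on which $v_1v_2$ falls depends on which side of $\ell_g$ the point $q$ is on. Moreover, $v_1$ convex and $v_2$ reflex force the other neighbour $v_0$ of $v_1$ and the other neighbour $v_3$ of $v_2$ onto opposite sides of $\ell_g$. A direct check gives two facts:
\begin{itemize}
\item $v_1$ is critical to $q$ iff $q$ is on $v_0$'s side of $\ell_g$;
\item $v_2$ is critical to $q$ iff $q$ is on the side opposite $v_3$, which is again $v_0$'s side.
\end{itemize}
Hence from one of $a,b$ both vertices are critical, and from the other neither is. Every window is generated by a vertex critical to $q$, so on the second side no tuple is anchored at $v_1$ or $v_2$. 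The anchor swaps you assert for $Z=k-3$, $W=k-1$, $W=k-3$ (``anchored at $v_2$ on one side and at $v_1$ on the other'') are therefore impossible. Your $Z=k-1$ description, with a window along $\ell_b$ on one side and one along $\ell_r$ on the other, fails for the same reason. Its conclusion that a tuple $\langle v_2,\cdot,\cdot\rangle$ appears or disappears is nonetheless correct.

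\textbf{What actually happens.} On the critical side, consider rays from $q$ near $\ell_g$ whose direction lies strictly between those of the rays through $v_1$ and through $v_2$. They cross all three edges $v_0v_1$, $v_1v_2$, $v_2v_3$, while nearby rays outside that angular range cross exactly one of them. So beyond $v_2$ these two rays bound a thin sliver whose depth exceeds its surroundings by exactly $2$. On the non-critical side every nearby ray crosses exactly one of the three edges, and the depth is continuous across both rays.

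Suppose the surrounding depth near $v_2$ (the $Z$-cases) or near $x_3x_4$ (the $W$-cases) is $k-1$ or $k$. Then the sliver contributes two window tuples, anchored at $v_1$ and $v_2$, on the critical side and nothing on the other. The delta across $x_1x_2$ is therefore an insertion or deletion of generator tuples, not a replacement of an anchor. This is what you must verify in each listed case to get $\mathcal{S}(a)\neq\mathcal{S}(b)$.

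This criticality flip is the real reason the adjacent cases behave differently from CRS/CRO. Your ``one fewer incident edge'' bookkeeping captures a parity shift but misses it.

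\textbf{Extreme thresholds.} Your bounds $W\ge k+1$ and $Z\le k-5$ are stronger than the paper's $W\ge k+3$ and $Z\le k-7$. The paper's figures explicitly include $W=k+1$ and $Z=k-5$, so those claims also need an argument.
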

\begin{proof}

See Figures~\ref{fig:CR-SpecialCase-2}-~\ref{fig:CR-SpecialCase-6}.

For $Z \geq k + 1$, $v_{2}$ and its surroundings are completely invisible, For $W \geq k + 3$, $x_{3}x_{4}$ and its surroundings are completely invisible.

for $Z \leq k - 7$, $v_{2}$ and its surroundings are completely visible. For $W \leq k - 5$,  $x_{3}x_{4}$ and its surroundings are completely visible.

 \begin{figure}[H]
\centering
\begin{subfigure}[b]{.49\linewidth}
\includegraphics[width=\linewidth]{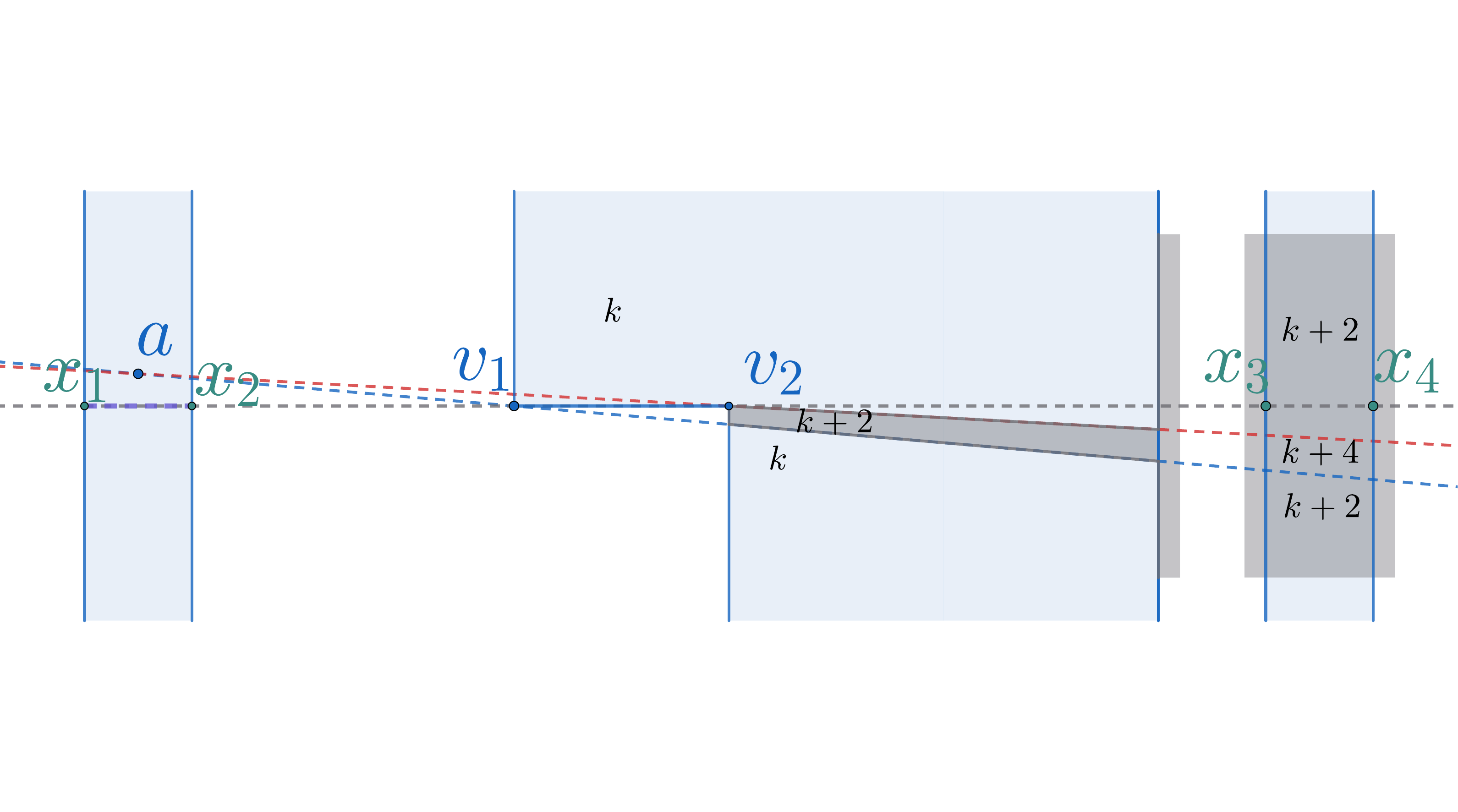}
\caption{Above $l_{g}$}\label{fig:CR-SpecialCase-A2}
\end{subfigure}
\begin{subfigure}[b]{.49\linewidth}
\includegraphics[width=\linewidth]{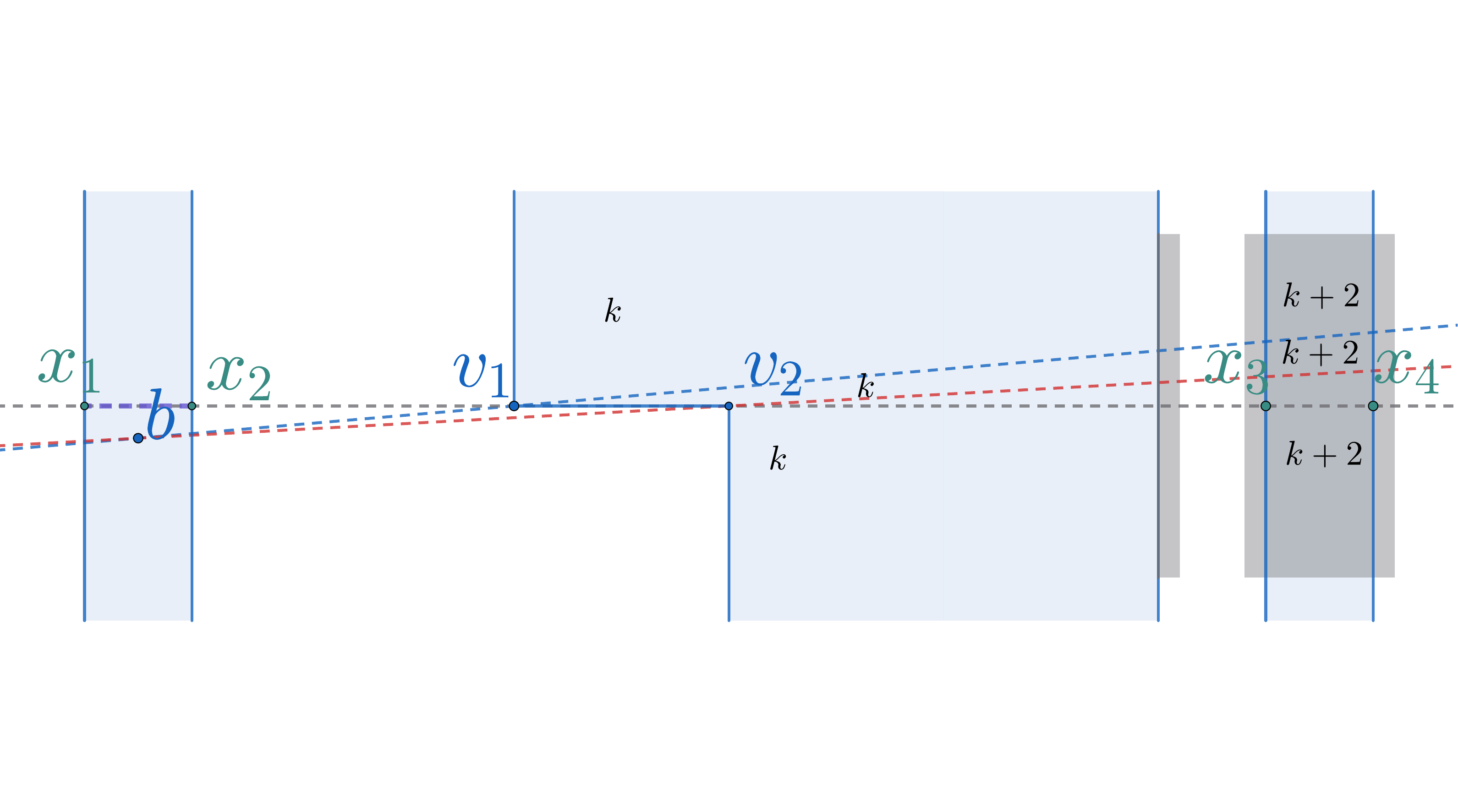}
\caption{Below $l_{g}$}\label{fig:CR-SpecialCase-B2}
\end{subfigure}

\caption{CR-SC; $Z = k - 1$, $W = k + 1$}
\label{fig:CR-SpecialCase-2}
\end{figure}

  \begin{figure}[H]
\centering
\begin{subfigure}[b]{.49\linewidth}
\includegraphics[width=\linewidth]{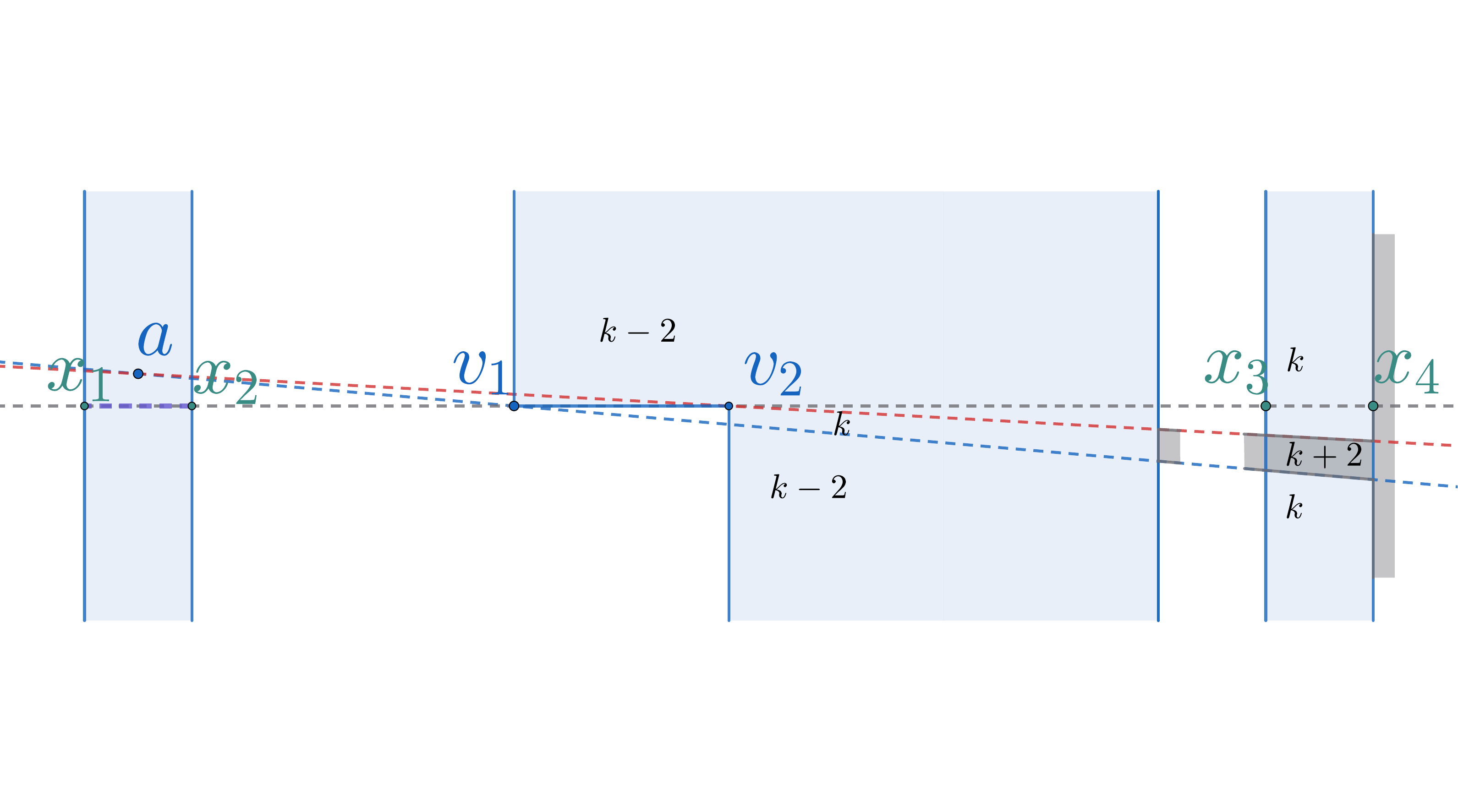}
\caption{Above $l_{g}$}\label{fig:CR-SpecialCase-A4}
\end{subfigure}
\begin{subfigure}[b]{.49\linewidth}
\includegraphics[width=\linewidth]{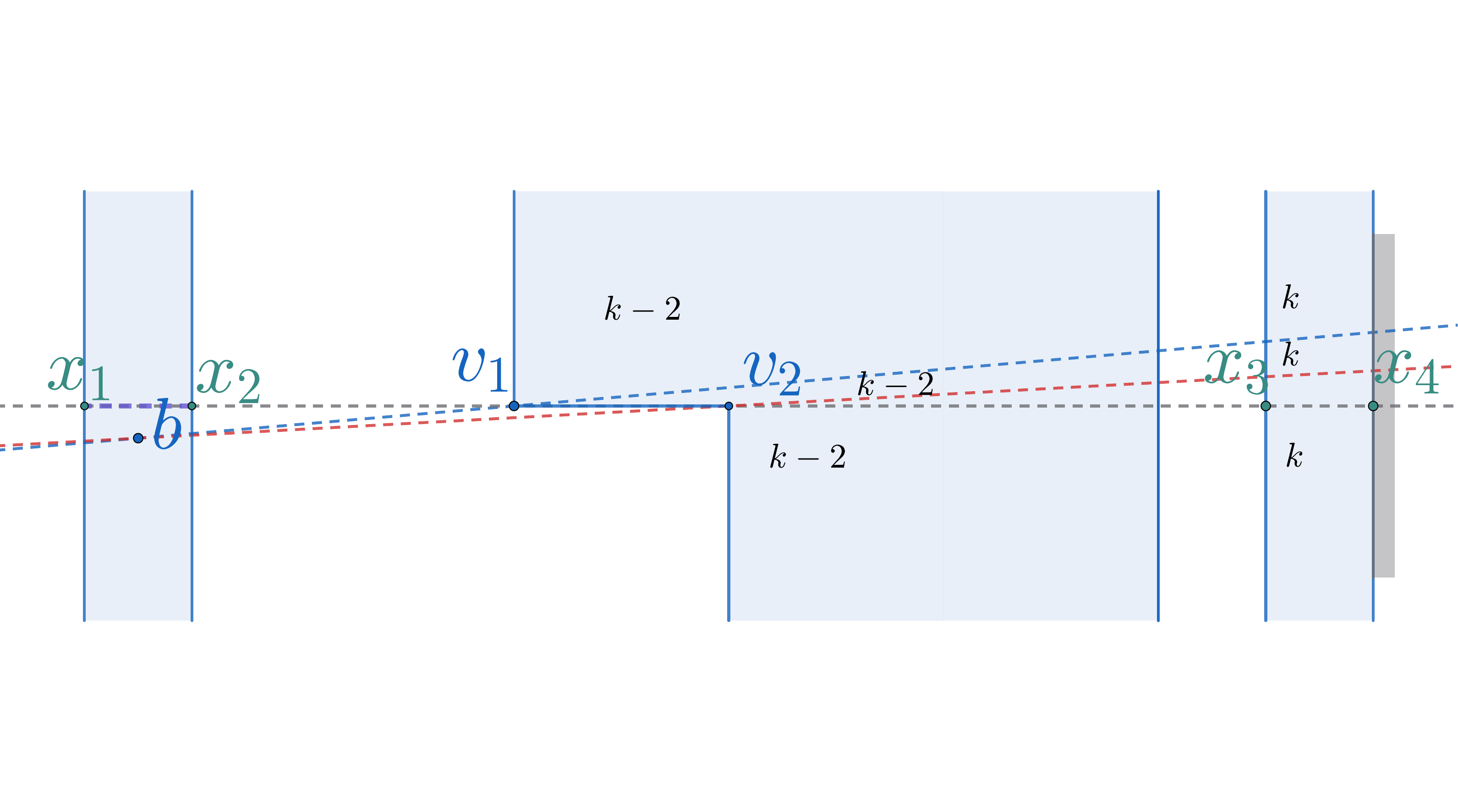}
\caption{Below $l_{g}$}\label{fig:CR-SpecialCase-B4}
\end{subfigure}

\caption{CR-SC; $Z = k - 3$, $W = k - 1$}
\label{fig:CR-SpecialCase-4}
\end{figure}

  \begin{figure}[H]
\centering
\begin{subfigure}[b]{.49\linewidth}
\includegraphics[width=\linewidth]{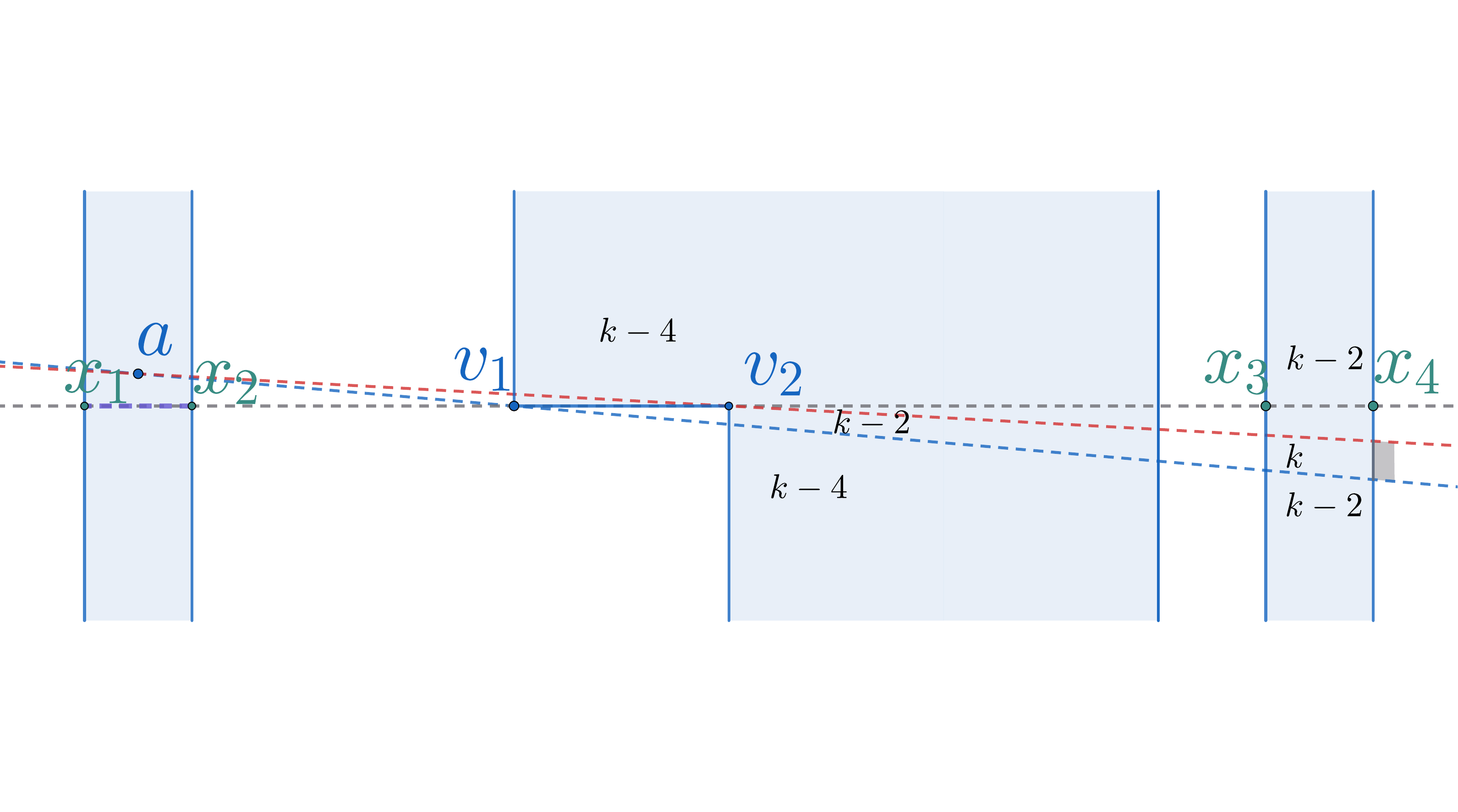}
\caption{Above $l_{g}$}\label{fig:CR-SpecialCase-A6}
\end{subfigure}
\begin{subfigure}[b]{.49\linewidth}
\includegraphics[width=\linewidth]{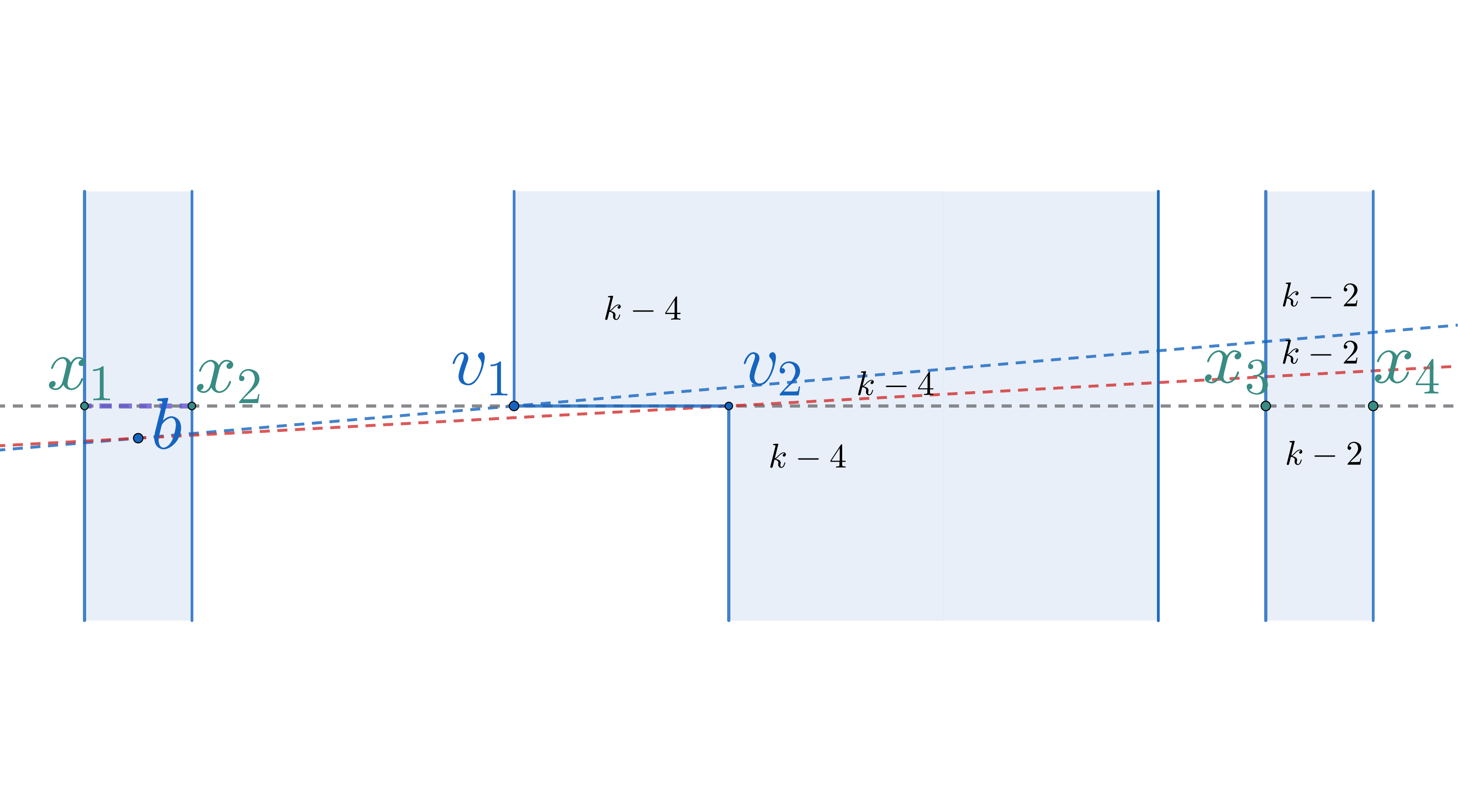}
\caption{Below $l_{g}$}\label{fig:CR-SpecialCase-B6}
\end{subfigure}

\caption{CR-SC; $Z = k - 5$, $W = k - 3$}
\label{fig:CR-SpecialCase-6}
\end{figure}

\end{proof}
\section{CC-SC}

\begin{lemma}
\label{lemma:CC-SC}
A partition line is needed in the following cases: 
\renewcommand{\labelitemi}{$\bullet$}
\label{appendix:RR-SC}
\begin{itemize}
    \item $Z = k - 1$ (Figure~\ref{fig:CC-SpecialCase-2})
    \item $W = k$ (Figure~\ref{fig:CC-SpecialCase-2})
    \item $W = k - 2$ (Figure~\ref{fig:CC-SpecialCase-4})
\end{itemize}

\end{lemma}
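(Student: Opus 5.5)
The plan is to follow the template of Lemma~\ref{lemma:ccs} and the other mutually critical cases. We fix the CC special configuration: $v_1$ and $v_2$ are adjacent convex critical vertices joined by the polygon edge $v_1v_2$. The line $\ell_g$ through them therefore contains that edge, and $x_1x_2$ is the segment of $\ell_g$ to the left of $v_1$ inside $P$. For a query point at $a$ (above $x_1x_2$) and at $b$ (below), we record the window generator tuples whose rays $\ell_b$ (through $v_1$) and $\ell_r$ (through $v_2$) determine the local structure near $v_2$ and near $x_3x_4$. The sweep is over the parity-constrained values of $Z$ and $W$, which change in steps of $2$. Because the edge $v_1v_2$ is shared, $Z$ and $W$ are offset by one relative to CCS: the shared edge is excluded once rather than contributing two incident edges. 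This is why the listed values are $Z=k-1$, $W=k$, $W=k-2$.

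\textbf{Saturated ranges.} First I would dispose of the extreme ranges, as in the CCS proof. For $Z \ge k+1$, $v_2$ and its neighbourhood have depth exceeding $k$ from both $a$ and $b$, so no tuple involving $v_2$ appears. For $W \ge k+2$, the same holds for $x_3x_4$. At the other extreme, for sufficiently small $Z$ and $W$ (I expect $Z \le k-5$ and $W \le k-4$), both regions lie in the interior of $\text{Vis}_k$ on both sides of $\ell_g$. In that situation crossing $x_1x_2$ changes nothing, by the decoupling argument of the Completeness paragraph.

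\textbf{Intermediate values.} For the remaining values I would compare $\mathcal{S}(a)$ and $\mathcal{S}(b)$ directly, using Figure~\ref{fig:CC-SpecialCase-2} and Figure~\ref{fig:CC-SpecialCase-4}.
\begin{itemize}
\item For $Z=k-1$: from $a$ a window is cast along $\ell_b$. From $b$ that window terminates on the edge $v_1v_2$ (its distal edge changes), and a new window appears at $v_2$ along $\ell_r$. This gives an insertion of a tuple, so $\mathcal{S}(a)\neq\mathcal{S}(b)$.
\item For $W=k$ and $W=k-2$: the window reaching $x_3x_4$ is anchored at $v_1$ on one side and at $v_2$ on the other. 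This is the anchor swap from the proof of the Type 2 theorem, so the tuples differ.
\item For $Z=k-3$: I must argue that no mutation occurs. Since $v_1v_2$ is an edge, at $a$ and at $b$ the vertex $v_2$ is visible either directly or through the same edge. Its entry in $\mathcal{S}$ and the adjacent tuples therefore keep the same proximal and distal edges.
\end{itemize}

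\textbf{Main obstacle.} The difficult part is the adjacency-induced degeneracy. The ray $\ell_g$ runs along the edge $v_1v_2$ itself, so I must decide consistently whether that edge is counted as a crossing when the query point is near $\ell_g$. I would resolve this by perturbing $a$ and $b$ infinitesimally and counting crossings of the open segments $\overline{av_2}$ and $\overline{bv_2}$. I would also check that the $Z=k-3$ and $W=k-4$ states yield identical sequences; this is the claimed reason those values are absent from the list, even though they appear in CCS.
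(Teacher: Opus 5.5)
Your skeleton matches the paper's proof. That proof is only the two figures plus the saturation thresholds $Z\ge k+1$, $W\ge k+2$ (shadow) and $Z\le k-5$, $W\le k-4$ (fully visible), which are exactly the ones you propose. The gap is in the mechanism you give for the intermediate values. It is carried over from CCS and does not survive adjacency.

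First, since $a,b\notin\ell_g$, the ray from either point through $v_1$ meets $\ell_g$ only at $v_1$. So it can never terminate on the edge $v_1v_2\subset\ell_g$, and the event in your $Z=k-1$ bullet cannot occur. Second, and more fundamentally, you assume that both $v_1$ and $v_2$ remain critical on both sides of $\ell_g$, as in CCS. Because both are convex, the other neighbours $v_0$ of $v_1$ and $v_3$ of $v_2$ lie on the same side of $\ell_g$. Apply the perturbation you propose to the criticality test rather than to crossing counts; rays from $a$ or $b$ never run along $v_1v_2$, so that worry is misplaced. The perturbation shows that $v_1$ is critical only to query points on the side containing $v_0,v_3$, and $v_2$ is critical only to query points on the other side.

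Consequently each of $a$ and $b$ sees exactly one window from the pair, with depths $Z$ and $Z+2$ on its two sides. Crossing $x_1x_2$ replaces the tuple anchored at $v_1$ by one anchored at $v_2$. This is a REPLACE, not the insertion you describe. It produces a window at $v_2$ exactly when $Z=k-1$. The same anchor swap propagates to the window near $x_3x_4$, which gives the $W=k$ and $W=k-2$ events.

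This also corrects your explanation of the list. The values are not offset by one from CCS; they coincide with three of the CCS values. What changes is that there is a single depth jump of $2$ instead of two, so the event range is one parity step shorter and the CCS events $Z=k-3$ and $W=k-4$ disappear. In particular, at $Z=k-3$ both sides of the unique critical ray have depth at most $k-1$. That is the reason nothing changes near $v_2$, and it should replace your heuristic that $v_2$ is seen ``directly or through the same edge''.
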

\begin{proof}
    See Figures~\ref{fig:CC-SpecialCase-2}-~\ref{fig:CC-SpecialCase-4}.
Note: 
For $Z \geq k + 1$, $v_{2}$ and its surroundings are completely invisible, For $W \geq k + 2$, $x_{3}x_{4}$ and its surroundings are completely invisible.

for $Z \leq k - 5$, $v_{2}$ and its surroundings are completely visible. For $W \leq k - 4$,  $x_{3}x_{4}$ and its surroundings are completely visible. 

  \begin{figure}[H]
\centering
\begin{subfigure}[b]{.49\linewidth}
\includegraphics[width=\linewidth]{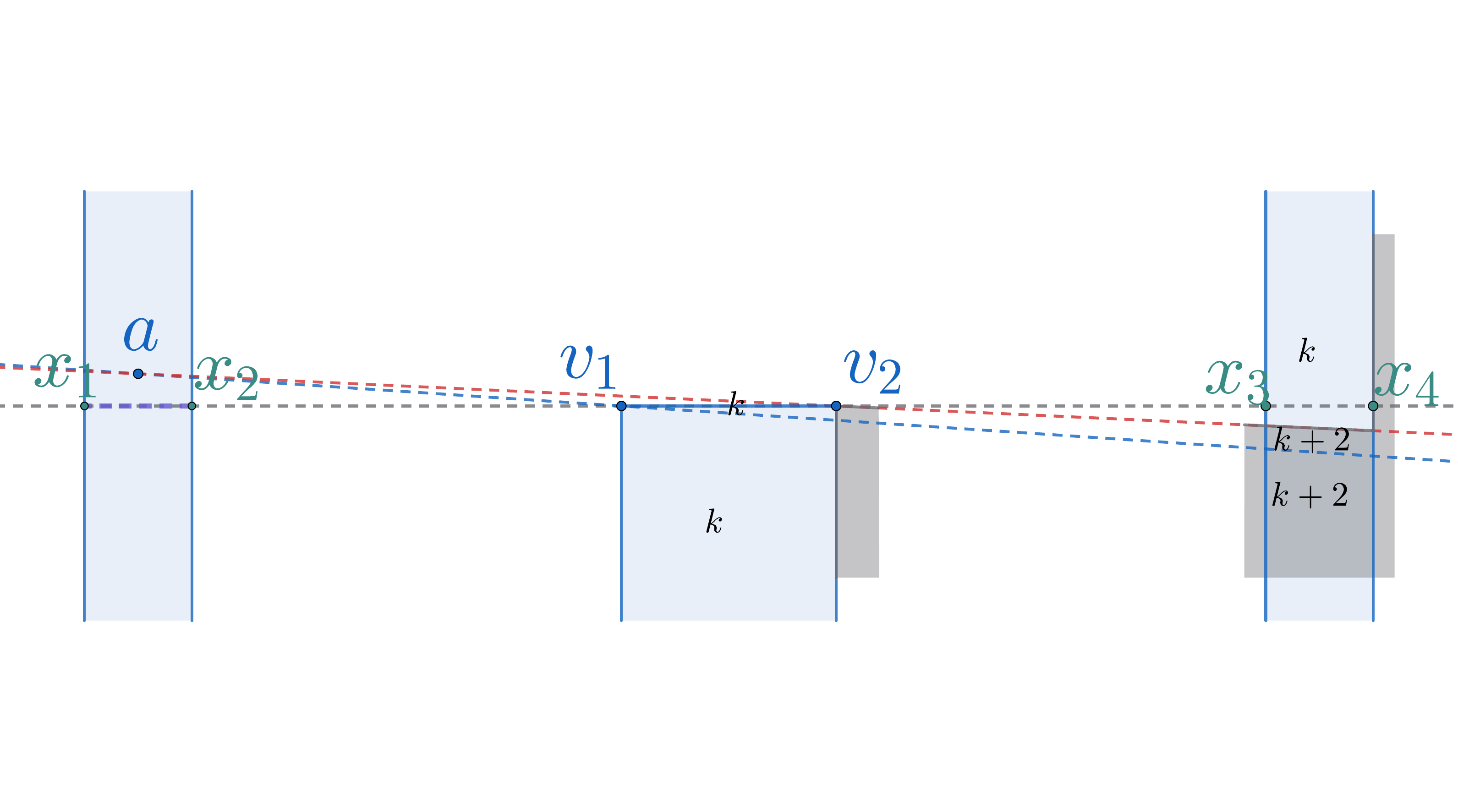}
\caption{Above $l_{g}$}\label{fig:CC-SpecialCase-A2}
\end{subfigure}
\begin{subfigure}[b]{.49\linewidth}
\includegraphics[width=\linewidth]{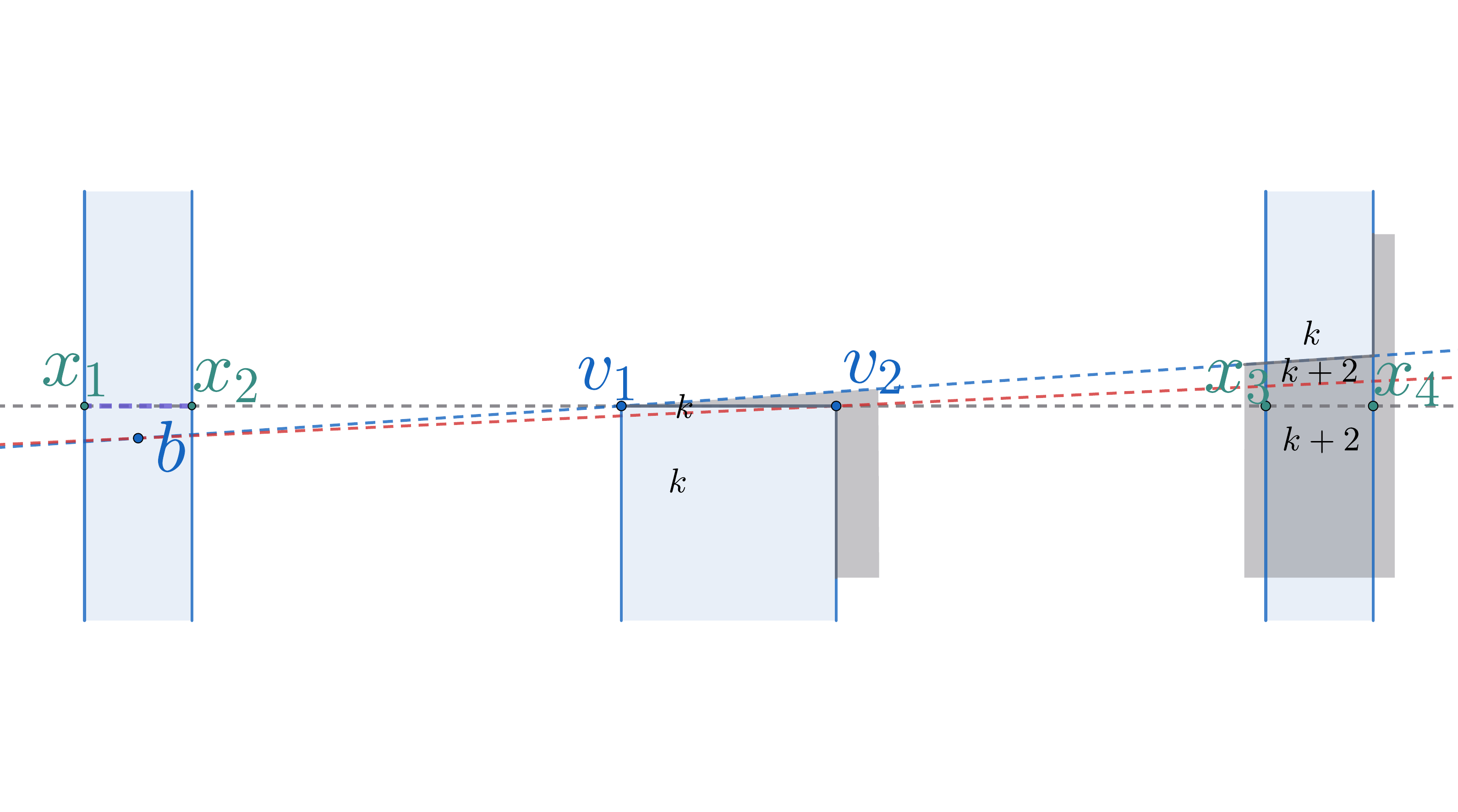}
\caption{Below $l_{g}$}\label{fig:CC-SpecialCase-B2}
\end{subfigure}

\caption{CC-SC; $Z = k - 1$, $W = k$}
\label{fig:CC-SpecialCase-2}
\end{figure}

  \begin{figure}[H]
\centering
\begin{subfigure}[b]{.49\linewidth}
\includegraphics[width=\linewidth]{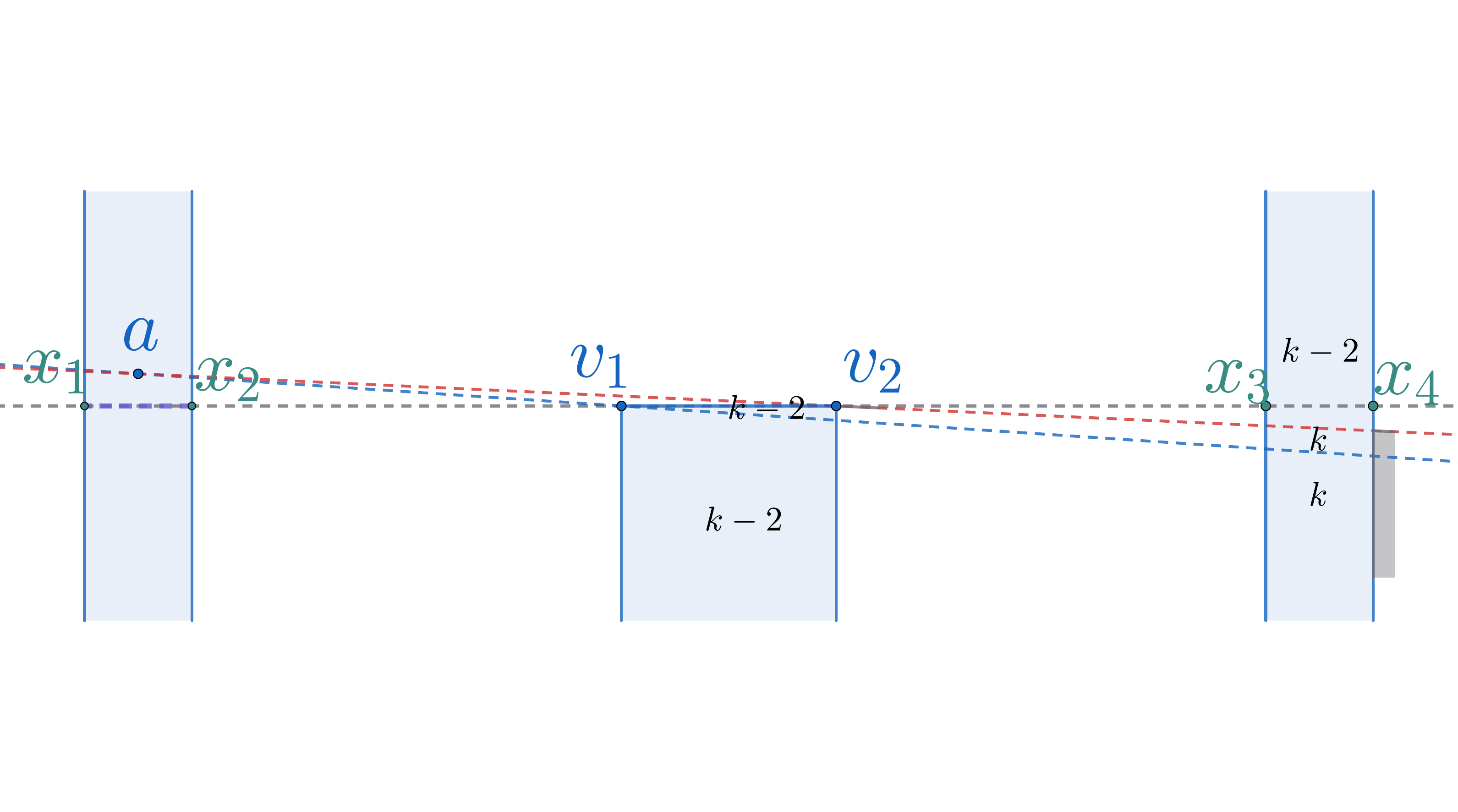}
\caption{Above $l_{g}$}\label{fig:CC-SpecialCase-A2}
\end{subfigure}
\begin{subfigure}[b]{.49\linewidth}
\includegraphics[width=\linewidth]{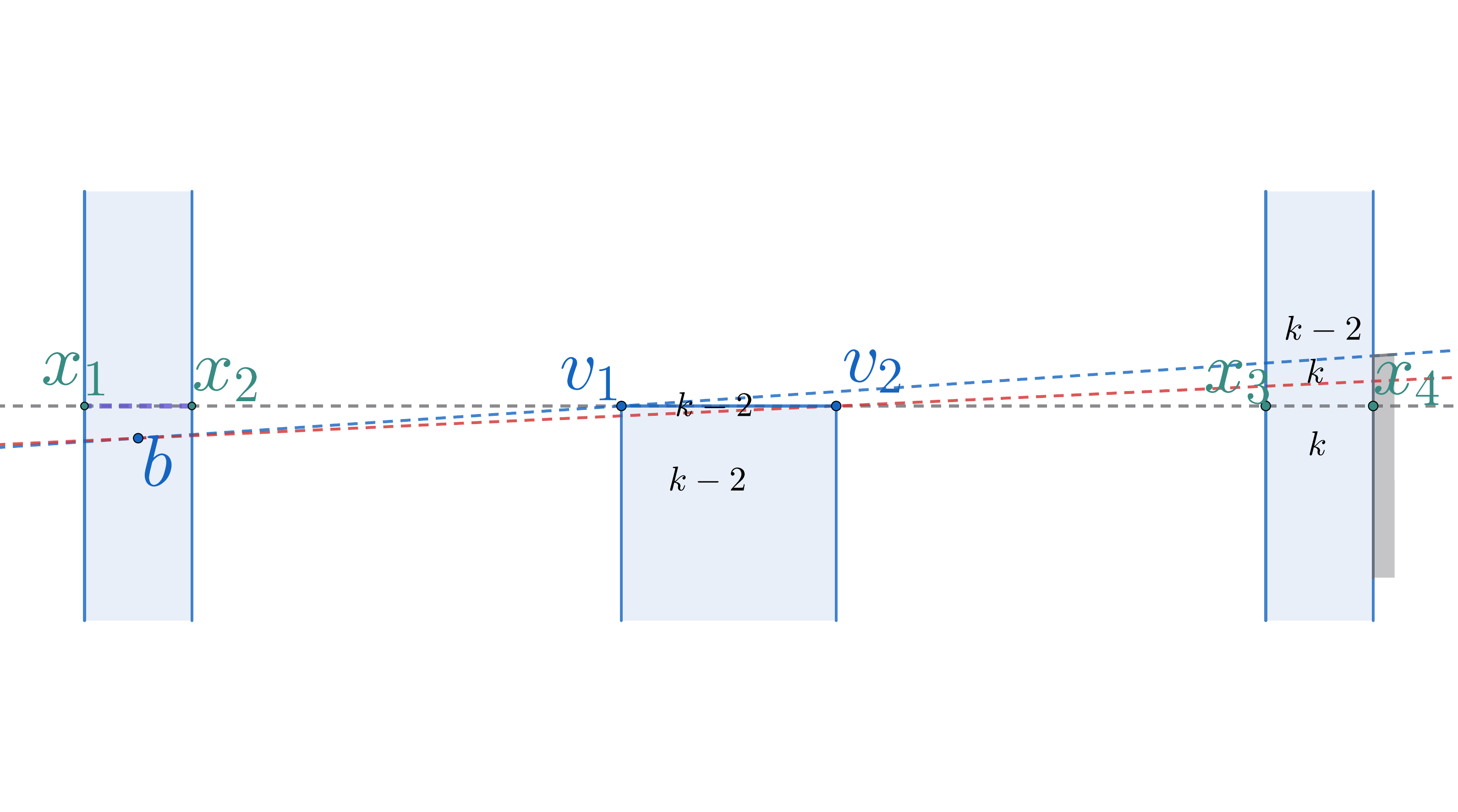}
\caption{Below $l_{g}$}\label{fig:CC-SpecialCase-B2}
\end{subfigure}

\caption{CC-SC; $Z = k - 3$, $W = k-2$}
\label{fig:CC-SpecialCase-4}
\end{figure}
\end{proof}

\newpage

\small
\begin{longtable}{llll}
\caption{Comprehensive Mapping of Cases, Subcases, Lemmas, and Figures} \label{tab:cases-comprehensive} \\
\toprule
\textbf{Case} & \textbf{Subcase} & \textbf{Lemma} & \textbf{Figure Reference} \\
\midrule
\endfirsthead

% This section defines what appears at the top of the next page(s)
\multicolumn{4}{c}{{\bfseries \tablename\ \thetable{} -- Continued from previous page}} \\
\midrule
\textbf{Case} & \textbf{Subcase} & \textbf{Lemma} & \textbf{Figure Reference} \\
\midrule
\endhead

% This section defines the footer text at the bottom of pages before the last one
\midrule
\multicolumn{4}{r}{{Continued on next page...}} \\
\endfoot

% This section defines the very end line of the entire table
\bottomrule
\endlastfoot

% --- TABLE DATA START ---

& $Z = k - 1$ &                                       & Figure~\ref{fig:CCS-generic-2}\\
                        & $W = k$     &                                       & Figure~\ref{fig:CCS-generic-2} \\
                        & $Z = k - 3$ &                                       & Figure~\ref{fig:CCS-generic-4}\\
                        & $W = k - 2$ &                                       & Figure~\ref{fig:CCS-generic-4}\\
\multirow{-5}{*}{CCS}   & $W = k - 4$ & \multirow{-5}{*}{Lemma~\ref{lemma:ccs}} & Figure~\ref{fig:CCS-generic-6} \\
\midrule
\rowcolor{lightgray}
                      & $Z = k - 1$ &                                         & Figure~\ref{fig:CCO-generic-2} \\
\rowcolor{lightgray}
                      & $W = k$     &                                         & Figure~\ref{fig:CCO-generic-2} \\
\rowcolor{lightgray}
                      & $Z = k - 3$ &                                         & Figure~\ref{fig:CCO-generic-4} \\
\rowcolor{lightgray}
                      & $W = k - 2$ &                                         & Figure~\ref{fig:CCO-generic-4} \\
\rowcolor{lightgray}
\multirow{-5}{*}{CCO} & $W = k - 4$ & \multirow{-5}{*}{Lemma~\ref{lemma:CCO}} & Figure~\ref{fig:CCO-generic-6}\\
\midrule
& $Z = k$     & \multirow{6}{*}{Lemma~\ref{lemma:CRS}} &  Figure~\ref{fig:CRO-generic2}\\
& $Z = k - 2$ & & Figure~\ref{fig:CRO-generic4} \\
& $W = k$     & & Figure~\ref{fig:CRO-generic4} \\
& $Z = k - 4$ & & Figure~\ref{fig:CRO-generic6} \\
& $W = k - 2$ & & Figure~\ref{fig:CRO-generic6} \\
\multirow{-6}{*}{CRS} & $W = k - 4$ & & Figure~\ref{fig:CRO-generic8} \\
\midrule
\rowcolor{lightgray}
                      & $Z = k$     &                                                            & Figure~\ref{fig:CRS-generic2} \\
\rowcolor{lightgray}
                      & $Z = k - 2$ &                                                            & Figure~\ref{fig:CRS-generic4} \\
\rowcolor{lightgray}
                      & $W = k$     &                                                            & Figure~\ref{fig:CRS-generic4} \\
\rowcolor{lightgray}
                      & $Z = k - 4$ &                                                            & Figure~\ref{fig:CRS-generic6} \\
\rowcolor{lightgray}
                      & $W = k - 2$ &                                                            & Figure~\ref{fig:CRS-generic6} \\
\rowcolor{lightgray}
\multirow{-6}{*}{CRO} & $W = k - 4$ & \multirow{-6}{*}{Lemma~\ref{lemma:CRO}} & Figure~\ref{fig:CRS-generic8} \\
\midrule
& $Z = k - 1$ & \multirow{5}{*}{Lemma~\ref{lemma:RCS}} & Figure~\ref{fig:RCS-generic-2} \\
& $W = k$     & & Figure~\ref{fig:RCS-generic-2} \\
& $Z = k - 3$ & & Figure~\ref{fig:RCS-generic-4} \\
& $W = k - 2$ & & Figure~\ref{fig:RCS-generic-4} \\
\multirow{-5}{*}{RCS} & $W = k - 4$ & & Figure~\ref{fig:RCS-generic-6} \\
\midrule
\rowcolor{lightgray}
                      & $Z = k - 1$ &                                                                   & Figure~\ref{fig:RCO-generic-2} \\
\rowcolor{lightgray}
                      & $W = k$     &                                                                   & Figure~\ref{fig:RCO-generic-2} \\
\rowcolor{lightgray}
                      & $Z = k - 3$ &                                                                   & Figure~\ref{fig:RCO-generic-4} \\
\rowcolor{lightgray}
                      & $W = k - 2$ &                                                                   & Figure~\ref{fig:RCO-generic-4} \\
\rowcolor{lightgray}
\multirow{-5}{*}{RCO} & $W = k - 4$ & \multirow{-5}{*}{Lemma~\ref{lemma:RCO}} & Figure~\ref{fig:RCO-generic-6} \\
\midrule
& $Z = k$     & \multirow{6}{*}{Lemma~\ref{lemma:RRS}} & Figure~\ref{fig:RRS-generic-0} \\
& $Z = k - 2$ & & Figure~\ref{fig:RRS-generic-2} \\
& $W = k$     & & Figure~\ref{fig:RRS-generic-2} \\
& $Z = k - 4$ & & Figure~\ref{fig:RRS-generic-4} \\
& $W = k - 2$ & & Figure~\ref{fig:RRS-generic-4} \\
\multirow{-6}{*}{RRS} & $W = k - 4$ & & Figure~\ref{fig:RRS-generic-6} \\
\midrule
\pagebreak
\rowcolor{lightgray}
                      & $Z = k$     &                                         & Figure~\ref{fig:RRO-generic-0} \\
\rowcolor{lightgray}
                      & $Z = k - 2$ &                                         & Figure~32 \\
\rowcolor{lightgray}
                      & $W = k$     &                                         & Figure~\ref{fig:RRO-generic-2} \\
\rowcolor{lightgray}
                      & $Z = k - 4$ &                                         & Figure~\ref{fig:RRO-generic-4} \\
\rowcolor{lightgray}
                      & $W = k - 2$ &                                         & Figure~\ref{fig:RRO-generic-4} \\
\rowcolor{lightgray}
\multirow{-6}{*}{RRO} & $W = k - 4$ & \multirow{-6}{*}{Lemma~\ref{lemma:RRO}} & Figure~\ref{fig:RRO-generic-6}\\

\midrule

& $Z = k$     & \multirow{4}{*}{Lemma~\ref{lemma:RC-SC}} & Figure~\ref{fig:RC-SpecialCase-0} \\
& $Z = k - 2$ & & Figure~\ref{fig:RC-SpecialCase-2} \\
& $W = k - 1$ & & Figure~\ref{fig:RC-SpecialCase-2} \\
\multirow{-4}{*}{RC-SC} & $W = k - 3$ & & Figure~\ref{fig:RC-SpecialCase-4} \\
\midrule
\rowcolor{lightgray}
                        & $Z = k$     &                                              & Figure~\ref{fig:RR-SpecialCase-0} \\
\rowcolor{lightgray}
                        & $Z = k - 2$ &                                              & Figure~\ref{fig:RR-SpecialCase-2} \\
\rowcolor{lightgray}
                        & $W = k$     &                                              & Figure~\ref{fig:RR-SpecialCase-2} \\
\rowcolor{lightgray}
\multirow{-4}{*}{RR-SC} & $W = k - 2$ & \multirow{-4}{*}{Lemma~\ref{lemma:RR-SC}} & Figure~\ref{fig:RR-SpecialCase-4} \\
\midrule
& $Z = k - 1$ & \multirow{4}{*}{Lemma \ref{lemma:CR-SC}} & Figure~\ref{fig:CR-SpecialCase-2} \\
& $Z = k - 3$ & & Figure~\ref{fig:CR-SpecialCase-4} \\
& $W = k - 1$ & & Figure~\ref{fig:CR-SpecialCase-4} \\
\multirow{-4}{*}{CR-SC} & $W = k - 3$ & & Figure~\ref{fig:CR-SpecialCase-6} \\
\midrule
\rowcolor{lightgray}
                        & $Z = k - 1$ &                                                                      & Figure~\ref{fig:CC-SpecialCase-2} \\
\rowcolor{lightgray}
                        & $W = k$     &                                                                      & Figure~\ref{fig:CC-SpecialCase-2} \\
\rowcolor{lightgray}
\multirow{-3}{*}{CC-SC} & $W = k - 2$ & \multirow{-3}{*}{Lemma~\ref{lemma:CC-SC}} & Figure~\ref{fig:CC-SpecialCase-4} \\\\

\end{longtable}

\end{document}